\newcommand\Algphase[1]{
\vspace*{-.5\baselineskip}\Statex\hspace*{\dimexpr-\algorithmicindent-2pt\relax}
\Statex\hspace*{-\algorithmicindent}\textbf{#1}
\vspace*{-.8\baselineskip}\Statex\hspace*{\dimexpr-\algorithmicindent-2pt\relax}
}
\newcommand{\alglinelabel}{%
  \addtocounter{ALG@line}{-1}
  \refstepcounter{ALG@line}
  \label
}
\renewcommand\paragraph{\@startsection{paragraph}{4}{\z@}%
                                      {\parskip}
                                      {-1em}%
                                      {\normalfont\normalsize\bfseries}}
\newtheorem{theorem}{Theorem}
\newtheorem{lemma}[theorem]{Lemma}
\newtheorem{cor}[theorem]{Corollary}
\newtheorem{definition}{Definition}
\newtheorem{remark}{Remark}
\newtheorem{fact}[theorem]{Fact}
\newcommand{\norm}[1]{\ensuremath{\left\lVert #1 \right\rVert}}
\newcommand{\eps}{\varepsilon}
\newcommand{\inner}[2]{\langle #1, #2 \rangle}
\newcommand{\matel}[3]{\langle #1 | #2 | #3\rangle}
\newcommand{\clA}{\mathcal{A}}
\newcommand{\clB}{\mathcal{B}}
\newcommand{\clC}{\mathcal{C}}
\newcommand{\clE}{\mathcal{E}}
\newcommand{\clF}{\mathcal{F}}
\newcommand{\clH}{\mathcal{H}}
\newcommand{\clM}{\mathcal{M}}
\newcommand{\clO}{\mathcal{O}}
\newcommand{\clP}{\mathcal{P}}
\newcommand{\clS}{\mathcal{S}}
\newcommand{\clT}{\mathcal{T}}
\newcommand{\clX}{\mathcal{X}}
\newcommand{\clY}{\mathcal{Y}}
\newcommand{\clZ}{\mathcal{Z}}
\newcommand{\sfB}{\mathsf{B}}
\newcommand{\sfD}{\mathsf{D}}
\newcommand{\sfF}{\mathsf{F}}
\newcommand{\sfH}{\mathsf{H}}
\newcommand{\sfI}{\mathsf{I}}
\newcommand{\sfP}{\mathsf{P}}
\newcommand{\sfV}{\mathsf{V}}
\newcommand{\Id}{\mathbbm{1}}
\newcommand{\bbF}{\mathbb{F}}
\newcommand{\bbN}{\mathbb{N}}
\newcommand{\bbC}{\mathbb{C}}
\newcommand{\bbE}{\mathop{\mathbb{E}}}
\newcommand{\A}{\mathrm{A}}
\newcommand{\B}{\mathrm{B}}
\newcommand{\C}{\mathrm{C}}
\newcommand{\tA}{\widetilde{A}}
\newcommand{\tS}{\widetilde{S}}
\newcommand{\tC}{\widetilde{C}}
\newcommand{\tE}{\widetilde{E}}
\newcommand{\tM}{\widetilde{M}}
\newcommand{\tX}{\widetilde{X}}
\newcommand{\tU}{\widetilde{U}}
\newcommand{\tY}{\widetilde{Y}}
\newcommand{\tZ}{\widetilde{Z}}
\newcommand{\hA}{\widehat{A}}
\newcommand{\hB}{\widehat{B}}
\newcommand{\hC}{\widehat{C}}
\newcommand{\hS}{\widehat{S}}
\newcommand{\oC}{\overline{C}}
\newcommand{\oT}{\overline{T}}
\newcommand{\vph}{\varphi}
\newcommand{\ketbra}[2]{\ket{#1}\!\!\bra{#2}}
\newcommand{\state}[1]{\ketbra{#1}{#1}}
\newcommand{\cmark}{\ding{51}\,}
\newcommand{\xmark}{\ding{55}\,}
\newcommand{\syn}{\mathtt{syn}}
\newcommand{\lsyn}{\ell_{\syn}}
\newcommand{\supp}{\operatorname{supp}}
\newcommand{\Tr}{\operatorname{Tr}}
\newcommand{\lEC}{p_\mathrm{err}}
\newcommand{\cl}{\mathrm{CLONE}}
\newcommand{\negl}{\mathrm{negl}}
\newcommand{\CHSH}{\mathrm{CHSH}}
\newcommand{\junk}{\mathrm{junk}}
\newcommand{\triv}{\mathrm{triv}}
\newcommand{\IP}{\mathrm{IP}}
\newcommand{\dga}{\delta_{\gamma,\alpha}}
\newcommand{\qhon}{q}
\newcommand{\Enc}{\mathsf{Enc}}
\newcommand{\KeyR}{\mathsf{KeyRel}}
\newcommand{\Dec}{\mathsf{Dec}}
\newcommand{\KeyG}{\mathsf{KeyGen}}
\newcommand{\kpriv}{k_\mathrm{priv}}
\newcommand{\Kpriv}{K_\mathrm{priv}}
\newcommand{\kdec}{k_\mathrm{dec}}
\newcommand{\Kdec}{K_\mathrm{dec}}
\newcommand{\kenc}{k_\mathrm{enc}}
\newcommand{\Kenc}{K_\mathrm{enc}}
\newcommand{\Prot}{\mathsf{Protect}}
\newcommand{\Eval}{\mathsf{Eval}}
\newcommand{\otp}{R}
\newcommand{\eA}{\widetilde{A}} 
\newcommand{\eX}{\widetilde{X}} 
\newcommand{\rS}{S'} 
\newcommand{\eS}{\widetilde{S}} 
\newcommand{\gA}{G} 
\newcommand{\rVec}{V} 
\newcommand{\rvec}{v} 
\definecolor{dgreen}{rgb}{0.0, 0.6, 0}
\title{Device-independent uncloneable encryption}
\author{Srijita Kundu \thanks{Institute for Quantum Computing and Department
of Combinatorics and Optimization, University of Waterloo, Waterloo, Ontario N2L 3G1, Canada. \tt{srijita.kundu@uwaterloo.ca}}
\and
Ernest Y.-Z. Tan \thanks{Institute for Quantum Computing and Department
of Physics and Astronomy, University of Waterloo, Waterloo, Ontario N2L 3G1, Canada. \tt{yzetan@uwaterloo.ca}}}
\begin{document}
\maketitle

\begin{abstract}
Uncloneable encryption, first introduced by Broadbent and Lord (TQC 2020) is a quantum encryption scheme in which a quantum ciphertext cannot be distributed between two non-communicating parties such that, given access to the decryption key, both parties cannot learn the underlying plaintext. In this work, we introduce a variant of uncloneable encryption in which several possible decryption keys can decrypt a particular encryption, and the security requirement is that two parties who receive independently generated decryption keys cannot both learn the underlying ciphertext. We show that this variant of uncloneable encryption can be achieved device-independently, i.e., without trusting the quantum states and measurements used in the scheme, and that this variant works just as well as the original definition in constructing quantum money. Moreover, we show that a simple modification of our scheme yields a single-decryptor encryption scheme, which was a related notion introduced by Georgiou and Zhandry. In particular, the resulting single-decryptor encryption scheme achieves device-independent security with respect to a standard definition of security against random plaintexts. Finally, we derive an ``extractor'' result for a two-adversary scenario, which in particular yields a single-decryptor encryption scheme for single bit-messages that achieves perfect anti-piracy security without needing the quantum random oracle model. 
\end{abstract}

\section{Introduction}

A fundamental difference between classical and quantum information lies in the fact that quantum information cannot be perfectly copied. This property can be used to do cryptography, as was noted by Wiesner \cite{Wie83}, who gave the first scheme for \emph{quantum money} which cannot be forged. Later \cite{Got03} considered the question of whether in the context of encryption schemes, one could construct a form of \emph{uncloneable encryption}; i.e.~quantum ciphertexts that in some sense cannot be copied. Pursuing this line of reasoning,~\cite{Got03} developed an encryption scheme in which an adversary attempting to copy a quantum ciphertext would be caught with high probability by the intended (honest) recipient. Following up on a question posed in that work,~\cite{BL20} subsequently constructed an encryption scheme achieving a slightly different notion of uncloneable encryption\footnote{\cite{BL20} refer to the scheme in~\cite{Got03} as one that achieves \emph{tamper-detection} rather than uncloneability; in this work we follow their terminology rather than the original terminology in~\cite{Got03}.}, namely a quantum ciphertext that cannot be distributed amongst two parties in such a way that they can both decrypt the message with high probability (after obtaining the decryption key).

An uncloneable encryption scheme needs to satisfy a standard notion of indistinguishability (or semantic security) that any encryption scheme needs to satisfy. Aside from this, \cite{BL20} introduced two notions of security that an uncloneable encryption scheme should satisfy: uncloneability and uncloneable-indistinguishability. The uncloneable encryption scheme given \cite{BL20} was a simple construction based on Wiesner states and monogamy of entanglement games. While this scheme achieved uncloneability in the plain model, it did not achieve uncloneable-indistinguishability, even in the quantum random oracle model (QROM). 

Following \cite{BL20}, there have been several subsequent works on uncloneable cryptography. \cite{MST21} showed that uncloneable-indistinguishability cannot be achieved by schemes using certain kinds of states, and other limitations of the proof techniques employed in \cite{BL20}. \cite{AK21} considered public-key uncloneable encryption; \cite{GMP22} gave a protocol for uncloneable encryption based on the post-quantum hardness of the learning with errors (LWE) problem. Recently, \cite{AKL+22} gave a more complicated uncloneable encryption protocol based on subset coset states that achieves uncloneable-indistinguishability in the QROM. Moreover, \cite{AKL+22} also gave some impossibility results showing that certain kinds of schemes cannot achieve uncloneable-indistinguishability in the plain model. The concept of uncloneable decryption keys, or single-decryptor encryption, was introduced by \cite{GZ20}, which takes a reversed perspective compared to uncloneable encryption: here a quantum decryption key is made uncloneable rather than a ciphertext, with the security requirement being only a single party should be able to use the decryption key to decrypt a classical ciphertext. \cite{GZ20} showed that single-decryptor encryption is equivalent to uncloneable encryption under a certain security definition, but other definitions of single-decryptor encryption have also been considered, see e.g. \cite{CLLZ21}.

As noted in \cite{BJL+21}, uncloneable encryption can be considered the second level in the hierarchy of uncloneable objects, since it makes \emph{information} uncloneable. The first level of the hierarchy only lets us verify the authenticity of objects: this is where private-key quantum money lies. At the top level of the hierarchy, \emph{functionalities} are made uncloneable: this includes quantum copy protection and secure software leasing. It would be natural to ask if higher levels of the hierarchy can be used to achieve lower levels. Indeed, \cite{BL20} showed that uncloneable encryption can be used to construct private-key quantum money. More surprisingly, it has been shown \cite{CMP20, AK21} that uncloneable encryption can be used to construct quantum copy protection of a certain class of functions, although these constructions require either the QROM \cite{CMP20} or additional computational assumptions \cite{AK21}.

\paragraph{Device-independence.} The results in~\cite{Got03,BL20} were derived in a \emph{device-dependent} setting, in which it is assumed that any honest parties can generate trusted states and/or perform trusted measurements. However, it was observed in e.g.~\cite{BHK05,AGM06,PAB+09} that in some situations, one can construct protocols that are secure under much weaker assumptions: no assumptions on the states and measurements are made except that the measurements of spatially separated parties are in tensor product (or commute). This strong form of security is referred to as the \emph{device-independent} (DI) paradigm, in that security can be achieved (almost) independently of the underlying operations being performed by the devices used in the protocol. Device-independent protocols that have information theoretic security are often based on the property of \emph{self-testing} or \emph{rigidity} displayed by some non-local games. Suppose a non-local game is played with some unknown state and measurements. If these measurements and state achieve a winning probability close to the optimal winning probability of the game, then self-testing tells us the state and measurements are close to the ideal state and measurements needed to achieve the optimal winning probability for that game.\footnote{In this work, for simplicity of presentation we use the self-testing results described in~\cite{MYS12} which have closed-form expressions; however, there would be no obstacles in principle if one were to instead use the more robust bounds computed in~\cite{BNS+15} using semidefinite programming. The latter approach can give bounds that are quite robust to non-maximal winning probabilities on the non-local game.} This means that we can do cryptography with this state and measurements as though they were the ideal state and measurements.

We note that in the specific context of uncloneable encryption, the security proofs in~\cite{Got03,BL20} do already have a form of ``one-sided device-independent'' property, in the sense that for the uncloneable encryption scenario the receiver may be dishonest, and hence the security proof must cover the possibility of the receiver not performing the intended operations. However, our goal in this work is to extend the device-independence to cover the client's devices as well (we briefly elaborate on how the~\cite{BL20} scheme is insecure in the fully DI setting in Section~\ref{subsec:results} below).
This is somewhat similar to the scenario considered by~\cite{GMP22} (for which they achieve polynomial rather than exponential security), except that in their scenario, while the states and measurements are indeed not trusted, it is still assumed that the devices are computationally bounded. Due to this, the security achieved in their scenario is not information theoretic, but under the assumption that the LWE problem cannot be solved by polynomial-time quantum computers. Hence thus far, there has not been an uncloneable encryption scheme in the ``standard'' fully DI scenario, without computational assumptions. 

\subsection{Our results}
\label{subsec:results}
In this work, we develop protocols for a modified version of uncloneable encryption (of classical messages), which we term \emph{uncloneable encryption with variable keys} (VKECM), as well as single-decryptor encryption --- see Remark~\ref{remark:defnchoice} below for comments regarding the security definitions used. Our main contribution is to show that these protocols fulfill a standard notion of DI security, i.e.~without computational assumptions, which has not been achieved in any previous work. (As mentioned above,~\cite{GMP22} is the only work achieving anything similar to DI security, but their result required computational assumptions; furthermore, our result is also stronger in the sense that it achieves exponential security rather than polynomial security as in that work.)
While VKECM differs from the usual notion of uncloneable encryption, we note in particular that it is at least still sufficient to yield a quantum money scheme that is secure in the standard fully DI setting, as we discuss later.

\begin{remark}\label{remark:defnchoice}
Currently, there are multiple competing security definitions in the literature for uncloneable encryption and single-decryptor encryption. We do not aim to provide a detailed comparison of the different definitions within the scope of this work, as they are fairly technical, though we discuss some of them briefly in Section~\ref{sec:secdefn} when specifying the definitions we chose to use. We highlight however that for uncloneable encryption, we are not studying the standard notion but are instead focusing on the modified version VKECM described below, and hence we have to introduce appropriate new security definitions for VKECM rather than the standard form of uncloneable encryption. Still, we note that they are similar in spirit to those used in the work~\cite{GMP22} on uncloneable encryption, and still suffice to yield a DI quantum money scheme as we discuss later. As for single-decryptor encryption, we choose to follow essentially the security definitions used in~\cite{CLLZ21, LLQZ22}.
\end{remark}

\paragraph{Uncloneable encryption with variable keys.} In our modified version of uncloneable encryption, the idea is that a particular ciphertext can be decrypted with several possible decryption keys, and each adversary in a cloning attack gets an independently generated decryption key. To further illustrate what we mean, we shall discuss this in the context of the uncloneable encryption scheme based on Wiesner states given by \cite{BL20}. (We stress that this is merely an example to demonstrate the key ideas, and VKECM is not restricted to such a specific implementation --- for a general formulation of VKECM, refer to Section~\ref{subsec:VKECM}.) For $a, x\in \{0,1\}$, we shall use $\ket{a^x}$ to denote the state $H^x\ket{a}$, where $H$ is the Hadamard matrix that takes the computational basis to the $\ket{+}, \ket{-}$ basis. For $a, x \in \{0,1\}^n$, we shall use $\ket{a^x}$ to denote $\bigotimes_{i=1}^n\ket{(a^i)^{x_i}}$. These $\ket{a^x}$ states are called Wiesner states. The basic encryption scheme (without using the QROM) in \cite{BL20} is as follows: the ciphertext corresponding to a message $m$ of $n$ bits is $(m\oplus a, \ket{a^x})$, for uniformly random $x$ and $a$, and the decryption key is $x$. On getting $x$, a single party can measure the quantum part of the ciphertext in the bases indicated by $x$ to recover $a$, and hence $m$. However, because the Wiesner states satisfy a monogamy of entanglement property \cite{TFK+13}, two parties cannot simultaneously do this. 

Note that in the scheme described above, the string $a$ which is generated really is a ``private key''\footnote{To avoid confusion: note that here we do not use the term ``private key" in the same sense as in a public key encryption procedure.} that is required to do the encryption procedure, but which cannot be revealed to any party if any kind of security is desired. Fortunately, after the encryption procedure is completed, $a$ does not need to be stored; only the string $x$, which is completely independent of $a$, needs to be stored and possibly released later as a decryption key.
 
Now consider the following modification: we still use Wiesner states $\ket{a^x}$, but we cannot use all the bits of $a$ as a one-time-pad for the message --- in fact we require that each party that wants to decrypt the message has to learn a different (independently generated) subset of the bits of $a$ in order to do so. The reasons we need to do this are technical and have to do with the proof style based on parallel repetition we use (we shall expand more on this in Section \ref{sec:overview}), but it can be achieved by modifying the protocol in the following way. If the message length is $n$ bits, the Wiesner states will now be $l$ bits, for $l > n$. The ciphertext will be $(m\oplus r, \ket{a^x})$, where $r$ is a uniformly random string of $n$ bits. $(r, a, x)$ will all need to be stored as private key now, and there will be a ``key release" procedure that takes the private key and generates a decryption key with a random subset $T$ of $[l]$ of size $n$. An instance of the decryption key is $(r\oplus a_T, T, x_T)$, and each time a decryption key is released from the decryption procedure, $T$ is generated independently. Obviously this means there are many possible decryption keys, corresponding to different values of $T$. A single decryptor given the decryption key and using $\ket{a^x}$ can learn $r$, and thus can learn $m$ using the classical part of the ciphertext. This also satisfies the property we required, that if two parties both want to learn the message, they have to learn independent subsets of the bits of $a$.

Our full DI scheme for achieving uncloneable encryption with variable keys is very similar to the modified version of the \cite{BL20} scheme described above, except with a ``testing'' step to obtain DI security\footnote{To see that the~\cite{BL20} scheme does not work if the state preparation is untrusted, observe that if the state prepared is simply a classical record of the values $(a,x)$ rather than the Wiesner states $\ket{a^x}$, then it is trivially insecure. If converted to an entanglement-based protocol in which the client performs some choice of measurement $x$ and obtains an output $a$, observe that if the client's measurements are untrusted, then the devices could just be implementing a completely classical strategy in which for each round the output $a$ is perfectly deterministic for each $x$, in which case all dishonest parties will know the value of $a$ once given $x$. (If desired, this deterministic behaviour could be made undetectable by any statistical checks involving only the frequency distribution of $a$ and/or $x$, by instead making the value of $a$ for each $x$ a function of some classical ``hidden variable'' $\Lambda$, a copy of which is held by all dishonest parties.)} --- basically, we play a nonlocal game on a random subset of the rounds, with the informal goal of ensuring that the devices cannot both win the nonlocal game on those rounds with high probability and still be able to usefully clone the resulting states. (Note, however, that our results are \emph{not} based on parallel self-testing theorems; rather, the only place we invoke self-testing is to study a single protocol round, after which we separately derive a parallel-repetition theorem to analyze the entire protocol. In particular, this means that in principle one could substitute the self-testing argument with other methods for analyzing single protocol rounds.) 

We provide a formal definition of uncloneable encryption with variable keys and its related security criteria in Section \ref{subsec:VKECM}. As indicated in the illustrative example above, the main difference between our definition and that introduced by \cite{BL20} is that the whole private key that was used in the encryption needs to be stored, and there is a key release procedure that takes the private key as input, uses additional private randomness, and outputs an independent decryption key each time one is requested (here by independent we mean the additional private randomness is independent for each decryption key). Additionally, since we work in the DI setting, our encryption procedure involves a small amount of interaction\footnote{However, this interaction can potentially be removed if the client can impose some additional constraints on their devices; we elaborate on this in Remark~\ref{remark:interactive}.} to implement the ``testing'' step, and we include an option to abort the procedure if this test fails. Such features are typically required in DI cryptography protocols, which rely on rigidity properties of various interactive procedures (such as non-local games, taking into account the need to check whether the game is won) to ``test'' if the device behaviour is close to the ideal case, as mentioned above. For instance, such interaction was also present in the \cite{GMP22} scheme, which was DI under computational assumptions.\footnote{We stress however that despite this interactive aspect, we do \emph{not} assume the receiver has to be honest in our setup. While a dishonest receiver could of course lie about the outputs of their devices, this poses no problems for a DI security proof, because such behaviour can always be absorbed into the operations/measurements performed by the dishonest party --- this line of reasoning has been used in many previous works on cryptographic scenarios with some potentially dishonest receiver (including uncloneable encryption) such as~\cite{FM18,GMP22,KT23}. (See also Remark~\ref{remark:receiver} later below.)}

Under those definitions, our main result regarding the achievability of DI uncloneable encryption with variable keys is stated in Theorem \ref{thm:main}, and the scheme achieving this is described in Scheme~\ref{prot:DI-VKECM}.
Some additional notable features of our scheme are as follows:
\begin{itemize}
\item The uncloneable encryption scheme of \cite{GMP22}, which is device-independent with computational assumptions, allows for some noise in the devices, but their approach requires the noise parameter to vanish in the limit of large message length $n$. In contrast, our protocol tolerates a \emph{constant} level of noise in the honest devices. (For the device-dependent uncloneable encryption schemes, to our knowledge none of them have explicitly analyzed noise in the devices, though it should be possible to modify some of the schemes to account for this.)
\item Most DI cryptographic protocols that guarantee information theoretic security require that there is no communication between the devices of different parties involved in the protocol. Our security proof is based on the parallel repetition of a form of a non-local game; it was shown in \cite{JK21} that proofs based on parallel repetition can tolerate a small (but linear in $n$) amount of communication, or leakage, between the devices of all parties involved. We give a simpler proof, inspired by the lower bound of quantum communication complexity in terms of the quantum partition bound in \cite{LLR12}, to show that our scheme tolerates leakage between the client and the receiver during the encryption procedure, and between two parties who are both trying to decrypt the message in a cloning attack. This makes our security criterion qualitatively different from that considered in \cite{BL20} (even after accounting for the fully DI setting we consider), since the two parties in a cloning attack no longer have to be non-communicating --- any communication between the parties can obviously also be considered leakage between their devices. We only require that the total number of bits thus leaked be bounded. Our argument could also be adapted to obtain device-dependent schemes for uncloneable encryption that can tolerate some communication between the two parties in a cloning attack.
\end{itemize}

\paragraph{Application to quantum money.} Although our aim in constructing this modification of uncloneable encryption was to be able to prove DI security via proof techniques for parallel repetition, we believe that the modified definition can still be useful. For instance, it can be used just as well as the original notion of uncloneable encryption to get private key quantum money. The approach here is the same as that sketched in \cite{BL20}. Basically, a bank could produce a banknote by encrypting a random string $M$ using our procedure, then storing the private key as well as $M$ in its internal records, and placing the quantum ciphertext in the banknote. To verify a banknote, the bank would use the private key to run the key release procedure and generate a decryption key, then use this to decrypt the state in the banknote and check whether the output matches $M$. Our security definition for uncloneability immediately implies that if an adversary attempted to clone the banknote and submit it to two separate bank locations for verification (each of which independently runs the key release procedure), the probability of both being accepted\footnote{More precisely, the probability of both banknotes being accepted \emph{and} the original encryption procedure accepting as well.} is exponentially small. 

Since our scheme is secure in the DI setting, this means the above approach yields a method for obtaining DI quantum money.\footnote{This approach avoids an impossibility result for DI quantum money derived in~\cite{HS20}, because we are considering a somewhat different setup from that work. 
}  
We note that it does seem possible that another approach for DI quantum money would be to slightly modify existing protocols for DI quantum key distribution; however, there may be some benefits to using the approach in this work. In particular, while our analysis above does not say anything about what happens if the bank returns successfully verified banknotes (rather than destroying them and generating fresh ones), the proof techniques we use here should be modifiable to allow a security proof for a scheme that returns successfully verified banknotes a fixed number of times, chosen at the point of generation of the banknote. 

\paragraph{Single-decryptor encryption.} In single-decryptor encryption of classical messages (SDECM), a key generation process first produces an encryption key and a quantum decryption key; the decryption key is then given to a dishonest party, while the encryption key is stored with the honest party. The requirement is that two parties between whom the dishonest party has distributed the decryption key cannot both decrypt a message (or two independent messages) that have been encrypted with the encryption key. A simple change of perspective in the Wiesner state protocol for uncloneable encryption gives us a single-decryptor encryption protocol: here $\ket{a^x}$ will be the decryption key, $(a,x)$ the encryption key (note here the difference with uncloneable encryption, where only $x$ needed to be stored), and messages will be encrypted with $(a,x)$ as $(m\oplus a,x)$.

In general the encryption procedure for single-decryptor encryption is allowed to use randomness outside of the encryption key in order to encrypt a message (indeed, this was the case in some of the schemes described in \cite{GZ20}). 
This has given rise to different possible security definitions, depending on whether the two parties who are challenged to decrypt 
should receive the same message encrypted using the same additional randomness, or whether the messages and randomness should be independent. 
Again, we defer the full details to Section~\ref{sec:defSDECM} rather than this introductory section, but to give a brief overview,
\cite{CLLZ21, LLQZ22} considered a definition where the two parties receive encryptions of independent messages using independent randomness, i.e.~they receive different ciphertexts. On the other hand,
\cite{GZ20} considered a definition where the two parties receive encryptions of the same message using the same additional randomness, i.e., when they both receive the same ciphertext, and furthermore showed that this is equivalent to uncloneable encryption under a particular security definition.\footnote{It is actually possible to consider variations of the security definition other than those in \cite{GZ20} and \cite{CLLZ21}: one can make either or both of the message and the additional randomness used in encryption for the two parties independent or identical. We discuss these issues in more detail in Section~\ref{sec:defSDECM}.}

We believe that single-decryptor encryption under the security definition of \cite{CLLZ21} should be equivalent to our definition of uncloneable encryption with variable keys, using essentially the same reduction as \cite{GZ20}, though we do not attempt to formally prove this claim. Instead, we directly prove that a very straightforward modification of our DI protocol for uncloneable encryption with variable keys, in essentially the same manner as the Wiesner state protocol, achieves single-decryptor encryption under a definition similar to \cite{CLLZ21} (see Section~\ref{sec:defSDECM} for detailed definitions).

\paragraph{Single-decryptor encryption of bits and trits.} For single-bit or single-trit messages specifically, we are also able to modify the above single-decryptor encryption scheme in such a way that (under the independent-message security definition) the adversaries' probability of guessing the messages is at most only negligibly larger than the trivial value. This is a stronger security property than that of the above scheme (where the adversary's guessing probability is only exponentially small in the message length --- see Section~\ref{sec:defSDECM} for more precise details), and furthermore we obtain it without using the QROM, in contrast to previous work.\footnote{After the publication of the preprint of this work (in which our focus was on applying this idea for VKECM), a similar result was independently obtained in~\cite{AKL23}. We have since identified a flaw in our application of this idea in VKECM, which we elaborate on in this updated manuscript (see Remark~\ref{remark:OTPreuse}). However, we found that it could still be used in single-decryptor encryption to obtain a similar result to~\cite{AKL23}, and present this finding in this updated version.} The idea here is as follows: suppose the probability for two adversarial parties to simultaneously guess some ``raw'' $l$-bit strings is exponentially small in $l$, but larger than $2^{-l}$. Now, we would ideally like to ``extract" a single bit such that the probability of the two parties simultaneously guessing it is $\frac{1}{2} + \negl(l)$.\footnote{This part of our analysis does not depend significantly on the exact scaling of the function $\negl(l)$; when applying it in our protocol, it will be an exponentially small function (since we achieve exponential rather than polynomial security), though larger than $2^{-l}$.} If there had just been one party instead of two, then this would be achievable by using randomness extractors \cite{TSS+11,DPV+12}; however, the proof techniques in those works do not seem to straightforwardly generalize to two parties. The difficulties with using extractors against two parties were noted in \cite{AKL+22}, although using randomness extractors was not ruled out by the impossibility results in the same work.

In the setting of single-decryptor encryption (under the independent-message security definition), our method to overcome this difficulty is to design our scheme such that it involves implementing randomness extractors with \emph{different} random seeds for the two parties. Specifically, we shall use the inner product function, which is known to be an extractor. If the two parties cannot guess $x$ with high probability from a shared state $\ket{\rho}_x$, we prove that the probability that one party guesses $x\cdot \rvec^1$ and the other guesses $x\cdot \rvec^2$, where $\rvec^1$ and $\rvec^2$ are independently generated from the uniform distribution over $\{0,1\}^l$, is at most $\frac{1}{2} + \negl(l)$.\footnote{The argument also works if there are two (possibly correlated) strings $x^1$ and $x^2$, and we have a bound on one party learning $x^1$ and the other party learning $x^2$ from $\ket{\rho}_{x^1x^2}$, which will be the case in our actual setting. Here we can upper bound the probability of the parties learning $x^1\cdot \rvec^1$ and $x^2\cdot \rvec^2$ respectively.} Instead of going via the standard arguments for extractors, our argument for this is done similar to the quantum version of the Goldreich-Levin theorem for hardcore bits \cite{AC02}. The idea is that if the probability of the parties guessing $x\cdot \rvec^1$ and $x\cdot \rvec^2$ averaged over $\rvec^1, \rvec^2$ is more than $\frac{1}{2} + \negl(l)$, then they can independently run the Bernstein-Vazirani algorithm on their halves of the shared state, to both guess the entire string $x$ with too high a probability. Note that since the two parties need to run the Bernstein-Vazirani algorithm independently, this argument does not work if we consider the probability of both parties learning $x\cdot v$ for the same $v$.

This argument generalizes to trits as well, but it does not seem to extend to longer bit strings. 
The implication of this result is that for single-decryptor encryption, we have ``perfect'' anti-piracy security (see Section~\ref{sec:defSDECM} for formal definitions) for single-bit or single-trit messages, which in turn also implies CPA-style security for such messages.
We remark that in principle, one could make a similar modification to our scheme for uncloneable encryption with variable keys, to attempt to obtain uncloneable bits or trits. However, there is a subtle issue that arises in the attempt, and hence we were unable to prove security of the resulting scheme --- we discuss this difficulty further in Remark~\ref{remark:OTPreuse} later, along with the relevant properties that would need to be proven in order to obtain such a result.

\subsection{Technical overview}\label{sec:overview}
In this section, we give an overview of how we construct our DI uncloneable encryption with variable keys (DI-VKECM) scheme, and how we prove its security. All the arguments in this section are for proving uncloneability. In the context of VKECM, uncloneability is the requirement that if the encryption of a uniformly random message gets distributed between two parties, say Bob and Charlie, then the average probability that both of them guess the message given access to independently generated decryption keys, is exponentially small in the number of bits in the message.

\paragraph{Proving security by analyzing a non-local game.} The first thing to notice is that the Wiesner states correspond to the states after the measurement of one of the parties (let's say Alice) in the CHSH non-local game. Therefore, we can consider doing something very similar to many protocols for DI quantum key distribution (QKD) \cite{PAB+09}. During the encryption process, the client (who is honest) and the receiver (who may be dishonest) will share states compatible with some $l$ copies of the CHSH game. On some of these copies, the CHSH game will be self-tested, and the rest will be used for encryption. The idea is that, if a random subset is used for self-testing, the dishonest receiver, who had to have prepared the devices beforehand, would have had to actually prepare i.i.d. copies of the ideal CHSH state and measurements (up to some amount of noise per copy) in order to pass the self-testing check.\footnote{In principle, one might be able to use some other non-local game with self-testing properties; however, the CHSH game is particularly convenient since (after Alice's measurement) the ideal strategy produces exactly the Wiesner states. We also note that the monogamy of entanglement game from~\cite{TFK+13} (which was used in the~\cite{BL20} security proof) was shown to have some self-testing properties as well~\cite{BC21}. However, to our knowledge all self-testing properties of that game are derived for the ``one-sided DI'' scenario where Alice's operations are trusted, and hence it is unclear whether they could be applied to obtain a fully DI protocol.}

However, there is one key aspect in which the situation here differs from that in QKD. In a DIQKD protocol, measurements corresponding to all copies of the game in question can be done at once, and then the Serfling bound can be applied to the resulting input-output distribution, in order to say something about the entire distribution from what is observed in the self-testing subset of inputs and outputs. However, for uncloneable encryption, measurements corresponding to the copies of the game which will be used for encryption cannot be done at the same time as the testing subset; in an honest implementation, these states need to remain unmeasured because the ciphertext needs to be quantum. Thus there is no fixed distribution on which the Serfling bound can be applied.

Because of the above problem, we shall instead use a two-round non-local game\footnote{While we refer to this as a non-local game for brevity, we note that strictly speaking it does not fall within the ``standard'' framework of non-local games (in which there is no communication between the parties), because in our scenario the player Barlie sends or distributes states to the players Bob and Charlie \emph{after} receiving some inputs. It would be convenient if we could have analyzed some three- or four-player non-local game in the ``standard'' sense; however, some sort of communication between parties seems unavoidable in order for us to be able to relate this game to the task of uncloneable encryption (similar to the situation for monogamy of entanglement games in~\cite{TFK+13,BL20}). We specify the exact order in which the players can share or distribute states in our full game description in Section~\ref{sec:cl-game}.} whose first round is played between two players Alice and Barlie, and the second round is played between two players Bob and Charlie, between whom Barlie's portion of the state at the end of the first round gets distributed (Alice's portion of the state at the end of the first round remains untouched). We call this game the cloning game $\cl_\gamma$. We describe a single instance of the game first; we shall actually need the parallel-repeated version of the game for the security proof. In the first round of a single instance of the game, Alice will receive a uniformly random single-bit input, and Barlie will receive a trit in $\{0,1, \text{keep}\}$, with ``keep" occuring with probability $(1-\gamma)$. The input ``keep" to Barlie indicates that the CHSH game will not be tested, and in this case the first round is automatically won; otherwise, Alice and Barlie's inputs are the same as the inputs of the CHSH game, and they need to produce outputs that satisfy the CHSH winning condition. If Barlie did not get ``keep" in the first round, the second round is automatically won; otherwise Bob and Charlie get the same input as Alice did in the first round of the game, and they both have to guess Alice's output bit from the first round. Note that Alice gets the same input regardless of Bob's input, and in the honest case, if she gets input $x$ and produces output $a$ in the first round, the state on Barlie's side is $\ket{a^x}$. If the first round is won with high enough probability, then the state after the first round on Barlie's side is close to $\ket{a^x}$, which means by the monogamy of entanglement property of Wiesner states, we can upper bound the probability that Bob and Charlie can guess $a$ given $x$. This means that the overall winning probability of $\cl_\gamma$ is bounded away from 1. The style of argument we employ here is similar to what was considered in \cite{ACK+14, KST22} for two-party cryptography, although parallel repetition was not considered in those works.

When we consider the parallel-repeated $\cl_\gamma$, i.e., $l$ i.i.d. copies of $\cl_\gamma$, a constant fraction of the instances (in expectation) will be tested in the first round, and the rest can be used for encryption. The ciphertext in this case will be the state after the first round on Barlie's side (which in the honest case is a Wiesner state of the form $\ket{a_S^{x_S}}$, where $x$ and $a$ and Alice's first round inputs and outputs, and $S$ is the subset on which Barlie's first round input was ``keep"), along with the classical string $m \oplus a_S$. If we can prove that the winning probability of the parallel-repeated $\cl_\gamma$ decreases exponentially in $n$, then we could prove the uncloneability of this scheme. Note that the scheme as described here is actually an uncloneable encryption scheme in the original sense of \cite{BL20}, since $x_S$ is the only decryption key.

\paragraph{Parallel repetition of the cloning game.} Unfortunately, we cannot prove a parallel repetition of the game $\cl_\gamma$ as described. What we can prove a parallel repetition theorem for is a modified version of $\cl_\gamma$ which is ``anchored". The anchoring transformation we use is similar to those used in \cite{Vid17, KT23}. Essentially, the anchoring property requires that Bob and Charlie's second round inputs cannot be perfectly correlated with Alice's first round input, or each other --- with some small probability, these distributions need to be product instead. We do this by giving Bob and Charlie independently ``blank" inputs rather than $x_i$ on some instances, and then not using those instances for encryption for them (this corresponds to the second round being won for free on these instances of the game) --- this forces us to use the additional random string $r$ in the ciphertext as described earlier, with $(r\oplus a_{T\cap S}, x_{T\cap S}, T)$ being a decryption key for random $T$. This now places us in the setting of VKECM.

We can prove a parallel repetition theorem for the anchored version of $\cl_\gamma$, which we denote by $\cl_{\gamma, \alpha}$ for anchoring parameter $\alpha$, similar to how a parallel repetition theorem for a different two-round game was proved in \cite{KT23}. The game considered in \cite{KT23} had only two players in both rounds, so in our proof we require some additional steps to take care of the two players Bob and Charlie, between whom Barlie's state is divided after the first round. We use the information theoretic framework for parallel repetition that was introduced by \cite{Raz95, Hol09}. In this framework, we consider a strategy for $l$ copies $\cl_{\gamma, \alpha}$ and condition on the event $\clE$ of the winning condition being satisfied on some $C\subseteq[l]$ instances. We show that if $\Pr[\clE]$ is not already small, then we can find another coordinate in $i\in\oC=[l]\setminus C$ where the winning probability conditioned on $\clE$ is bounded away from 1. The proof is by contradiction: we show that if the probability of $\clE$ is large and the probability of winning in $i$ conditioned on $\clE$ is not bounded away from 1, then there is a stategy for a single copy of $\cl_{\gamma, \alpha}$, whose winning probability is higher than the maximum winning probability of $\cl_{\gamma, \alpha}$. This is done by defining a state representing the inputs, outputs and shared entanglement in the strategy for $\cl_{\gamma, \alpha}^l$, conditioned on $\clE$. When Alice, Barlie, Bob and Charlie's inputs for the $i$-th instance of $\cl_{\gamma, \alpha}$ are $x_i, u_i, y_i, z_i$ respectively, we denote this state by $\ket{\vph}_{x_iu_iy_iz_i}$. The state $\ket{\vph}_{x_iu_iy_iz_i}$ is such that the distribution obtained by measuring the corresponding $i$-th output registers on it is the distribution of outputs in the original strategy for the inputs $x_iu_iy_iz_i$, conditioned on the event $\clE$. Therefore, if the probability of winning in the original strategy conditioned on $\clE$ is too high, and the players are able to output from the state $\ket{\vph}_{x_iu_iy_iz_i}$ (or close to this distribution) in a single instance of $\cl_{\gamma, \alpha}$, they can win the single instance with too high probability.

The rest of the proof will thus involve showing how the players can provide outputs that are close to the output distribution in $\ket{\vph}_{x_iu_iy_iz_i}$ when playing a single instance of $\cl_{\gamma, \alpha}$. Let us denote the blank inputs to Bob and Charlie in the second round by $\bot$, so that $\ket{\vph}_{x_iu_i\bot\bot}$ is $\ket{\vph}_{x_iu_iy_iz_i}$ when they both get this blank input. The first thing to note is that there exists some $i$ such that the distribution of Alice and Barlie's first round outputs is almost the same in $\ket{\vph}_{x_iu_i\bot\bot}$ and $\ket{\vph}_{x_iu_iy_iz_i}$ for any $y_iz_i$. This is because the distributions were exactly the same originally (because the first round outputs were produced without access to $y_iz_i$), and conditioning on $\clE$ does not change the distribution too much (because we have assumed the probability of $\clE$ happening is not too small). So in the first round, Alice and Barlie could jointly produce the state $\ket{\vph}_{x_iu_i\bot\bot}$ on getting inputs $x_i, u_i$, and measure its output registers to give their first round outputs. The distribution of $X_iU_i$ conditioned on $Y_i=\bot, Z_i=\bot$ is product, so we can use an argument very similar to that of \cite{JPY14} to prove a parallel repetition theorem for one-round games with product distributions, to argue that there exist unitaries $V^\A_{x_i}$ and $V^{\B\C}_{u_i}$ that Alice and Barlie can appear on their registers of a shared state $\ket{\vph}_{\bot\bot}$ (which is the superposition of $\ket{\vph}_{x_iu_i\bot\bot}$ over all $x_iu_i$) to get close to $\ket{\vph}_{x_iu_i\bot\bot}$.

In the second round, Barlie then distributes his registers of the shared state (which is close to $\ket{\vph}_{x_iu_i\bot\bot}$, except with the $i$-th first-round output registers being measured) between Bob and Charlie, and also gives them $u_i$. Because Bob and Charlie's inputs are anchored w.r.t. each other, which means that they are product (with each other and with Alice) with some constant probability, we can show by a similar argument that there exist unitaries $V^\B_{u_iy_i}$ and $V^\C_{u_iz_i}$ that Bob and Charlie can apply on their registers of $\ket{\vph}_{x_iu_i\bot\bot}$ to get it close to $\ket{\vph}_{x_iu_iy_iz_i}$. These unitaries do not act on the output registers from the first round, so we can argue that on average, $V^\B_{u_iy_i}\otimes V^\C_{u_iz_i}$ also takes $\ket{\vph}_{x_iu_i\bot\bot}$ conditioned on particular first round output values to $\ket{\vph}_{x_iu_iy_iz_i}$ conditioned on the same values. Therefore, Bob and Charlie can apply these unitaries on the state they get from Barlie, and provide second round outputs by measuring the $i$-th second round output registers of the resulting state. This completes the stategy for Alice, Barlie, Bob and Charlie for the single instance of $\cl_{\gamma, \alpha}$.

\subsection{Discussion and future work}
The first question left open by our work is whether it is possible to achieve DI security for the original notion of ECM rather than VKECM. We highlight that it seems necessary to do a parallel rather than sequential (which is the setting where parties enter their inputs and get outputs from their devices one by one, instead of all at once) style of proof here. This is because, while we could ensure that the receiver during the encryption procedure enters the inputs for self-testing into their device one by one, simply by sending the inputs one by one and requiring a reply before supplying the next input (thereby giving the protocol many rounds of interaction), it seems fairly unnatural to enforce this constraint on Bob and Charlie. The only proof technique we have for parallel device-independent settings is parallel repetition, so proving a parallel repetition theorem for a game like the $\cl_\gamma$ game we described (the version without anchoring) seems necessary. However, it is not known how to prove an exponential parallel repetition theorem for even one-round two-player non-local games where the inputs of the two players are arbitrarily correlated. The most general exponential parallel repetition theorem known here is also for anchored games \cite{BVY17}. Of course, we do not need to prove a parallel repetition theorem for all possible games, only the specific game $\cl_\gamma$. In our parallel repetition theorem $\cl_{\gamma, \alpha}$, we did not make use of any structure in the game except for the input distribution. So it may be possible to prove parallel repetition for $\cl_\gamma$ by making use of its specific structure.

The second question we leave open is whether it is possible to extend the ``randomness extraction" result we have to more than one bit or trit. As we noted before, our style of proof does not work for bit strings, but could some sort of parallel repetition or composability result be used to prove security for a case where instead of one inner product we have several inner products? Also, there remains the question of whether this ``extraction'' result can be applied in protocols outside of single-decryptor encryption; in particular, whether it could be used in VKECM (see Remark~\ref{remark:OTPreuse} for further discussion) or even the original ECM.

Finally, there is the question of finding more applications for VKECM. For instance, although we showed that VKECM can be used just as well as ECM for private key quantum money, we have not studied whether the application to quantum copy protection also works with VKECM. Both constructions of copy protection from uncloneable encryption \cite{CMP20, AK21} are schemes for copy-protecting a class of functions known as \emph{multi-bit point functions}. A point function $f_{a,b}$  evaluates to $0$ on all inputs except a special input $a$, on which it evaluates to a string $b$. When constructing copy protection from uncloneable encryption, the string $a$ is taken as the decryption key of the uncloneable encryption procedure, and $b$ is the encrypted message. The idea is that two parties among whom the copy-protected program has been distributed, should not both be able to evaluate $b$ when they have $a$ as their input --- this is guaranteed by the security of the uncloneable encryption scheme. The question then is: what happens if we try to do these constructions with VKECM instead of ECM? Because there are many possible valid decryption keys in VKECM, does this mean we could copy protect a class of functions different from point functions? We leave all of these interesting questions for future work.

\subsection{Organization of the paper}
In Section \ref{sect:prelim}, we introduce some notation we shall be using throughout the paper, and describe some preliminaries on probability theory and quantum information. In Section \ref{sec:secdefn}, we formally define VKECM, SDECM, and related security criteria; we also describe the device-independent setting and assumptions therein in this section. In Section \ref{sec:cl-game}, we formally define the two-round cloning games $\cl_\gamma$ and $\cl_{\gamma, \alpha}$ and prove that the probability of winning a single instance of it is bounded away from 1. In Section \ref{sec:DIVKECM}, we describe our DI-VKECM scheme and prove its security using the parallel repetition theorem for $\cl_{\gamma, \alpha}$. In Section~\ref{sec:DISDECM}, we describe our DI-SDECM scheme and prove it satisfies the security definition used in~\cite{CLLZ21, LLQZ22}. In Section \ref{sec:bittritPA}, we describe how the scheme in Section \ref{sec:DISDECM} can be modified to ``extract'' the randomness in the raw keys, producing a stronger security result for single-bit or single-trit messages. Finally, in Section \ref{sec:parrep}, we prove the parallel repetition theorem for $\cl_{\gamma, \alpha}$.

\section{Preliminaries}\label{sect:prelim}

\subsection{Probability theory}
We shall denote the probability distribution of a random variable $X$ on some set $\clX$ by $\sfP_X$. For any event $\clE$ on $\clX$, the distribution of $X$ conditioned on $\clE$ will be denoted by $\sfP_{X|\clE}$. For joint random variables $XY$ with distribution $\sfP_{XY}$, $\sfP_X$ is the marginal distribution of $X$ and $\sfP_{X|Y=y}$ is the conditional distribution of $X$ given $Y=y$; when it is clear from context which variable's value is being conditioned on, we shall often shorten the latter to $\sfP_{X|y}$. We shall use $\sfP_{XY}\sfP_{Z|X}$ to refer to the distribution
\[ (\sfP_{XY}\sfP_{Z|X})(x,y,z) = \sfP_{XY}(x,y)\cdot\sfP_{Z|X=x}(z).\]
Occasionally we shall use notation of the form $\sfP_{XY}\sfP_{Z|x^*}$. This denotes the distribution
\[ (\sfP_{XY}\sfP_{Z|x^*})(x,y,z) = \sfP_{XY}(x,y)\cdot\sfP_{Z|X=x^*}(z),\]
which potentially takes non-zero value when $x\neq x^*$. For two distributions $\sfP_X$ and $\sfP_{X'}$ on the same set $\clX$, the $\ell_1$ distance between them is defined as
\[ \Vert\sfP_X - \sfP_{X'}\Vert_1 = \sum_{x\in\clX}|\sfP_X(x) - \sfP_{X'}(x)|.\]

\begin{fact}\label{fc:l1-dec}
For joint distributions $\sfP_{XY}$ and $\sfP_{X'Y'}$ on the same sets,
\[ \Vert\sfP_X -  \sfP_{X'}\Vert_1 \leq \Vert\sfP_{XY} - \sfP_{X'Y'}\Vert_1.\]
\end{fact}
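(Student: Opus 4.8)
The plan is to recognize this as the standard data-processing inequality for the $\ell_1$ distance under marginalization, which follows directly from the triangle inequality. First I would write out the marginals explicitly: for every $x\in\clX$ we have $\sfP_X(x) = \sum_{y\in\clY}\sfP_{XY}(x,y)$ and likewise $\sfP_{X'}(x) = \sum_{y\in\clY}\sfP_{X'Y'}(x,y)$. Substituting these into the definition of the $\ell_1$ distance on the left-hand side gives
\[ \Vert\sfP_X - \sfP_{X'}\Vert_1 = \sum_{x\in\clX}\left|\sum_{y\in\clY}\left(\sfP_{XY}(x,y) - \sfP_{X'Y'}(x,y)\right)\right|.\]

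The key step is then to apply the triangle inequality to the inner sum over $y$, pulling the absolute value inside the summation:
\[ \left|\sum_{y\in\clY}\left(\sfP_{XY}(x,y) - \sfP_{X'Y'}(x,y)\right)\right| \leq \sum_{y\in\clY}\left|\sfP_{XY}(x,y) - \sfP_{X'Y'}(x,y)\right|.\]
Summing this bound over all $x\in\clX$ produces exactly $\sum_{x\in\clX}\sum_{y\in\clY}\left|\sfP_{XY}(x,y) - \sfP_{X'Y'}(x,y)\right| = \Vert\sfP_{XY} - \sfP_{X'Y'}\Vert_1$ on the right-hand side, which is the claimed inequality.

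There is essentially no serious obstacle here: the statement is the familiar fact that marginalizing (a special case of applying a stochastic map) cannot increase $\ell_1$ distance, and the only nontrivial ingredient is the triangle inequality. The one point requiring mild care is ensuring both joint distributions are indexed by the same set $\clY$ for $Y$, so that the term-by-term differences $\sfP_{XY}(x,y) - \sfP_{X'Y'}(x,y)$ are well defined; this is guaranteed by the hypothesis that the two joint distributions live on the same sets.
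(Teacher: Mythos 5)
Your proof is correct: the paper states this as a standard fact without providing any proof, and your argument (writing the marginals as sums over $y$ and applying the triangle inequality termwise) is exactly the canonical one-line justification. Nothing is missing, and your remark that both distributions must share the same index set $\clY$ is the only point of care, which the hypothesis indeed supplies.
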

\begin{fact}\label{fc:l1-dist}
For two distributions $\sfP_X$ and $\sfP_{X'}$ on the same set and an event $\clE$ on the set,
\[ |\sfP_X(\clE) - \sfP_{X'}(\clE)| \leq \frac{1}{2}\Vert\sfP_X - \sfP_{X'}\Vert_1.\]
\end{fact}
\begin{fact}\label{fc:cond-prob}
Suppose probability distributions $\sfP_X, \sfP_{X'}$ satisfy $\Vert \sfP_X - \sfP_{X'}\Vert_1 \leq \eps$, and an event $\clE$ satisfies $\sfP_X(\clE) \geq \alpha$, where $\alpha > \eps$. Then,
\[ \Vert\sfP_{X|\clE} - \sfP_{X'|\clE}\Vert_1 \leq \frac{2\eps}{\alpha}.\]
\end{fact}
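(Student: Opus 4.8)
The plan is to write each conditional distribution explicitly as the restriction of the unconditioned distribution to $\clE$, renormalized by the probability of $\clE$, and then compare the two termwise. Writing $p = \sfP_X(\clE)$ and $q = \sfP_{X'}(\clE)$ for brevity, I have for each $x \in \clE$ that $\sfP_{X|\clE}(x) = \sfP_X(x)/p$ and $\sfP_{X'|\clE}(x) = \sfP_{X'}(x)/q$, while both vanish for $x \notin \clE$. Before comparing these, I would first record the key auxiliary estimate on the two normalizations: by Fact~\ref{fc:l1-dist} applied to the event $\clE$, $|p - q| \le \tfrac{1}{2}\Vert \sfP_X - \sfP_{X'}\Vert_1 \le \eps/2$. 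The hypothesis $\sfP_X(\clE) \ge \alpha$ gives $p \ge \alpha$, which is what will let me convert additive closeness of $\sfP_X, \sfP_{X'}$ into multiplicative control of the conditioned versions.

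The one nonroutine point is that the two conditional distributions carry different denominators $p$ and $q$, so I cannot bound the difference termwise by a single scaled copy of $|\sfP_X(x) - \sfP_{X'}(x)|$. The standard fix is to insert an intermediate term with a matched denominator: for each $x \in \clE$ write
\[ \frac{\sfP_X(x)}{p} - \frac{\sfP_{X'}(x)}{q} = \frac{\sfP_X(x) - \sfP_{X'}(x)}{p} + \sfP_{X'}(x)\left(\frac{1}{p} - \frac{1}{q}\right). \]
Summing the absolute values over $x \in \clE$ and applying the triangle inequality splits the $\ell_1$ distance into two pieces that I can bound separately.

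For the first piece, $\frac{1}{p}\sum_{x\in\clE}|\sfP_X(x) - \sfP_{X'}(x)| \le \frac{1}{p}\Vert \sfP_X - \sfP_{X'}\Vert_1 \le \eps/p \le \eps/\alpha$. For the second piece, I factor out the constant $|1/p - 1/q| = |q-p|/(pq)$ and use $\sum_{x\in\clE}\sfP_{X'}(x) = q$, so this piece equals $\frac{|q-p|}{pq}\cdot q = \frac{|q-p|}{p} \le \frac{\eps/2}{\alpha} = \frac{\eps}{2\alpha}$, invoking the auxiliary bound $|p-q|\le\eps/2$ together with $p \ge \alpha$. Adding the two pieces yields $\Vert \sfP_{X|\clE} - \sfP_{X'|\clE}\Vert_1 \le \frac{3\eps}{2\alpha}$, which is comfortably within the claimed $\frac{2\eps}{\alpha}$. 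I do not expect any real obstacle beyond the bookkeeping with the mismatched normalizations; note that the hypothesis $\alpha > \eps$ is what makes the conclusion nontrivial, since it forces the right-hand side below the trivial value $2$ for the $\ell_1$ distance of two distributions.
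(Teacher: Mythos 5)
Your proof is correct. The paper states Fact~\ref{fc:cond-prob} in its preliminaries without proof (it is a standard result), so there is no in-paper argument to diverge from; your termwise decomposition with an intermediate matched-denominator term is the standard route, and it in fact yields the sharper constant $\frac{3\eps}{2\alpha}$, comfortably within the claimed $\frac{2\eps}{\alpha}$. The only point left implicit is that $\sfP_{X'|\clE}$ is well defined, i.e.\ $q = \sfP_{X'}(\clE) > 0$, but this follows at once from your own auxiliary estimate: $q \geq p - \eps/2 \geq \alpha - \eps/2 > \eps/2 \geq 0$ (and $q = p \geq \alpha > 0$ when $\eps = 0$).
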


\subsection{Quantum information}
The $\ell_1$ distance between two quantum states $\rho$ and $\sigma$ is given by
\[ \Vert\rho-\sigma\Vert_1 = \Tr\sqrt{(\rho-\sigma)^\dagger(\rho-\sigma)} = \Tr|\rho-\sigma|.\]
The fidelity between two quantum states is given by
\[ \sfF(\rho,\sigma) = \Vert\sqrt{\rho}\sqrt{\sigma}\Vert_1.\]
The Bures distance based on fidelity is given by
\[ \sfB(\rho,\sigma) = \sqrt{1-\sfF(\rho,\sigma)}.\]

$\ell_1$ distance, fidelity and Bures distance are related in the following way.
\begin{fact}[Fuchs-van de Graaf inequality]\label{fc:fvdg}
For any pair of quantum states $\rho$ and $\sigma$,
\[ 2(1-\sfF(\rho,\sigma)) \leq \Vert\rho-\sigma\Vert_1\leq 2\sqrt{1-\sfF(\rho,\sigma)^2}.\]
Consequently,
\[ 2\sfB(\rho,\sigma)^2 \leq \norm{\rho-\sigma}_1 \leq 2\sqrt{2}\cdot\sfB(\rho,\sigma).\]
For two pure states $\ket{\psi}$ and $\ket{\phi}$, we have
\[ \norm{\,\state{\psi} - \state{\phi}\,}_1 = \sqrt{1 - \sfF\left(\state{\psi},\state{\phi}\right)^2} = \sqrt{1-|\inner{\psi}{\phi}|^2}.\]
\end{fact}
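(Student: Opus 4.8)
The plan is to prove the three claims in sequence --- first the exact pure-state identity, then the general two-sided inequality (its two directions by different arguments), and finally the Bures-distance reformulation as an algebraic corollary of the first inequality. I would invoke three standard tools external to the excerpt, flagging each as a known result: Uhlmann's theorem, that $\sfF(\rho,\sigma)$ equals the largest overlap $|\inner{\psi}{\phi}|$ over purifications $\ket{\psi},\ket{\phi}$ of $\rho,\sigma$; monotonicity of the trace norm $\norm{\cdot}_1$ under any trace-preserving completely positive map (in particular under partial trace and under measurement); and the Fuchs--Caves characterization, that $\sfF(\rho,\sigma)$ equals the minimum over POVMs of the Bhattacharyya overlap $\sum_k\sqrt{p_k q_k}$ of the induced outcome distributions.

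For the pure-state identity I would restrict attention to the subspace $\mathrm{span}\{\ket{\psi},\ket{\phi}\}$, which has dimension at most two. The operator $\state{\psi}-\state{\phi}$ is Hermitian and traceless, and a short computation of $\Tr\big((\state{\psi}-\state{\phi})^2\big)=2(1-|\inner{\psi}{\phi}|^2)$ pins its two nonzero eigenvalues to $\pm\sqrt{1-|\inner{\psi}{\phi}|^2}$; summing their magnitudes evaluates $\norm{\state{\psi}-\state{\phi}}_1$ in closed form, and since $\sfF(\state{\psi},\state{\phi})=|\inner{\psi}{\phi}|$ this gives the claimed identity. The upper bound in the general inequality then drops out: by Uhlmann's theorem choose purifications saturating $|\inner{\psi}{\phi}|=\sfF(\rho,\sigma)$, and apply monotonicity of $\norm{\cdot}_1$ under the partial trace returning $\rho,\sigma$, so that $\norm{\rho-\sigma}_1\le\norm{\state{\psi}-\state{\phi}}_1$, i.e.\ $\norm{\rho-\sigma}_1\le 2\sqrt{1-\sfF(\rho,\sigma)^2}$.

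For the lower bound, which I expect to be the main obstacle since it is the one direction that cannot be read off from the pure-state picture, I would use the Fuchs--Caves characterization to select a POVM $\{M_k\}$ whose outcome distributions $p_k=\Tr(M_k\rho)$, $q_k=\Tr(M_k\sigma)$ satisfy $\sfF(\rho,\sigma)=\sum_k\sqrt{p_k q_k}$. Monotonicity of the trace norm under this measurement gives $\norm{\rho-\sigma}_1\ge\sum_k|p_k-q_k|$, and the elementary estimate
\[ 1-\sum_k\sqrt{p_k q_k}=\tfrac12\sum_k(\sqrt{p_k}-\sqrt{q_k})^2\le\tfrac12\sum_k|p_k-q_k| \]
(the equality using $\sum_k p_k=\sum_k q_k=1$, and the inequality using $|\sqrt{p_k}-\sqrt{q_k}|\le\sqrt{p_k}+\sqrt{q_k}$) then yields $2(1-\sfF(\rho,\sigma))\le\norm{\rho-\sigma}_1$. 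Finally, the Bures reformulation is pure algebra: substituting $\sfB(\rho,\sigma)^2=1-\sfF(\rho,\sigma)$ turns the left inequality directly into $2\sfB(\rho,\sigma)^2\le\norm{\rho-\sigma}_1$, while for the right one I would write $2\sqrt{1-\sfF(\rho,\sigma)^2}=2\sqrt{(1-\sfF(\rho,\sigma))(1+\sfF(\rho,\sigma))}$ and bound $1+\sfF(\rho,\sigma)\le2$ to obtain $\norm{\rho-\sigma}_1\le2\sqrt2\,\sfB(\rho,\sigma)$.
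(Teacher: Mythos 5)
Your proposal is correct, and it is the canonical derivation of this result (essentially the original Fuchs--van de Graaf argument as presented in standard references): the paper itself records this statement as a background Fact with no proof at all, so there is no in-paper argument to compare against. All three of your steps are sound --- the rank-two eigenvalue computation via $\Tr\bigl((\state{\psi}-\state{\phi})^2\bigr)=2\bigl(1-|\inner{\psi}{\phi}|^2\bigr)$, the upper bound via Uhlmann purifications plus monotonicity of $\norm{\cdot}_1$ under partial trace, the lower bound via the Fuchs--Caves measurement characterization of fidelity together with the identity $1-\sum_k\sqrt{p_kq_k}=\frac12\sum_k(\sqrt{p_k}-\sqrt{q_k})^2$, and the purely algebraic passage to the Bures form using $1+\sfF(\rho,\sigma)\le 2$.

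One point deserves correction, though it reflects on the paper's statement rather than on your reasoning. Your eigenvalue computation gives nonzero eigenvalues $\pm\sqrt{1-|\inner{\psi}{\phi}|^2}$, so summing their magnitudes yields $\norm{\state{\psi}-\state{\phi}}_1 = 2\sqrt{1-|\inner{\psi}{\phi}|^2}$ --- note the factor of $2$. With the paper's definition $\norm{\rho-\sigma}_1=\Tr|\rho-\sigma|$, this factor is correct (it is also what makes your Uhlmann step produce the bound $2\sqrt{1-\sfF(\rho,\sigma)^2}$, which pure states saturate), whereas the displayed pure-state identity in the Fact as printed omits the $2$ and is therefore inconsistent with the paper's own norm convention (it is the formula for the trace \emph{distance}, i.e.\ half the norm). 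So when you write that your computation ``gives the claimed identity,'' you are silently absorbing a typo in the statement; your proof actually establishes the corrected identity $\norm{\state{\psi}-\state{\phi}}_1 = 2\sqrt{1-\sfF\left(\state{\psi},\state{\phi}\right)^2}$, and you should say so explicitly rather than assert agreement with the printed formula.
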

\begin{fact}[Uhlmann's theorem]\label{fc:uhlmann}
Suppose $\rho$ and $\sigma$ are mixed states on register $X$ which are purified to $\ket{\rho}$ and $\ket{\sigma}$ on registers $XY$, then it holds that
\[ \sfF(\rho, \sigma) = \max_U|\matel{\rho}{\Id_X\otimes U}{\sigma}|\]
where the maximization is over unitaries acting only on register $Y$. Due to the Fuchs-van de Graaf inequality, this implies that there exists a unitary $U$ such that
\[ \norm{(\Id_X\otimes U)\state{\rho}(\Id_X\otimes U^\dagger) - \state{\sigma}}_1 \leq 2\sqrt{\norm{\rho-\sigma}_1}.\]
\end{fact}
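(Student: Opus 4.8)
The plan is to establish the fidelity identity first and then read off the trace-distance bound from the Fuchs--van de Graaf inequality (Fact~\ref{fc:fvdg}). For the identity, I would introduce a reference copy $Y'\cong X$ and the unnormalized maximally entangled vector $\sum_i\ket{i}_X\ket{i}_{Y'}$, and work with the vectorization $|A\rangle\!\rangle := (A\otimes\Id)\sum_i\ket{i}_X\ket{i}_{Y'}$ of an operator $A$ on $X$, using the two standard identities $(B\otimes C)|A\rangle\!\rangle = |BAC^T\rangle\!\rangle$ and $\langle\!\langle A | B\rangle\!\rangle = \Tr(A^\dagger B)$. Writing the purifications through isometries as $\ket{\rho}=(\Id_X\otimes W_\rho)|\sqrt\rho\rangle\!\rangle$ and $\ket{\sigma}=(\Id_X\otimes W_\sigma)|\sqrt\sigma\rangle\!\rangle$ and pushing the operators through these identities collapses the overlap to a single trace,
\[ \matel{\rho}{\Id_X\otimes U}{\sigma} = \Tr\!\big(\sqrt\rho\,\sqrt\sigma\,K^T\big), \qquad K := W_\rho^\dagger U W_\sigma. \]
Since $W_\rho,W_\sigma$ are isometries and $U$ is unitary, $K^T$ is a contraction, so $|\matel{\rho}{\Id_X\otimes U}{\sigma}|\le\|\sqrt\rho\sqrt\sigma\|_1=\sfF(\rho,\sigma)$ by trace-norm/operator-norm duality; this gives the easy inequality for every $U$.

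For the reverse direction I would take a polar decomposition $\sqrt\rho\sqrt\sigma = V\,|\sqrt\rho\sqrt\sigma|$ with $V$ unitary, so that the choice $K^T = V^\dagger$ --- equivalently $K=\bar V$ --- makes the trace equal to $\Tr|\sqrt\rho\sqrt\sigma| = \sfF(\rho,\sigma)$. The step that needs genuine care, and which I regard as the crux, is to show this $K$ is realizable: I must exhibit a unitary $U$ on $Y$ with $W_\rho^\dagger U W_\sigma = \bar V$. This is exactly where the hypothesis that both purifications live on the same $Y$ is used. Since $\operatorname{range}(W_\sigma)$ and $\operatorname{range}(W_\rho)$ both have dimension $\dim X$ inside $Y$, their orthogonal complements have equal dimension, so I can define $U$ on $\operatorname{range}(W_\sigma)$ by $U W_\sigma = W_\rho\bar V$ (an isometry, as $\bar V$ is unitary) and extend it by any isometry between the complements to a unitary on all of $Y$; then $W_\rho^\dagger U W_\sigma = W_\rho^\dagger W_\rho \bar V = \bar V$. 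Combined with the previous paragraph, this proves $\sfF(\rho,\sigma)=\max_U|\matel{\rho}{\Id_X\otimes U}{\sigma}|$.

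For the stated consequence I would use the identity to choose a unitary $U$ with $\matel{\rho}{\Id_X\otimes U^\dagger}{\sigma}=\sfF(\rho,\sigma)$ (absorbing a global phase to make the overlap real and nonnegative), so that $(\Id_X\otimes U)\ket\rho$ and $\ket\sigma$ are pure states of overlap magnitude $\sfF(\rho,\sigma)$, i.e.\ of fidelity $\sfF(\rho,\sigma)$. The Fuchs--van de Graaf upper bound of Fact~\ref{fc:fvdg} applied to this pair gives
\[ \norm{(\Id_X\otimes U)\state{\rho}(\Id_X\otimes U^\dagger) - \state{\sigma}}_1 \le 2\sqrt{1-\sfF(\rho,\sigma)^2}. \]
I would then bound $1-\sfF(\rho,\sigma)^2 = (1-\sfF(\rho,\sigma))(1+\sfF(\rho,\sigma)) \le 2\,(1-\sfF(\rho,\sigma))$ and invoke the Fuchs--van de Graaf lower bound $1-\sfF(\rho,\sigma)\le\tfrac12\norm{\rho-\sigma}_1$ on the original mixed states, which together yield $\norm{\,\cdot\,}_1\le 2\sqrt{\norm{\rho-\sigma}_1}$, as claimed. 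Throughout, the only nonroutine point is the realizability argument above; the remaining steps are the standard vectorization identities and elementary inequalities.
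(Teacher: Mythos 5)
The paper states Fact~\ref{fc:uhlmann} as imported background --- it is Uhlmann's theorem plus a standard corollary --- and gives no proof of its own, so there is no in-paper argument to compare yours against; your proposal must stand on its own merits, and it essentially does. The vectorization route is the standard proof: with $\ket{\rho}=(\Id_X\otimes W_\rho)|\sqrt{\rho}\rangle\!\rangle$ and $\ket{\sigma}=(\Id_X\otimes W_\sigma)|\sqrt{\sigma}\rangle\!\rangle$ the identities you quote do collapse the overlap to $\Tr\big(\sqrt{\rho}\sqrt{\sigma}K^T\big)$ with $K=W_\rho^\dagger U W_\sigma$ a contraction, the duality bound $\big|\Tr\big(\sqrt{\rho}\sqrt{\sigma}K^T\big)\big|\le\norm{\sqrt{\rho}\sqrt{\sigma}}_1=\sfF(\rho,\sigma)$ gives the easy direction, and the polar choice $K=\bar{V}$ attains it. The corollary is also handled correctly: you apply the general Fuchs--van de Graaf upper bound $2\sqrt{1-\sfF^2}$ to the pure pair (correctly so --- note in passing that the pure-state display in Fact~\ref{fc:fvdg} as printed omits a factor of $2$, which your route does not rely on), and the chain $1-\sfF^2\le 2(1-\sfF)$ together with $2(1-\sfF)\le\norm{\rho-\sigma}_1$ yields exactly $2\sqrt{\norm{\rho-\sigma}_1}$.

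The one step that is not fully justified as written is your realizability argument, which silently assumes $\dim Y\ge\dim X$. The hypothesis only guarantees that $Y$ purifies both states, i.e.~$\dim Y\ge\max(\operatorname{rank}\rho,\operatorname{rank}\sigma)$, and this can be strictly smaller than $\dim X$; in that case $W_\rho,W_\sigma$ cannot be isometries with domain $Y'\cong X$, and the claim that their ranges ``both have dimension $\dim X$ inside $Y$'' fails. The repair is routine but should be stated: take $W_\rho,W_\sigma$ to be partial isometries whose initial spaces are the supports of the reduced states on $Y'$, and observe that only the compression of $K$ between $\supp\sigma$ and $\supp\rho$ enters $\Tr\big(\sqrt{\rho}\sqrt{\sigma}K^T\big)$. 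Choosing the polar factor $V$ as a partial isometry carrying $\supp|\sqrt{\rho}\sqrt{\sigma}|\subseteq\supp\sigma$ into $\supp\rho$ (the range of $\sqrt{\rho}\sqrt{\sigma}$ lies in $\supp\rho$), one is then matching two orthonormal sets of equal size $\operatorname{rank}|\sqrt{\rho}\sqrt{\sigma}|$ inside $Y$, so their complements have equal dimension and your unitary-completion argument goes through verbatim. Since the paper invokes this Fact for purifying registers whose dimensions bear no fixed relation to the purified system (e.g.~in Section~\ref{sec:parrep}), this edge case is worth closing explicitly rather than assuming away; with that patch, your proof is complete.
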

\begin{fact}\label{fc:chan-l1}
For a quantum channel $\clE$ and states $\rho$ and $\sigma$,
\[ \Vert\clE(\rho) - \clE(\sigma)\Vert_1 \leq \Vert\rho-\sigma\Vert_1 \quad \quad \sfF(\clE(\rho),\clE(\sigma)) \geq \sfF(\rho,\sigma).\]
\end{fact}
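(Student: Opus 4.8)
The plan is to establish the two inequalities separately: the trace-distance bound via the dual characterization of the trace norm, and the fidelity bound via Uhlmann's theorem (Fact~\ref{fc:uhlmann}) together with a Stinespring dilation of $\clE$. For the trace-distance contraction, I would first recall the variational formula that for any Hermitian $W$,
\[ \norm{W}_1 = \max\left\{\Tr[MW] : M=M^\dagger,\ -\Id\leq M\leq\Id\right\},\]
with the maximum attained at $M=P_+-P_-$, the difference of the projectors onto the positive and negative eigenspaces of $W$. Taking $W=\clE(\rho)-\clE(\sigma)$ and letting $M^*$ be an optimizer, I would pass to the adjoint channel $\clE^\dagger$:
\[ \norm{\clE(\rho)-\clE(\sigma)}_1 = \Tr[M^*(\clE(\rho)-\clE(\sigma))] = \Tr[\clE^\dagger(M^*)(\rho-\sigma)].\]
The crucial observation is that $\clE^\dagger$ is positive (inherited from complete positivity of $\clE$) and unital (which is exactly trace-preservation of $\clE$, since $\Tr[\clE^\dagger(\Id)\tau]=\Tr[\clE(\tau)]=\Tr[\tau]$ for all $\tau$). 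Positivity and linearity upgrade the scalar bound $-\Id\leq M^*\leq\Id$ to $-\Id=\clE^\dagger(-\Id)\leq\clE^\dagger(M^*)\leq\clE^\dagger(\Id)=\Id$, so $\clE^\dagger(M^*)$ is itself feasible in the variational formula for $\rho-\sigma$. Hence $\Tr[\clE^\dagger(M^*)(\rho-\sigma)]\leq\norm{\rho-\sigma}_1$, giving the first claim.

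For the fidelity bound, I would invoke a Stinespring dilation $\clE(\tau)=\Tr_E[V\tau V^\dagger]$ for some isometry $V$ into a system-environment space, and choose (via Fact~\ref{fc:uhlmann}) optimal purifications $\ket{\rho}_{AR}$ and $\ket{\sigma}_{AR}$ so that $\sfF(\rho,\sigma)=|\brakett{\rho}{\sigma}|$. Then $(V\otimes\Id_R)\ket{\rho}$ and $(V\otimes\Id_R)\ket{\sigma}$ are purifications of $\clE(\rho)$ and $\clE(\sigma)$ respectively, since tracing out the environment $E$ and the reference $R$ recovers the output states. Because Uhlmann's theorem characterizes $\sfF(\clE(\rho),\clE(\sigma))$ as the \emph{maximum} overlap over all purifications, these particular purifications yield the lower bound
\[ \sfF(\clE(\rho),\clE(\sigma)) \geq \left|\matel{\rho}{V^\dagger V\otimes\Id_R}{\sigma}\right| = |\brakett{\rho}{\sigma}| = \sfF(\rho,\sigma),\]
where the middle equality uses $V^\dagger V=\Id$ because $V$ is an isometry. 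This completes the second claim.

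I expect the only genuinely delicate point to be the verification of the properties of $\clE^\dagger$ in the trace-distance argument --- specifically that positivity of $\clE^\dagger$ upgrades the operator inequality $-\Id\leq M^*\leq\Id$ to the same bound on $\clE^\dagger(M^*)$, which is precisely what lets the optimizer for the output states be reused as a feasible point for the input states. The remaining content reduces to bookkeeping with purifications and the two cited facts, so no further obstacles are anticipated.
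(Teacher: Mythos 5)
Your proof is correct. Note that the paper itself offers no proof of Fact~\ref{fc:chan-l1}: it is stated as a standard preliminary (the data-processing, or monotonicity, property of trace distance and fidelity under CPTP maps), so there is no in-paper argument to compare against. Both halves of your argument are the canonical textbook proofs. The trace-norm half is sound: the adjoint $\clE^\dagger$ of a CPTP map is positive and unital, so it is Hermiticity-preserving and maps the feasible set $\{M=M^\dagger,\,-\Id\leq M\leq\Id\}$ into itself, which is exactly what lets you transfer the optimizer for $\clE(\rho)-\clE(\sigma)$ to a feasible point for $\rho-\sigma$. The fidelity half is also sound, with one small bookkeeping remark: the paper's Fact~\ref{fc:uhlmann} states Uhlmann's theorem as a maximization over unitaries on a \emph{fixed} purifying register, whereas you use the equivalent ``maximum over all purifications'' form; the two match because any two purifications on the same purifying space are related by a unitary on that space, so your specific pair $(V\otimes\Id_R)\ket{\rho}$, $(V\otimes\Id_R)\ket{\sigma}$ indeed lower-bounds $\sfF(\clE(\rho),\clE(\sigma))$, and the isometry condition $V^\dagger V=\Id$ closes the argument. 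No gaps.
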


The entropy of a quantum state $\rho$ on a register $Z$ is given by
\[ \sfH(\rho) = -\Tr(\rho\log \rho).\]
We shall also denote this by $\sfH(Z)_\rho$. For a state $\rho_{YZ}$ on registers $YZ$, the entropy of $Y$ conditioned on $Z$ is given by
\[ \sfH(Y|Z)_\rho = \sfH(YZ)_\rho - \sfH(Z)_\rho\]
where $\sfH(Z)_\rho$ is calculated w.r.t. the reduced state $\rho_Z$.

The relative entropy between two states $\rho$ and $\sigma$ of the same dimensions is given by
\[ \sfD(\rho\Vert \sigma) = \Tr(\rho\log\rho) - \Tr(\rho\log\sigma).\]
\begin{fact}[Pinsker's Inequality]\label{pinsker}
For any two states $\rho$ and $\sigma$,
\[ \Vert\rho-\sigma\Vert_1^2 \leq 2\ln 2\cdot\sfD(\rho\Vert\sigma) \quad \text{ and } \quad \sfB(\rho,\sigma)^2 \leq \ln 2\cdot\sfD(\rho\Vert\sigma).\]
\end{fact}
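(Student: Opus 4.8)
The plan is to derive both inequalities from a single engine, namely the monotonicity (data-processing inequality) of quantum relative entropy under measurement channels. For any POVM $\{M_i\}$ the map $\rho\mapsto\sum_i\Tr(M_i\rho)\ketbra{i}{i}$ is a quantum channel, so monotonicity gives $\sfD(\rho\|\sigma)\geq\sfD(p\|q)$, where $p_i=\Tr(M_i\rho)$ and $q_i=\Tr(M_i\sigma)$ are the classical outcome distributions. The whole proof then reduces to choosing the measurement so that the classical quantity of interest (trace distance, or Bhattacharyya overlap) equals its quantum counterpart, and invoking a purely classical bound on $\sfD(p\|q)$.

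For the first inequality I would use the two-outcome Helstrom measurement $\{\Pi,\Id-\Pi\}$, where $\Pi$ projects onto the positive part of $\rho-\sigma$. By the variational formula $\|\rho-\sigma\|_1=2\max_{0\leq M\leq\Id}\Tr(M(\rho-\sigma))$ this measurement realizes the trace distance exactly, i.e.\ $\|p-q\|_1=\|\rho-\sigma\|_1$. It then suffices to prove the \emph{binary} classical Pinsker inequality for $p=(a,1-a)$, $q=(b,1-b)$. Since $\|p-q\|_1=2|a-b|$, this is equivalent (in nats) to $a\ln\frac{a}{b}+(1-a)\ln\frac{1-a}{1-b}\geq 2(a-b)^2$, which I would prove by elementary calculus: the derivative in $b$ of the difference of the two sides factors as $(b-a)\big(\frac{1}{b(1-b)}-4\big)$, and since $b(1-b)\leq\frac14$ the bracket is nonnegative, so the difference has a global minimum of $0$ at $b=a$. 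As $\sfD$ is measured in bits, this nats inequality reads $\|p-q\|_1^2=4(a-b)^2\leq 2\ln2\cdot\sfD(p\|q)$, and chaining with monotonicity yields $\|\rho-\sigma\|_1^2=\|p-q\|_1^2\leq 2\ln2\cdot\sfD(\rho\|\sigma)$.

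For the Bures inequality the Helstrom measurement is the wrong choice (routing through Fuchs--van de Graaf only gives a weaker $\sqrt{\sfD}$-type bound near $\rho=\sigma$, which fails to reach $\ln2\cdot\sfD$). Instead I would take a measurement achieving the fidelity, i.e.\ one with $\sum_i\sqrt{p_iq_i}=\sfF(\rho,\sigma)$; such a measurement exists by the minimax characterization of fidelity as the minimal Bhattacharyya overlap over measurements. With this choice $1-\sum_i\sqrt{p_iq_i}=1-\sfF(\rho,\sigma)=\sfB(\rho,\sigma)^2$. The needed classical fact is $\sfD(p\|q)\geq-2\log\big(\sum_i\sqrt{p_iq_i}\big)$, which is simply the statement that the order-$1$ R\'enyi divergence dominates the order-$1/2$ divergence. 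Combining with monotonicity and the elementary estimate $-\ln(1-t)\geq t$ gives $\sfD(\rho\|\sigma)\geq-2\log(1-\sfB(\rho,\sigma)^2)\geq\frac{2}{\ln2}\sfB(\rho,\sigma)^2$, which rearranges to $\sfB(\rho,\sigma)^2\leq\frac{\ln2}{2}\sfD(\rho\|\sigma)\leq\ln2\cdot\sfD(\rho\|\sigma)$, in fact slightly beating the stated constant.

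The main obstacle is not the algebra but the two structural inputs the argument leans on --- the data-processing inequality for quantum relative entropy and the minimax characterization of fidelity --- both of which are genuinely nontrivial theorems (due to Lindblad/Uhlmann and Fuchs--Caves respectively). Since they are entirely standard, I would cite them as known black boxes; granting them, everything else reduces to the one-variable calculus for binary Pinsker and the routine R\'enyi-monotonicity and logarithm estimates above.
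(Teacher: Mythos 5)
The paper states this fact without proof, citing it as standard background, so there is no in-paper argument to compare against; judged on its own terms, your proof is correct. The measurement-based reduction is the canonical route: the Helstrom measurement $\{\Pi,\Id-\Pi\}$ does realize $\Vert p-q\Vert_1=\Vert\rho-\sigma\Vert_1$, your calculus verification of binary Pinsker is right (the derivative in $b$ indeed factors as $(b-a)\bigl(\tfrac{1}{b(1-b)}-4\bigr)$ with $b(1-b)\leq\tfrac14$), and the base-of-logarithm bookkeeping ($2\ln 2$ in bits) is handled correctly. For the Bures part, your diagnosis that Fuchs--van de Graaf only yields a $\sqrt{\sfD}$-type bound is accurate, and the route via the Fuchs--Caves minimax characterization $\sfF(\rho,\sigma)=\min_{\{M_i\}}\sum_i\sqrt{p_iq_i}$ plus monotonicity of R\'enyi divergence in the order ($\sfD\geq D_{1/2}$) and $-\ln(1-t)\geq t$ is sound; it even gives the stronger constant $\sfB(\rho,\sigma)^2\leq\tfrac{\ln 2}{2}\sfD(\rho\Vert\sigma)$, which trivially implies the stated bound. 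One small remark: you do not actually need the minimum in the Fuchs--Caves characterization to be \emph{attained} (which requires a little care for non-invertible $\sigma$) --- for the direction you use, any measurement with $\sum_i\sqrt{p_iq_i}\leq\sfF(\rho,\sigma)+\eps$ gives $\sfD(\rho\Vert\sigma)\geq -2\log(\sfF(\rho,\sigma)+\eps)$, and taking $\eps\to 0$ suffices. The two genuinely nontrivial black boxes you invoke (Lindblad--Uhlmann data processing and the Fuchs--Caves minimax) are indeed standard and appropriately cited as such.
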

The mutual information between $Y$ and $Z$ with respect to a state $\rho$ on $YZ$ can be defined in the following equivalent ways:
\[  \sfI(Y:Z)_\rho = \sfD(\rho_{YZ}\Vert\rho_Y\otimes\rho_Z) = \sfH(Y)_\rho - \sfH(Y|Z)_\rho = \sfH(Z)_\rho - \sfH(Z|Y)_\rho.\]
The conditional mutual information between $Y$ and $Z$ conditioned on $X$ is defined as
\[ \sfI(Y:Z|X)_\rho = \sfH(Y|X)_\rho - \sfH(Y|XZ)_\rho = \sfH(Z|X)_\rho - \sfH(Z|XY)_\rho.\]
Mutual information can be seen to satisfy the chain rule
\[  \sfI(XY:Z)_\rho =  \sfI(X:Z)_\rho +  \sfI(Y:Z|X)_\rho.\]

A state of the form
\[ \rho_{XY} = \sum_x \sfP_X(x)\state{x}_X\otimes\rho_{Y|x}\]
is called a CQ (classical-quantum) state, with $X$ being the classical register and $Y$ being quantum. We shall use $X$ to refer to both the classical register and the classical random variable with the associated distribution. As in the classical case, here we are using $\rho_{Y|x}$ to denote the state of the register $Y$ conditioned on $X=x$, or in other words the state of the register $Y$ when a measurement is done on the $X$ register and the outcome is $x$. Hence $\rho_{XY|x} = \state{x}_X\otimes \rho_{Y|x}$. When the registers are clear from context we shall often write simply $\rho_x$.
For CQ states, the expressions for conditional entropy, relative entropy and mutual information for $\rho_{XY}$ and $\sigma_{XY}$ given by
\[ \rho_{XY} = \sum_x\sfP_{X}(x)\state{x}_X\otimes\rho_{Y|x} \quad \quad \sigma_{XY} = \sum_x\sfP_{X'}(x)\state{x}_X\otimes\sigma_{Y|x},\]
reduce to
\begin{gather*}
\sfH(Y|X)_\rho = \bbE_{\sfP_X}\sfH(Y)_{\rho_x}, \\
\sfD(\rho_{XY}\Vert \sigma_{XY}) = \sfD(\sfP_X\Vert \sfP_{X'}) + \bbE_{\sfP_X} \sfD(\rho_{Y|x}\Vert \sigma_{Y|x}) \\
\sfI(Y:Z|X) = \bbE_{\sfP_X}\sfI(Y:Z)_{\rho_x}.
\end{gather*}
\noindent When talking about entropies of only the classical variables of a CQ state, we shall sometimes omit the state in the subscript. Additionally, for an event $\clE$ defined on the classical variable $X$ of a CQ state, we shall use notation like use $\sfH(X|\clE)$ and $\sfH(X_1|X_2;\clE)$ to talk about entropies of the event when the classical distribution is conditioned on $\clE$.

\section{Security definitions}
\label{sec:secdefn}

In this section, we formally define VKECM and SDECM in the DI setting. 
We begin by laying out the form of the devices we consider, and making a formal statement of what it means to have DI security. For this statement, we are following fairly standard conventions in the field of DI cryptography. With this perspective, DI security is a notion analogous to information-theoretic security or computational security, in that it describes a class of possible behaviours for dishonest parties.

\begin{definition}\label{def:DI-sec} (Device-independent security) Consider any cryptographic scheme involving multiple parties. We say that \emph{devices} for such a scheme consist of objects with the following functionality:
\begin{enumerate}
\item Initially, each party's device holds a share of some quantum state.
\item\label{pt:DIinputs} Each party's device can (possibly more than once) accept a classical input string, which it uses to perform some measurement on its share of the state and produce a classical output string.
\item A dishonest party can at any time perform arbitrary quantum operations on its share of the state, including joint operations that involve other registers held by that party.
\end{enumerate}

We say that a scheme using such devices achieves \emph{device-independent (DI) security} (under any security definition required to hold against some class of possible dishonest behaviours) if the class of possible dishonest behaviours includes arbitrary choices of the initial state distributed (and the Hilbert spaces the state is defined on), as well as the measurements performed in point~\ref{pt:DIinputs}, \emph{including} those in honest parties' devices. 

We furthermore say that it achieves \emph{DI security against $k$ bits of leakage} if we extend the class of possible dishonest behaviours to allow $k$ occasions at which a classical\footnote{There is essentially no loss of generality in restricting to classical communication, because if quantum communication is desired, the devices can include pre-shared entanglement in the initial quantum state and then use it for quantum teleportation with only classical communication, though this does require $2$ classical bits to send $1$ qubit of information.}  bit is communicated between the devices.
\end{definition}

For brevity, we may refer to a VKECM scheme achieving DI security as a DI-VKECM scheme, and analogously for SDECM, following the same convention as e.g.~DIQKD.


\begin{remark}\label{remark:receiver}
From the perspective of modelling dishonest behaviour, it would seem we should also allow a dishonest party to directly operate on their share of the quantum state in point~\ref{pt:DIinputs} as well, rather than being strictly constrained to supplying inputs to the device in the prescribed fashion. However, the critical observation here is that since we have not placed any constraints on the measurements in the dishonest case, any operations that a dishonest party could perform on the state can equivalently be carried out by the device itself in performing those measurements. Hence there is no loss of generality by focusing on the model presented above.
\end{remark}

\subsection{Uncloneable encryption with variable keys}
\label{subsec:VKECM}
We now describe the security definitions we use for uncloneable encryption with variable keys.
These definitions are essentially similar to those used in \cite{GMP22}, apart from the distinction between the private key and decryption key in our scheme. The definitions in \cite{GMP22} specify a classical client (the honest party). 
In our case the client will not be classical, but instead needs to have enough quantum capabilities to implement devices as described above (note however that in the honest implementation, the client will only need to perform single-qubit measurements). 
On the other hand, we shall not be assuming any computational limitations when considering dishonest behaviour (so we can achieve DI security in the sense described above), in contrast to that work.

We remark that while the definitions in~\cite{GMP22} differ slightly from those in~\cite{BL20}, the only differences are basically that the encryption procedure for a message is allowed to be interactive, and are allowed to output an abort symbol, with the protocol not continuing if there is an abort (the probability of not aborting also appears in the security condition).

\begin{definition}[Encryption of classical messages with variable keys]\label{def:vkecm}
Let $\lambda$ be a security parameter.
An \emph{encryption of classical messages with variable keys} (VKECM) scheme consists of a tuple $(\Enc, \KeyR, \Dec)$ such that
\begin{itemize}
\item $\Enc(1^\lambda, m)$ is a (potentially interactive) protocol between an honest client, who takes as input the security parameter $\lambda$ and a message $m$ in some message space $\clM$, and a potentially dishonest receiver, who takes as input the security parameter $\lambda$. 
The output of the protocol is $(F,\Kpriv,\rho)$, where $F$ is a flag held by the client which takes values \cmark (accept) or \xmark (reject), $\Kpriv$ is a private key held by the client, and $\rho$ is a quantum ciphertext state held by the receiver on a register we denote as $Q$.
\item $\KeyR(\kpriv)$ takes as input a private key $\kpriv$, and uses some internal randomness to generate and output a decryption key $\Kdec$.
\item $\Dec(\kdec,\rho)$ takes as input a decryption key $\kdec$ and a ciphertext state $\rho$  
on register $Q$,
and outputs a message value $\widetilde{M} \in \clM$. 
\end{itemize}
We say that a VKECM scheme is efficient if the computations that the client and an honest receiver perform in $\Enc,\KeyR,\Dec$ are polynomial time in $\lambda$ and the length (in bits) of $m$.

We require that the VKECM scheme satisfies \emph{completeness}: if all steps are carried out honestly, then for any $m\in\clM$, we have (in the following statements, terms such as $\Dec\circ\KeyR\circ\Enc$ should be understood as having each procedure acting only on the relevant registers, e.g.~$\KeyR$ only acts on the private-key output of $\Enc$):
\begin{equation}
\label{eq:complete}
\Pr\left[\left(F = \text{\cmark}\right)\land\left(\Dec\circ\KeyR\circ\Enc(1^\lambda, m) = m\right)\right] \geq 1 - \negl(\lambda),
\end{equation}
i.e.~the probability of accepting and correctly decrypting the message is high.

Additionally, we impose the condition that for any distribution of a message $M$, if we let $\sigma_{MQ|F=\text{\xmark}}$ be the state produced by $\Enc(1^\lambda, M)$ on registers $MQ$ conditioned on $F=\text{\xmark}$, then we have\footnote{Here we have differed very slightly from~\cite{GMP22}, in that for their protocol, when $F=\text{\xmark}$ the protocol basically stops and no state is produced with the receiver (although in their actual protocol the receiver of course ends up with some quantum state in either case --- the state just does not depend on $m$ when $F=\text{\xmark}$, though it does depend on the key value). The definition we have used is essentially saying the same thing, i.e.~$\rho$ does not depend on $m$ if $F=\text{\xmark}$. In the protocol we design below, when $F=\text{\xmark}$ the receiver gets the state $\rho$ they would have gotten if $F=\text{\cmark}$ and the message were some uniformly random ``dummy value'' $m^\mathrm{fake}$ which is independent of $m$. 
We do this instead of aborting the protocol
to simplify the security proof, because this way we do not have to analyze the receiver behaving differently conditioned on the value of $F$.
} 
\begin{equation}\label{eq:prodstate}
\sigma_{MQ|F=\text{\xmark}} = \sigma_{M|F=\text{\xmark}} \otimes \sigma_{Q|F=\text{\xmark}},
\end{equation}
i.e.~when $\Enc(1^\lambda, M)$ aborts, the receiver's state is independent of $M$.
\end{definition}

\begin{remark}
In various somewhat similar tasks such as QKD~\cite{PR14} 
or certified deletion with a third-party eavesdropper~\cite{KT23}, when considering dishonest behaviour we typically also require a \emph{correctness} condition along the following lines: even when the devices are dishonest, the probability that the message is incorrectly decrypted {and} the protocol accepts is low. In our context however, the only potentially dishonest party is the recipient, and hence it does not make sense to bound this probability for dishonest behaviour: the set of dishonest-receiver behaviours always includes trivial processes where they simply set some random value as the ``decrypted message'', in which case it is clearly impossible to give any nontrivial bounds on the probability of incorrectly decrypting.
(On the other hand, when focusing on the case of honest behaviour, note that the completeness condition~\eqref{eq:complete} indeed incorporates the requirement that the probability of incorrectly decrypting the message is low.)
\end{remark}

We now state the security definitions we use, which are the same as in~\cite{GMP22} except with minor modifications to account for the difference in our decryption procedures.
First, we aim to capture the notion that an adversary without the decryption key cannot distinguish the message from a ``dummy'' value:

\begin{definition}[Distinguishing attack and indistinguishable security]\label{def:indist-sec}
A \emph{distinguishing attack} on a VKECM scheme is a 
process of the following form (here for ease of explanation we take the message space $\clM$ to contain a particular value labelled as $\mathbf{0}$ without loss of generality): 
\begin{enumerate}
\item An adversary generates a state on registers $ME$, where $M$ is a classical register on the message space and $E$ is some (possibly quantum) side-information.
\item A uniformly random bit $B$ is independently generated, and used as follows (in which the receiver can behave dishonestly when implementing $\Enc$): 
if $B=0$ then $\Enc(1^\lambda, \mathbf{0})$ is performed; if $B=1$ then $\Enc(1^\lambda, M)$ is performed on the $M$ register produced by the adversary in the previous step. In either case, a flag $F$, a private key $\Kpriv$, and a ciphertext state $\rho$ (on a register $Q$) are produced.
\item A measurement is performed on the state on $QE$ to produce a single bit $\hat{B}$.
\end{enumerate}
A protocol is said to be \emph{indistinguishable-secure} if for all distinguishing attacks, we have 
\begin{align*}
\Pr[(F=\text{\cmark}) \land (B=\hat{B})] \leq \frac{1}{2} + \negl(\lambda),
\end{align*}
where the probability is taken over all randomness in the described procedures.
\end{definition}

Technically, our DI-VKECM protocol in fact satisfies a stronger form of indistinguishability, namely that the ciphertext state is completely independent of the message (because our DI-VCECM protocol basically involves applying a one-time-pad to the message, which serves to perfectly encrypt it). However, we present the definition in the above form for consistency with past work, and also because it is less clear how to formulate an analogous property for cloning and cloning-distinguishing attacks, which we now turn to.

Next, we aim to capture the idea that an adversarial receiver cannot clone the ciphertext such that two parties can later decrypt the message (after receiving decryption keys) without communication:
\begin{definition}[Cloning attack and uncloneable security]\label{def:cl-sec}
A \emph{cloning attack} on a VKECM scheme is a process of the following form:
\begin{enumerate}
\item A uniformly random message $M\in\clM$ is prepared and $\Enc$ is applied to it (with a potentially dishonest receiver), producing a flag $F$, a private key $\Kpriv$, and a ciphertext state $\rho$ (on a register $Q$).
\item An arbitrary channel is applied to the ciphertext state $\rho$ to distribute it between two parties Bob and Charlie. 
\item Without any further communication between Bob and Charlie except as mediated by leakage via their devices, they receive independently generated decryption keys from $\KeyR(\Kpriv)$, and use them together with their shares of the state to produce guesses $M^\B$ and $M^\C$ respectively for the original message $M$.
\end{enumerate}
A protocol is said to be \emph{$(t(\lambda),g(\lambda))$-uncloneable-secure} if for all cloning attacks, we have
\begin{align*}
\Pr[(F=\text{\cmark}) \land (M=M^\B=M^\C)] \leq \frac{2^{t(\lambda)}}{|\clM|} + g(\lambda),
\end{align*}
where the probability is taken over the distributions of $M$ and all randomness in the described procedures.
\end{definition}
Qualitatively, the smaller the functions $t(\lambda)$ and $g(\lambda)$ are, the ``more secure'' the protocol is (since the probability of Bob and Charlie guessing the message is smaller). Note that if the protocol also requires that the message is a bitstring of length $\lambda$ (for instance in the first protocol in~\cite{BL20}, or the version of our protocol that we describe in Section~\ref{sec:DIVKECM} here), then any $t(\lambda)$ such that $t(\lambda) \geq \lambda$ would be a trivial statement (assuming $g(\lambda)$ is non-negative), since in that case the bound on the guessing probability in the above definition would reach the trivial value of $1$. In other words, for such protocols we only have nontrivial results when $t(\lambda) < \lambda$.

\begin{remark}\label{remark:termsplit}
In previous work such as~\cite{BL20}, the uncloneability definition only required specifying a single function $t(\lambda)$ such that an upper bound of the form $\frac{2^{t(\lambda)}}{|\clM|} + \negl(\lambda)$ holds. However, a reviewer has pointed out that such a formulation is not well-posed for the protocols mentioned above where the message is a bitstring of length $\lambda$: given any such protocol that achieves $t(\lambda) = \tau\lambda$ for some constant $\tau\in(0,1)$, the $\frac{2^{t(\lambda)}}{|\clM|}$ term would itself be a negligible function of $\lambda$, making the split into the $\frac{2^{t(\lambda)}}{|\clM|}$ and $\negl(\lambda)$ contributions somewhat ill-defined. We thank the reviewer for pointing out this problem with the previous formulation. For this work, we have chosen to resolve it in our definitions by requiring a specification of \emph{both} the functions $t(\lambda)$ and $g(\lambda)$ in the bound $\frac{2^{t(\lambda)}}{|\clM|} + g(\lambda)$. It might perhaps be possible to find a more concise way to address this issue, but we leave this for future work, as it does not affect the qualitative implications of any of our results.
\end{remark}

While the definition of uncloneable security refers to uniformly distributed messages, satisfying the above definition implies analogous properties for messages that are not uniformly distributed; see~\cite{BL20}.

Finally, for completeness we state a security definition from~\cite{BL20} that captures some aspects of both of the preceding properties (though in this work, we will not discuss this security definition in much detail; see Remark~\ref{remark:defnrelations} below):
\begin{definition}[Cloning-distinguishing attack and uncloneable-indistinguishable security]\label{def:cl-indist-sec}
A \emph{cloning-distinguishing attack} on a VKECM scheme is a process of the following form (here for ease of explanation we take the message space $\clM$ to contain a particular value labelled as $\mathbf{0}$ without loss of generality):
\begin{enumerate}
\item An adversary generates a state on registers $ME$, where $M$ is a classical register on the message space and $E$ is some (possibly quantum) side-information.
\item A uniformly random bit $B$ is independently generated, and used as follows (in which the receiver can behave dishonestly when implementing $\Enc$): 
if $B=0$ then $\Enc(1^\lambda, \mathbf{0})$ is performed; if $B=1$ then $\Enc(1^\lambda, M)$ is performed on the $M$ register produced by the adversary in the previous step. In either case, a flag $F$, a private key $\Kpriv$, and a ciphertext state $\rho$ (on a register $Q$) are produced.
\item An arbitrary channel is applied to the state on $QE$ to distribute it between two parties Bob and Charlie. 
\item Without any further communication between Bob and Charlie except as mediated by leakage via their devices, they receive independently generated decryption keys from $\KeyR(\Kpriv)$, and use them together with their shares of the state to produce bits $B'$ and $B''$ respectively.
\end{enumerate}
A protocol is said to be \emph{uncloneable-indistinguishable-secure} if for all cloning attacks, we have
\begin{align*}
\Pr[(F=\text{\cmark}) \land (B=B'=B'')] \leq \frac{1}{2} + \negl(\lambda),
\end{align*}
where the probability is taken over all randomness in the described procedures.
\end{definition}

Following~\cite{BL20,GMP22}, we have stated the above definitions of indistinguishable security and uncloneable-indistinguishable security in terms of the adversary initially generating an arbitrary classical-quantum state on $ME$, which we shall denote here as $\rho^\mathrm{ini}_{ME}$. However, one can argue that without loss of generality, we can restrict to attacks where the value on the $M$ register is deterministic and the state on the $E$ register is trivial (as long as the final measurement in each of the attacks is allowed to be a general POVM). This follows from the following observation: consider any attack that achieves the optimal ``success probability'' (in the sense of the probability of the event stated at the end of the respective definition). This attack would be based on some initial state $\rho^\mathrm{ini}_{ME}$, which is a classical-quantum state and hence equivalent to a classical mixture of states of the form $\state{m}_M \otimes \rho_{E|m}$. Because of this, the success probability of this attack is a convex combination of the success probabilities that would be obtained by preparing the initial state in the form $\state{m}_M \otimes \rho_{E|m}$ for various values of $m$. By linearity, at least one particular value $m^\star$ must attain the optimal success probability\footnote{Furthermore, by observing that if $\rho^\mathrm{ini}_{ME}=\state{\mathbf{0}}_M \otimes \rho_{E|\mathbf{0}}$ then the success probability can only be exactly $1/2$ (since in that case the states produced for either value of $B$ are identical), we see that we can take $m^\star \neq \mathbf{0}$ without loss of generality.}, or in other words the same success probability could have been attained by the adversary simply preparing the initial state in the form $\state{m^\star}_M \otimes \rho_{E|m^\star}$. With this attack the state on $M$ is deterministic as claimed; furthermore, since the state on $E$ is now in product with $M$, we can simply suppose that the state $\rho_{E|m^\star}$ is generated after $\Enc$ is applied (and hence absorbed into the subsequent measurements/channels) rather than at the beginning of the attack. 

It was shown in~\cite{BL20} that the three security properties listed above are somewhat related to each other, as follows:
\begin{lemma}\label{fc:uncind_ind}
Uncloneable-indistinguishability implies indistinguishability.
\end{lemma}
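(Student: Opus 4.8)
The plan is to prove the implication by a direct reduction: I will show that any distinguishing attack can be simulated by a cloning-distinguishing attack whose success probability is \emph{exactly} equal to that of the distinguishing attack. Since uncloneable-indistinguishable security bounds the success probability of every cloning-distinguishing attack by $\frac{1}{2}+\negl(\lambda)$, the same bound transfers immediately to every distinguishing attack, which is precisely the statement of indistinguishability.

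Concretely, suppose we are given an arbitrary distinguishing attack, specified by the adversary's initial state preparation on $ME$ and the final measurement on $QE$ that produces $\hat{B}$. I build a cloning-distinguishing attack as follows. Steps 1 and 2 (the generation of the state on $ME$, the sampling of the bit $B$, and the execution of $\Enc$) are carried out identically, so that the joint distribution of $(F,B)$ together with the ciphertext on $Q$ and the side-information register $E$ is the same as in the distinguishing attack. For step 3 (the channel that distributes the state on $QE$ between Bob and Charlie), rather than attempting to genuinely split the ciphertext, I let this channel \emph{perform the distinguishing measurement} on $QE$, obtain the single classical bit $\hat{B}$, and hand a classical copy of $\hat{B}$ to each of Bob and Charlie. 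In step 4, Bob and Charlie then simply ignore their decryption keys and output $B' = \hat{B}$ and $B'' = \hat{B}$ respectively.

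Because $B' = B'' = \hat{B}$ by construction, the event $(F=\text{\cmark})\land(B=B'=B'')$ is literally the event $(F=\text{\cmark})\land(B=\hat{B})$, and since $(F,B)$ together with the pre-measurement state are identically distributed in the two attacks, the two probabilities coincide. Invoking uncloneable-indistinguishable security on the constructed attack then yields $\Pr[(F=\text{\cmark})\land(B=\hat{B})] \leq \frac{1}{2}+\negl(\lambda)$, as required. The one point that might look like an obstacle is the worry that Bob and Charlie cannot both run the distinguisher, since the quantum ciphertext cannot be cloned; the resolution, and the crux of the argument, is that the distributing channel in step 3 is allowed to be \emph{arbitrary} and may itself carry out the measurement, after which only a single classical bit needs to reach both parties --- and classical information is freely copyable. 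Hence no cloning of quantum information is ever required, and in particular the argument does not even need to invoke the bounded-leakage allowance of the model.
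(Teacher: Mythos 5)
Your proposal is correct and is essentially identical to the paper's own argument (attributed there to~\cite{BL20}): run the distinguisher within the distribution channel of step~3, copy the resulting classical bit $\hat{B}$ to Bob and Charlie, and have them output it as $B'$ and $B''$, so that the success probabilities of the two attacks coincide exactly and the uncloneable-indistinguishability bound transfers verbatim. Your closing observation --- that only a classical bit needs to reach both parties, so no cloning of quantum information is required --- is precisely the point the paper's proof sketch relies on as well.
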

\begin{lemma}\label{fc:0unc_uncind}
If the message space size $|\clM|$ is independent of the security parameter $\lambda$, then $(0,\negl(\lambda))$-uncloneability implies uncloneable-indistinguishability.
\end{lemma}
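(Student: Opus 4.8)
The plan is to argue by contraposition: I will assume the scheme is \emph{not} uncloneable-indistinguishable-secure and construct a cloning attack that violates $0$-uncloneability, using the hypothesis that $|\clM|$ is constant to keep the resulting advantage non-negligible. The whole argument is a reduction that runs the given distinguishing attack verbatim and merely relabels its outputs, so the bulk of the work is checking that the reduction is faithful.

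First I would invoke the without-loss-of-generality reduction established in the paragraph just above this lemma: any cloning-distinguishing attack can be taken to prepare a \emph{deterministic} message value $m^\star \neq \mathbf{0}$ on the $M$ register, with trivial side-information $E$. Under the failure of uncloneable-indistinguishability, such an attack satisfies $\Pr[(F=\text{\cmark})\land(B=B'=B'')] \geq \tfrac12 + \delta$ for some non-negligible $\delta = \delta(\lambda)$. The key observation is that, once $m^\star$ is fixed, this is \emph{literally} a cloning attack on a message drawn uniformly from the two-element set $\{\mathbf{0}, m^\star\}$: the uniform bit $B$ selects $\Enc(1^\lambda, \mathbf{0})$ or $\Enc(1^\lambda, m^\star)$ with equal probability, i.e.\ it encrypts a message $M_B$ uniform on $\{\mathbf{0}, m^\star\}$ under the bijection $B=0 \leftrightarrow \mathbf{0}$, $B=1 \leftrightarrow m^\star$, and the event $B=B'=B''$ becomes the event that Bob and Charlie both correctly guess $M_B$. (Triviality of $E$ ensures the distribute step of the distinguishing attack acts on the $Q$ register alone, matching a cloning attack, and the bounded-leakage budget carries over unchanged since the derived attack reuses the same communication.)

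Next I would embed this two-message attack into the genuine $0$-uncloneability game, where a message $M$ is drawn uniformly from all of $\clM$. Bob and Charlie run the distinguishing attack's procedures, reinterpreting their output bits as guesses $M^\B, M^\C \in \{\mathbf{0}, m^\star\}$ via the same bijection. Conditioned on $M \in \{\mathbf{0}, m^\star\}$ --- an event of probability $2/|\clM|$, on which $M$ is uniform on $\{\mathbf{0}, m^\star\}$ --- the entire experiment (flag $F$, ciphertext, independently released keys, and outputs) is distributed identically to the distinguishing attack; conditioned on $M \notin \{\mathbf{0}, m^\star\}$ the guesses lie in $\{\mathbf{0}, m^\star\}$ and are necessarily wrong. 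Tracking the flag event throughout, this yields
\[ \Pr[(F=\text{\cmark})\land(M=M^\B=M^\C)] = \frac{2}{|\clM|}\,\Pr[(F=\text{\cmark})\land(B=B'=B'')] \geq \frac{2}{|\clM|}\left(\frac12 + \delta\right) = \frac{1}{|\clM|} + \frac{2\delta}{|\clM|}. \]

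Finally I would note that since $|\clM|$ is independent of $\lambda$ while $\delta$ is non-negligible, the surplus $2\delta/|\clM|$ is itself non-negligible, so the derived cloning attack contradicts $0$-uncloneable security, proving the lemma. The only delicate point --- and the single place the constant-$|\clM|$ hypothesis is genuinely used --- is precisely this last step: the embedding dilutes the distinguishing advantage by the factor $2/|\clM|$, and a message space growing with $\lambda$ could wash the advantage out into the negligible regime. Everything else is bookkeeping confirming that the reduction preserves the flag event, the independently generated decryption keys, and the leakage constraint, all of which hold because the derived attack executes the given attack verbatim and only relabels two output values.
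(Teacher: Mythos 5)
Your proposal is correct and takes essentially the same route as the paper: it is the \cite{BL20} reduction from a cloning-distinguishing attack to a cloning attack succeeding with probability $\frac{2}{|\clM|}p$, carried out via the deterministic-$m^\star$ (trivial-$E$) simplification that the paper itself points out, just phrased contrapositively and tracking the joint event $(F=\text{\cmark})\land(M=M^\B=M^\C)$ directly rather than conditioning on $F=\text{\cmark}$ and multiplying back by $\Pr[F=\text{\cmark}]$. The embedding computation, the preservation of the flag, independently generated keys, and the leakage budget, and the final use of constant $|\clM|$ all match the paper's argument, so there is no gap.
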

While the definitions they use differ slightly from ours (because they consider protocols which do not have an abort outcome and do not use variable decryption keys), their arguments carry over straightforwardly to our scenario; we outline the main ideas here.
\begin{proof}[Proof sketch]
For Lemma~\ref{fc:uncind_ind}, \cite{BL20} observe that given any distinguishing attack, one can immediately construct a cloning-distinguishing attack that succeeds with the same probability: simply perform the distinguishing attack to produce the classical bit $\hat{B}$, and distribute copies of this value to Bob and Charlie, who output it as their values $B',B''$ respectively in the cloning-distinguishing attack. Hence the optimal success probability of a distinguishing attack cannot be higher than that for cloning-distinguishing attacks, and referring back to the definitions we see that this means uncloneable-indistinguishability implies indistinguishability.

For Lemma~\ref{fc:0unc_uncind}, the idea is again similar: given a cloning-distinguishing attack that succeeds with some probability $p\in[0,1]$, \cite{BL20} prove that one can construct a cloning attack that succeeds with probability at least $\frac{2}{|\clM|}p$. (We remark that our above observation regarding the structure of optimal cloning-distinguishing attacks can somewhat simplify this proof in~\cite{BL20}, since without loss of generality we can suppose that the cloning-distinguishing attack starts by simply setting $M$ to a deterministic value.) This implies that the optimal success probability of a cloning-distinguishing attack is at most $\frac{|\clM|}{2}$ times greater than that of a cloning attack, and since $(0,\negl(\lambda))$-uncloneability is the statement that the latter is upper bounded by $\frac{1}{|\clM|} + \negl(\lambda)$, this gives the desired result as long as $|\clM|$ is independent of $\lambda$. 
To see that this approach also applies for our definitions, one simply has to instead consider the states conditioned on $F=\text{\cmark}$, in which case the \cite{BL20} construction gives
\[
\Pr[ M=M^\B=M^\C | F=\text{\cmark}] \geq \frac{2}{|\clM|} \Pr[ B=B'=B'' | F=\text{\cmark}],
\]
where the 
right-hand-side refers to the probabilities in the cloning attack, and the 
left-hand-side refers to those in the cloning-distinguishing attack constructed from it.
Multiplying both sides of this inequality by $\Pr[F=\text{\cmark}]$ and then following the same argument as above gives the desired result for our context.
\end{proof}

\begin{remark}\label{remark:defnrelations}
Due to the above reductions, we see that for protocols with $\clM$ independent of $\lambda$, proving $(0,\negl(\lambda))$-uncloneability would be sufficient to imply the other two properties as well. 
However, in this work, for messages of arbitrary size (but independent of $\lambda$) we only prove $(O(\lambda),\negl(\lambda))$-uncloneability and hence we cannot use this simplification. As for single-bit or single-trit messages, if our result in Section~\ref{sec:bittritPA} on ``extractors'' against two adversaries in the single-decryptor encryption setting could be extended to the VKECM setting, it might allow a proof of $(0,\negl(\lambda))$-uncloneability for such messages; however, we were unable to resolve this question within this work (see Remark~\ref{remark:OTPreuse}) and hence leave it for future investigation.
\end{remark}

\subsection{Single-decryptor encryption}
\label{sec:defSDECM}

We now describe the security definition we use for device-independent single-decryptor encryption.\footnote{Similar to the situation for variable-key encryption above, while we use the terms ``encryption key'' and ``decryption key'' in the following descriptions, we stress that they do not correspond to the public key and private key respectively in a public-key encryption scheme. In particular, the encryption key in the scheme we design happens to also allow decrypting the ciphertext, so it is certainly not usable as a public-key encryption scheme.} Like in Definition \ref{def:vkecm}, the syntax will be similar to those in \cite{GZ20, CLLZ21} except the key generation procedure will need to interactive, with the possibility of aborting, in order to achieve device-independence. The notion of security we shall use will 
be essentially the same as the definition of \emph{random challenge anti-piracy} in~\cite{CLLZ21} (Definition~6.5), apart from the modification to account for the abort outcome.

\begin{definition}[Device-independent single-decryptor encryption of classical messages]\label{def:sdecm}
Let $\lambda$ be a security parameter. A scheme for \emph{single-decryptor encryption of classical messages} (SDECM) consists of a tuple $(\KeyG, \Enc, \Dec)$ such that
\begin{itemize}
\item $\KeyG(1^\lambda)$ is a (potentially interactive) protocol between an honest client and a potentially dishonest receiver, both of whom take as input the security parameter $\lambda$. The output of the protocol is $(F, \Kenc, \rho)$, where $F$ is a flag held by the client that takes values \cmark (accept) and \xmark (reject), $\Kenc$ is a classical encryption key held by the client, and $\rho$ is a quantum decryption key state in a register $Q$ held by the receiver.
\item $\Enc(m, \kenc, f)$ takes as input a message $m$ in the message space $\clM$, an encryption key $\kenc$, a flag value $f \in \{\text{\cmark}, \text{\xmark}\}$, and uses additional internal randomness to produce a classical ciphertext $C$.
\item $\Dec(c, \rho)$ takes as input a ciphertext $c$, a decryption key $\rho$, and outputs a message $\tM \in \clM$.
\end{itemize}
We say an SDECM scheme is efficient if the computations that the client and the honest receiver perform in $\KeyG, \Enc, \Dec$ are polynomial in $\lambda$ and the length of $m$.

We require that the SDECM scheme satisfies completeness, i.e., if all steps are carried out honestly, then
\begin{equation}\label{eq:complete-2}
\Pr\left[(F=\text{\cmark})\land\left(\Dec\circ\Enc(m, \KeyG(1^\lambda))=m\right)\right] \geq 1 - \negl(\lambda),
\end{equation}
where it is understood that $\Enc$ uses the encryption key and flag produced by $\KeyG$, and $\Dec$ uses the decryption key state.

Additionally, 
we impose the condition that for any value of the encryption key $\kenc$, the distribution of ciphertexts produced by $\Enc(m, \kenc, \text{\xmark})$ is independent of $m$; i.e.~the ciphertext is independent of the message when $F=\text{\xmark}$.
\end{definition}
SDECM schemes will be required to satisfy indistinguishable security in basically the same way as VKECM, and we shall not define it separately. We now focus on presenting definitions of anti-piracy security (against random challenge plaintexts), based on notions presented in~\cite{CLLZ21} (further discussion below).

\begin{definition}[Pirating attack and anti-piracy security, for identical random challenge plaintexts]\label{def:pir-sec}
A \emph{pirating attack} on a SDECM scheme is a process of the following form:
\begin{enumerate}
\item $\KeyG(1^\lambda)$ is carried out between the client and (dishonest) receiver, producing a flag $F$, an encryption key $\Kenc$, and a quantum decryption key $\rho$.
\item An arbitrary channel is applied to the decryption key state $\rho$ to distribute it between two parties Bob and Charlie. 
\item A uniformly random message $M\in\clM$ is prepared. Without any further communication between Bob and Charlie except as mediated by leakage via their devices, they receive independently generated instances of $\Enc(M,\Kenc,F)$, and use them together with their shares of the state to produce guesses $M^\B$ and $M^\C$ respectively for the message $M$.
\end{enumerate}
A protocol is said to be \emph{$(t(\lambda),g(\lambda))$-anti-piracy-secure against identical random challenge plaintexts} if for all such pirating attacks, we have\footnote{Technically, in~\cite{CLLZ21} the level of security was instead quantified by writing the upper bound in the form $\frac{1}{|\clM|} + \gamma(\lambda) + \negl(\lambda)$ and specifying the function $\gamma(\lambda)$. Here, for a closer analogy to Definition~\ref{def:cl-sec} (which was based on~\cite{BL20}), we have written it in a form similar to the one used in that definition; the definitions are interconvertible by taking $\gamma(\lambda) = \frac{2^{t(\lambda)}-1}{|\clM|}$ or inversely $t(\lambda) = \log(|\clM|\gamma(\lambda) + 1)$. Furthermore, we have again required an explicit specification of the function $g(\lambda)$ added to $\frac{2^{t(\lambda)}}{|\clM|}$ rather than just requiring it to be negligible, due to the issues pointed out in Remark~\ref{remark:termsplit}.}
\begin{align*}
\Pr[(F=\text{\cmark}) \land (M=M^\B=M^\C)] \leq \frac{2^{t(\lambda)}}{|\clM|} + g(\lambda),
\end{align*}
where the probability is taken over the distributions of $M$ and all randomness in the described procedures.
\end{definition}
\begin{definition}[Pirating attack and anti-piracy security, for independent random challenge plaintexts]\label{def:pir-sec-ind}
The pirating attack as the same as in Definition \ref{def:pir-sec} except in step 3, two independent and uniform messages $M_1$ and $M_2$ are sampled from $\clM$. Bob and Charlie receive independently generated instances of $\Enc(M_1, \Kenc, F)$ and $\Enc(M_2, \Kenc, F)$ respectively, and produce guesses $M^\B$ and $M^\C$ for $M_1$ and $M_2$. The protocol is said to be \emph{$(t(\lambda),g(\lambda))$-anti-piracy-secure against independent random challenge plaintexts} if for all such pirating attacks, we have
\begin{align*}
\Pr[(F=\text{\cmark}) \land (M_1=M^\B)\land(M_2=M^\C)] \leq \frac{2^{t(\lambda)}}{|\clM|} + g(\lambda),
\end{align*}
\end{definition}
If an SDECM scheme with $\clM$ independent of $\lambda$ achieves $t(\lambda)=0$ and $g(\lambda)=\negl(\lambda)$ under one of the above definitions, we may sometimes qualitatively refer to it as having ``perfect'' anti-piracy security (with respect to the corresponding definition).

Regarding the definitions above (which we based on~\cite{CLLZ21}), we highlight that there were some technical differences between the piracy setups studied in~\cite{GZ20} compared to~\cite{CLLZ21}. The first difference is in whether Bob and Charlie receive identical copies of a single output of $\Enc(M,\Kenc,F)$, versus independently generated outputs of that procedure (put another way: whether the same randomness or independent randomness is used to generate the ciphertexts they receive); here we follow~\cite{CLLZ21} and use the latter, as we currently require that property in our security proof. Definition \ref{def:pir-sec-ind} follows \cite{CLLZ21} and has Bob and Charlie also receive encryptions of two independent uniform messages $M_1, M_2$ (which they have to respectively guess). Definition \ref{def:pir-sec} is similar to \cite{GZ20} in that it gives Bob and Charlie encryptions of the same message (although independent randomness is still used in the encryption), and this is the definition that has a closer analogy to the uncloneability definition for DI-VKECM.
For the protocols we present in this work, we will specify in each case which of the above definition(s) they satisfy.

\cite{CLLZ21} also considered another different notion of anti-piracy security that they called CPA-style security. Here instead of the message to be encrypted being randomly sampled, the adversary chooses two distinct messages $(m_0, m_1)$ which could potentially be encrypted. A uniformly random choice is made between these two messages, and then Bob and Charlie have to guess which of the two messages have been encrypted.\footnote{Here again one can have variations depending on whether the message to be encrypted is chosen independently or identically for Bob and Charlie.} The security requirement is that their joint guessing probability should be close to $\frac{1}{2}$. We shall not be considering the CPA-style security definition in this work, but we point out that the CPA-style definition and the random ciphertext definition with $t(\lambda)=0$ are equivalent for single-bit messages (since a uniform choice is made between the 0 message or the 1 message in either case). Since in this work we present (in Section~\ref{sec:bittritPA}) a protocol that indeed achieves $(0,\negl(\lambda))$-anti-piracy security for single-bit messages, it follows that it also achieves CPA-style security for single-bit messages.

\section{Cloning game}\label{sec:cl-game}
For any non-local game $G$ (which includes games with multiple rounds and with a variable number of players, like we shall soon describe), we shall use $\omega^*(G)$ to denote its quantum value, i.e., its maximum winning probability with a quantum strategy. We shall use $\omega^*(G^l)$ to denote the winning probability of $l$ parallel copies of $G$, and $\omega^*(G^{t/l})$ to denote the winning probability of $t$ copies out of $l$ parallel copies of $G$.


\subsection{The CHSH game}
The CHSH game is one of the simplest one-round non-local games between two players Alice and Bob, which is as follows:
\begin{itemize}
\item Alice and Bob get inputs $x, y \in \{0,1\}$ uniformly at random.
\item Alice and Bob output $a, b \in \{0,1\}$.
\item The game is won if $a_i\oplus b_i = x_i\cdot y_i$.
\end{itemize}
The optimal classical winning probability $\omega(\CHSH)$ of the game is $\frac{3}{4}$, while the optimal quantum winning probability is $\omega^*(\CHSH)=\frac{1}{2}\left(1 + \frac{1}{\sqrt{2}}\right)$. In the optimal quantum strategy for the CHSH game, Alice and Bob share the maximally entangled state $ \frac{1}{\sqrt{2}}(\ket{00} + \ket{11})$. Alice's measurements corresponding to inputs 0 and 1 are $\{\state{0},\state{1}\}$ and $\{\state{\frac{\pi}{4}},\state{-\frac{\pi}{4}}\}$ respectively, and Bob's measurements corresponding to 0 and 1 are $\{\state{\frac{\pi}{8}},\state{-\frac{3\pi}{8}}\}$ and $\{\state{\frac{3\pi}{8}}, \state{-\frac{\pi}{8}}\}$ respectively, where we are using $\ket{\alpha}$ to denote the state $\cos\alpha\ket{0}+\sin\alpha\ket{1}$. We shall often refer to these state and measurements as the ideal CHSH state and measurements.

The CHSH game satisfies the following \emph{rigidity} property (given by a slight extension of Theorem~2 in~\cite{MYS12},
by writing the projectors $\Pi, \widetilde{\Pi}$ in the following statement as linear combinations of the hermitian observables described in that work).
\begin{fact}\label{fc:CHSH-rig}
Let $\ket{\phi}, \Pi^\A_{a|x}, \Pi^\B_{b|y}$ denote the state and measurements used in the ideal CHSH strategy. Suppose a quantum strategy for the CHSH game with shared state $\ket{\widetilde{\phi}}$ and projective measurements $\widetilde{\Pi}^\A_{a|x}$ and $\widetilde{\Pi}^\B_{b|y}$ of Alice and Bob achieves winning probability $\omega^*(\CHSH) - \mu$. Then there exist isometries $V^\A, V^\B$ acting only on Alice and Bob's registers in $\ket{\rho}$, and a state $\ket{\junk}$ such that for all $x, y, a, b$,
\begin{gather*}
\norm{V^\A\otimes V^\B\ket{\widetilde{\phi}} - \ket{\phi}\otimes\ket{\junk}}_2 \leq O(\mu^{1/4});\\
\norm{(V^\A\otimes V^\B)(\widetilde{\Pi}^\A_{a|x}\otimes\Id)\ket{\widetilde{\phi}} - (\Pi^\A_{a|x}\otimes\Id)\ket{\phi}\otimes\ket{\junk}}_2 \leq O(\mu^{1/4}); \\
\norm{(V^\A\otimes V^\B)(\Id\otimes\widetilde{\Pi}^\B_{b|y})\ket{\widetilde{\phi}} - (\Id\otimes\Pi^\B_{b|y})\ket{\phi}\otimes\ket{\junk}}_2 \leq O(\mu^{1/4}).
\end{gather*}
\end{fact}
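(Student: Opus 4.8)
The plan is to reduce the statement to the standard \emph{observable-form} self-testing theorem for CHSH, namely Theorem~2 of~\cite{MYS12}, and then convert the resulting bounds from observables to the projectors $\widetilde\Pi^\A_{a|x}, \widetilde\Pi^\B_{b|y}$ appearing here. First I would pass from projective measurements to dichotomic observables by setting $\widetilde A_x = \widetilde\Pi^\A_{0|x} - \widetilde\Pi^\A_{1|x}$ and $\widetilde B_y = \widetilde\Pi^\B_{0|y} - \widetilde\Pi^\B_{1|y}$, and likewise defining the ideal observables $A_x = \Pi^\A_{0|x} - \Pi^\A_{1|x}$, $B_y = \Pi^\B_{0|y} - \Pi^\B_{1|y}$ (for the ideal strategy these are the Pauli-like operators $A_0=Z$, $A_1=X$ and the $\pi/8$-rotated versions for Bob). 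A direct computation rewrites the winning probability as $\tfrac12 + \tfrac18\matel{\widetilde\phi}{\widetilde\beta}{\widetilde\phi}$, where $\widetilde\beta = \widetilde A_0\widetilde B_0 + \widetilde A_0\widetilde B_1 + \widetilde A_1\widetilde B_0 - \widetilde A_1\widetilde B_1$ is the Bell operator, so the hypothesis that the strategy achieves $\omega^*(\CHSH)-\mu$ is equivalent to $\matel{\widetilde\phi}{\widetilde\beta}{\widetilde\phi} \geq 2\sqrt2 - O(\mu)$, i.e.\ near-maximal violation.

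Given near-maximal violation, I would invoke the self-testing argument of~\cite{MYS12}, which I take as given: it shows that the observables approximately anticommute on the state, $\norm{(\widetilde A_0\widetilde A_1 + \widetilde A_1\widetilde A_0)\ket{\widetilde\phi}}_2 = O(\sqrt\mu)$ and similarly for Bob, and uses these relations to build local SWAP isometries $V^\A, V^\B$ from the observables $\widetilde A_x, \widetilde B_y$. The output of this construction is precisely a junk state $\ket{\junk}$ together with the estimates
\[ \norm{V^\A\otimes V^\B\ket{\widetilde\phi} - \ket\phi\otimes\ket\junk}_2 \leq O(\mu^{1/4}), \qquad \norm{(V^\A\otimes V^\B)(\widetilde A_x\otimes\Id)\ket{\widetilde\phi} - (A_x\otimes\Id)\ket\phi\otimes\ket\junk}_2 \leq O(\mu^{1/4}), \]
together with the analogous statement for Bob's observables $\widetilde B_y$. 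The genuinely nontrivial content lives here --- the SWAP-isometry construction and the propagation of the $O(\sqrt\mu)$ anticommutation error through it to the stated $O(\mu^{1/4})$ bounds --- but since this is exactly Theorem~2 of~\cite{MYS12} I would not reprove it; the first displayed bound is also the first of the three asserted in the statement.

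Finally, the ``slight extension'' is the bookkeeping that converts these observable bounds into the two remaining (projector) bounds. I would write each projector as a linear combination of the identity and the observable, $\widetilde\Pi^\A_{a|x} = \tfrac12(\Id + (-1)^a \widetilde A_x)$ and $\Pi^\A_{a|x} = \tfrac12(\Id + (-1)^a A_x)$, so that
\[ (V^\A\otimes V^\B)(\widetilde\Pi^\A_{a|x}\otimes\Id)\ket{\widetilde\phi} - (\Pi^\A_{a|x}\otimes\Id)\ket\phi\otimes\ket\junk \]
splits by linearity into $\tfrac12$ times the state-estimate difference plus $\tfrac{(-1)^a}{2}$ times the observable-estimate difference. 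The triangle inequality then bounds its $2$-norm by the sum of two $O(\mu^{1/4})$ terms, giving $O(\mu^{1/4})$; the identical manipulation on Bob's side yields the third displayed bound. The only thing to verify is that this conversion preserves the error order, which it trivially does. Thus the main obstacle is conceptual rather than technical: essentially all the work is imported from~\cite{MYS12}, and what remains is the routine projector/observable translation.
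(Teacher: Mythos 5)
Your proposal is correct and matches the paper's own justification, which is precisely the parenthetical remark that Fact~\ref{fc:CHSH-rig} follows from Theorem~2 of~\cite{MYS12} ``by writing the projectors $\Pi, \widetilde{\Pi}$\ldots as linear combinations of the hermitian observables described in that work.'' Your fleshed-out version --- rewriting the winning probability as $\tfrac12 + \tfrac18\langle\widetilde\beta\rangle$ to get near-maximal Bell violation, importing the SWAP-isometry bounds from~\cite{MYS12} as a black box, and then using $\widetilde\Pi^\A_{a|x} = \tfrac12(\Id + (-1)^a \widetilde A_x)$ with the triangle inequality to transfer the $O(\mu^{1/4})$ estimates to the projector form --- is exactly the intended argument.
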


Alice's ideal measurements in the CHSH game are in the computational and Hadamard basis. Since the ideal shared state between Alice and Bob is the maximally entangled state, Bob's state when Alice does the measurement corresponding to $x=0$ or 1 and obtains outcome $a=0$ or 1 is $H^x\ket{a}$. These states are called Wiesner states, and we shall denote the Wiesner state produced when $x$ is the input and $a$ is the output by $\ket{a^x}$. The Wiesner states satisfy the following monogamy of entanglement property, which we shall use in our security proof.
\begin{fact}[\cite{TFK+13,BL20}]\label{fc:mon-game}
If $\ket{a^x}$ denotes a Wiesner state, then for any Hilbert spaces $\clH^\B$ and $\clH^\C$, any two collection of measurements $\{\Pi^\B_{a|x}\}_a$ and $\{\Pi^\C_{a|x}\}_a$ (for each $x$) on $\clH^\B$ and $\clH^\C$ respectively, and any CPTP map $\Lambda: \bbC^2 \to \clH^\B\otimes\clH^\C$, we have
\[ \bbE_{a,x}\Tr\left[\left(\Pi^\B_{a|x}\otimes\Pi^\C_{a|x}\right)\Lambda(\state{a^x})\right] = \frac{1}{2}\left(1 + \frac{1}{\sqrt{2}}\right).\]
\end{fact}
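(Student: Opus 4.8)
The plan is to recognize this as a single-copy instance of the BB84/Wiesner monogamy-of-entanglement game and to reduce the averaged winning expression to the operator norm of a ``game operator,'' following~\cite{TFK+13}. Since the displayed quantity can only decrease for a suboptimal strategy, I read the statement as asserting the optimal value, and will establish the matching upper bound $\le\frac12(1+\frac1{\sqrt2})$, with equality witnessed by the optimal strategy.

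First I would purify the preparation of $\ket{a^x}$ by introducing a maximally entangled state $\ket{\Phi}_{AA'}=\frac{1}{\sqrt2}(\ket{00}+\ket{11})$ and using that, since Wiesner states are real, $(\bra{a^x}_A\otimes\Id_{A'})\ket{\Phi}_{AA'}=\frac{1}{\sqrt2}\ket{a^x}_{A'}$. Applying $\Lambda$ to $A'$ produces a tripartite state $\rho_{ABC}=(\Id_A\otimes\Lambda)(\state{\Phi})$, and a short computation using the adjoint of $\Lambda$ rewrites the target quantity as $\Tr[G\,\rho_{ABC}]$ with $G=\frac12(G_0+G_1)$ and $G_x=\sum_a\state{a^x}_A\otimes\Pi^\B_{a|x}\otimes\Pi^\C_{a|x}$. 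Because the $\{\state{a^x}_A\}_a$ are orthogonal and each $\Pi^\B_{a|x}\otimes\Pi^\C_{a|x}$ is a projector, each $G_x$ is itself a projector; since $\rho_{ABC}$ is a density operator this already gives $p\le\|G\|$.

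The core of the argument is then to bound $\|G\|$. For two projectors one has the identity $\|G_0+G_1\|=1+\|G_0G_1\|$, so it suffices to prove the overlap bound $\|G_0G_1\|\le\frac1{\sqrt2}$. Writing $G_0=\sum_a\state{a}_A\otimes R^a$ and $G_1=\sum_{a'}\state{a'^1}_A\otimes S^{a'}$ with $R^a,S^{a'}$ the corresponding $BC$ projectors, the product expands with scalar coefficients $\langle a^0|a'^1\rangle$, each of modulus $\frac1{\sqrt2}$ by mutual unbiasedness of the computational and Hadamard bases. I would then pass to $\|G_0G_1\|^2=\|G_1G_0G_1\|$ and use $\sum_aR^a=\sum_a\Pi^\B_{a|0}\otimes\Pi^\C_{a|0}\le\Id$ (and likewise for $S$) to reduce $\|G_1G_0G_1\|$ to the norm of a product of two projectors, which the unbiasedness caps at $\frac12$. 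Combining gives $p\le\|G\|=\frac12(1+\|G_0G_1\|)\le\frac12(1+\frac1{\sqrt2})$.

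The main obstacle is precisely this overlap bound $\|G_0G_1\|\le\frac1{\sqrt2}$. The naive estimate using only the $A$-register mutual unbiasedness yields the vacuous $\|G_0G_1\|\le1$ (indeed, if the $B$ and $C$ devices could both implement the identity measurement the bound would fail), so the genuine monogamy content lies in combining the joint $\Pi^\B\otimes\Pi^\C$ tensor structure with the sub-normalization $\sum_a\Pi^\B_{a|x}\otimes\Pi^\C_{a|x}\le\Id$, which prevents $B$ and $C$ from simultaneously concentrating on the same outcome across the two incompatible bases. Finally, matching the bound from below by the explicit optimal BB84 strategy of~\cite{TFK+13}, which attains $\frac12(1+\frac1{\sqrt2})$, confirms the stated value.
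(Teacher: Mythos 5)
The first point of comparison: the paper itself contains no proof of Fact~\ref{fc:mon-game} --- it is imported as a known result from~\cite{TFK+13} (restated in~\cite{BL20}) --- so the relevant benchmark is the original argument of Tomamichel, Fehr, Kaniewski and Wehner, and your proposal is essentially a faithful reconstruction of it: the purification step turning the averaged guessing probability into $\Tr[G\,\rho_{ABC}]$ with $G=\frac{1}{2}(G_0+G_1)$, the projector identity $\norm{G_0+G_1}\leq 1+\norm{G_0G_1}$, and the overlap bound $\norm{G_0G_1}\leq\frac{1}{\sqrt{2}}$ are exactly the ingredients used there. You were also right not to take the statement literally: the ``$=$'' is evidently a typo for ``$\leq$'' (a strategy in which Bob and Charlie both always output $0$ achieves value $\frac{1}{2}$), and reading it as the optimal value --- proving the upper bound for all strategies, with equality witnessed by the explicit optimal strategy of~\cite{TFK+13} --- matches how the paper actually invokes the fact in the proof of Theorem~\ref{thm:cl-win}, where only the upper bound is used.

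One step of your sketch is too schematic and, as literally described, would not go through: you cannot reduce $\norm{G_1G_0G_1}$ to a product of two projectors using only the sub-normalization $\sum_a R^a\leq\Id$, because the cross terms $S^{a'}R^aS^{a''}$ with $a'\neq a''$ in the expansion of $G_1G_0G_1$ do not vanish. The missing device is the \emph{asymmetric} relaxation of~\cite{TFK+13}: drop Charlie's projector from $G_0$ and Bob's from $G_1$, i.e.\ $G_0\leq \bar{G}_0^{\B}:=\sum_a \state{a^0}\otimes\Pi^\B_{a|0}\otimes\Id$ and $G_1\leq \bar{G}_1^{\C}:=\sum_{a'}\state{a'^1}\otimes\Id\otimes\Pi^\C_{a'|1}$. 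Then $\norm{G_0G_1}^2=\norm{G_1G_0G_1}\leq\norm{\bar{G}_0^{\B}G_1\bar{G}_0^{\B}}\leq\norm{\bar{G}_0^{\B}\bar{G}_1^{\C}\bar{G}_0^{\B}}$ (conjugation preserves the operator inequalities), and now the cross terms do die, because Bob's projectors for the fixed basis $x=0$ are mutually orthogonal: a direct computation gives
\[ \bar{G}_0^{\B}\bar{G}_1^{\C}\bar{G}_0^{\B}=\sum_{a,a'}\big|\brakett{a^0}{a'^1}\big|^2\,\state{a^0}\otimes\Pi^\B_{a|0}\otimes\Pi^\C_{a'|1}=\frac{1}{2}\sum_{a,a'}\state{a^0}\otimes\Pi^\B_{a|0}\otimes\Pi^\C_{a'|1}\leq\frac{1}{2}\Id, \]
where the last inequality is where $\sum_{a'}\Pi^\C_{a'|1}\leq\Id$ finally enters. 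This yields $\norm{G_0G_1}\leq\frac{1}{\sqrt{2}}$ and hence the claimed bound. Two smaller points: your claim that each $G_x$ is a projector (and the orthogonality used above) requires the measurements to be projective, which is without loss of generality only after a Naimark dilation --- permitted here since the statement quantifies over arbitrary $\clH^\B,\clH^\C$; and your diagnosis of where the monogamy content lies is correct, since indeed keeping both tensor factors symmetrically or neither gives nothing.
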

Note that we can also define $n$-qubit versions of the Wiesner states, and the above result holds with the right-hand side being $\left(\frac{1}{2} + \frac{1}{2\sqrt{2}}\right)^n$ in that case. However, we shall only need the result for single-qubit Wiesner states for our purposes.

\subsection{The 2-round cloning game}
\label{subsec:simplegame-def}
Using the rigidity or self-testing property of the CHSH game and the monogamy of entanglement property of the Wiesner states, we shall formulate a game that we shall call the 2-round cloning game, which we first qualitatively describe as follows. In a single instance of the game, one of two things will happen probabilistically: either the CHSH game will be played between two players whom we call Alice and Barlie, or Alice will get her input for the CHSH game and produce her output as usual (without knowing what is happening on Barlie's side), but Barlie will split into two players Bob and Charlie, who will both be given the same input as Alice and have to guess her output bit. For technical reasons, we shall actually split the game into two rounds, with the CHSH component happening in the first round and Bob and Charlie guessing Alice's output in the second.

The measurements used by Barlie for the CHSH component may be different from those used by Bob and Charlie for the guessing component. However, Alice's device does not know which component is taking place, and the shared state between Alice and Barlie or Alice, Bob and Charlie is distributed beforehand, and therefore if the CHSH component is won with probability close to $\omega^*(\CHSH)$, then by the self-testing property of CHSH, the shared state and Alice's measurements must be close to the ideal state and measurements. Thus, the state on Bob and Charlie's side post-Alice's measurement must be close to a Wiesner state, which will allow us to use the monogamy of entanglement property to upper bound the probability of Bob and Charlie both guessing Alice's output. We shall define the 2-round cloning game formally below, and formalize the above argument about its winning probability in Section \ref{subsec:cl-win}.

We now give the detailed description.
The 2-round cloning game $\cl_\gamma$ with parameter $\gamma$ involves four players Alice, Barlie, Bob and Charlie, although not all of them have to perform actions in each round. 
At the beginning of the game, there can be some arbitrary entangled state shared between Alice and Barlie.\footnote{For the security analysis, it might seem more general to allow an initial entangled state across all four parties. However, as will become clear from the later description, this does not make any difference since any registers Bob and Charlie might have started with could also be analyzed by initially giving them to Barlie instead.}
The first round only involves the two players Alice and Barlie, who receive some inputs and produce some outputs (without communication). Specifically, the first round inputs, outputs and the winning condition are as follows:
\begin{itemize}
\item Alice receives $x\in\{0,1\}$ uniformly at random. Barlie receives $u\in \{0,1,\mathrm{keep}\}$, such that 
$u=\mathrm{keep}$ with probability $1-\gamma$, and $u=0$ or $1$ with probability $\frac{\gamma}{2}$ each.
\item Alice outputs $a$ and Barlie outputs $s$.
\item The first round win condition is
\begin{equation*}
\sfV_1(x,u,a,s) = \begin{cases} 1 & \text{ if } u = \mathrm{keep} \\ 1 & \text{ if } (u \neq \mathrm{keep})\land(a\oplus s = x\cdot u) \\
0 & \text{ otherwise.}\end{cases} 
\end{equation*}
\end{itemize} 
After Barlie produces his first round output, he can do some further operation on his part of the shared state (though this can also be absorbed into the operation he does to produce $s$). Then Barlie distributes his state in some arbitrary fashion between two other players, Bob and Charlie. Bob and Charlie do not communicate after this (and Barlie no longer plays any further role in the game, apart from having his first-round values $(u,s)$ being involved in the win condition). In the second round, Alice and Barlie do not receive any inputs and are not required to produce any output; Bob and Charlie receive inputs and have to produce outputs separately. The inputs, outputs and win condition in the second round are as follows:
\begin{itemize}
\item Bob and Charlie both receive $x$ as their input and produce $b$ and $c$ as their outputs.
\item The second round win condition is
\[ \sfV_2(x,u,a,s,b,c) = \begin{cases} 1 & \text{ if } u \neq \mathrm{keep} \\ 1 & \text{ if } (u = \mathrm{keep})\land(b=c=a) \\ 0 & \text{ otherwise.} \end{cases} \]
\end{itemize}

For a schematic depiction of the structure of this game, refer to Fig.~\ref{fig:clone} below (which technically depicts the modified game we describe in the next section, but the overall structure is fairly similar).

\subsection{Modifying the 2-round cloning game}\label{subsec:game-def}

Although it would be nice to be able to use $\cl_\gamma$ directly in our encryption scheme and security proof, in order to be able to prove a parallel repetition theorem, we need to modify the game somewhat, by an ``anchoring" transformation similar to \cite{Vid17, KT23}. In our case, this means that Alice's input $x$ will not be revealed with some probability $\alpha$ independently to Bob and Charlie; we can let the second round win condition be automatically satisfied for Bob and Charlie if 
either of them does not
receive $x$ (this will be denoted by the input being $\bot$, and in the protocol will correspond to this instance not being used for key generation). This is needed so that Bob and Charlie's second round inputs and Alice's registers are in product with some probability.

The modified 2-round cloning game $\cl_{\gamma,\alpha}$ is the same as $\cl_\gamma$ in the first round, and after the first round, Barlie again distributes his state between Bob and Charlie in the same fashion as $\cl_\gamma$. (Refer to Fig.~\ref{fig:clone} for a schematic diagram of this structure.) However, the second round inputs, outputs and win condition for $\cl_{\gamma,\alpha}$ are instead as follows:
\begin{itemize}
\item Bob and Charlie receive $y, z \in \{0,1, \bot\}$ respectively, such that $y=\bot$ and $z=\bot$ independently with probability $\alpha$, and when $y$ and $z$ are not $\bot$, they are equal to $x$.
\item Bob and Charlie output $b$ and $c$ respectively.
\item The second round win condition is
\begin{equation*}
\sfV_2(x,u,y,z,a,s,b,c) = \begin{cases} 1 & \text{ if } u \neq \mathrm{keep} \\ 1 & \text{ if } (u=\mathrm{keep})\land(y=z=\bot) \\ 1 & \text{ if } (u=\mathrm{keep})\land(y=\bot\neq z) \\ 1 & \text{ if } (u=\mathrm{keep})\land(z=\bot\neq y) \\
1 & \text{ if } (u = \mathrm{keep})\land(y, z \neq \bot)\land(b=c=a) \\ 0 & \text{ otherwise.} \end{cases}
\end{equation*}
\end{itemize}

For later use, we summarize the overall win condition $\sfV_1(x,u,a,s)\cdot\sfV_2(x,u,y,z,a,s,b,c)=1$ as follows:
\begin{align}
\sfV_1(x,u,a,s)\cdot\sfV_2(x,u,y,z,a,s,b,c) = 
\begin{cases} 
1 & \text{ if } (u \neq \mathrm{keep})\land(a\oplus s = x\cdot u) \\ 
1 & \text{ if } (u=\mathrm{keep})\land(y=z=\bot) \\ 
1 & \text{ if } (u=\mathrm{keep})\land(y=\bot\neq z)\\ 
1 & \text{ if } (u=\mathrm{keep})\land(z=\bot\neq y)\\
1 & \text{ if } (u = \mathrm{keep})\land(y, z \neq \bot)\land(b=c=a) \\ 0 & \text{ otherwise.} 
\end{cases} 
\label{eq:wincondfull}
\end{align}
As qualitatively described above, the various possible win conditions when $u=\mathrm{keep}$ (which arise mainly from the second-round conditions, since the first round is trivially won for that case) can be equivalently rewritten into a single condition
\begin{align*}
(y=\bot)\lor(z=\bot)\lor(b=c=a),
\end{align*}
i.e.~(at least) one of Bob and Charlie got $\bot$ as input, or they both guessed $a$ correctly.

\begin{figure}
\centering
\leavevmode
\large 
\Qcircuit @C=1.5em @R=1.2em {
& \barrier[-3.1em]{4} & \text{First round} & \barrier[-3.5em]{4} & & \hspace{-3.5em}\text{Second round} &\\
\lstick{\text{Alice}}
&\qw & \gate{M_{a|x}} & & & &\\
\lstick{\text{Barlie}}
&\qw & \gate{M'_{s|u}} & \qw & \multigate{2}{\clE_{\mathrm{distribute}}} & & \\
\lstick{\text{Bob}} & & & & \nghost{\clE_{\mathrm{distribute}}} & \qw & \gate{M''_{b|y}} \\
\lstick{\text{Charlie}} & & & & \nghost{\clE_{\mathrm{distribute}}} & \qw & \gate{M'''_{c|z}} \\
}
\caption{Circuit diagram depicting the game $\cl_{\gamma,\alpha}$. Initially, some entangled state is shared between the parties Alice and Barlie. In the first round, Alice measures her register according to her input $x$ to produce an output $a$, which we depict in the above diagram via its POVM elements $M_{a|x}$; analogously, Barlie measures his register according to some other POVM $M'_{s|u}$. While Alice performs no further operations (so she can discard her state immediately after measuring), Barlie retains his post-measurement register for the second round. In the second round, he redistributes his register to two other parties Bob and Charlie in some arbitrary fashion, which we denote with the channel $\clE_{\mathrm{distribute}}$, and Bob and Charlie then respectively perform measurements specified by inputs $y$ and $z$, producing outputs $b$ and $c$, which we again represent via POVM elements $M''_{b|y}$ and $M'''_{c|z}$. Note that the simpler game $\cl_\gamma$ described previously in Section~\ref{subsec:simplegame-def} also has an essentially similar structure, except that it has a different input distribution (for instance Bob and Charlie's inputs are basically set to $y=z=x$), and a different win condition.}
\label{fig:clone}
\end{figure}

In our actual parallel repetition proof, we shall not need to use any specific structure of $\cl_{\gamma,\alpha}$ aside from its input distribution, and thus the parallel repetition result holds for a more general class of games. The specific properties of the input distribution that we shall need to use are:
\begin{enumerate}[(i)]
\item The distribution $\sfP_{XU}$ of the first round inputs is a product distribution; $U$ is also independent of the second round inputs. \label{prop:dist-1}
\item The second round inputs $Y, Z$ are correlated with $X$ in the following way:
\begin{gather*}
\sfP_{XYZ}(x,x,x) = (1-\alpha)^2\cdot \sfP_X(x) \\
\sfP_{XYZ}(x,\bot,x) = \sfP_{XYZ}(x,x,\bot) = \alpha(1-\alpha)\cdot\sfP_X(x) \\
\sfP_{XYZ}(x,\bot,\bot) = \alpha^2\cdot\sfP_X(x).
\end{gather*}
 \label{prop:dist-2}
\end{enumerate}
Note that the above implies that conditioned on $Y=Z=\bot$, the conditional distribution $\sfP_{XU||\bot,\bot}$ of $x,u$ is exactly the same as the marginal distribution of $x, u$. Similarly, conditioned on $Y=\bot$, the conditional distribution $\sfP_{XUZ|Y=\bot}$ is the same as the marginal distribution of $x, u, z$, and conditioned on $Z=\bot$, the conditional distribution $\sfP_{XUY|Z=\bot}$ is the same as the marginal distribution of $x, u, y$. Moreover, $Y$ and $Z$ are independent conditioned on $X$, and since $U$ is independent of everything else, $YZ$ are independent conditioned on $XU$ as well. In fact, if we define random variables $DF$ as follows: $D$ is a uniformly random bit, and $F=XU$ or $F=YZ$ depending on whether $D=0$ or $D=1$, then $XUYZ$ are independent conditioned on $DF$, i.e., $\sfP_{XUYZ|DF}$ is a product distribution.

\subsection{The winning probability of $\cl_{\gamma,\alpha}$}\label{subsec:cl-win}

We shall now show that $\omega^*(\cl_{\gamma,\alpha})$  is strictly less than the trivial upper bound of $(1-\gamma) + \gamma\omega^*(\CHSH)$. 
At first sight, it might appear that this claim holds simply by the following argument sketch: if Bob and Charlie could win perfectly in the second round, then Alice's first-round output must be deterministic conditioned on their side-information, which implies that the devices cannot achieve the maximum quantum CHSH winning probability in the first round. However, this idea has two technical flaws (which we implicitly addressed in previous works~\cite{KST22,KT23}, though without detailed elaboration). Firstly, for some non-local games it is known that the classical winning probability can be exceeded while still ensuring some inputs give deterministic outputs (see e.g.~the \emph{partially deterministic polytope} in~\cite{woodheadthesisnodoi}); also, in our scenario Bob and Charlie's guesses could potentially depend on Alice's round-1 input, in which case the above argument sketch does not straightforwardly work (see e.g.~\cite{Tan21} Appendix~B)\footnote{Alternatively, observe that the argument clearly fails if we were to consider only a single player Bob instead of both Bob and Charlie, since even if the devices shared the ideal CHSH state in the first round, when the second round occurs Bob could simply (focusing on the $u=\text{keep}$ and $y\neq\bot$ case, since otherwise the second round is automatically won) use his knowledge of Alice's first-round input to measure in an appropriate basis and learn her output perfectly.}. Secondly, even if that obstacle were overcome, this argument would only imply that any particular strategy for $\cl_{\gamma,\alpha}$ has winning probability less than $(1-\gamma) + \gamma\omega^*(\CHSH)$; it would not rule out the possibility of a sequence of strategies achieving winning probabilities \emph{arbitrarily close} to $(1-\gamma) + \gamma\omega^*(\CHSH)$. For our later security proof, we really do need the property that the value $\omega^*(\cl_{\gamma,\alpha})$ is less than $(1-\gamma) + \gamma\omega^*(\CHSH)$, which is a stronger condition (since it means that all strategies have winning probability \emph{bounded away from} $(1-\gamma) + \gamma\omega^*(\CHSH)$). Hence the above argument sketch does not work by itself; we now present the actual proof.\footnote{We also remark that in order to obtain the desired bound, a self-testing result with ``nonzero robustness'' seems to be necessary, i.e.~if Fact~\ref{fc:CHSH-rig} had only been proven for $\mu=0$, it would not have been enough to obtain our desired result (since it leaves open the possibility of devices with CHSH winning probability arbitrarily close to $\omega^*(\CHSH)$ while still allowing Bob and Charlie to win the second round with arbitrarily high probability).}

\begin{remark}
In the proof below, for ease of description we have used the rigidity bound of~\cite{MYS12} presented as Fact~\ref{fc:CHSH-rig} above, as it has a simple closed-form expression. However, if better bounds on $\omega^*(\cl_{\gamma,\alpha})$ are desired, one can instead use the bounds computed in e.g.~\cite{BNS+15} using semidefinite programming techniques, which are typically tighter and more noise-robust (but do not have closed-form expressions).
\end{remark}

\begin{theorem}\label{thm:cl-win}
For any $\gamma, \alpha \in (0,1)$, there exists a value $\dga > 0$ such that $\omega^*(\cl_{\gamma,\alpha})$ is at most $(1-\gamma) + \gamma\omega^*(\CHSH) - \dga$.
\end{theorem}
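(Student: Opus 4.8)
The plan is to bound the value of an arbitrary strategy for $\cl_{\gamma,\alpha}$ by decomposing it according to Barlie's first-round input, and then quantifying the tension between the CHSH component and the guessing component via the rigidity of Fact~\ref{fc:CHSH-rig} together with the monogamy bound of Fact~\ref{fc:mon-game}. Fix any strategy, with shared state $\ket{\psi}$ between Alice and Barlie, Alice's measurement $\{\widetilde{\Pi}^\A_{a|x}\}$ (the same in every round, since Alice never learns $u$), Barlie's CHSH measurement, and a channel that, in the $u=\mathrm{keep}$ case, distributes Barlie's state to Bob and Charlie, whose measurements are $\{\Pi^\B_{b|x}\},\{\Pi^\C_{c|x}\}$. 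Let $p_1$ be the CHSH-winning probability conditioned on $u\neq\mathrm{keep}$ and $p_2$ the probability that $b=c=a$ conditioned on $u=\mathrm{keep}$ and $y=z=x$. Reading off the input distribution and the win condition~\eqref{eq:wincondfull}, the overall value is $\gamma p_1 + (1-\gamma)\left[\left(1-(1-\alpha)^2\right) + (1-\alpha)^2 p_2\right]$; since $\left(1-(1-\alpha)^2\right)+(1-\alpha)^2=1$, the trivial bound $(1-\gamma)+\gamma\,\omega^*(\CHSH)$ is reached only if $p_1=\omega^*(\CHSH)$ and $p_2=1$ simultaneously. Write $\mu := \omega^*(\CHSH)-p_1\geq 0$.

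The crux is to show $p_2 \leq \omega^*(\CHSH) + c\,\mu^{1/4}$ for an absolute constant $c$. Conditioned on $u\neq\mathrm{keep}$, Alice's and Barlie's inputs are exactly uniform CHSH inputs, so the strategy wins CHSH with probability $\omega^*(\CHSH)-\mu$ and Fact~\ref{fc:CHSH-rig} supplies isometries $V^\A,V^\B$ and a fixed state $\ket{\junk}$ (independent of $a,x$) with $(V^\A\otimes V^\B)(\widetilde{\Pi}^\A_{a|x}\otimes\Id)\ket{\psi}$ being $O(\mu^{1/4})$-close to $(\Pi^\A_{a|x}\otimes\Id)\ket{\phi}\otimes\ket{\junk}$. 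Because Alice applies this \emph{same} self-tested measurement in the keep rounds, tracing out Alice shows that Barlie's post-measurement state given $(x,a)$ is, after $V^\B$, $O(\mu^{1/4})$-close to $\ket{a^x}$ tensored with a fixed reduced junk state, and that $p(a|x)$ is $O(\mu^{1/4})$-close to uniform. Absorbing all of Barlie's keep-case operations and his distribution step into one channel, and pre-composing with $V^\B$ and the fixed junk, produces a single CPTP map from the Wiesner qubit to Bob's and Charlie's registers, independent of $(a,x)$; Fact~\ref{fc:mon-game} then bounds the corresponding ideal guessing probability by $\omega^*(\CHSH)$. Propagating the errors (trace distance is non-increasing under channels and measurements, and replacing $p(a|x)$ by $\tfrac12$ costs $\sum_{a,x}|p(a|x)-\tfrac12|=O(\mu^{1/4})$) gives the claimed bound on $p_2$.

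Substituting, with $w:=\omega^*(\CHSH)$ and $K:=(1-\gamma)(1-\alpha)^2$, yields $(1-\gamma)+\gamma w - \omega^*(\cl_{\gamma,\alpha}) \geq \gamma\mu + K(1-w) - Kc\,\mu^{1/4}$ whenever $w+c\,\mu^{1/4}\leq 1$, and $\geq \gamma\mu$ in general (using only $p_2\leq 1$). A single use of the first estimate does not suffice, since its minimum over $\mu$ can be negative; instead I would split at the threshold $\mu_1 := \left((1-w)/(2c)\right)^4$. For $\mu\leq\mu_1$ the first estimate is at least $\tfrac12 K(1-w)$, while for $\mu>\mu_1$ the second is at least $\gamma\mu_1$. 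Hence $\dga := \min\!\left\{\tfrac12 K(1-w),\,\gamma\mu_1\right\} > 0$ does the job.

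The main obstacle is the middle step: converting the robust CHSH self-test into a clean, quantitative monogamy bound on $p_2$. The delicate points are (i) using that Alice's measurement is identical in tested and keep rounds, so that the self-test actually governs the keep-round state that Barlie distributes; (ii) extracting Barlie's reduced post-measurement state as a near-Wiesner state in tensor product with a \emph{fixed} junk state (which relies on the junk in Fact~\ref{fc:CHSH-rig} being independent of the inputs/outputs); and (iii) repackaging Barlie's distribution operation as an admissible channel for Fact~\ref{fc:mon-game}. It is precisely the nonzero robustness ($O(\mu^{1/4})$ rather than an exact $\mu=0$ statement) that allows the final optimization to produce a value \emph{bounded away} from $(1-\gamma)+\gamma\,\omega^*(\CHSH)$, as anticipated in the paper's remark.
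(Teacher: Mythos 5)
Your proposal is correct and follows essentially the same route as the paper's proof: decompose the winning probability according to the win conditions in~\eqref{eq:wincondfull}, parametrize the CHSH deficit as $\mu$, and combine the rigidity statement (Fact~\ref{fc:CHSH-rig}, including the fixed junk state and the absorption of Barlie's distribution step into a single channel so that Fact~\ref{fc:mon-game} applies) to get $p_2 \leq \omega^*(\CHSH) + O(\mu^{1/4})$, exactly as the paper does. The only divergence is the last step: where the paper concludes via continuity of the bound on the compact interval $[0,\omega^*(\CHSH)]$ that its maximum is strictly below $(1-\gamma)+\gamma\omega^*(\CHSH)$, you split at the explicit threshold $\mu_1 = \left((1-\omega^*(\CHSH))/(2c)\right)^4$, which is a harmless (indeed slightly stronger) variant since it yields an explicit formula for $\dga$ rather than a non-constructive existence statement.
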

\begin{proof}
Take an arbitrary strategy for playing $\cl_{\gamma,\alpha}$. 
Note that the events listed in~\eqref{eq:wincondfull} that give $\sfV_1(X,U,A,S)\cdot\sfV_2(X,U,Y,Z,A,S,B,C)=1$ are mutually exclusive. Therefore, the winning probability of this strategy on $\cl_{\gamma,\alpha}$ is given by summing the probabilities it gives for these events.

We begin by considering the first event. We first note that $\Pr[A\oplus S = X\cdot U | U \neq \mathrm{keep}]$ is just the probability for this strategy to win the CHSH game if it is played in the first round. Therefore, we can denote this value as $\omega^*(\CHSH) - \mu$ for some $\mu \in [0,\omega^*(\CHSH)]$ without loss of generality. Putting this together with $\Pr[U \neq \mathrm{keep}] = \gamma$ from the game definition, we have
\begin{align*}
\Pr\left[(U \neq \mathrm{keep})\land(A\oplus S = X\cdot U)\right] = \gamma(\omega^*(\CHSH) - \mu).
\end{align*}

For the second, third, and fourth events, we apply the game definition to get:
\begin{gather*}
\Pr\left[(U=\mathrm{keep})\land(Y=Z=\bot)\right] = (1-\gamma) \alpha^2, \\
\Pr\left[(U=\mathrm{keep})\land(Y=\bot\neq Z)\right] = (1-\gamma) \alpha(1-\alpha), \\
\Pr\left[(U=\mathrm{keep})\land(Z=\bot\neq Y)\right] = (1-\gamma) \alpha(1-\alpha),
\end{gather*}
so the sum of their probabilities is 
\[
(1-\gamma)(\alpha^2+2\alpha(1-\alpha)) = (1-\gamma)(1-\tilde{\alpha}), \quad \text{ where } \tilde{\alpha}=(1-\alpha)^2.
\]

For the last event, first recall that we have parametrized the probability of this strategy winning the CHSH game as $\omega^*(\CHSH) - \mu$. Hence by the CHSH rigidity property (Fact~\ref{fc:CHSH-rig}), after Alice performs her measurement (chosen uniformly at random), the resulting joint state across her input/output registers and Barlie's register is $O(\mu^{1/4})$-close in trace distance to the ``ideal'' mixture $\bbE_{a,x}\left[\state{x} \otimes \state{a} \otimes \state{a^x}\right]$ (where $\state{a^x}$ again denotes Wiesner states), up to an isometry on Barlie's register.\footnote{While that rigidity property is only stated for projective measurements, in this case we are interested only in the classical-quantum state produced after Alice's measurements, which allows us to focus on a suitably chosen Stinespring dilation to projective measurements without loss of generality.} Putting this together with the monogamy of entanglement property (Fact~\ref{fc:mon-game}), we can conclude that $\Pr\left[B=C=A | (U = \mathrm{keep})\land(Y, Z \neq \bot)\right] \leq \frac{1}{2}\left(1 + \frac{1}{\sqrt{2}}\right) + O(\mu^{1/4})$ (here we implicitly used the fact that the event $(U = \mathrm{keep})\land(Y, Z \neq \bot)$ is independent of Alice's measurement distribution). Combining this with the trivial bound $\Pr\left[B=C=A | (U = \mathrm{keep})\land(Y, Z \neq \bot)\right]$ $\leq 1$, we have overall
\begin{align*}
\Pr\left[(U = \mathrm{keep})\land(Y, Z \neq \bot)\land(B=C=A)\right] &\leq (1-\gamma)(1-\alpha)^2 \min\left\{ 1 , \frac{1}{2}\left(1 + \frac{1}{\sqrt{2}}\right) + O(\mu^{1/4})\right\} \\
&=(1-\gamma)\tilde{\alpha} \min\left\{ 1 , \frac{1}{2}\left(1 + \frac{1}{\sqrt{2}}\right) + O(\mu^{1/4})\right\}.
\end{align*}

Summing the terms, we get the following upper bound on the winning probability:
\begin{align*}
&(1-\gamma) \left(1-\tilde{\alpha} + 
\tilde{\alpha} \min\left\{ 1 , \frac{1}{2}\left(1 + \frac{1}{\sqrt{2}}\right) + O(\mu^{1/4})\right\} \right) + \gamma(\omega^*(\CHSH) - \mu) \\
=&(1-\gamma) \min\left\{ 1 , 1-\tilde{\alpha} + 
\tilde{\alpha}
\left(\frac{1}{2}\left(1 + \frac{1}{\sqrt{2}}\right) + O(\mu^{1/4})\right)
\right\} + \gamma(\omega^*(\CHSH) - \mu) \\
=&(1-\gamma)  \min\left\{1,\beta + O(\mu^{1/4})\right\} + \gamma(\omega^*(\CHSH) - \mu) \quad\text{for } \beta = 1-\tilde{\alpha} + 
\tilde{\alpha}\left(\frac{1}{2}\left(1 + \frac{1}{\sqrt{2}}\right)\right) < 1,
\end{align*}
absorbing a factor of $\tilde{\alpha}$ into the $O(\mu^{1/4})$ term in the last line (recall $\alpha$ is a constant for the purposes of this proof). 

Finally, observe that we have $(1-\gamma) \min\left\{1,\beta + O({\mu}^{1/2})\right\} \leq 1-\gamma$ and $\gamma(\omega^*(\CHSH) - \mu) \leq \gamma\omega^*(\CHSH)$ for any $\mu \in [0,\omega^*(\CHSH)]$; however, there is no $\mu$ in that interval that simultaneously saturates both inequalities (the only value that saturates the second inequality is $\mu=0$, in which case the first inequality is not saturated since $\beta<1$). Therefore we have a strict inequality 
\begin{align*}
(1-\gamma) \min\left\{1,\beta + O(\mu^{1/4})\right\} + \gamma(\omega^*(\CHSH) - \mu) < (1-\gamma) + \gamma\omega^*(\CHSH),
\end{align*}
for all $\mu \in [0,\omega^*(\CHSH)]$. 
To ensure the winning probability is bounded away from $(1-\gamma) + \gamma\omega^*(\CHSH)$ by a constant $\dga>0$, we note that $(1-\gamma) \min\left\{1,\beta + O(\mu^{1/4})\right\} + \gamma(\omega^*(\CHSH) - \mu)$ is a continuous function of  $\mu$ on the closed interval $[0,\omega^*(\CHSH)]$.
Hence it attains its maximum at some value in that interval, and this maximal value will be an upper bound strictly smaller than $(1-\gamma) + \gamma\omega^*(\CHSH)$, as desired.
\end{proof}

With this, we give the following parallel repetition theorem upper bounding the winning probability of $\cl_{\gamma,\alpha}^{t/l}$:

\begin{theorem}\label{thm:cl-parrep}
There exists $\kappa >0$ such that for $
\dga
$ from Theorem \ref{thm:cl-win}, and $t=((1-\gamma)+\gamma\omega^*(\CHSH) - {\dga}/{2})l$,
\[ \omega^*\left(\cl_{\gamma,\alpha}^{t/l}\right) \leq 2^{-\kappa\dga^3\alpha^4l}.\]
\end{theorem}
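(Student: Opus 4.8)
The plan is to prove this through the information-theoretic framework for parallel repetition of~\cite{Raz95,Hol09}, adapted to the two-round, anchored, multi-player structure as in~\cite{KT20,BVY17}. Fix an arbitrary strategy for $\cl_{\gamma,\alpha}^l$, write $W_i$ for the event that the $i$-th coordinate is won, and for a subset $C\subseteq[l]$ let $\clE$ denote the event that the win condition~\eqref{eq:wincondfull} holds on every coordinate in $C$. The core of the argument is a single-coordinate statement: provided $\Pr[\clE]$ is not already exponentially small, there exists a coordinate $i\in\oC$ such that, conditioned on $\clE$, the strategy restricted to coordinate $i$ can be converted into a genuine single-copy strategy for $\cl_{\gamma,\alpha}$ whose winning probability is at least $\Pr[W_i\mid\clE]$ minus a small error. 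Since Theorem~\ref{thm:cl-win} caps every single-copy value at $v:=(1-\gamma)+\gamma\omega^*(\CHSH)-\dga$, this forces $\Pr[W_i\mid\clE]\le v+(\text{error})$. Feeding this back into the standard counting argument (growing $C$ one good coordinate at a time, and using that $t=(v+\dga/2)l$ demands an average conditional win fraction strictly exceeding $v$) then yields the exponential decay of $\omega^*(\cl_{\gamma,\alpha}^{t/l})$.

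The heart is the single-copy simulation, following the sketch in Section~\ref{sec:overview}. I would define the conditional state $\ket{\vph}_{x_iu_iy_iz_i}$ carrying the inputs, outputs, and shared registers of the $l$-fold strategy on coordinate $i$ with inputs $x_i,u_i,y_i,z_i$, conditioned on $\clE$. Two structural features drive the construction. First, by property~(\ref{prop:dist-1}) the first-round distribution $\sfP_{XU}$ is product and independent of the second-round inputs, and the first-round outputs are produced without access to $y_i,z_i$; hence $\ket{\vph}_{x_iu_i\bot\bot}$ and $\ket{\vph}_{x_iu_iy_iz_i}$ carry nearly the same first-round output distribution, the discrepancy being controlled by how much conditioning on $\clE$ perturbs the original (unconditioned) equality via Fact~\ref{fc:cond-prob}. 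This lets Alice and Barlie implement the first round by a product-distribution embedding in the style of~\cite{JPY14}: from a shared state $\ket{\vph}_{\bot\bot}$ they apply local unitaries $V^\A_{x_i}$ and $V^{\B\C}_{u_i}$ to reach a state close to $\ket{\vph}_{x_iu_i\bot\bot}$, then measure the $i$-th first-round output registers to produce $a,s$.

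For the second round, Barlie distributes his registers between Bob and Charlie together with $u_i$. Because the second-round inputs are anchored---by property~(\ref{prop:dist-2}), $Y$ and $Z$ equal $\bot$ independently with probability $\alpha$ and are independent given $XU$, so $\sfP_{XUYZ\mid DF}$ is product---the same embedding argument supplies local unitaries $V^\B_{u_iy_i}$ and $V^\C_{u_iz_i}$ carrying $\ket{\vph}_{x_iu_i\bot\bot}$ close to $\ket{\vph}_{x_iu_iy_iz_i}$. Crucially these act only on Bob's and Charlie's registers and not on the already-measured first-round output registers, so on average they also map the state conditioned on each first-round output value to the corresponding conditional target; Bob and Charlie then measure the $i$-th second-round output registers to produce $b,c$. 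Composing the two rounds gives a bona fide single-copy strategy reproducing the coordinate-$i$ conditional output distribution up to the accumulated error, as required above.

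The main obstacle is controlling this accumulated error uniformly enough that it beats $\dga/2$ for a good fraction of $i\in\oC$, and tracking its dependence on $\alpha$. I expect the error to be bounded by information-theoretic quantities---essentially the mutual information between coordinate $i$'s inputs and the remaining registers under the $\clE$-conditioned state---whose sum over $i$ is at most $\log(1/\Pr[\clE])$, so on average a coordinate incurs error $\sim\sqrt{\log(1/\Pr[\clE])/l}$ (the square root entering through Pinsker, Fact~\ref{pinsker}, together with the Bures/fidelity conversions of Facts~\ref{fc:fvdg} and~\ref{fc:uhlmann}). The anchoring is what makes the embedding unitaries exist at all, but it enters quantitatively: both embedding steps lose factors polynomial in $1/\alpha$ because the product component of the input distribution carries weight only $\sim\alpha$, and the two separate embeddings (the first round, then the Bob/Charlie split) compound these losses, which I expect to produce the $\alpha^4$ factor. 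Handling the Bob/Charlie split simultaneously---rather than the single second-round player of~\cite{KT20}---is the genuinely new difficulty, and it relies on $Y,Z$ being independent given $XU$ so that the two local embeddings can be analyzed in parallel without cross terms. Requiring the average error to fall below $\dga/2$ should force $\log(1/\Pr[\clE])\lesssim\dga^2\alpha^4 l$; combined with the additional $\dga$ factor incurred when converting the per-coordinate gap into the counting bound, this yields the stated exponent $\kappa\dga^3\alpha^4 l$.
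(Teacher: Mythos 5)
Your proposal follows essentially the same route as the paper's actual proof: the Raz--Holenstein information-theoretic framework with a per-coordinate lemma (the paper's Lemma~\ref{lem:parrep-ind}) proved by constructing a single-copy strategy from the $\clE$-conditioned states $\ket{\vph}_{x_iu_iy_iz_i}$, with JPY-style first-round embedding unitaries $V^\A,V^{\B\C}$, anchoring-based second-round unitaries $V^\B,V^\C$ that commute with measurement of the first-round output registers, per-coordinate error $O(\sqrt{\delta}/\alpha^2)$ (the paper's $\delta$ also carries a $|T|\log(|\clA|\cdot|\clS|\cdot|\clB|\cdot|\clC|)$ term, absorbed here since the alphabets are constant), and the same coordinate-growing counting argument yielding the $\dga^3\alpha^4$ exponent. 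The only cosmetic imprecision is your attribution of the $\alpha^4$ to ``compounding'' the two embeddings' losses --- in the paper the two embedding errors add, each $O(\sqrt{\delta}/\alpha^2)$, and $\alpha^4$ arises from squaring $\alpha^2$ when solving $\sqrt{\delta}/\alpha^2\lesssim\dga$ --- which does not affect the final bound.
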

We prove this theorem (in fact, a more general version of it) in Section \ref{sec:parrep}. 

\section{Uncloneable encryption scheme with variable keys}
\label{sec:DIVKECM}
In this section, we prove our main theorem:
\begin{theorem}\label{thm:main}
There is a scheme for DI-VKECM with message space $\clM=\{0,1\}^\lambda$, such that:
\begin{enumerate}
\item It satisfies the completeness property~\eqref{eq:complete} given i.i.d.~honest devices with a constant level of noise; 
\item It achieves indistinguishable security;
\item There exists some $\nu>0$ such that the scheme achieves $(t(\lambda),0)$-uncloneable security against dishonest devices with $\nu\lambda$ bits of leakage, where $t(\lambda)$ is a function satisfying $t(\lambda)<\lambda$ for all sufficiently large $\lambda$.
\end{enumerate}
\end{theorem}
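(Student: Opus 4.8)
The plan is to prove the three properties separately, reserving the real work for uncloneable security. I would first fix the scheme (Scheme~\ref{prot:DI-VKECM}) to use $l=\Theta(\lambda)$ parallel copies of the honest CHSH strategy during $\Enc$: the honest client feeds its untrusted device uniformly random settings $x$ (Alice's first-round inputs), tests a $\gamma$-fraction of copies against the CHSH condition, and sets $F=\cmark$ exactly when the observed CHSH win frequency on the tested copies clears a threshold just below $\omega^*(\CHSH)$. The message is one-time-padded as $m\oplus r$, where $r$ is extracted from the outputs $a$ on the untested (``$\mathrm{keep}$'') subset $S$, and a syndrome of an error-correcting code is released so that constant noise can be tolerated. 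When $F=\xmark$ I would have the receiver obtain the ciphertext of an independent dummy message, giving~\eqref{eq:prodstate} immediately. Completeness then splits into two concentration estimates: for i.i.d.\ honest devices at constant noise each tested copy wins CHSH with probability $\omega^*(\CHSH)-O(\mathrm{noise})$, so a Chernoff bound puts the win frequency above threshold (hence $F=\cmark$) except with probability $\negl(\lambda)$; and the honest decryptor, measuring in bases $x_{T\cap S}$, recovers $a_{T\cap S}$ up to an error pattern of weight $O(\mathrm{noise}\cdot|T\cap S|)$, which the code corrects with overwhelming probability, so $\Dec\circ\KeyR\circ\Enc=m$. A union bound gives~\eqref{eq:complete}.

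Indistinguishable security is immediate: since the classical ciphertext $m\oplus r$ has $r$ uniform and independent of $m$, and the quantum ciphertext $\ket{a^x}$ is generated independently of $m$, the joint state on $QE$ seen by the distinguisher is identical for $B=0$ and $B=1$. Hence $\Pr[\hat B=B]=\tfrac12$ exactly, and the bound holds (in fact perfectly). Note that because $|\clM|=2^\lambda$ depends on $\lambda$, the reductions of Lemmas~\ref{fc:uncind_ind}--\ref{fc:0unc_uncind} do not apply, so this property must be argued directly rather than deduced from uncloneability.

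The core is uncloneable security, which I would obtain by reducing an arbitrary cloning attack to a strategy for the parallel-repeated anchored game $\cl_{\gamma,\alpha}^l$ and applying Theorem~\ref{thm:cl-parrep}. The identification is natural: the client's untrusted device is Alice, the receiver's device is Barlie, distributing the ciphertext is Barlie splitting his register between Bob and Charlie, and the two independently sampled key-release subsets $T^\B,T^\C$ play the role of Bob's and Charlie's anchored second-round inputs (the $\bot$ inputs being the indices omitted from each subset). The crux is to show that the success event $(F=\cmark)\land(M=M^\B=M^\C)$ forces the induced strategy to win $\cl_{\gamma,\alpha}$ on at least $t=((1-\gamma)+\gamma\omega^*(\CHSH)-\dga/2)l$ of the $l$ copies: $F=\cmark$ certifies enough first-round CHSH wins on the tested copies while the $\mathrm{keep}$-copies are won in round one by definition, and $M^\B=M^\C=M$ forces both Bob and Charlie to reconstruct $r$, hence to guess $a$ correctly on enough $\mathrm{keep}$-copies, i.e.\ to win the second round there. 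The error-correction slack is exactly what is absorbed by the $\dga/2$ deficit below the trivial maximum, which is why the $t$-out-of-$l$ form of the parallel repetition is needed. Theorem~\ref{thm:cl-parrep} then bounds the success probability by $2^{-\kappa\dga^3\alpha^4 l}$; with $l=\Theta(\lambda)$ this is $2^{-c\lambda}=2^{(1-c)\lambda}/2^\lambda$ for a constant $c>0$, so taking $t(\lambda)=(1-c)\lambda<\lambda$ gives the stated form, and the $\nu l=\Theta(\nu\lambda)$ leakage tolerance established in Section~\ref{sec:parrep} supplies the claimed allowance.

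The main obstacle is precisely the accounting in this last reduction: matching the protocol's accept-and-decrypt event to a lower bound on the number of game-copies won, while (i) respecting the anchoring that makes $T^\B,T^\C$ genuinely product with each other and with the tested copies, (ii) choosing the code's decoding radius small enough that recovering $M$ really implies winning $t$ copies rather than fewer, yet large enough that honest decryption survives constant noise, and (iii) threading the leakage budget through both the encryption-time client--receiver communication and the decryption-time Bob--Charlie communication so the total stays within $\nu l$. Simultaneously satisfying all the parameter constraints --- noise level, code rate, $\gamma$, $\alpha$, and the ratio $l/\lambda$ --- so that completeness, the win-count threshold $t$, and $t(\lambda)<\lambda$ all hold at once is the delicate part; everything else is routine concentration and one-time-pad reasoning.
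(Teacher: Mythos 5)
Your completeness and indistinguishability arguments match the paper's (Chernoff bound plus the error-correction failure bound for the former; the one-time-pad decoupling, Fact~\ref{fc:otp}, for the latter), and your high-level route for uncloneability --- reducing a cloning attack to a strategy for $\cl_{\gamma,\alpha}^{t/l}$ and invoking Theorem~\ref{thm:cl-parrep} --- is also the paper's. But there is a genuine gap in the reduction itself: in a real cloning attack the dishonest parties hold classical strings that players in the game $\cl_{\gamma,\alpha}$ do \emph{not} have, namely the ciphertext register $C=M\oplus\otp$, the padded raw keys $\otp\oplus\eA^\B$ and $\otp\oplus\eA^\C$, and the syndromes $\syn(\eA^\B),\syn(\eA^\C)$. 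Your claim that $(F=\text{\cmark})\land(M=M^\B=M^\C)$ forces the induced game strategy to win $t$ copies is only valid for a stripped-down attack in which these registers are withheld (the paper's Lemma~\ref{lem:game2prot}); bridging to the full attack is the content of Lemma~\ref{lem:rawkeyguess} and requires three separate arguments: a guessing-strategy argument showing the syndromes cost at most a multiplicative $2^{2\lsyn}$; an application of Fact~\ref{fc:otp} with $M$ playing the role of the pad to decouple $C$; and, for $\otp\oplus\eA^\B,\otp\oplus\eA^\C$, giving away $\otp_{\overline{\clT}}$ for free on the complement of the set $\clT$ where both keys reveal $X_i$, then observing that on $\clT$ the two padded strings coincide and are simulable as a single fresh uniform string. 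None of this appears in your sketch, and without it the attack simply does not yield a game strategy --- in particular the pair $(\otp_i\oplus\eA^\B_i,\otp_i\oplus\eA^\C_i)$ on $\overline{\clT}$ reveals $\eA^\B_i\oplus\eA^\C_i$, which is genuinely correlated with Alice's outputs, so one cannot wave it away as ``one-time-padded.''

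Your parameter accounting is also wrong in a way that matters for part 3 of the theorem. The error-correction slack is \emph{not} absorbed by the $\dga/2$ deficit: that deficit is the first-round testing slack (letting honest noisy devices pass the CHSH test while keeping the win-count threshold $t$ meaningful), whereas the syndrome leakage enters as the additive $+2\lsyn$ in the exponent of the guessing bound, i.e.~as the $2\xi(1-\gamma)(1-\alpha)h_2(\qhon)\lambda$ term inside $t(\lambda)$, and $t(\lambda)<\lambda$ must then be enforced by the separate parameter constraint~\eqref{eq:par-cons-1}. Consequently your claimed final bound $2^{-\kappa\dga^3\alpha^4 l}$ for the full attack is too strong; the correct bound is $2^{-\kappa\dga^3\alpha^4 l + 2\lsyn}$ (Lemma~\ref{lem:rawkeyguess}). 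Relatedly, the leakage tolerance is not ``established in Section~\ref{sec:parrep}'': the parallel repetition theorem there is for leakage-free devices, and leakage is handled afterwards in Theorem~\ref{thm:unc_leak} by a shared-randomness transcript-simulation argument costing a factor $2^{\nu l}$, which adds $\nu\lambda$ to $t(\lambda)$. Finally, your scheme description has the pad $r$ ``extracted from the outputs $a$''; in the paper the pad $\otp$ must be fresh randomness independent of $A$, with $\eA$ entering only through the released key $\otp\oplus\eA$ --- making the pad a function of $a$ would break exactly the decoupling arguments above, as well as the perfect independence used for indistinguishability.
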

The protocol achieving Theorem~\ref{thm:main} is described in Scheme~\ref{prot:DI-VKECM}. We prove that Scheme~\ref{prot:DI-VKECM} satisfies completeness in Section~\ref{subsec:corr-proof}, and prove that it satisfies indistinguishability and $(t(\lambda),0)$-uncloneability in Section~\ref{subsec:secur-proof}, which together constitute the proof of Theorem~\ref{thm:main}.

\paragraph{Protocol description.} Our DI-VKECM scheme is described below --- for now, we leave some of the parameters (such as $\gamma$ and $\alpha$) to be arbitrary values, with the exact choices that yield Theorem~\ref{thm:main} to be specified in the explanations after the protocol. We highlight that when a receiver is dishonest, they do not need to be following their parts as described in the scheme, although we still require them to supply some values to the client
at all steps where they are supposed to in the scheme (but these values do not have to be ``honestly'' produced, of course --- recall also that the possibility of a receiver lying about their outputs can be absorbed into the device behaviour without loss of generality). 

\paragraph{Informal description of main parameters and notation.}
\begin{description}
\item $\lambda$: Security parameter (for security definitions from Section~\ref{sec:secdefn})
\item $\gamma$: Probability of a ``test'' instance in the encryption procedure
\item $\alpha$: Probability of an instance being eventually ``erased'' for anchoring purposes
\item $X$, $A$: Input and output strings of client's device in encryption procedure
\item $\eX$, $\eA$: Modified versions of $X$, $A$ in which some instances have been ``erased'' to account for anchoring
\item $U$, $S$: Input and output strings of receiver's device in encryption procedure
\item $\rS$: Output string of receiver's device in decryption procedure
\item $\eS$: Modified version of $\rS$ in which some instances have been ``erased'' to account for anchoring
\item $\otp$: ``One-time-pad'' used to encrypt the message
\item $\gA$: Receiver's guess for $\eA$
\item $F$: Flag for whether the encryption procedure accepted
\item $C$: Classical part of ciphertext
\item $\Kdec$: Decryption key
\end{description}

For this scheme, the client and receiver's devices are to have the functionality that the client's device can take an input string $X \in \{0,1\}^l$ and output a string $A \in \{0,1\}^l$, while the receiver's device can take inputs twice: first it takes an input string $U \in \{0,1, \perp\}^l$ and outputs a string $S \in \{0,1\}^l$, then it later takes an input string $\eX \in \{2,3, \perp\}^l$ and outputs a string $\rS \in \{0,1\}^l$. With this in mind, if we interpret the client and receiver as Alice and Barlie and ignore the second input-output round for the latter, this is exactly the same structure as $l$ parallel instances of the first round of the game $\cl_{\gamma,\alpha}$ described in the preceding section. The reason the second round has a different structure is that we only need to consider the second round of the game $\cl_{\gamma,\alpha}$ when considering the security definition and constructing the security proof, \emph{not} for defining the protocol itself. (The second-round input and output in the protocol itself is designed to allow an honest receiver to obtain a string that is ``highly correlated'' to the client's string; this will become clearer when reading the description of the honest device behaviour immediately below the protocol description.)
Finally, we remark that for this scheme we take both $l$ and the message length to be equal to the security parameter $\lambda$.

\begin{algorithm}[!h]
\caption{DI-VKECM with security parameter $\lambda$ and messages in $\clM = \{0,1\}^\lambda$}
\label{prot:DI-VKECM}
\vspace{0.3cm}
\begin{algorithmic}[1]
\Algphase{$\Enc(1^\lambda, m)$:}
\State Devices of the form described above are distributed between the client and receiver, with $l=\lambda$\alglinelabel{alg:initial}\;
\State The client samples strings $X,U$ as follows: for each $i\in[l]$, set $X_i \in \{0,1\}$ uniformly at random, and independently set $U_i=\mathrm{keep},0,1$ with probabilities $1-\gamma,\gamma/2,\gamma/2$ respectively \;
\State The client inputs $X$ into their device and receives an output string $A\in\{0,1\}^l$ \alglinelabel{alg:Aoutput} \;
\State The client sends $U$ to the receiver \;
\State The receiver inputs $U$ into their device, interpreting $U_i=\mathrm{keep}$ as $\perp$ for each $i$, and receives an output string $S\in \{0,1\}^l$ \;
\State The receiver sends $S$ to the client \alglinelabel{alg:recmessage} \;
\State The client tests if the number of $i\in [l]$ such that $U_i \neq \mathrm{keep}$ and $A_i\oplus S_i \neq X_i\cdot U_i$ is at most $(\gamma (1-\omega^*(\CHSH)) + {\dga }/{2})l$ \alglinelabel{alg:test} \;
\If{the test passes}
\State The client sets the flag to $F=\text{\cmark}$ \;
\State The client samples $\otp\in \clM$ uniformly at random and sends $C = m \oplus \otp$ to the receiver \alglinelabel{alg:otp} \;
\Else
\State The client sets the flag to $F=\text{\xmark}$ \;
\State The client samples $\otp\in \clM$ and $M^\mathrm{fake}\in \clM$ (independently) uniformly at random and sends $C = M^\mathrm{fake} \oplus \otp$ to the receiver \alglinelabel{alg:dummy} \;
\EndIf
\State The client stores $(X,U,A,\otp)$ as the private key; the receiver stores $\rho = \rho'\otimes\state{C}$ as the ciphertext, where $\rho'$ is the quantum state in the receiver's share of the devices\;

\Algphase{$\KeyR(\kpriv)$:}
\State Interpret the private key as $\kpriv = (X,U,A,\otp)$ \;
\State Sample a string $\eX \in \{2, 3, \perp\}^l$ as follows: for each $i\in[l]$, set $\eX_i=\bot$ with probability $\alpha$, and otherwise set $\eX_i=X_i+2$ \alglinelabel{alg:decinputs} \;
\State Set a string $\eA \in \{0,1\}^l$ as follows: for each $i\in[l]$, set $\eA_i = 0$ if $U_i \neq \mathrm{keep}$ or $\eX_i=\bot$, and otherwise set $\eA_i = A_i$ \alglinelabel{alg:erase} \;
\State Compute the syndrome $\syn(\eA)$ following the error-correction procedure described below (see~\eqref{eq:hABhon}) \alglinelabel{alg:syn} \;
\State Release the decryption key $\Kdec = (\otp \oplus \eA, \syn(\eA), \eX)$ \alglinelabel{alg:dkey} \;

\Algphase{$\Dec(\rho, k)$:}
\State Interpret $\rho$ as $\rho'\otimes\state{C}$ where $\rho'$ has $l$ qubit registers and $C \in \clM$; interpret the decryption key $\kdec$ as $(D,\syn(\eA),\eX)$ \;
\State Input $\eX$ into the receiver's device and obtain the output string $\rS \in \{0,1\}^l $ \alglinelabel{alg:recmeasures1} \;
\State Set a string $\eS \in \{0,1\}^l$ as follows: for each $i\in[l]$, set $\eS_i = 0$ if $U_i \neq \mathrm{keep}$ or $\eX_i=\bot$, and otherwise set $\eS_i = \rS_i$ \alglinelabel{alg:recmeasures2} \;
\State Use $\eS,U,\eX$ and $\syn(\eA)$ to compute a guess $\gA$ for $\eA$ \alglinelabel{alg:guess} \;
\State Output $\widetilde{M} = C \oplus D \oplus \gA$ \alglinelabel{alg:decrypt} \;
\vspace{0.3cm}
\end{algorithmic}
\end{algorithm}
\newpage 

\begin{remark}\label{remark:interactive}
Qualitatively, the interaction in our encryption procedure arises solely from the fact that we want the receiver to tell the client the output string $S$, so the client can check whether enough instances win the CHSH game, which later allows us to prove security by invoking the parallel-repetition and rigidity results from Section~\ref{sec:cl-game}. 
We note however that there is an alternative setup that does not require interaction, if we suppose the client is able to locally impose a particular constraint on the devices. Specifically, we could instead consider a procedure in which the client begins by holding \emph{both} the devices and performs steps~\ref{alg:initial}--\ref{alg:recmessage} on them, before sending the ``receiver's device'' over to the receiver (and then proceeding with the rest of the scheme according to the original description), in which case the encryption procedure is non-interactive since it only involves the client sending messages (and devices) to the receiver. For this version, our security proof is still valid as long as the client can enforce that during steps~\ref{alg:initial}--\ref{alg:recmessage}, the devices are still subject to the bounded-leakage constraint (physically, this might be possible by imposing some ``shielding'' measures on the devices, keeping them well isolated from each other). Qualitatively, this version can be viewed as having the client locally ``test'' the devices (while constraining the communication/leakage between them) before using them in the rest of the scheme.
\end{remark}

\paragraph{Honest-device description.} We now specify the honest device behaviour. We allow a small amount of noise in the honest behaviour, under a depolarizing-noise model. Specifically, we suppose that the honest devices share $l$ i.i.d. copies of a \emph{Werner state}
\begin{equation}\label{eq:werner}
\rho_q = (1-2q) \state{\Phi^+} + 2q \frac{\Id}{4},
\end{equation}
where $\ket{\Phi^+} = \frac{1}{\sqrt{2}}(\ket{00} + \ket{11})$, and $q\in[0,1/2]$ will be referred to as the depolarizing-noise parameter.
As for the measurements for each input string, we suppose that (for each instance $i$) the client's device performs Alice's ideal CHSH measurement corresponding to the $i$-th input bit, on the $i$-th copy of the shared state. The receiver's measurements are slightly more complicated, though the same in both the first or second input rounds: for each instance $i$ if it obtains an input value in $\{0,1\}$ then it performs Bob's ideal CHSH measurement corresponding to that value on the $i$-th copy, whereas if it gets an input value in $\{2,3\}$ it performs Alice's ideal CHSH measurement corresponding to that value on the $i$-th copy, and finally if it gets input value $\perp$ then it does nothing to the state and just trivially outputs a fixed value (say, $0$). Note that the honest decryption procedure only requires the use of outputs in the second round from those instances $i$ which had input $\perp$ during encryption, which means that in the honest behaviour, all physically relevant outputs are produced by only measuring each qubit once.\footnote{In step~\ref{alg:recmeasures1} of our protocol description, the decryption procedure as presented does technically provide non-trivial inputs to some locations $i$ where we had $U_i\neq\mathrm{keep}$ in the first round, i.e., the first round input was not $\perp$. However, this is only for simplicity of description --- it can be seen in step~\ref{alg:recmeasures2} that the second-round outputs corresponding to these locations are ignored, 
so it does not matter what the (honest) device does on these locations in the second round, e.g.~even if it is attempting to measure the corresponding qubit a second time.
}

This model for the honest devices has the following important properties (for each instance). First, it achieves a CHSH winning probability of $(1-2q) \omega^*(\CHSH) + q$. Second, when the client and receiver's input pairs are $(0,2)$ or $(1,3)$, the outputs are equal with probability $1-q$ (and the marginal output distributions are uniform). In particular, this means that in the decryption step (where the receiver essentially inputs $\eX_i=X_i+2$ into their device for many instances), the honest devices produce outputs that are reasonably ``well correlated'' between the client and receiver, up to the depolarizing-noise parameter $q$.

While this noise model is quite simple, we remark that given i.i.d.~honest devices with an arbitrary single-instance input-output distribution, it is possible in principle to perform a depolarization procedure (see~\cite{MAG06}) to transform their input-output distribution into a form matching the above description (except on the input pairs $(0,3)$ and $(1,2)$, which do not occur in the protocol), although this is not necessarily the optimal way to use the devices~\cite{MPW21}. Alternatively, we note that our subsequent analysis can in fact be phrased entirely in terms of three parameters of the (i.i.d.)~honest devices: the probability that they win the CHSH game given uniformly random inputs in $\{0,1\}$, the entropy of Alice's output conditioned on Bob's output for input pair $(0,2)$, and the same for input pair $(1,3)$. Hence if necessary, one could instead specify these three parameters independently, but for ease of presentation in this work we simply use the single-parameter depolarizing-noise model.

\paragraph{Error correction.} Because we allow some noise in the honest devices, the client and receiver's output strings from the devices will not be perfectly correlated even in the honest case. We shall accommodate this in our protocol by incorporating standard error correction procedures (see e.g.~\cite{Ari10,Ren05,TMP+17}), which we briefly summarize as follows:
\begin{fact}\label{fc:EC}
Suppose Alice and Bob hold classical random variables $CC'$ following an i.i.d. distribution $\sfP_{C_iC'_i}^{\otimes l}$.
We shall refer to an error correction procedure as a process in which Alice computes a syndrome value $\syn(C)$ of length $\lsyn$ (in bits) to send to Bob, who uses it together with $C'$ to produce a guess for $C$.
Then for any $\xi>1$ and $\beta<1/2$, there exists an efficient error-correction procedure such that the syndrome length is
\begin{align}
\label{eq:EC}
\lsyn = \xi \sfH(C_i|C'_i) l,
\end{align}
and the probability of Bob's guess being wrong is upper bounded by a function
$\lEC(l) = O(2^{-l^\beta})$.
\end{fact}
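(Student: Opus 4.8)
The plan is to recognize this as a standard information-reconciliation statement (Slepian--Wolf source coding with side information) and to instantiate it with an efficient capacity-achieving code --- most naturally polar codes, which is presumably the intended reading of the citation to~\cite{Ari10}. The fundamental one-way compression limit for reconstructing $C$ from $C'$ and a classical message is the conditional entropy rate $\sfH(C_i|C'_i)$, so a syndrome of rate $\xi\,\sfH(C_i|C'_i)$ with $\xi>1$ lies strictly above this limit, which is exactly what will let Bob decode reliably. The word ``efficient'' in the statement is the reason one cannot simply invoke a random-linear-code / two-universal-hashing argument (as in the leftover-hash approach of~\cite{Ren05}): those achieve the right rate and error exponent, but their natural decoder (a minimum-conditional-entropy or typicality decoder) is not known to run in polynomial time. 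Polar codes, by contrast, come with $O(l\log l)$ encoding and successive-cancellation decoding, so they deliver the efficiency the Fact demands.

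Concretely, I would apply the polar transform to the block $C=C_1\cdots C_l$, viewing the i.i.d.\ pair process $(C_i,C'_i)$ as inducing a sequence of synthetic sub-channels from the polarized coordinates of $C$ to $(C', \text{earlier coordinates})$. Polarization splits the coordinates into a ``high-entropy'' (frozen) set, whose values Alice transmits as the syndrome, and a ``low-entropy'' (good) set, which Bob recovers by successive cancellation using $C'$ together with the received syndrome. By the polarization theorem the fraction of high-entropy coordinates converges to $\sfH(C_i|C'_i)$, so for all sufficiently large $l$ the frozen set has size at most $\xi\,\sfH(C_i|C'_i)\,l$; this is precisely the claimed syndrome length~\eqref{eq:EC}, and the slack $\xi>1$ guarantees that the whole frozen set fits inside the transmitted syndrome with room to spare. (For general alphabets one invokes the arbitrary-alphabet polarization constructions; in our application $C,C'$ are binary, so the original binary construction already suffices.)

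For the error bound, the successive-cancellation decoder fails only if some good coordinate is decoded incorrectly, and the total failure probability is controlled by the sum of the Bhattacharyya parameters of the good coordinates. The rate-of-polarization result of Arikan--Telatar states that, once the rate is bounded away from the entropy limit --- which the margin $\xi>1$ provides --- this sum is $O(2^{-l^\beta})$ for every $\beta<1/2$, giving the claimed $\lEC(l)=O(2^{-l^\beta})$; see also~\cite{TMP+17} for this exact reconciliation framing. I expect the only genuine subtleties to be bookkeeping rather than conceptual. First, one must check that both free parameters can be chosen simultaneously --- any $\xi>1$ together with any $\beta<1/2$ --- which is exactly what the rate-of-polarization theorem delivers, since it only requires the rate to be bounded away from the entropy limit and not to approach it. Second, one must pass from the channel-coding statement to the source-coding-with-side-information form and handle block lengths $l$ that are not powers of two (by padding or truncating to the nearest power of two, which perturbs the rate and error exponent only in lower-order terms). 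This second point, together with correctly matching the rate/error-exponent tradeoff, is the main thing to get right, but none of it affects the asymptotic claim.
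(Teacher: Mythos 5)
Your proposal is correct and matches the paper's approach: the paper states this as a Fact with no proof of its own, simply citing \cite{Ari10,Ren05,TMP+17}, and the polar-coding instantiation you describe --- Slepian--Wolf syndrome coding with successive-cancellation decoding, with the Arikan--Telatar rate-of-polarization theorem supplying the error exponent, which is precisely where the otherwise odd-looking $\beta<1/2$ in the statement comes from --- is exactly the intended content of those citations, at the right level of detail. One small correction that does not affect the argument: in the paper's application the side information $C'_i = \eS_i U_i \eX_i$ is \emph{not} binary (only the compressed source $C=\eA$ is), but binary polar source coding accommodates arbitrary side-information alphabets since the side information enters only through the decoder's likelihood computations, so your conclusion that the original binary construction suffices still stands.
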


With this in mind, we shall lay out the error correction procedure used in our DI-VKECM scheme. Here we shall describe the error correction procedure in terms of arbitrary values for the protocol parameters $\gamma,\alpha,\qhon$.
The protocol steps corresponding to the error correction procedure are step~\ref{alg:syn}, in which a syndrome is computed for a string $\eA$, and step~\ref{alg:guess}, where the receiver uses it to produce a guess for $\eA$. We note that in the honest case, when the full process $\Dec\circ\KeyR\circ\Enc$ in Scheme \ref{prot:DI-VKECM} is carried out, at the point of step~\ref{alg:syn} the receiver holds an output string $\eS$ from the device along with the values $U,\eX$. Furthermore, the device behaviour is i.i.d.~in the honest case.
With this, we take the error-correction procedure to be performed as specified in Fact~\ref{fc:EC}, interpreting the distribution on $(C_i,C'_i)$ in that procedure to be the distribution on $(\eA_i, \eS_i U_i \eX_i)$ for the \emph{honest} devices. 
Note that this means the value $\sfH(C_i|C'_i)$ in~\eqref{eq:EC} in this context has the following value (letting $h_2$ denote the binary entropy function):
\begin{align} 
\sfH(\eA_i|\eS_i U_i \eX_i) 
&= \sum_{u_i \tilde{x}_i} \Pr[U_i \eX_i = u_i \tilde{x}_i] \sfH\!\left(\eA_i|\eS_i; U_i \eX_i = u_i \tilde{x}_i \right) 
= (1-\gamma)(1-\alpha) h_2(\qhon), \label{eq:hABhon}
\end{align}
where we have used the facts that
$\sfH\!\left(\eA_i|\eS_i; U_i \eX_i = u_i \tilde{x}_i\right) = 0$ whenever 
$u_i \neq \mathrm{keep}$ or $\tilde{x}_i = \bot$ (because $\eA_i$ is set to a deterministic value in those cases), and $\sfH\!\left(\eA_i|\eS_i; U_i \eX_i = u_i \tilde{x}_i\right)  = h_2(\qhon)$ otherwise (because in that case we have $u_i = \mathrm{keep}$ and $\tilde{x}_i = x_i+2$, i.e.~for the honest devices this corresponds to the client and receiver measuring in the same basis on a Werner state~\eqref{eq:werner}, hence their output values $(\eA_i,\eS_i)$ have uniform marginal distributions, and the probability that they differ is $\qhon$).

\paragraph{Parameter choices.} We now specify the parameter choices for our DI-VKECM scheme. 
The input/output length $l$ for the devices is set equal to the security parameter $\lambda$ as mentioned before, and the message is also required to be a bit string of length $\lambda$, i.e.~we have $\clM=\{0,1\}^\lambda$. (This is a slightly redundant parametrization in this case since $\lambda,l,\log|\clM|$ are all equal to each other, but we present it this way to maintain flexibility for potentially choosing $l$ and $\clM$ differently in terms of $\lambda$, as shall return to in Section~\ref{sec:bittritPA}.)
Take any choice of values for $\xi>1$ and $\gamma, \alpha \in (0,1)$, and set the parameter $\delta_{\gamma, \alpha}$ appearing in the protocol to be the corresponding value from Theorem~\ref{thm:cl-win}. Letting $\kappa$ be the constant from Theorem~\ref{thm:cl-parrep}, we require the honest devices to have depolarizing-noise values $\qhon$ satisfying
\begin{equation}\label{eq:par-cons-1}
\qhon < \frac{\dga}{4}, \qquad 2\xi(1-\gamma)(1-\alpha)h_2(\qhon) < \kappa\dga^3\alpha^4 ,
\end{equation}
which is always possible by taking a sufficiently small value of $\qhon$ since $\gamma,\alpha,\dga,\kappa$ have been fixed. 
With this, all parameters necessary to specify the error correction procedure (in steps~\ref{alg:syn} and~\ref{alg:guess}) have been fixed; in particular, the length of $\syn(\eA)$ in that procedure is given by
\begin{equation}\label{eq:par-cons-2}
\lsyn = \xi (1-\gamma)(1-\alpha)h_2(\qhon) l.
\end{equation}
Note that since the error-correction procedure is based on the honest behaviour, $\lsyn$ is a protocol parameter, not a random variable.

\subsection{Completeness of Scheme~\ref{prot:DI-VKECM}}\label{subsec:corr-proof}
It is easy to see that Scheme~\ref{prot:DI-VKECM} satisfies the property~\eqref{eq:prodstate} that when $F=\text{\xmark}$, the ciphertext state is independent of the message. We now show that it also satisfies the completeness property~\eqref{eq:complete}.
\begin{theorem}
With the parameter choices as specified in \eqref{eq:par-cons-1}-\eqref{eq:par-cons-2}, Scheme~\ref{prot:DI-VKECM} satisfies the completeness property~\eqref{eq:complete}, with the bound on the right-hand-side being $1-\left(e^{-\dga^2 \lambda/8} + \lEC(\lambda)\right)$ where $\lEC$ is the function described in Fact~\ref{fc:EC}.
\end{theorem}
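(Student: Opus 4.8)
The plan is to split the completeness event $(F=\text{\cmark}) \land (\Dec\circ\KeyR\circ\Enc(1^\lambda,m) = m)$ into two independently controllable failure modes --- the honest test in step~\ref{alg:test} rejecting, and the error-correction step in step~\ref{alg:guess} misdecoding --- and to bound each. First I would trace the honest execution to simplify the decoding condition: when $F=\text{\cmark}$ we have $C = m\oplus\otp$ and $D = \otp\oplus\eA$, so step~\ref{alg:decrypt} outputs $\widetilde{M} = C \oplus D \oplus \gA = (m\oplus\otp)\oplus(\otp\oplus\eA)\oplus\gA = m\oplus\eA\oplus\gA$. Hence, conditioned on $F=\text{\cmark}$, we have $\widetilde{M}=m$ \emph{if and only if} $\gA=\eA$, so the completeness event equals $(F=\text{\cmark})\land(\gA=\eA)$, whose probability is at least $1 - \Pr[F=\text{\xmark}] - \Pr[\gA\neq\eA]$ by a union bound. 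It then suffices to bound the two terms separately.

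For the rejection term, I would apply a Hoeffding bound to the count $N = |\{i : U_i\neq\mathrm{keep} \land A_i\oplus S_i \neq X_i\cdot U_i\}|$ tested in step~\ref{alg:test}. Since the honest devices are i.i.d.\ and $X,U$ are sampled independently per coordinate, the indicators summed in $N$ are independent across $i$, each with expectation $\gamma\bigl(1-((1-2\qhon)\omega^*(\CHSH)+\qhon)\bigr) = \gamma(1-\omega^*(\CHSH)) + \gamma\qhon(2\omega^*(\CHSH)-1)$, using the honest-device CHSH winning probability. The test threshold is $(\gamma(1-\omega^*(\CHSH)) + \dga/2)l$, so the gap between threshold and $\bbE[N]$ is $\bigl(\dga/2 - \gamma\qhon(2\omega^*(\CHSH)-1)\bigr)l$. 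Here I would use $2\omega^*(\CHSH)-1 = 1/\sqrt{2} < 1$ and $\gamma<1$ together with the noise constraint $\qhon<\dga/4$ from~\eqref{eq:par-cons-1} to conclude the gap strictly exceeds $(\dga/4)l$; Hoeffding then yields $\Pr[F=\text{\xmark}] = \Pr[N>\text{threshold}] \leq e^{-\dga^2 l/8}$, which is the first term in the claimed bound since $l=\lambda$.

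For the misdecoding term, I would invoke the error-correction guarantee of Fact~\ref{fc:EC}. In the honest case the joint distribution on $(\eA_i, \eS_i U_i \eX_i)$ is i.i.d.\ across $i$ (all randomness is generated per coordinate), with single-coordinate conditional entropy exactly $\sfH(\eA_i|\eS_i U_i\eX_i) = (1-\gamma)(1-\alpha)h_2(\qhon)$ as computed in~\eqref{eq:hABhon}: the only contributing coordinates are those with $U_i=\mathrm{keep}$ and $\eX_i\neq\bot$, where client and receiver measure in the same basis on a Werner state~\eqref{eq:werner} and hence disagree with probability $\qhon$. Since the syndrome length~\eqref{eq:par-cons-2} was set to $\xi$ times this entropy times $l$, Fact~\ref{fc:EC} --- applied with this distribution as $(C_i,C'_i)$, the receiver's side information $\eS,U,\eX$ playing the role of $C'$ --- gives that step~\ref{alg:guess} recovers $\gA=\eA$ except with probability at most $\lEC(l)$. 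Combining the two bounds yields completeness probability $\geq 1 - (e^{-\dga^2\lambda/8} + \lEC(\lambda))$, as claimed.

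The only genuinely delicate point is the sign of the Hoeffding gap: it is positive precisely because the tolerance~\eqref{eq:par-cons-1} was chosen so that $\qhon$ is below a quarter of the separation $\dga$ from Theorem~\ref{thm:cl-win}, and because $2\omega^*(\CHSH)-1<1$ keeps the noise-induced shift of the expected test statistic strictly under $\dga/2$. Everything else is either a direct substitution or a black-box application of Fact~\ref{fc:EC}; in particular, the i.i.d.\ and entropy conditions required by that fact have already been established in the setup surrounding~\eqref{eq:hABhon}.
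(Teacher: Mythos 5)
Your proposal is correct and follows essentially the same route as the paper's proof: the same decomposition into the test-rejection event (bounded via a Chernoff/Hoeffding bound with gap $\dga/4$ arising from the constraint $\qhon<\dga/4$ in~\eqref{eq:par-cons-1}) and the misdecoding event (bounded by $\lEC(\lambda)$ via Fact~\ref{fc:EC} applied to the i.i.d.\ honest distribution on $(\eA_i,\eS_i U_i\eX_i)$ with entropy~\eqref{eq:hABhon}). Your explicit XOR trace giving $\widetilde{M}=m\oplus\eA\oplus\gA$ (hence an ``if and only if'') and your exact per-round expectation are slight refinements of the paper's one-directional and slightly looser bounds, but they do not change the argument.
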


\begin{proof}
Since 
\begin{align*}
&\Pr\left[\left(F = \text{\cmark}\right)\land\left(\Dec\circ\KeyR\circ \Enc(1^\lambda, m) = m\right)\right] \\
=& 1-\Pr\left[\left(F = \text{\xmark}\right)\lor\left(\Dec\circ\KeyR\circ\Enc(1^\lambda, m) \neq m\right)\right] \\
=& 1 - \left(\Pr\left[F = \text{\xmark}\right] + \Pr\left[\left(F = \text{\cmark}\right)\land\left(\Dec\circ\KeyR\circ\Enc(1^\lambda, m) \neq m\right)\right]\right),
\end{align*}
to prove completeness it suffices to bound the two probabilities in the last line when the devices are honest.

To bound $\Pr\left[F = \text{\xmark}\right]$, we recall that our model of the honest devices gives a CHSH winning probability of $(1-2\qhon) \omega^*(\CHSH) + \qhon > \omega^*(\CHSH) - \qhon$. 
Therefore, by the distribution of $U_i$ in the protocol, in each round we have
\begin{align*}
\Pr[(U_i \neq \mathrm{keep}) \land (A_i\oplus S_i \neq X_i\cdot U_i)] 
&= \gamma \Pr[A_i\oplus S_i \neq X_i\cdot U_i | U_i \neq \mathrm{keep}] \\
&\leq \gamma \left(1 - \omega^*(\CHSH) + \qhon\right) \\
&\leq \gamma \left(1 - \omega^*(\CHSH) \right) + \frac{\dga}{4} \quad \text{since $\qhon<\frac{\dga}{4}$ and $\gamma<1$}.
\end{align*}
Since
this value is ${\dga}/{4}$ less than the threshold fraction $\gamma (1-\omega^*(\CHSH)) + {\dga }/{2}$ in the 
step~\ref{alg:test} 
test, and all the instances are i.i.d. in the honest case, we can apply the Chernoff bound to get that the probability of failing the test (i.e.~$F = \text{\xmark}$) is at most $e^{
-\dga^2 l/8
}$.

As for $\Pr\left[\Dec\circ \Enc(1^\lambda, m) \neq m\right]$,
first observe that when $F = \text{\cmark}$, the honest receiver's decrypted value is equal to $m$ whenever $\gA = \eA$, and hence 
\[
\Pr\left[\left(F = \text{\cmark} \right) \land \left(\Dec\circ\KeyR\circ\Enc(1^\lambda, m) \neq m\right) \right] 
\leq 
\Pr\left[\left(F = \text{\cmark} \right) \land \left(\gA \neq \eA\right) \right] 
\leq \Pr\left[ \gA \neq \eA \right].
\]
Since the honest behaviour is i.i.d., and the error-correction step was taken to be as described in Fact~\ref{fc:EC} based on the honest behaviour, we have that the probability of the receiver's guess $\gA$ being wrong (i.e.~$\gA \neq \eA$) is at most $\lEC(l)$ in the honest case. This gives the claimed result, recalling that $l=\lambda$.
\end{proof}

\subsection{Security of Scheme~\ref{prot:DI-VKECM}}\label{subsec:secur-proof}

We begin with a simple observation about the behaviour of one-time-pads (which can be verified with a straightforward calculation):
\begin{fact}\label{fc:otp}
Consider an arbitrary state $\rho_{E C_1}$ where $C_1$ is a classical $n$-bit register. Suppose we generate an independent uniformly random $n$-bit string $C_2$, and set another register $C_3 = C_1 \oplus C_2$. Then the resulting state $\rho_{E C_1 C_2 C_3}$ is exactly the same as though we had instead generated $C_3$ as an independent uniformly random $n$-bit string, and set $C_2 = C_1 \oplus C_3$.

In particular, this implies that if we trace out $C_2$ from that state $\rho_{E C_1 C_2 C_3}$, the resulting reduced state has the product form $\rho_{E C_1 C_3} = \rho_{E C_1} \otimes \Id_{C_3}/2^{n}$, i.e.~it could equivalently be generated by setting $C_3$ to be a uniformly random value independent of $\rho_{E C_1}$.
\end{fact}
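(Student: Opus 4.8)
The plan is to exploit the fact that $C_1$ is classical, so the initial state decomposes as a classical mixture $\rho_{E C_1} = \sum_{c_1} \sfP_{C_1}(c_1)\, \rho_{E|c_1} \otimes \state{c_1}_{C_1}$ indexed by the value $c_1$ held in $C_1$. The entire argument then reduces to tracking what the one-time-pad operation does to this mixture term-by-term, and the only nontrivial point is a change of summation variable.

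First I would adjoin the uniformly random independent register $C_2$, writing the joint state as $\sum_{c_1,c_2} \sfP_{C_1}(c_1)\, 2^{-n}\, \rho_{E|c_1} \otimes \state{c_1}_{C_1} \otimes \state{c_2}_{C_2}$. Since $C_3 = C_1 \oplus C_2$ is a deterministic function of the classical registers $C_1, C_2$, setting $C_3$ simply appends the factor $\state{c_1 \oplus c_2}_{C_3}$ to each term of this sum. I would then trace out $C_2$, which just drops the $\state{c_2}_{C_2}$ factor, leaving a sum over $c_1, c_2$ in which $C_2$ no longer appears explicitly but still governs the value recorded in $C_3$.

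The key step is the re-indexing: for each \emph{fixed} $c_1$, as $c_2$ ranges over $\{0,1\}^n$ the value $c_3 = c_1 \oplus c_2$ ranges bijectively over $\{0,1\}^n$. Substituting $c_3$ for $c_2$ (separately within each $c_1$-block) decouples the $C_3$ register from $c_1$, yielding $\sum_{c_1,c_3} \sfP_{C_1}(c_1)\, 2^{-n}\, \rho_{E|c_1} \otimes \state{c_1}_{C_1} \otimes \state{c_3}_{C_3}$, which factors exactly as $\rho_{E C_1} \otimes \Id_{C_3}/2^n$, the claimed product form.

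I do not expect any genuine obstacle here, as the statement is the standard perfect-secrecy property of the one-time-pad; the only thing to verify carefully is that it is precisely the bijection $c_2 \mapsto c_1 \oplus c_2$ (for fixed $c_1$) that renders the $C_3$ marginal uniform and independent of everything else. The classicality of $C_1$ is exactly what licenses performing this re-indexing block-by-block; were $C_1$ quantum, the same computation would still go through in a fixed basis but would require slightly more care to phrase.
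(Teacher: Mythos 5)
Your proof is correct and coincides with the paper's treatment: the paper states this fact without proof as something ``verified with a straightforward calculation,'' and your calculation --- decomposing the CQ state, appending $\state{c_1\oplus c_2}_{C_3}$, tracing out $C_2$, and re-indexing $c_3=c_1\oplus c_2$ blockwise for each fixed $c_1$ --- is exactly that calculation. One caution on your closing aside: if $C_1$ were genuinely quantum, the statement would actually \emph{fail}, not merely require more careful phrasing --- e.g.~for $n=1$ with $C_1$ in state $\ket{+}$, a coherent XOR yields (after tracing out $C_2$) an equal mixture of two Bell states on $C_1C_3$, which is not of product form; the classicality (absence of coherences) of $C_1$ is essential, not just convenient.
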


From this, we immediately obtain indistinguishable security for our protocol:
\begin{theorem}\label{thm:indsec}
Scheme~\ref{prot:DI-VKECM} is indistinguishable-secure. 
\end{theorem}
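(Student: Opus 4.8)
The plan is to prove the much stronger statement that the ciphertext register $Q$ is statistically independent of the encrypted plaintext, from which indistinguishability is immediate. The key tool is Fact~\ref{fc:otp}. First I would trace through $\Enc(1^\lambda,m)$ in Scheme~\ref{prot:DI-VKECM} and isolate every place the plaintext $m$ is used: it appears only in step~\ref{alg:otp}, where the client sends $C = m\oplus\otp$ for a freshly sampled uniform $\otp\in\clM$. Everything else the receiver ever sees — the string $U$ exchanged during the interactive phase, the receiver's device state $\rho'$, and the flag $F$, which step~\ref{alg:test} computes solely from $X,U,A,S$ — is produced by steps~\ref{alg:initial}--\ref{alg:recmessage}, all of which run before $m$ is used and are therefore independent of $m$.

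Next I would dispose of the two flag outcomes. When $F=\text{\cmark}$, applying Fact~\ref{fc:otp} with $C_1$ the register holding the plaintext, $C_2=\otp$, and $C_3=C$ shows that $C$ is uniformly random and in product with the rest of the state, so the joint state on $(\rho',C,E)$ is exactly what one would obtain by replacing $C$ with a fresh uniform string independent of the plaintext. When $F=\text{\xmark}$, step~\ref{alg:dummy} already sends a uniform $C=M'$ independent of $m$; this is the content of property~\eqref{eq:prodstate}. Since $F$ is itself independent of $m$, the joint state on $(F,Q,E)$ is, in both branches, independent of the plaintext value fed to $\Enc$.

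The subtlety to handle carefully — rather than a deep obstacle — is that in a distinguishing attack (Definition~\ref{def:indist-sec}) the side register $E$ may be correlated with the adversary's message register $M$, and the dishonest receiver participates in the interaction and can process $C$ arbitrarily. The point I would stress is that none of this matters: the only plaintext-dependent object released to the receiver is $C$, and $C$ is uniform and independent of the plaintext, so whatever the receiver does with it — including exploiting its correlations with $E$ — yields a state on $QE$ whose distribution is identical whether $\Enc$ was run on $\mathbf{0}$ (i.e.\ $B=0$) or on $M$ (i.e.\ $B=1$). Formally, conditioned on the protocol's internal randomness, the map (plaintext) $\mapsto (F,\text{ state on }QE)$ factors through $C$, and the induced law of $C$ coincides for the two branches.

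Finally, with independence established, $B$ is a fresh uniform bit independent of the triple $(F,Q,E)$ on which $\hat{B}$ is computed; hence $\hat{B}$ and $F$ are jointly independent of $B$, giving $\Pr[B=\hat{B}\mid F,Q,E]=\tfrac12$ and therefore
\[ \Pr[(F=\text{\cmark})\land(B=\hat{B})] = \Pr[F=\text{\cmark}]\cdot\tfrac12 \leq \tfrac12 \leq \tfrac12 + \negl(\lambda). \]
This even yields the bound with no $\negl$ term, consistent with the remark following Definition~\ref{def:indist-sec} that the protocol enjoys perfect message-independence.
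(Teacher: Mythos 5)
Your proof is correct and follows essentially the same route as the paper's: both identify step~\ref{alg:otp} as the only message-dependent step and apply Fact~\ref{fc:otp} with the same identification (the plaintext as $C_1$, $\otp$ as $C_2$) to conclude the ciphertext register is completely independent of the message, from which indistinguishability is immediate. Your write-up merely makes explicit a few points the paper leaves implicit --- the $F=\text{\xmark}$ branch, the possible $E$--$M$ correlations, and the final probability calculation --- but the substance is identical.
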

\begin{proof}
By definition, indistinguishable security only involves the output of the encryption procedure excluding the private key $\Kpriv$.
We observe that in the encryption procedure, the only step that depends on the message $M$ is step~\ref{alg:otp} (when $F=\text{\cmark}$), where basically a one-time-pad $\otp$ is independently generated uniformly at random and applied to the message. 
Furthermore, since we exclude the private key in analyzing indistinguishable security, we can trace out $\otp$ immediately after that step. Hence we can apply Fact~\ref{fc:otp} (identifying $M,\otp$ with $C_1,C_2$ respectively) to conclude that the ciphertext register $C$ is completely independent of the message.
Since this is the only register that could potentially depend on the message here, it clearly follows that indistinguishable security holds.
\end{proof}

Next, we turn to proving uncloneable security. We begin by first 
considering a modified version of a cloning attack in which some registers are not provided to the dishonest parties, and
proving a bound on the probability that Bob and Charlie can simultaneously guess some parts of the ``internal'' values $\eA$ produced in step~\ref{alg:erase} of the $\KeyR$ procedure:

\newcommand{\eAB}{\widetilde{A}^\B}
\newcommand{\eAC}{\widetilde{A}^\C}
\newcommand{\KdecB}{K^\B_\mathrm{dec}}
\newcommand{\KdecC}{K^\C_\mathrm{dec}}
\newcommand{\gB}{G^\B}
\newcommand{\gC}{G^\C}
\newcommand{\Esucc}{\clE_\mathrm{succ}}

\begin{lemma}\label{lem:game2prot}
We introduce the following notation for some registers produced in a cloning attack (as defined in Definition~\ref{def:cl-sec}): when $\KeyR$ is performed for Bob (resp.~Charlie), let $Y$ (resp.~$Z$) denote the value of $\eX$ produced in step~\ref{alg:decinputs}, and let $\eAB$ (resp.~$\eAC$) denote the value of $\eA$ produced in step~\ref{alg:erase}. Let $\clT$ denote the subset of instances $i\in[l]$ such that $Y_i$ and $Z_i$ are \emph{both} equal to $X_i$.

Now consider a cloning attack, except with the following modifications: 
\begin{itemize}
\item When the encryption procedure $\Enc$ is performed, the receiver is not given the register $C$ in step~\ref{alg:otp} (or step~\ref{alg:dummy}).
\item When the key-release procedure $\KeyR$ is performed for Bob (resp.~Charlie), the values $\otp \oplus \eAB$ and $\syn(\eAB)$ (resp.~$\otp \oplus \eAC$ and $\syn(\eAC)$) are omitted from the decryption key in step~\ref{alg:dkey}.
\item Instead of Bob (resp.~Charlie) producing a guess for the message $M$, he produces a guess
$\gB$ (resp.~$\gC$) for $\eAB$ (resp.~$\eAC$).
\end{itemize}
Then if there is no leakage between the client and receiver's devices during $\Enc(1^\lambda, M)$, or between Bob and Charlie's devices after they receive their decryption keys, the following bound holds:
\[
\Pr[(F=\text{\cmark}) \land (\eAB_\clT = \gB_\clT) \land (\eAC_\clT = \gC_\clT)] \leq 2^{-\kappa\dga^3\alpha^4l},
\]
where $\kappa$ is the constant from Theorem~\ref{thm:cl-parrep}, and the notation $W_\clT$ for any string $W$ denotes the substring of $W$ on the instances in $\clT$.
\end{lemma}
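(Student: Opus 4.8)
The plan is to reduce this modified cloning attack to a single strategy for the parallel-repeated anchored cloning game $\cl_{\gamma,\alpha}^l$, and then invoke the parallel repetition bound of Theorem~\ref{thm:cl-parrep}. First I would fix the correspondence between the protocol registers and the game roles: the client plays Alice (first-round input $X_i$, output $A_i$), the receiver plays Barlie (first-round input $U_i$, output $S_i$, after which the device state is distributed), and Bob's and Charlie's devices play Bob and Charlie (second-round inputs $Y_i=\eX_i$ and $Z_i=\eX_i$, outputs $\gB_i$ and $\gC_i$). One checks that the induced joint input distribution matches that of $\cl_{\gamma,\alpha}$ instance-by-instance and is i.i.d.~across $i\in[l]$: $X_i$ is uniform, $U_i$ is $\mathrm{keep}/0/1$ with the prescribed probabilities, and $Y_i,Z_i$ are each $\bot$ with probability $\alpha$ (and otherwise reveal $X_i$), independently given $X_i$, exactly as in the anchoring of Section~\ref{subsec:game-def}. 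The only cosmetic points are the relabeling $\eX_i=X_i+2$, which is a bijective renaming of Bob's and Charlie's input alphabet and does not change the game value, and the fact that Barlie may pass the classical value $U$ along when distributing his state, which is permitted since he distributes his registers in an arbitrary fashion.

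The crux of the reduction is that the modifications in the lemma statement, together with the no-leakage hypothesis, are precisely what is needed to make this induced strategy a legitimate \emph{non-communicating} strategy for $\cl_{\gamma,\alpha}^l$. Withholding $C$ and omitting $\otp\oplus\eA$ and $\syn(\eA)$ from the decryption keys ensures Bob and Charlie receive no classical side-information about $A$ beyond what the game already provides (their inputs $Y,Z$ and $U$); and the absence of leakage between client and receiver, and between Bob and Charlie, guarantees there is no communication within either round of the game. This is where those seemingly artificial modifications earn their keep, so I would be most careful here in verifying that nothing Bob and Charlie hold constitutes forbidden communication or extra side-information.

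Next I would run a counting argument to show that, on the event $(F=\text{\cmark})\land(\eAB_\clT=\gB_\clT)\land(\eAC_\clT=\gC_\clT)$, the induced strategy wins at least $t=((1-\gamma)+\gamma\omega^*(\CHSH)-\dga/2)l$ of the $l$ instances. I would partition by whether $U_i=\mathrm{keep}$. For $U_i\neq\mathrm{keep}$ the second round is automatically won, so the instance is won exactly when the CHSH condition $A_i\oplus S_i=X_i\cdot U_i$ holds. For $U_i=\mathrm{keep}$ the first round is automatically won, and the second round is won whenever $Y_i=\bot$ or $Z_i=\bot$ (i.e.~$i\notin\clT$) or $\gB_i=\gC_i=A_i$; but on $\clT$ with $U_i=\mathrm{keep}$ one has $\eAB_i=\eAC_i=A_i$ by step~\ref{alg:erase}, so the guessing event forces $\gB_i=\gC_i=A_i$ and the instance is won. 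Hence on the good event \emph{every} $\mathrm{keep}$ instance is won, and the only possible losses are the $U_i\neq\mathrm{keep}$ instances on which CHSH fails. Since $F=\text{\cmark}$ means the step~\ref{alg:test} test passed, which caps the number of such failures at $(\gamma(1-\omega^*(\CHSH))+\dga/2)l$, subtracting this from $l$ yields at least $t$ wins.

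Combining the two parts, $\Pr[(F=\text{\cmark})\land(\eAB_\clT=\gB_\clT)\land(\eAC_\clT=\gC_\clT)]$ is at most the probability that the induced non-communicating strategy wins at least $t$ of $l$ instances, which is bounded by $\omega^*(\cl_{\gamma,\alpha}^{t/l})\leq 2^{-\kappa\dga^3\alpha^4 l}$ by Theorem~\ref{thm:cl-parrep}. I expect the main obstacle to be the bookkeeping in the second paragraph: pinning down exactly which classical registers Bob and Charlie hold and confirming the induced strategy is a valid no-communication quantum strategy. Once the reduction is correctly set up, the counting argument, though it is the quantitative heart of the proof, is essentially mechanical.
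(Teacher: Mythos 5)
Your proposal is correct and follows essentially the same route as the paper's proof: identify the client, receiver, Bob, and Charlie with Alice, Barlie, Bob, and Charlie in $\cl_{\gamma,\alpha}^{t/l}$, verify the input distribution and timing match, then count wins on the good event (all $U_i=\mathrm{keep}$ instances won, with CHSH failures capped at $(\gamma(1-\omega^*(\CHSH))+\dga/2)l$ by the step~\ref{alg:test} test) and apply Theorem~\ref{thm:cl-parrep} with $t=((1-\gamma)+\gamma\omega^*(\CHSH)-\dga/2)l$. Your explicit attention to the cosmetic relabeling $\eX_i=X_i+2$ and to Barlie passing $U$ along with the distributed state are details the paper treats implicitly, but they do not constitute a different argument.
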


\begin{proof}
We first briefly summarize the main processes involving the dishonest parties in this modified cloning-attack scenario. The receiver begins with some share of a quantum state in their device, then gets the classical value $U$ from the client and uses it to produce a value $S$ to send to the receiver (in step~\ref{alg:recmessage}). 
Without any further communication from the client (since the register $C$ has been omitted), the receiver
distributes states between Bob and Charlie.
Finally Bob and Charlie receive values $Y$ and $Z$ respectively (since the values $\otp \oplus \eAB,\syn(\eAB),\otp \oplus \eAC,\syn(\eAC)$ have been omitted from their decryption keys), and measure their states to produce their guesses $\gB,\gC$. 
Our proof proceeds by comparing this situation to the game $\cl_{\gamma,\alpha}^{t/l}$ (with $t$ arbitrary for now; we shall fix a suitable value at the end).

We observe that in terms of the input distribution (and the order in which they are supplied relative to the state distribution step), the distribution of $X,U,Y,Z$ in this scenario is exactly the same as in $\cl_{\gamma,\alpha}^{t/l}$. Furthermore, the client's output string $A$ produced during the protocol is produced the same way as Alice's output in $\cl_{\gamma,\alpha}^{t/l}$, the receiver's output string $S$ is produced the same way as Barlie's first-round output in $\cl_{\gamma,\alpha}^{t/l}$, and Bob and Charlie's guesses $\gB,\gC$ are produced the same way as their second-round outputs in $\cl_{\gamma,\alpha}^{t/l}$. Hence we shall treat these values as their outputs in the game $\cl_{\gamma,\alpha}^{t/l}$.

Our goal is to bound the probability of the event $(F=\text{\cmark}) \land (\eAB_\clT = \gB_\clT) \land (\eAC_\clT = \gC_\clT)$. 
We shall first argue that this event implies that the game $\cl_{\gamma,\alpha}^{t/l}$ is won on all instances $i\in[l]$ where $U_i = \mathrm{keep}$. 
To do so, consider any such instance $i$, and observe that if $i\notin\clT$ (i.e.~at least one of $Y_i,Z_i$ is $\bot$), then $\cl_{\gamma,\alpha}^{t/l}$ is automatically won on that instance, recalling the win conditions~\eqref{eq:wincondfull}. 
On the other hand, if $i\in\clT$, then the event $(F=\text{\cmark}) \land (\eAB_\clT = \gB_\clT) \land (\eAC_\clT = \gC_\clT)$ implies that we have 
$(\eAB_i = \gB_i) \land (\eAC_i = \gC_i)$. 
Furthermore, since $i\in\clT$ we have that $\eAB_i$ and $\eAC_i$ are equal to $A_i$ (recalling how they were constructed in step~\ref{alg:erase} of the protocol). So we see that $\eA_i = \gB_i = \gC_i$, i.e.~the win condition of $\cl_{\gamma,\alpha}^{t/l}$ is indeed fulfilled on all such instances as well. 

Hence the only instances on which the win condition of $\cl_{\gamma,\alpha}^{t/l}$ might not be fulfilled are those in which $U_i \neq \mathrm{keep}$. Again referring back to the win conditions~\eqref{eq:wincondfull}, we see that such instances fail the win condition if and only if $A_i\oplus S_i \neq X_i\cdot U_i$. But recalling the definition of the ``testing'' step~\ref{alg:test} in the protocol, we see that $F=\text{\cmark}$ is the statement that the number of such instances is at most $(\gamma (1-\omega^*(\CHSH)) + {\dga }/{2})l$. Thus overall, we conclude that the event $(F=\text{\cmark}) \land (\eAB_\clT = \gB_\clT) \land (\eAC_\clT = \gC_\clT)$ implies the number of instances satisfying the win condition is at least
\[
l - \left(\gamma (1-\omega^*(\CHSH)) + \frac{\dga }{2}\right)l = \left((1 - \gamma) + \gamma\omega^*(\CHSH) - \frac{\dga }{2}\right) l.
\]

Therefore, the probability of $(F=\text{\cmark}) \land (\eAB_\clT = \gB_\clT) \land (\eAC_\clT = \gC_\clT)$ is at most the probability of having at least $t = \left((1 - \gamma) + \gamma\omega^*(\CHSH) - {\dga }/{2}\right) l $ win instances in $\cl_{\gamma,\alpha}^{t/l}$.
By Theorem~\ref{thm:cl-parrep},
the latter probability is upper bounded by $2^{-\kappa\dga^3\alpha^4l}$,  
which gives the desired result.
\end{proof}

With this, we now prove a bound on the probability Bob and Charlie can both guess the values $\eAB,\eAC$ described above, but using all the registers involved in an actual cloning attack. Informally, the idea is that to compensate for the syndromes $\syn(\eAB),\syn(\eAC)$ we simply multiply the guessing probability by a factor corresponding to the lengths of the syndromes, whereas for the registers $C, \otp \oplus \eAB, \otp \oplus \eAC$ we shall argue that they are mostly ``decoupled'' from $\eAB\eAC$ and hence have limited effects on the optimal guessing probability.

\begin{lemma}\label{lem:rawkeyguess}
We follow the notation as defined in Lemma~\ref{lem:game2prot}.
Consider a cloning attack with the modification that instead of Bob (resp.~Charlie) producing a guess for the message $M$, he produces a guess $\gB$ (resp.~$\gC$) for $\eAB$ (resp.~$\eAC$).
Then if there is no leakage between the client and receiver's devices during $\Enc(1^\lambda, M)$, or between Bob and Charlie's devices after they receive their decryption keys, the following bound holds:
\begin{equation}\label{eq:rawkeyguess}
\Pr[(F=\text{\cmark}) \land (\eAB = \gB) \land (\eAC = \gC)] \leq 2^{-\kappa\dga^3\alpha^4l+2\lsyn},
\end{equation}
where $\kappa$ is the constant from Theorem~\ref{thm:cl-parrep}, and $\lsyn$ given by \eqref{eq:par-cons-2}.
\end{lemma}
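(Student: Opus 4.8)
The plan is to reduce \eqref{eq:rawkeyguess} to the bound already established in Lemma~\ref{lem:game2prot}, paying the multiplicative factor $2^{2\lsyn}$ for the two omitted syndromes and arguing that none of the other extra registers ($C$, $\otp\oplus\eAB$, $\otp\oplus\eAC$) confer any advantage. First I would dispose of $C$: since a cloning attack prepares a uniformly random message $M$ and sets $C = M\oplus\otp$, Fact~\ref{fc:otp} (with $M$ in the role of the one-time pad) shows that $C$ is uniform and independent of every other register, so it may be discarded without changing any probability. Next, since guessing all of $\eAB$ (resp.\ $\eAC$) correctly in particular determines the substrings $\eAB_\clT$ (resp.\ $\eAC_\clT$), the left-hand side of \eqref{eq:rawkeyguess} is at most the probability of the same event scored only on $\clT$; this is exactly the $\clT$-restricted event treated in Lemma~\ref{lem:game2prot}, modulo the still-present pad and syndrome registers.

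To account for the syndromes I would pass from the attack in which Bob and Charlie receive $\syn(\eAB),\syn(\eAC)$ to one in which they do not, by a standard guessing reduction: given any syndrome-using strategy, construct a no-syndrome strategy in which each party locally samples a uniformly random $\lsyn$-bit string and feeds it to their measurement in place of the true syndrome. With probability $2^{-\lsyn}$ each, and $2^{-2\lsyn}$ jointly (these samples being independent of everything else), both sampled strings equal the true syndromes, and conditioned on this the simulated execution is identical to the original; hence the no-syndrome success is at least $2^{-2\lsyn}$ times the original, which after rearranging supplies precisely the $2^{2\lsyn}$ factor.

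The substantive step is to show that the remaining pads $\otp\oplus\eAB$ (Bob) and $\otp\oplus\eAC$ (Charlie) give no advantage for the $\clT$-restricted guessing, so that the no-syndrome probability is bounded by the quantity $2^{-\kappa\dga^3\alpha^4 l}$ of Lemma~\ref{lem:game2prot}. The key structural fact, from step~\ref{alg:erase}, is that on the instances of $\clT$ we have $\eAB_\clT = \eAC_\clT = A_\clT$, so there the two pads coincide and equal $\otp_\clT\oplus A_\clT$; because $\otp$ is a uniform one-time pad independent of everything else, this is a uniform string independent of $A_\clT$ and of the devices. In other words, restricted to $\clT$ the pads are merely \emph{common shared randomness}, which is already available to the adversaries in the model of Lemma~\ref{lem:game2prot} and cannot raise their success probability (by convexity one fixes it to its best value). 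Concretely, I would note that Bob guesses $\eAB_\clT$ correctly iff his ``implied pad guess'' $\gB_\clT\oplus(\otp\oplus\eAB)_\clT$ equals $\otp_\clT$, and similarly for Charlie, recasting the event as both parties correctly producing the same uniform string $\otp_\clT$ — a common target against which individually-uninformative side-information is useless.

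I expect the main obstacle to be precisely that, taken \emph{jointly}, the pads are not independent of the data to be guessed: $(\otp\oplus\eAB)\oplus(\otp\oplus\eAC) = \eAB\oplus\eAC$ reveals the XOR of the two strings, which off $\clT$ is generally nonzero, so one must rule out any exploitation of this joint correlation by the non-communicating Bob and Charlie. This is exactly where the coincidence $\eAB_\clT=\eAC_\clT$ is essential: for a target common to both parties, side-information that is individually independent of the target cannot increase the joint guessing probability, and I would establish this by conditioning on one party's pad (which leaves the distribution of the targets and quantum states unchanged while turning the other party's pad into side-information determined by $\eAB\oplus\eAC$) and then invoking convexity together with the common-target structure. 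The care here lies in excluding the ``shared-pad coordination'' that genuinely does help when the two guessed strings are \emph{independent} rather than equal; a short explicit calculation confirms that the advantage present in the independent-target case collapses exactly when the targets coincide, leaving only the Lemma~\ref{lem:game2prot} bound and the $2^{2\lsyn}$ syndrome factor.
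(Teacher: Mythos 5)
Your skeleton coincides with the paper's at three of the four steps: discarding $C$ via Fact~\ref{fc:otp} with $M$ playing the role of the pad, paying the factor $2^{2\lsyn}$ for the syndromes by a local uniform-guessing reduction, and exploiting the coincidence $\eAB_\clT = \eAC_\clT = A_\clT$ so that on $\clT$ the two pads form a single shared uniform string that the receiver could have generated themselves. The genuine gap is in your treatment of the off-$\clT$ portions of the pads, which you yourself flag as the main obstacle. Your proposed fix — condition on Bob's pad $\otp\oplus\eAB = d$ (legitimate, since that register alone is uniform and independent of everything else), then treat Charlie's pad $d \oplus \eAB \oplus \eAC$ as side information and invoke the principle that ``side-information individually independent of the target cannot increase the joint guessing probability for a common target'' — does not go through, because the conditioning destroys the very premise of that principle. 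On the instances $i \notin \clT$ with $U_i = \mathrm{keep}$ and exactly one of $Y_i, Z_i$ equal to $\bot$, the coordinate of $\eAB \oplus \eAC$ is precisely $A_i$, the client's raw output bit (by step~\ref{alg:erase}). The devices are adversarial and need not be i.i.d., so these bits of $A_{\overline{\clT}}$ — and hence Charlie's conditioned pad — can be correlated both with the common target $A_\clT$ and with the quantum systems on the $\clT$ instances; they are emphatically not independent of the target. The ``short explicit calculation'' you allude to presumably models the residual pad as target-independent noise, which is exactly what fails here. Nor can you fall back on Lemma~\ref{lem:game2prot} directly: its proof maps the attack onto a strategy for $\cl_{\gamma,\alpha}^{t/l}$ and invokes Theorem~\ref{thm:cl-parrep}, which bounds strategies receiving only the prescribed game inputs, and says nothing about adversaries holding advice correlated with Alice's outputs on other instances.

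The paper avoids ever putting the off-$\clT$ pads into the reduced scenario, and the ordering of moves is what makes this work. It first recasts the event so that both parties are guessing the single string $\otp$ (using that, given the pads, guessing $\eAB$/$\eAC$ is equivalent to guessing $\otp$), then replaces the off-$\clT$ pad bits by the true values $\otp_{\overline{\clT}}$, letting Bob and Charlie ``win for free'' on $\overline{\clT}$ (inequality~\eqref{eq:pguessDRDR}); this reduces the event to guessing $\otp_\clT$ given only the $\clT$-restricted pads, and \emph{only then} does it apply Fact~\ref{fc:otp} to identify those as shared uniform randomness absorbable into the receiver. Your ordering is reversed: you restrict the event to $\clT$ first (via monotonicity, which is fine as far as it goes) while leaving the full pads in the adversaries' hands, and it is precisely that leftover that manufactures the XOR correlation you then cannot dispose of. If you adopt the paper's sequence — common-target recast, $\otp_{\overline{\clT}}$ substitution, then $\clT$-restriction — the problematic quantity $\eAB \oplus \eAC$ never appears as side information, and the rest of your argument (the $C$ removal, the syndrome factor, and the shared-randomness simulation on $\clT$) completes the proof as in the paper.
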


\begin{proof}
The difference between this scenario as compared to that described in Lemma~\ref{lem:game2prot} is that here, the dishonest parties have access to more registers: specifically, the receiver is given the register $C$ in the ciphertext, and Bob (resp.~Charlie) is given $\otp \oplus \eAB,\syn(\eAB)$ (resp.~$\otp \oplus \eAC,\syn(\eAC)$) in his decryption key. For brevity, let $D^\B$ and $H^\B$ denote the values $\otp \oplus \eAB$ and $\syn(\eAB)$ for Bob, and define $D^\C$ and $H^\C$ analogously for Charlie.
To prove the claimed bound~\eqref{eq:rawkeyguess} in this scenario, we shall analyze a sequence of modified scenarios in which we progressively remove these registers from consideration, finally arriving at the scenario  in Lemma~\ref{lem:game2prot}. (Note that the probability we are bounding in this lemma is also slightly different from that in Lemma~\ref{lem:game2prot}, because it involves the full strings $\eAB, \gB, \eAC, \gC$ rather than just the substrings on $\clT$. We return to this point later in this proof.)

To facilitate this analysis, we introduce the following notation. For any subset $Q$ of the registers mentioned above (i.e.~$Q \subseteq \{C,D^\B,D^\C,H^\B,H^\C\}$), let $\clC_Q$ denote the set of all ``modified cloning attacks'' in the sense described in the Lemma~\ref{lem:game2prot} statement 
(i.e.~at the end Bob and Charlie produce guesses for $\eAB$ and $\eAC$),
except that the dishonest parties additionally have access 
to the registers $Q$ (with the implicit understanding that these registers become available to the dishonest parties at the same points as in a standard cloning attack). In terms of this notation, Lemma~\ref{lem:game2prot} is considering all attacks in the set $\clC_\emptyset$ ($\emptyset$ denotes the empty set), while this lemma statement we aim to prove here is simply the statement that
\begin{equation} \label{eq:guessopt}
\max_{\clC_{C D^\B D^\C H^\B H^\C}} \Pr[\Esucc] \leq 2^{-\kappa\dga^3\alpha^4l+2\lsyn}, \quad \text{where } \Esucc = (F=\text{\cmark}) \land (\eAB = \gB) \land (\eAC = \gC).
\end{equation}

As the first step, we argue that Bob and Charlie having access to the registers $H^\B H^\C$ could only have increased the attack's maximum success probability by at most a factor of $2^{2\lsyn}$, i.e.~in terms of the above notation we have
\[
\max_{\clC_{C D^\B D^\C H^\B H^\C}} \Pr[\Esucc] \leq 2^{2\lsyn} \max_{\clC_{C D^\B D^\C}} \Pr[\Esucc] .
\]
This follows from a standard guessing-strategy argument: given any attack in which Bob and Charlie use $H^\B H^\C$ to achieve $\Pr[\Esucc] = p$ for some $p\in[0,1]$, there is always another attack in which Bob and Charlie achieve $\Pr[\Esucc] \geq 2^{-2\lsyn}p$ \emph{without} access to $H^\B H^\C$, as follows: Bob and Charlie independently produce uniformly random guesses for $H^\B$ and $H^\C$ respectively, then proceed with the original strategy as though these guesses were the true values of $H^\B H^\C$.
Note that the probability of these guesses being (both) equal to the true syndrome values is always exactly $2^{-2\lsyn}$, hence the described attack succeeds with probability at least $2^{-2\lsyn}p$, as claimed.

With this bound we can remove the registers $H^\B H^\C$ from consideration, in the sense that to prove~\eqref{eq:guessopt}, it would suffice to show that
\begin{equation}\label{eq:pguessCDD}
\max_{\clC_{C D^\B D^\C}} \Pr[\Esucc] \leq 2^{-\kappa\dga^3\alpha^4l}.
\end{equation}
To do so, we now proceed to remove the register $C$ as well, by arguing that 
\[
\max_{\clC_{C D^\B D^\C}} \Pr[\Esucc] = \max_{\clC_{D^\B D^\C}} \Pr[\Esucc] .
\]
This follows by observing that in the definition of a cloning attack (or modified cloning attack, in this context), the message $M$ is chosen uniformly at random and independently of everything else. This means we can effectively view it as playing the role of a one-time-pad in the process of generating $C$ --- note that this is ``reversed'' from our earlier Theorem~\ref{thm:indsec} proof, in that here we shall view $M$ rather than $\otp$ as the one-time-pad. To put this rigorously: before step~\ref{alg:otp}, in this context $M$ is a uniformly random value that was generated independently of everything else. Furthermore, for the event $\Esucc$ that we are considering, the only role played by the message $M$ is in generating the value $C=M\oplus \otp$ in step~\ref{alg:otp} (focusing on the $F=\text{\cmark}$ case, since otherwise $C$ is produced from a ``dummy message'' $M^\mathrm{fake}$ independent of the true message), and hence we can just trace out $M$ immediately after that step.
With this, we apply the last statement in Fact~\ref{fc:otp} (identifying $\otp,M$ with $C_1,C_2$ respectively --- again, we stress that the roles are ``reversed'' from our Theorem~\ref{thm:indsec} proof) to conclude that the state produced after step~\ref{alg:otp} is \emph{exactly} the same as though $C$ was generated as a uniformly random value independent of everything else. Since a dishonest receiver could generate such a $C$ on their own anyway, we can absorb it into the actions of a dishonest receiver when considering the set of attacks $\clC_{D^\B D^\C}$, and conclude that the maximum value of $\Pr[\Esucc]$ over attacks in the set $\clC_{C D^\B D^\C}$ is the same as over the set $\clC_{D^\B D^\C}$.

The last step is to remove the registers $D^\B D^\C$ and connect $\Esucc$ to the event considered in Lemma~\ref{lem:game2prot}. 
We start by observing that whenever Bob and Charlie have access to $D^\B = \otp \oplus \eAB$ and $D^\C = \otp \oplus \eAC$, they can simultaneously guess (respectively) $\eAB$ and $\eAC$ correctly \emph{if and only if} they can simultaneously guess $\otp$ --- this follows by observing that if e.g.~Bob guesses $\otp$ correctly, he can get $\eAB$ by computing $\otp \oplus D^\B$; conversely if he guesses $\eAB$ correctly he can get $\otp$ from $\eAB \oplus D^\B$ (the analysis for Charlie is analogous). 
To discuss this more easily, let $\widehat{\clC}_Q$ denote the set of all modified cloning attacks in a similar fashion to $\clC_Q$, except with one further modification that rather than having Bob (resp.~Charlie) produce a guess
$\gB$ (resp.~$\gC$) for $\eAB$ (resp.~$\eAC$), he is to produce a guess
$\widehat{G}^\B$ (resp.~$\widehat{G}^\C$) for $\otp$.\footnote{Technically, although we have described $\gB$ as a guess for $\eAB$ and $\widehat{G}^\B$ as a guess for $\otp$ (and analogously for Charlie), this is not strictly necessary from a purely mathematical standpoint --- abstractly, 
Bob and Charlie are just producing some values in $\{0,1\}^l$ regardless of whether we are considering $\clC_Q$ or $\widehat{\clC}_Q$; the only difference is how we choose to label and interpret these values. However, using the notation we have presented makes the argument easier to describe.} 
(Note that this means Bob and Charlie are both trying to guess the same value in this case.) Then the preceding observation tells us that we have 
\[
\max_{\clC_{D^\B D^\C}} \Pr[\Esucc] = \max_{\widehat{\clC}_{D^\B D^\C}} \Pr[(F=\text{\cmark}) \land (\otp = \widehat{G}^\B = \widehat{G}^\C)] .
\]

On the right-hand-side of the above equation, Bob and Charlie are trying to guess $\otp$ given access to $D^\B D^\C$. Now consider the set $\clT$ as defined in the Lemma~\ref{lem:game2prot} statement, and let $\overline{\clT} = [l]\setminus\clT$. Note that because $\otp$ is a uniformly random $l$-bit string, all the bits within it are independent of each other as well, and we can discuss them individually. Suppose that for all the instances $i\in\overline{\clT}$, rather than giving Bob (resp.~Charlie) the bit $D_i^\B$ (resp.~$D_i^\C$) we were to simply give them $\otp_i$ directly. This would improve their probability of guessing $\otp$, hence if we let $\widehat{\clC}_{D^\B_\clT \otp_{\overline{\clT}}, D^\C_\clT \otp_{\overline{\clT}}}$ denote a scenario where Bob and Charlie get $D^\B_\clT \otp_{\overline{\clT}}$ and $D^\C_\clT \otp_{\overline{\clT}}$ respectively (following the Lemma~\ref{lem:game2prot} notation) rather than simply $D^\B$ and $D^\C$, we can write
\begin{align}
\max_{\widehat{\clC}_{D^\B D^\C}} \Pr[(F=\text{\cmark}) \land (\otp = \widehat{G}^\B = \widehat{G}^\C)] 
&\leq \max_{\widehat{\clC}_{D^\B_\clT \otp_{\overline{\clT}}, D^\C_\clT \otp_{\overline{\clT}}}} \Pr[(F=\text{\cmark}) \land (\otp = \widehat{G}^\B = \widehat{G}^\C)] \label{eq:pguessDRDR} \\
&= \max_{\widehat{\clC}_{D^\B_\clT D^\C_\clT}} \Pr[(F=\text{\cmark}) \land (\otp_\clT = \widehat{G}^\B_\clT = \widehat{G}^\C_\clT)] \nonumber\\
&= \max_{\clC_{D^\B_\clT D^\C_\clT}} \Pr[(F=\text{\cmark}) \land (\eAB_\clT = \gB_\clT) \land (\eAC_\clT = \gC_\clT)] ,\nonumber
\end{align}
where the second line holds because when e.g.~Bob has access to $\otp_{\overline{\clT}}$ he can guess $\otp$ if and only if he can guess $\otp_\clT$ (similarly for Charlie), and the third line holds because once again, 
when e.g.~Bob has access to $D^\B_\clT=\otp_\clT \oplus \eAB_\clT$ he can guess $\otp_\clT$ if and only if he can guess $\eAB_\clT$ (similarly for Charlie). 

Finally, we observe that for attacks in $\clC_{D^\B_\clT D^\C_\clT}$, the dishonest parties do not have access to the register $C$, and thus the register $\otp$ is not involved in anything until step~\ref{alg:dkey}. We can hence say that $\otp$ is generated at that point, uniformly at random and independently of everything else, then used to compute $D^\B_\clT=D^\C_\clT=\otp_\clT \oplus \eA_\clT$ (recalling steps~\ref{alg:erase} and \ref{alg:dkey} of the protocol) and traced out immediately afterwards (since it is not involved in any subsequent steps when considering $\Pr[(F=\text{\cmark}) \land (\eAB_\clT = \gB_\clT) \land (\eAC_\clT = \gC_\clT)]$). Therefore we can once again apply the last statement in Fact~\ref{fc:otp} (this time identifying $\otp_\clT,\eA_\clT$ with $C_1,C_2$ respectively, which is more similar to our Theorem~\ref{thm:indsec} proof) to conclude that the state produced after step~\ref{alg:dkey} is {exactly} the same as though $D^\B_\clT$ was generated as a uniformly random value independent of everything else, and $D^\C_\clT$ set equal to it. Since this is something that a dishonest receiver could have done on their own (before distributing $D^\B_\clT$ and $D^\C_\clT$ to Bob and Charlie), we conclude that 
\[
\max_{\clC_{D^\B_\clT D^\C_\clT}} \Pr[(F=\text{\cmark}) \land (\eAB_\clT = \gB_\clT) \land (\eAC_\clT = \gC_\clT)] = \max_{\clC_{\emptyset}} \Pr[(F=\text{\cmark}) \land (\eAB_\clT = \gB_\clT) \land (\eAC_\clT = \gC_\clT)] .
\]
Lemma~\ref{lem:game2prot} is precisely the statement that the right-hand-side of the above expression\footnote{Notice that the original event of interest $\Esucc$ is a ``stricter'' condition than the event we have finally ended up considering, namely $(F=\text{\cmark}) \land (\eAB_\clT = \gB_\clT) \land (\eAC_\clT = \gC_\clT)$ in Lemma~\ref{lem:game2prot} (which only requires Bob and Charlie to guess correctly on $\clT$, not all the instances). This distinction is basically reflected in the inequality~\eqref{eq:pguessDRDR} in our proof here, where informally speaking we have allowed Bob and Charlie to ``win for free'' on $\overline{\clT}$ by simply giving them all the values $\otp_{\overline{\clT}}$.} is at most $2^{-\kappa\dga^3\alpha^4l}$, hence proving the desired bound~\eqref{eq:pguessCDD}.
\end{proof}

\begin{remark}
In principle, an alternative approach to the above arguments is possible, by instead defining the set $\clT$ to also include all the instances with $Y_i=Z_i=\bot$. This does not significantly change the overall structure --- Lemma~\ref{lem:game2prot} still holds with this definition of $\clT$ (because making $\clT$ a bigger set just makes $(F=\text{\cmark}) \land (\eAB_\clT = \gB_\clT) \land (\eAC_\clT = \gC_\clT)$ a stricter condition, i.e.~the probability of that event can only decrease), and in the proof of Lemma~\ref{lem:rawkeyguess}, this definition of $\clT$ still has the property that $D^\B_\clT=D^\C_\clT=\otp_\clT \oplus \eA_\clT$ (in fact this is precisely the ``largest'' choice of $\clT$ on which we can be sure this property holds). Still, we have chosen our definition of $\clT$ as presented because the Lemma~\ref{lem:game2prot} proof is slightly easier to describe with that choice.
\end{remark}

Finally, with the above lemma we can straightforwardly bound the probability of Bob and Charlie simultaneously guessing the message, hence obtaining uncloneable security. We first present a version without leakage between the devices:
\begin{theorem} \label{thm:unc_noleak}
With the parameter choices as specified in \eqref{eq:par-cons-1}-\eqref{eq:par-cons-2}, Scheme~\ref{prot:DI-VKECM} is $(t(\lambda),0)$-uncloneable-secure with 
\[
t(\lambda) = (1 - \kappa\dga^3\alpha^4 + 2\xi (1-\gamma)(1-\alpha)h_2(\qhon))\lambda,
\]
if there is no leakage between the client and the receiver's devices during $\Enc(1^\lambda, M)$, and between the two parties Bob and Charlie after the ciphertext is distributed between them in the cloning attack.
\end{theorem}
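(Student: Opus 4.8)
The plan is to reduce a cloning attack (in which Bob and Charlie guess the message $M$) to the modified scenario of Lemma~\ref{lem:rawkeyguess} (in which they guess $\eAB$ and $\eAC$), and then translate the resulting bound into the form required by Definition~\ref{def:cl-sec}. The heavy lifting has already been done in Lemma~\ref{lem:rawkeyguess} (and ultimately the parallel repetition Theorem~\ref{thm:cl-parrep}); what remains is essentially an exercise in tracking the one-time-pad structure of the protocol.

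First I would observe that, given the registers a dishonest party holds, guessing $M$ correctly is equivalent to guessing the internal string $\eA$ correctly. In a cloning attack the ciphertext is $\rho = \rho'\otimes\state{C}$ with $C = M\oplus\otp$, and since $C$ is a classical register the channel distributing $\rho$ can broadcast a copy of it to both Bob and Charlie without loss of generality. Bob's decryption key (step~\ref{alg:dkey}) includes $D^\B = \otp\oplus\eAB$ and Charlie's includes $D^\C = \otp\oplus\eAC$. By the $\Dec$ procedure (step~\ref{alg:decrypt}), Bob's guess for the message is $M^\B = C\oplus D^\B\oplus\gB$, where $\gB$ denotes his guess for $\eAB$; since $C\oplus D^\B = (M\oplus\otp)\oplus(\otp\oplus\eAB) = M\oplus\eAB$, we get that $M^\B = M$ if and only if $\gB = \eAB$, and analogously $M^\C = M$ if and only if $\gC = \eAC$.

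This equivalence is a bijection between strategies, so any cloning attack achieving $\Pr[(F=\text{\cmark})\land(M=M^\B=M^\C)] = p$ yields a Lemma~\ref{lem:rawkeyguess}-scenario attack achieving $\Pr[(F=\text{\cmark})\land(\eAB=\gB)\land(\eAC=\gC)] = p$ using exactly the same registers, and conversely. Applying Lemma~\ref{lem:rawkeyguess} therefore gives
\[
\Pr[(F=\text{\cmark})\land(M=M^\B=M^\C)] \leq 2^{-\kappa\dga^3\alpha^4 l + 2\lsyn}.
\]

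Finally I would translate this into the security definition. Using $|\clM| = 2^\lambda$, $l = \lambda$, and $\lsyn = \xi(1-\gamma)(1-\alpha)h_2(\qhon)\lambda$ from~\eqref{eq:par-cons-2}, I can rewrite the right-hand side as $2^{t(\lambda)}/|\clM|$ with
\[
t(\lambda) = \lambda - \kappa\dga^3\alpha^4\lambda + 2\lsyn = \left(1 - \kappa\dga^3\alpha^4 + 2\xi(1-\gamma)(1-\alpha)h_2(\qhon)\right)\lambda,
\]
and with the additive $\negl(\lambda)$ term taken to be $0$. The constraint~\eqref{eq:par-cons-1} guarantees $2\xi(1-\gamma)(1-\alpha)h_2(\qhon) < \kappa\dga^3\alpha^4$, hence $t(\lambda) < \lambda$ as required. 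The only point demanding care is the reduction in the second paragraph, namely ensuring that both parties genuinely have access to $C$ and to their respective $D$-values so that the guess-$M$/guess-$\eA$ equivalence is exact; this is precisely why we invoke that $C$ is classical and broadcastable, and that the $D$-values are included in the released decryption keys.
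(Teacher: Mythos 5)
Your proposal is correct and takes essentially the same approach as the paper's proof: the identical one-time-pad equivalence (Bob guesses $M$ correctly iff he guesses $\eAB$ correctly, via XOR with $C \oplus D^\B$, and analogously for Charlie) reduces the cloning attack to precisely the scenario of Lemma~\ref{lem:rawkeyguess}, whose bound is then rewritten as $2^{t(\lambda)}/|\clM|$ using $l=\lambda$ and the syndrome length from~\eqref{eq:par-cons-2}. The only cosmetic difference is that you first frame the equivalence through the honest $\Dec$ computation before stating the strategy bijection, while the paper asserts the XOR-based equivalence for arbitrary guesses directly; the content is the same.
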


\begin{proof}
Uncloneable security is defined in terms of the probability $\Pr[(F=\text{\cmark}) \land (M=M^\B=M^\C)]$ only, hence we can focus only on the case where $F=\text{\cmark}$ in the protocol. We note that when this happens, the receiver gets the classical value $C=M\oplus\otp$, and without loss of generality we suppose they distribute copies of it to Bob and Charlie in the cloning attack. Furthermore, Bob and Charlie get the classical values $D^\B = \otp \oplus \eAB$ and $D^\C = \otp \oplus \eAC$ respectively, where $\eAB,\eAC$ are as described in the statement of Lemma~\ref{lem:rawkeyguess}. From this, we see that Bob and Charlie can simultaneously guess $M$ correctly if and only if Bob can guess $\eAB$ correctly and Charlie can guess $\eAC$ correctly (because e.g.~if Bob guesses $M$ correctly, he can get $\eAB$ from $M \oplus C \oplus D^\B$; conversely if he guesses $\eAB$ correctly he can get $M$ from $\eAB \oplus C \oplus D^\B$). The probability of them doing the latter (and having $F=\text{\cmark}$) is precisely the probability $\Pr[(F=\text{\cmark}) \land (\eAB = \gB) \land (\eAC = \gC)]$ in Lemma~\ref{lem:rawkeyguess}. Therefore that lemma gives us (recalling that the message length is $|\clM|=2^l$, and substituting~\eqref{eq:par-cons-2} for the syndrome length):
\begin{equation}\label{eq:bnd_noleak}
\Pr[(F=\text{\cmark}) \land (M=M^\B=M^\C)] \leq 2^{-\kappa\dga^3\alpha^4l+2\lsyn} = \frac{2^{l-\kappa\dga^3\alpha^4l+2\xi (1-\gamma)(1-\alpha)h_2(\qhon) l}}{|\clM|},
\end{equation}
which is the desired result since $l=\lambda$.
\end{proof}

We can also obtain a similar statement in the presence of leakage, hence proving Theorem~\ref{thm:main}: 
\begin{theorem}\label{thm:unc_leak}
With the parameter choices as specified in \eqref{eq:par-cons-1}-\eqref{eq:par-cons-2}, Scheme~\ref{prot:DI-VKECM} is $(t(\lambda),0)$-uncloneable-secure with 
\[
t(\lambda) = (1-\kappa\dga^3\alpha^4 + 2\xi (1-\gamma)(1-\alpha)h_2(\qhon) +\nu)\lambda,
\]
if the total leakage between the client and the receiver during $\Enc(1^\lambda, M)$, and between Bob and Charlie during the cloning attack, is $\nu l$ bits. 
\end{theorem}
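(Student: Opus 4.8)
The plan is to reduce Theorem~\ref{thm:unc_leak} to the no-leakage statement of Theorem~\ref{thm:unc_noleak} by a ``guess the leakage'' argument, in the spirit of the communication-complexity lower bound via the partition bound in~\cite{LLR12}. Concretely, I would show that any cloning attack $\Pi$ using $\nu l$ bits of total leakage can be converted into a leakage-free cloning attack $\Pi'$ whose success probability is at least $2^{-\nu l}$ times that of $\Pi$. Applying Theorem~\ref{thm:unc_noleak} to $\Pi'$ and multiplying the resulting bound by $2^{\nu l}$ then gives the claim, since the exponent acquires exactly an additive $\nu l$ and $l=\lambda$.

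First I would fix a leaking attack $\Pi$ and let $\mathcal{E}$ denote its success event $(F=\text{\cmark}) \land (M=M^\B=M^\C)$. Using the convention from Section~\ref{sec:secdefn} that there is a \emph{fixed} schedule of $\nu l$ times at which a single bit is communicated between devices (so that the timing of the leakage encodes no information), I would construct $\Pi'$ as follows: every device runs exactly as in $\Pi$, computing each bit it is supposed to \emph{send} as usual, except that whenever it would \emph{receive} a communicated bit it instead draws a fresh uniformly random local bit $g_j$ and proceeds as if that were the received value. In $\Pi'$ no information crosses between the client's and receiver's devices during $\Enc$, nor between Bob's and Charlie's devices after distribution, and the guesses are purely local randomness; hence $\Pi'$ is a genuine leakage-free cloning attack of the kind covered by Theorem~\ref{thm:unc_noleak}.

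Next I would couple the two executions. Let $\mathcal{G}$ be the event, defined on $\Pi'$, that at every one of the $\nu l$ communication steps the guessed bit $g_j$ equals the bit the sending device actually computed. Conditioned on $\mathcal{G}$, every ``received'' value in $\Pi'$ matches what would have been transmitted in $\Pi$, so the two executions are identical and in particular $\Pr_{\Pi}[\mathcal{E}] = \Pr_{\Pi'}[\mathcal{E}\mid\mathcal{G}]$; note that all registers feeding into $\mathcal{E}$ (including the honest client's step~\ref{alg:test} test, which is a function of inputs, outputs, and messages) are reproduced correctly under $\mathcal{G}$. Because each $g_j$ is drawn independently of the computation of the corresponding sent bit (the sending device never sees $g_j$, as the channel is cut in $\Pi'$), one has $\Pr_{\Pi'}[g_j = \text{(sent bit)} \mid \text{earlier steps correct}] = 1/2$ at every step, whence $\Pr_{\Pi'}[\mathcal{G}] = 2^{-\nu l}$ by the chain rule. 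Combining these, $\Pr_{\Pi}[\mathcal{E}] = \Pr_{\Pi'}[\mathcal{E}\mid\mathcal{G}] \le \Pr_{\Pi'}[\mathcal{E}]/\Pr_{\Pi'}[\mathcal{G}] = 2^{\nu l}\Pr_{\Pi'}[\mathcal{E}]$. Applying the no-leakage bound~\eqref{eq:bnd_noleak} from Theorem~\ref{thm:unc_noleak} to $\Pi'$ gives $\Pr_{\Pi'}[\mathcal{E}] \le 2^{-\kappa\dga^3\alpha^4 l + 2\lsyn}$, so that $\Pr_{\Pi}[\mathcal{E}] \le 2^{\nu l - \kappa\dga^3\alpha^4 l + 2\lsyn}$; substituting $2\lsyn = 2\xi(1-\gamma)(1-\alpha)h_2(\qhon)l$ from~\eqref{eq:par-cons-2} and recalling $|\clM|=2^l$ and $l=\lambda$ rewrites this as $2^{t(\lambda)}/|\clM|$ with $t(\lambda) = (1-\kappa\dga^3\alpha^4 + 2\xi(1-\gamma)(1-\alpha)h_2(\qhon)+\nu)\lambda$, exactly as claimed.

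The main point to get right is not a computation but the modelling of the leakage: I would verify carefully that the guessing argument remains valid when the leakage is both interactive and split across two phases (client--receiver during $\Enc$, then Bob--Charlie during the attack). The crucial feature is that the schedule of communicated bits is fixed in advance, so the number of guesses is deterministically $\nu l$ and the independence $\Pr[g_j=\text{(sent bit)}]=1/2$ holds at each step regardless of the adaptive content of earlier messages; this is precisely what the fixed-timing convention of Section~\ref{sec:secdefn} supplies, and it is the one place where the argument could fail if the presence or timing of a leaked bit were itself allowed to carry information.
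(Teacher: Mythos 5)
Your proposal is correct and takes essentially the same approach as the paper: both convert a leaking attack into a leakage-free one at a multiplicative cost of $2^{-\nu l}$ via a guess-the-leaked-bits argument, and then invoke the no-leakage bound of Theorem~\ref{thm:unc_noleak} (i.e.~\eqref{eq:bnd_noleak}). The only, immaterial, difference is in the construction: the paper has the devices share a simulated transcript as common randomness and self-check their own sent bits (outputting randomly on any mismatch), whereas you have each receiving device guess locally with fresh uniform bits and analyze via a coupling conditioned on all guesses being correct --- both give exactly the factor $2^{-\nu l}$, and your observation that the fixed-timing convention is the one place the argument could break is precisely the point of the paper's footnote on leakage timing.
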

\begin{proof}
Recall from the description in Section~\ref{sec:secdefn} that the leakage can be interactive with arbitrarily many rounds of communication between the client and receiver during $\Enc(1^\lambda, M)$, and between Bob and Charlie during the cloning attack, as long as the total number of bits leaked is bounded. As stated there, we only consider leakage that happens after the devices receive their inputs --- for the client the only input is $X$, and for the receiver and Bob and Charlie, the inputs can be any information that they get. We shall also take the leakage to happen before the devices produce what we consider their outputs, which is $A$ for the client's device, $S$ for the receiver's device, and $\tA^\B$ and $\tA^\C$ for Bob and Charlie's devices. Note that in the protocol description we have the client getting $X$, $A$ in a single step and the receiver only getting $U$ after that, but these steps need not be so strictly time-ordered --- we could have the client inputting $X$ into their device and the receiver inputting $U$ into their device at the same time.\footnote{There is a slight complication for the receiver in this leakage model because the receiver actually gets information in two rounds; they first receive $U$ and have to produce $S$, and then they receive $C'$ --- there could also be some leakage from the client to the receiver after the receiver receives $C$, which the receiver could then pass on to Bob and Charlie. But this kind of leakage can be handled by the simple argument we used to deal with $\syn(\tA)$, so we are ignoring that here. Nevertheless, if such leakage happens, it should count towards the total number of bits leaked, and the dependence of $t(\lambda)$ on the number of bits leaked would be the same.} Moreover, in each round of communication the following happens: the device that receives the message communicated does some measurement on their quantum state depending on this message, their input, and previous messages and previous measurement outcomes, and sends a message to the other device depending on the output of this measurement and their input. Let the number of bits leaked during the encryption phase be $\nu_1l$, and the number of bits leaked afterwards (which can be e.g.~between Bob and Charlie) be $\nu_2l$, with $\nu_1+\nu_2=\nu$.

To bound $\Pr[(F=\text{\cmark})\land(M=M^\B=M^\C)]$, we use an argument essentially similar to the analysis of the syndromes in the Lemma~\ref{lem:rawkeyguess} proof: given any strategy in which the leakage bits are used to achieve $\Pr[(F=\text{\cmark}) \land (M=M_B=M_C)] = p$ for some $p\in[0,1]$, there is always another strategy that achieves $\Pr[(F=\text{\cmark}) \land (M=M_B=M_C)] \geq 2^{-\nu l}p$ \emph{without} using the leakage bits, which yields the desired result since we have already proven (under the no-leakage scenario) that the latter probability is upper bounded by~\eqref{eq:bnd_noleak}. Explicitly, the strategy is as follows: the client and receiver's devices will share the same initial state they did in the protocol with leakage, and $\nu_1 l$ extra bits of randomness (which can be simulated by shared entanglement), which will be broken up into blocks corresponding to each round of communication in the situation with leakage. Similarly Bob and Charlie's devices will share the same state and $\nu_2 l$ extra bits of randomness divided into blocks. The idea is that the devices will behave just as they did in the original strategy, by using the shared randomness to simulate the communication received from the other device.

We shall denote the $j$-th block of randomness shared between the client and receiver's devices by $s_j$. For the rest of this argument, for the sake of brevity, we shall talk about the client and receiver doing things instead of the client and receiver's devices. Suppose the client was supposed to communicate in the $j$-th round. In the new strategy, the client assumes that $s_{j-1}$ is the message they received from the receiver in the $(j-1)$-th round, and does the same measurement they would have done in the original strategy for this simulated message (the measurement also depends on the client's input, their previous measurement outcomes, and previous simulated messages). If the outcome of the measurement is not equal to $s_j$, then the client records a ``failure" for this round. The receiver behaves similarly in rounds where they were supposed to communicate. After, all the rounds are done, if the client and receiver have not recorded ``failure" at any point, they output as they would have done in the original strategy; otherwise they provide a random output. Once the outputs of the measurements are fixed, the protocol is deterministic, so a transcript of messages that is compatible with the client's outputs, and separately compatible with the receiver's outputs, is compatible with both of them. For any such fixed transcript, the shared randomness between the client and receiver is equal to it with probability $2^{-\nu_1l}$, and therefore with this probability the client and receiver actually output according to the original strategy.

Bob and Charlie behave similarly during their part of the new strategy, and output according to the old strategy with probability $2^{-\nu_2 l}$. Since the probability of satisfying $(F=\text{\cmark})\land(M=M^\B=M^\C)$ in the old strategy is $p$, the overall probability of satisfying in the new strategy is at least $2^{-\nu l}p$.
\end{proof}

Recall from the discussion below Definition~\ref{def:cl-sec} that in order for $(t(\lambda),0)$-uncloneability to be a nontrivial property, we require $t(\lambda)<\lambda$. For our scheme with the specified parameter choices, this property indeed holds as long as the leakage fraction $\nu$ satisfies (note that the right-hand-side in the expression below is strictly positive due to the condition~\eqref{eq:par-cons-1} in the protocol parameter specifications)
\begin{equation}\label{eq:maxleak}
\nu < \kappa\dga^3\alpha^4 - 2\xi(1-\gamma)(1-\alpha)h_2(\qhon).
\end{equation}
In other words, regarding the claim in our main theorem (Theorem~\ref{thm:main}), our scheme achieves nontrivial uncloneable security against dishonest devices with any value of $\nu$ up to the bound in~\eqref{eq:maxleak}.

\section{Single-decryptor encryption}
\label{sec:DISDECM}
In this section, we prove the following achievability result for DI-SDECM, via a suitable modification of our above DI-VKECM scheme:
\begin{theorem}\label{thm:main-dec}
There is a scheme for DI-SDECM with message space $\clM=\{0,1\}^\lambda$, such that:
\begin{enumerate}
\item It satisfies the completeness property~\eqref{eq:complete-2} given i.i.d.~honest devices with a constant level of noise; 
\item It achieves indistinguishable security;
\item There exists some $\nu>0$ such that the scheme achieves $(t(\lambda),0)$-anti-piracy security as per both Definition \ref{def:pir-sec} and Definition \ref{def:pir-sec-ind}, against dishonest devices with $\nu\lambda$ bits of leakage, where $t(\lambda)$ is a function satisfying $t(\lambda)<\lambda$ for all sufficiently large $\lambda$.
\end{enumerate}
\end{theorem}

Our protocol for achieving this is presented as Scheme~\ref{prot:DI-SDECM} below. In the description of Scheme~\ref{prot:DI-SDECM}, the devices, error-correction procedure, and parameter choices are to be understood as being the same as in Section~\ref{sec:DIVKECM}. 
Informally, the changes as compared to Scheme~\ref{prot:DI-VKECM} for DI-VKECM are simple: we have just isolated the parts of the previous encryption procedure that did not involve the message $m$ and placed them into the new key generation procedure $\KeyG(1^\lambda)$, while the parts that involved $m$ have been placed into the new encryption procedure $\Enc(m, \kenc, f)$ (together with the steps that involved selecting a random subset of the device outputs to XOR with other strings, hence moving this randomness into the encryption procedure rather than the previous ``key release'' procedure).
Another change is that in the new key generation procedure $\KeyG(1^\lambda)$ we do not have the message $m$ and thus cannot produce a string $m \oplus \otp$ (in contrast to the previous encryption procedure), so instead we simply output the one-time-pad $\otp$ by itself, which will be XOR'd with the message $m$ later during the new encryption procedure $\Enc(m, \kenc, f)$.\footnote{Intuitively, by the ``symmetry'' between the one-time-pad and its generated ciphertext as expressed in Fact~\ref{fc:otp}, this change should not modify the actual states produced in any manner that is relevant to the security proofs, as we shall discuss below.}
\begin{algorithm}[!h]
\caption{DI-SDECM with security parameter $\lambda$ and messages in $\clM = \{0,1\}^\lambda$}
\label{prot:DI-SDECM}
\vspace{0.3cm}
\begin{algorithmic}[1]
\Algphase{$\KeyG(1^\lambda)$:}
\State Devices of the form described in Section~\ref{sec:DIVKECM} are distributed between the client and receiver, with $l=\lambda$ \;
\State The client samples strings $X,U$ as follows: for each $i\in[l]$, set $X_i \in \{0,1\}$ uniformly at random, and independently set $U_i=\mathrm{keep},0,1$ with probabilities $1-\gamma,\gamma/2,\gamma/2$ respectively \;
\State The client inputs $X$ into their device and receives an output string $A\in\{0,1\}^l$ \;
\State The client sends $U$ to the receiver \;
\State The receiver inputs $U$ into their device, interpreting $U_i=\mathrm{keep}$ as $\perp$ for each $i$, and receives an output string $S\in \{0,1\}^l$ \;
\State The receiver sends $S$ to the client \;
\State The client tests if the number of $i\in [l]$ such that $U_i \neq \mathrm{keep}$ and $A_i\oplus S_i \neq X_i\cdot U_i$ is at most $(\gamma (1-\omega^*(\CHSH)) + {\dga }/{2})l$ \;
\State If the test passes then the client sets the flag to $F=\text{\cmark}$; otherwise the client sets the flag to $F=\text{\xmark}$ \;
\State The client samples $\otp\in \clM$ uniformly at random and sends it to the receiver \label{alg:SDotp} \;
\State The client stores $(X,U,A,\otp)$ as the encryption key and stores the value of the flag $F$; the receiver stores $\rho = \rho'\otimes\state{\otp}$ as the decryption key state, where $\rho'$ is the quantum state in the receiver's share of the devices\;

\Algphase{$\Enc(m, \kenc, f)$:}
\State Interpret the encryption key as $\kenc = (X,U,A,\otp)$ \;
\State Sample a string $\eX \in \{0, 1, \perp\}^l$ as follows: for each $i\in[l]$, set $\eX_i=\bot$ with probability $\alpha$, and otherwise set $\eX_i=X_i+2$ \;
\State Set a string $\eA \in \{0,1\}^l$ as follows: for each $i\in[l]$, set $\eA_i = 0$ if $U_i \neq \mathrm{keep}$ or $\eX_i=\bot$, and otherwise set $\eA_i = A_i$ \;
\State Compute the syndrome $\syn(\eA)$ following the error-correction procedure described above \;
\If{$f=\text{\cmark}$}
\State Output the ciphertext string $(m \oplus \otp \oplus \eA, \syn(\eA), \eX)$ \label{alg:SDciphertext} \;
\Else
\State Sample $M^\mathrm{fake}\in \clM$ uniformly at random and output the ciphertext string $(M^\mathrm{fake} \oplus \otp \oplus \eA, \syn(\eA), \eX)$ \; \;
\EndIf

\Algphase{$\Dec(c, \rho)$:}
\State Interpret $\rho$ as $\rho'\otimes\state{\otp}$ where $\rho'$ has $l$ qubit registers and $\otp \in \clM$; interpret the ciphertext string $c$ as $(N,\syn(\eA),\eX)$ \;
\State Input $\eX$ into the receiver's device and obtain the output string $\rS \in \{0,1\}^l $ \;
\State Set a string $\eS \in \{0,1\}^l$ as follows: for each $i\in[l]$, set $\eS_i = 0$ if $U_i \neq \mathrm{keep}$ or $\eX_i=\bot$, and otherwise set $\eS_i = \rS_i$ \;
\State Use $\eS,U,\eX$ and $\syn(\eA)$ to compute a guess $\gA$ for $\eA$ \;
\State Output $\widetilde{M} = \otp \oplus N \oplus \gA$ \label{alg:SDdecrypt} \;
\vspace{0.3cm}
\end{algorithmic}
\end{algorithm}
\newpage 

\subsection{Completeness and security of Scheme~\ref{prot:DI-SDECM}}
\label{sec:SDECM-proofs}

It is straightforward to see that Scheme~\ref{prot:DI-SDECM} satisfies the completeness requirements in Definition~\ref{def:sdecm}, since in the honest case the various registers produced are identical to Scheme~\ref{prot:DI-VKECM} up to some minor rearrangements regarding $\otp$, and hence in the completeness analysis in Section~\ref{subsec:corr-proof} carries over directly. As for security, we clearly have indistinguishable security, and it is also not too hard to modify our previous arguments to show anti-piracy security:
\begin{lemma}
Scheme~\ref{prot:DI-VKECM} is indistinguishable-secure. 
\end{lemma}
\begin{proof}
Indistinguishable security only involves the ciphertext produced by the encryption procedure, without having access to the decryption key $\Kdec$.
Observe that in the encryption procedure (focusing on the $F=\text{\cmark}$ case, since in the $F=\text{\xmark}$ case the ciphertext is trivially independent of $m$), the only part of the ciphertext that depends on the message $m$ is the $m \oplus \otp \oplus \eA$ part. However, recall that $\otp$ was generated as a uniformly random string independent of everything else (including the other parts of the ciphertext), and hence serves as a perfect one-time-pad. Hence indistinguishable security clearly holds (explicitly: one could e.g.~apply Fact~\ref{fc:otp} to conclude that $m \oplus \otp \oplus \eA$ is completely independent of $m \oplus \eA$, which is the only quantity depending on $m$).
\end{proof}

\begin{lemma}\label{lemma:DI-SDECM}
With the parameter choices as specified in \eqref{eq:par-cons-1}-\eqref{eq:par-cons-2}, Scheme~\ref{prot:DI-SDECM} is $(t(\lambda),0)$-anti-piracy-secure as per Definition \ref{def:pir-sec} and Definition \ref{def:pir-sec-ind} with 
\[
t(\lambda) = (1-\kappa\dga^3\alpha^4 + 2\xi (1-\gamma)(1-\alpha)h_2(\qhon) +\nu)\lambda,
\]
if the total leakage between the client and the receiver during $\KeyG(1^\lambda)$, and between Bob and Charlie during the pirating attack, is $\nu l$ bits. 
\end{lemma}
\begin{proof}
To show security by Definition \ref{def:pir-sec}, observe that for a pirating attack applied to Scheme~\ref{prot:DI-SDECM}, the steps performed by the dishonest parties and the registers available to them at each step are exactly the same as in a cloning attack applied to Scheme~\ref{prot:DI-VKECM}, except with the sole modification that the one-time-pad $\otp$ and the padded message $M \oplus \otp$ have had their roles interchanged. However, by Fact~\ref{fc:otp} these two values are precisely interchangeable, and hence our analysis in Section~\ref{subsec:secur-proof} for Scheme~\ref{prot:DI-VKECM} carries over exactly (noting that the final event of interest is also the same between the definitions of $(t(\lambda),0)$-anti-piracy security and $(t(\lambda),0)$-uncloneable security).

To show security by Definition \ref{def:pir-sec-ind}, we also just need to basically repeat the analysis in Section~\ref{subsec:secur-proof}, although we need to modify specific parts of the proofs rather than simply invoking the symmetry between $\otp$ and $M \oplus \otp$ as in the preceding case (since in this case the two independent messages make the register values somewhat different as compared to a cloning attack on Scheme~\ref{prot:DI-VKECM}). To highlight the key points: first observe that following the Theorem~\ref{thm:unc_noleak} proof, we again have that the maximum probability that $F=\text{\cmark}$ and Bob and Charlie can simultaneously guess their respective message values is the same as the maximum probability that $F=\text{\cmark}$ and they can simultaneously guess $\eAB$ and $\eAC$ respectively. Hence it suffices to bound the latter, which can be done by following a similar argument as the Lemma~\ref{lem:rawkeyguess} proof, now defining the sets $\clC_{\otp P^\B P^\C H^\B H^\C}, \dots ,\clC_{\emptyset}$ with respect to pirating attacks rather than cloning attacks (and letting $P^\B$ denote the value $M_1 \oplus \otp \oplus \eAB$ Bob receives in his ciphertext, and analogously $P^\C = M_2 \oplus \otp \oplus \eAC$ for Charlie); our aim is then to bound $\max_{\clC_{\otp P^\B P^\C H^\B H^\C}} \Pr[\Esucc]$. To do so, we handle the syndromes $H^\B H^\C$ by the same argument, to get an upper bound in terms of $\max_{\clC_{\otp P^\B P^\C}} \Pr[\Esucc]$. This time however, handling the registers $\otp P^\B P^\C$ is actually easier: in this scenario $P^\B$ is just a value $\otp \oplus \eAB$ one-time-padded with the value $M_1$ (which is not involved with any other register in this scenario), hence Fact~\ref{fc:otp} implies Bob could have generated it locally, and it can be removed in the sense that we have $\max_{\clC_{\otp P^\B P^\C}} \Pr[\Esucc] = \max_{\clC_{\otp P^\C}} \Pr[\Esucc]$.\footnote{This proof structure did not work for the Lemma~\ref{lem:rawkeyguess} proof because the values $D^\B = \otp \oplus \eAB$, $D^\C = \otp \oplus \eAC$ were padded with a \emph{common} value $\otp$, preventing us from applying Fact~\ref{fc:otp}. This issue was handled in that proof via the more elaborate argument where we allowed Bob and Charlie to ``win for free'' on $\overline{\clT}$ by giving them all the values $\otp_{\overline{\clT}}$.} Similarly, $P^\C$ is just a value $\otp \oplus \eAC$ one-time-padded with the value $M_2$ and can also be removed. This leaves only $\otp$, which (with $P^\B P^\C$ excluded from consideration) is just a register that could have been generated locally by the receiver before distributing states to Bob and Charlie, and hence can also be removed. Summarizing, this gives $\max_{\clC_{\otp P^\B P^\C}} \Pr[\Esucc] = \max_{\clC_{\emptyset}} \Pr[\Esucc]$, yielding the desired bound via Lemma~\ref{lem:game2prot} (recalling that $\Esucc$ is a ``stricter'' event than the one considered in that lemma).
\end{proof}

\section{Single-decryptor encryption of bits and trits}
\label{sec:bittritPA}

We now describe how Scheme~\ref{prot:DI-SDECM} can be modified to achieve ``perfect'' anti-piracy security under Definition \ref{def:pir-sec-ind} for the cases where the message space is a single bit or trit (i.e.~$\clM=\bbF_2$ or $\bbF_3$; later in this section we describe some obstacles faced in generalizing this result to larger $\bbF_p$, as well as a technical difficulty in applying it to DI-VKECM or Definition \ref{def:pir-sec} of anti-piracy security). Specifically, we obtain the following result: 
\begin{theorem}\label{thm:main-dec-bit}
There is a scheme for DI-SDECM with message space $\clM=\bbF_2$ or $\clM=\bbF_3$ which achieves the first two properties as in Theorem \ref{thm:main-dec} and $(0,\negl(\lambda))$-anti-piracy security as per Definition \ref{def:pir-sec-ind}, against dishonest devices with $\nu\lambda$ bits of leakage, as long as
\begin{equation}\label{eq:0pirmaxleak}
3\nu < \kappa\dga^3\alpha^4 - 2 (1-\gamma)(1-\alpha)h_2(\qhon).
\end{equation}
 \end{theorem}

Qualitatively, the idea is to implement a form of randomness extraction or privacy amplification~\cite{TSS+11,DPV+12}.
In our setting, from Lemma~\ref{lem:rawkeyguess} we have a bound on the probability of Bob and Charlie being able to simultaneously guess the ``raw key'' strings $\eAB,\eAC$. Our goal here to process $\eAB,\eAC$ into shorter strings such that their probability of being able to simultaneously guess the shorter strings is only negligibly larger than the trivial guessing probability, from which we could achieve $(0,\negl(\lambda))$-anti-piracy security.

Explicitly, the modification to Scheme~\ref{prot:DI-SDECM} is as follows. The input/output string length $l$ for the devices is still set equal to the security parameter $\lambda$ as before, but the message space is now $\clM=\bbF_p$ for $p=2$ or $3$. 
The following steps in the protocol are modified:
\begin{itemize}
\item In step~\ref{alg:SDotp} of $\KeyG$, the ``one-time-pad'' $\otp$ is instead drawn uniformly at random in $\bbF_p$. 
\item In step~\ref{alg:SDciphertext} of $\Enc$, instead of setting 
$(m \oplus \otp \oplus \eA, \syn(\eA), \eX)$ as the ciphertext, the following procedure is performed: a uniformly random value $\rVec\in\bbF_p^l$ is generated (independently of everything else), and the ciphertext is set as $(m \oplus \otp \oplus \eA\cdot\rVec, \syn(\eA), \eX, \rVec)$, where $\eA\cdot\rVec$ denotes inner product with respect to $\bbF_p$, and $\oplus$ is computed modulo $p$. Note that the random value $\rVec$ is included in the ciphertext.
\item In step~\ref{alg:SDdecrypt} of $\Dec$, the final output is instead computed as $\widetilde{M} = \otp \oplus D \oplus \gA\cdot\rVec$, where $\gA\cdot\rVec$ is computed using the value of $\rVec$ included in the ciphertext.
\end{itemize}
Qualitatively, the randomized value $\rVec$ serves as a method to ``extract'' the randomness in $\eA$. We highlight that similar to the standard setting for strong extractors (viewing $\rVec$ as the extractor seed), $\rVec$ can be revealed to the party trying to guess the inner-product value (and this is a necessary property in our context, since an honest receiver would need to use it in decrypting the message).

To show that this modified protocol indeed achieves $(0,\negl(\lambda))$-anti-piracy security, we basically need to show that if Bob and Charlie try to simultaneously guess their corresponding inner-product values, they only have a negligible advantage over the trivial success probability of $1/p$. This brings us to the main technical result of this section, which can be thought of as a simultaneous non-local version of the quantum Goldreich-Levin theorem \cite{AC02}:
\begin{lemma}\label{lem:bittrit-PA}
Consider a CQ state $\rho_{X^\B X^\C BC}$, where $X^\B$ and $X^\C$ are both classical registers taking values in $\bbF_p^l$, while $B$ and $C$ are quantum registers held by Bob and Charlie respectively. Further, suppose for any measurements Bob and Charlie can do on their quantum registers to produce outputs $G^\B$ and $G^\C$, we have
\[ \Pr[(G^\B = X^\B) \land (G^\C = X^\C)] \leq \delta,\]
where the probability is taken over the distribution of $X^\B X^\C$. Consider $\rVec^\B, \rVec^\C$ which are indepedently and uniformly distributed in $\bbF_p^l$. If $\rVec^\B$ and $\rVec^\C$ are given to Bob and Charlie respectively, and they do measurements on their quantum registers depending on $\rVec^\B$ and $\rVec^\C$, to produce outputs $G^\B(\rVec^\B)$ and $G^\C(\rVec^\C)$, then we have for $p=2, 3$,
\[ \Pr\left[\bigvee_{\substack{j, k: \\ j+k=0\mod p}}(G^\B(\rVec^\B) = X^\B\cdot \rVec^\B+j)\land(G^\C(\rVec^\C) = X^\C\cdot \rVec^\C+k)\right] \leq \frac{1}{p} + O(\delta^{1/3}), \]
where the probability is taken over the distribution of $X^\B X^\C, \rVec^\B \rVec^\C$.
\end{lemma}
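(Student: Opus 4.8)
The plan is to adapt the coherent Goldreich--Levin / Bernstein--Vazirani argument of \cite{AC02} to the two-party setting, using crucially that Bob and Charlie receive \emph{independent} seeds $\rVec^\B,\rVec^\C$ so that each can run their own Fourier-sampling procedure on their own register. First I would set up the Fourier-analytic reformulation of the target event over $\bbF_p$. Writing $\omega=e^{2\pi i/p}$ and letting $E^\B = G^\B(\rVec^\B) - X^\B\cdot\rVec^\B$ and $E^\C = G^\C(\rVec^\C) - X^\C\cdot\rVec^\C$ denote the two parties' ``offsets'', the event $\bigvee_{j+k\equiv 0}(\dots)$ is exactly $\{E^\B + E^\C \equiv 0 \bmod p\}$, whose probability expands via characters as $\tfrac1p + \tfrac1p\sum_{t=1}^{p-1}\bbE[\omega^{t(E^\B+E^\C)}]$. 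So if the target probability is $\tfrac1p + \eps$, some nontrivial character $t$ must contribute a ``bias'' of magnitude $\Omega(\eps)$ (for $p=2$ there is a single term; for $p=3$ the $t=2$ term is the conjugate of the $t=1$ term, so $\eps = \tfrac23\,\mathrm{Re}\,\bbE[\omega^{E^\B+E^\C}]$). The point of the correlated constraint $j+k\equiv 0$ is that the bias factors through $\omega^{-t(X^\B\cdot\rVec^\B + X^\C\cdot\rVec^\C)}$ times a joint expectation $\langle \Omega^\B_{\rVec^\B,t}\otimes\Omega^\C_{\rVec^\C,t}\rangle$ of per-seed ``guess observables'' $\Omega^\B_{\rVec^\B,t}=\sum_g \omega^{tg}M^\B_{g\mid\rVec^\B}$ (and analogously for Charlie), with no \emph{uncorrelated} global offset surviving.

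Next I would build the recovery procedures. For each seed value, dilate Bob's predictor POVM $\{M^\B_{g\mid\rVec^\B}\}$ to a projective measurement (Naimark) and let $\tilde\Omega^\B_{\rVec^\B}=\sum_g \omega^{g}\Pi^\B_{g\mid\rVec^\B}$ be the corresponding unitary phase operator; Bob prepares the uniform superposition $p^{-l/2}\sum_{\rVec^\B}\ket{\rVec^\B}$, applies the controlled phase $\sum_{\rVec^\B}\ketbra{\rVec^\B}{\rVec^\B}\otimes\tilde\Omega^\B_{\rVec^\B}$, applies the inverse quantum Fourier transform over $\bbZ_p^l$ to the seed register, and measures it to obtain a guess for $X^\B$. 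Charlie runs the analogous procedure on his register with his own seed register. Since the two procedures act on disjoint tensor factors, the joint ``recover $(x^\B,x^\C)$'' amplitude operator factorizes as $P^\B_{x^\B}\otimes P^\C_{x^\C}$, where $P^\B_{x^\B}=p^{-l}\sum_{\rVec^\B}\omega^{-x^\B\cdot\rVec^\B}\tilde\Omega^\B_{\rVec^\B}$ is exactly the Fourier coefficient extracted by the inverse QFT. A direct computation then gives $\langle\vph_{x^\B x^\C}|(P^\B_{x^\B}\otimes P^\C_{x^\C})|\vph_{x^\B x^\C}\rangle = \bbE_{\rVec^\B\rVec^\C}[\omega^{-(x^\B\cdot\rVec^\B + x^\C\cdot\rVec^\C)}\langle\Omega^\B_{\rVec^\B}\otimes\Omega^\C_{\rVec^\C}\rangle]$, i.e.\ exactly the per-$(x^\B,x^\C)$ contribution to the bias from the previous step (here $\ket{\vph_{x^\B x^\C}}$ is a purification of $\rho_{BC\mid x^\B x^\C}$, and the coherent dilation reproduces the predictor's character expectations).

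Then I would convert bias into recovery probability. By Cauchy--Schwarz, for any operator $Q$ with $\norm{Q}\le 1$ and any state we have $\Tr[Q\rho Q^\dagger]\ge |\Tr[Q\rho]|^2$ (using convexity to pass from pure states to the mixed conditional states $\rho_{BC\mid x^\B x^\C}$). Applying this with $Q=P^\B_{x^\B}\otimes P^\C_{x^\C}$ shows that the probability Bob and Charlie \emph{simultaneously} recover $(x^\B,x^\C)$, conditioned on that message pair, is at least the squared magnitude of the per-pair bias. Averaging over $(X^\B,X^\C)\sim\sfP_{X^\B X^\C}$ and applying Jensen (convexity of $|\cdot|^2$ on $\bbC$) bounds the overall simultaneous-recovery probability below by $|\bbE_{X^\B X^\C}[\cdots]|^2 = \Omega(\eps^2)$. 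But the recovery is a legal pair of local measurements producing guesses for $X^\B$ and $X^\C$, so the hypothesis forces this quantity to be $\le\delta$; hence $\eps = O(\sqrt\delta)$, which in particular yields the stated bound $\tfrac1p + O(\delta^{1/3})$ (since $\delta^{1/2}\le\delta^{1/3}$ for $\delta\le1$).

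The main obstacle I expect is the $p=3$ case. Unlike $p=2$, the guess observables are non-Hermitian order-$3$ unitaries, so the Fourier sampling must be carried out with the $\bbZ_3$ quantum Fourier transform rather than Hadamards, and one must track both nontrivial characters $t=1,2$ and verify the conjugate-symmetry that lets a single real bias drive the recovery. Care is also needed to confirm that the offset cancellation really requires the \emph{correlated} constraint $j+k\equiv0$ together with the independence of $\rVec^\B,\rVec^\C$ --- this is precisely the place where an argument for a common seed $\rVec^\B=\rVec^\C$ would break, since then Bob and Charlie could not separately Fourier-sample, and it is also what explains why the result does not extend to longer strings. The remaining routine points are the faithfulness of the Naimark dilation in reproducing the predictor's character expectations, and the convexity reductions from pure to mixed conditional states.
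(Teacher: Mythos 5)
Your proposal is correct, and its quantum core --- independent Bernstein--Vazirani/Fourier-sampling on Bob's and Charlie's sides, implemented as a controlled phase $\sum_{\rvec}\ketbra{\rvec}{\rvec}\otimes\tilde{\Omega}_{\rvec}$ followed by an inverse $\bbF_p$ Fourier transform, with the amplitude for recovering $(x^\B,x^\C)$ factoring as $P^\B_{x^\B}\otimes P^\C_{x^\C}$ and equalling the per-pair character bias $\bbE_{\rvec^\B\rvec^\C}[\omega^{\pm(E^\B+E^\C)}]$ --- is exactly the paper's circuit (the paper's ADD gates on Fourier-transformed $\ket{p-1}$ registers are your phase kickback, and the paper likewise reduces to the inner-product computation $\frac{1}{p^{2l}}\sum_{\rvec^\B\rvec^\C}\sum_{j,k}|\alpha^{\rvec^\B\rvec^\C}_{jk}|^2\omega^{j+k}$). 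Where you genuinely diverge is in the classical bookkeeping: the paper first applies Markov's inequality to isolate a set of ``good'' pairs $x^\B x^\C$ (occurring with probability at least $\eps/2$) on which the conditional success probability exceeds $\frac{1}{p}+\frac{\eps}{2}$, then lower-bounds the recovery probability per good pair by $\Omega(\eps^2)$, losing a factor of $\eps$ and ending at $\Omega(\eps^3)\leq\delta$, i.e.~$\eps = O(\delta^{1/3})$. You instead apply Cauchy--Schwarz per pair ($\|(P^\B\otimes P^\C)\ket{\vph}\|^2 \geq |\langle\vph|P^\B\otimes P^\C|\vph\rangle|^2$) and then Jensen's inequality over the message distribution on the complex amplitudes; this works because for $p=2,3$ the real part of the $t=1$ bias is a \emph{linear} functional of the outcome distribution pinned to the success probability ($\mathrm{Re}\sum_j a_j\omega^j = \frac{3a_0-1}{2}$ for $p=3$, using that all nonzero residues share the same cosine), so no per-pair conditioning is needed and no phase cancellation across pairs can occur. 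The payoff is a quantitatively stronger bound $\eps = O(\sqrt{\delta})$, which implies the stated $O(\delta^{1/3})$ and would even slightly improve the leakage tolerance in the downstream Corollary~\ref{cor:PA-leak} and Theorem~\ref{thm:0unc}; note also that your bound applies to the full event $E^\B+E^\C\equiv 0 \bmod p$ (both right or compensatingly wrong), which is precisely the form needed for the leakage argument, and your diagnosis of why the argument needs independent seeds and fails for $p>3$ (the linearity/pinning of the real part breaks once distinct nonzero residues have distinct cosines, matching the paper's vanishing-vector-sum counterexamples) agrees with the paper's discussion. The remaining points you flag --- Naimark dilation reproducing the character expectations on local ancillas, and convexity to handle mixed conditional states $\rho_{BC|x^\B x^\C}$ (or equivalently purifying, as the paper does, with the purifier untouched by the local circuits) --- are indeed routine and go through.
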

Note that the event whose probability is being upper bounded in the lemma for $p=2$ is that Bob and Charlie both guess $X^\B\cdot\rVec^\B$ and $X^\C\cdot\rVec^\C$ respectively right, or they both guess wrong. If $p=3$, the event is that they both guess right, or one of their guesses is off by 1, and the other's guess is off by 2. An upper bound on the probability of this event obviously implies an upper bound on the probability of them both guessing right. But as we shall see later, in order to prove a version of this lemma with leakage, we shall actually need the fact that we can prove an upper bound on the probability of this larger event in the case without leakage.
\begin{proof}[Proof of Lemma \ref{lem:bittrit-PA}]
We shall assume that for any fixed $X^\B X^\C = x^\B x^\C$, $\rho_{BC|x^\B x^\C}$ is a pure state $\state{\rho}_{BC|x^\B x^\C}$. This is without loss of generality, because we can always consider a purification of the state --- Bob and Charlie may not have access to the purifying registers, but that will not matter for the purposes of our argument.

The proof will be via contradiction: supposing Bob and Charlie have a strategy in which they use $\rVec^\B$ and $\rVec^\C$ to produce outputs $G^\B(\rVec^\B)$ and $G^\C(\rVec^\C)$ such that
\begin{equation}\label{eq:avg-cont}
\Pr_{X^\B X^\C \rVec^\B \rVec^\C}\left[\bigvee_{\substack{j, k: \\ j+k=0\mod p}}(G^\B(\rVec^\B) = X^\B\cdot \rVec^\B+j)\land(G^\C(\rVec^\C) = X^\C\cdot \rVec^\C+k)\right] > \frac{1}{p} + \eps,
\end{equation}
we shall construct a procedure for Bob and Charlie to learn $X^\B$ and $X^\C$ with probability greater than $\Omega(\eps^3)$. To begin, note that for each value of $x^\B, x^\C, \rvec^\B, \rvec^\C$, without loss of generality we can model Bob and Charlie's procedure to obtain their guesses for $x^\B\cdot \rvec^\B$ and $x^\C \cdot \rvec^\C$ (using $\ket{\rho}_{BC|x^\B x^\C}$ and $\rvec^\B, \rvec^\C$) as follows: they each attach a $p$-dimensional ancilla, which is to be their output register, to their halves of $\ket{\rho}_{BC|x^\B x^\C}$, and apply unitaries controlled on $\rvec^\B$ and $\rvec^\C$ respectively to the state and ancillas. The output value is then obtained by measuring their ancilla registers in the computational basis. Suppose the action of the unitaries, which we shall call $U^\B_{\IP}$ and $U^\C_{\IP}$, is as follows:
\begin{align*}
& U^\B_{\IP}\otimes U^\C_{\IP}\ket{\rvec^\B \rvec^\C}_{\rVec^\B \rVec^\C}\ket{00}_{Z^\B Z^\C}\ket{\rho}_{BC|x^\B x^\C} \\
= & \ket{\rvec^\B \rvec^\C}_{\rVec^\B \rVec^\C}\sum_{j,k\in \bbF_p}\alpha_{x^\B x^\C jk}^{\rvec^\B \rvec^\C}\ket{x^\B\cdot \rvec^\B + j}_{Z^\B}\ket{x^\C\cdot \rvec^\C + k}_{Z^\C}\ket{\sigma^{\rvec^\B \rvec^\C}}_{BC|x^\B x^\C jk}.
\end{align*}
In the above, $U^\B_{\IP}$ acts on the registers $\rVec^\B Z^\B B$ and $U^\C_{\IP}$ acts on $\rVec^\C Z^\C C$. After their action, the $\rVec^\B \rVec^\C$ registers remain unchanged; there is some superposition of answers of the form $x^\B\cdot \rvec^\B + j$ and $x^\C\cdot \rvec^\C + k$ (where all additions are modulo $p$) on the answer registers $Z^\B, Z^\C$, and corresponding to these answers, the state on the $BC$ registers is $\ket{\sigma^{\rvec^\B \rvec^\C}}_{BC|x^\B x^\C jk}$. The probability of getting answers $x^\B \cdot x^\C+j$ and $x^\C \cdot \rvec^\C + k$ is $|\alpha^{\rvec^\B \rvec^\C}_{x^\B x^\C jk}|^2$. 

Also, note that
by applying Markov's inequality on \eqref{eq:avg-cont} we have that
\begin{align*}
& \Pr_{X^\B X^\C}\left[\Pr_{\rVec^\B \rVec^\C}\left[\bigvee_{\substack{j, k: \\ j+k=0\mod p}}(G^\B(\rVec^\B) = X^\B\cdot \rVec^\B+j)\land(G^\C(\rVec^\C) = X^\C\cdot \rVec^\C+k)\right] \leq \frac{1}{p} + \frac{\eps}{2}\right] \\
\leq & \frac{1-\frac{1}{p} - \eps}{1-\frac{1}{p} - \frac{\eps}{2}} \leq 1 - \frac{\eps}{2}.
\end{align*}
This means that with probability at least $\frac{\eps}{2}$ over the distribution of $X^\B X^\C$, Bob and Charlie's average probability (over the distribution of $\rVec^\B \rVec^\C$) of outputting $x^\B \cdot \rvec^\B+j$ and $x^\C \cdot \rvec^\C+k$ for $j+k = 0 \mod p$ is at least $\frac{1}{p} + \frac{\eps}{2}$. We shall call pairs $x^\B x^\C$ for which this is true ``good'' pairs. We shall now concentrate on a good pair $x^\B x^\C$, and henceforth in the analysis, we shall drop the $x^\B x^\C$ dependence from $\ket{\rho}_{BC|x^\B x^\C}$ and $\ket{\sigma^{\rvec^\B \rvec^\C}}_{BC|x^\B x^\C jk}$ and $\alpha^{\rvec^\B \rvec^\C}_{x^\B x^\C jk}$ (with the understanding that we are now only focusing on a fixed good pair $x^\B x^\C$). 

Since $x^\B x^\C$ is a good pair, we have that
\begin{equation}\label{eq:good-bias}
\frac{1}{p^{2l}}\sum_{\rvec^\B \rvec^\C}\sum_{j+k=0}\Big|\alpha^{\rvec^\B \rvec^\C}_{jk}\Big|^2 \geq \frac{1}{p} + \frac{\eps}{2}.
\end{equation}
We shall now show there exists a procedure independent of $x^\B x^\C$ that Bob and Charlie can perform on $\ket{\rho}_{BC}$, such that (given that $x^\B x^\C$ is a good pair) Bob and Charlie output $x^\B x^\C$ with probability at least $\Omega(\eps^2)$.
Essentially, Bob and Charlie will carry out the Bernstein-Vazirani algorithm independently, using $U^\B_{\IP}$ and $U^\C_{\IP}$ as noisy oracles. The circuit for doing so is depicted below in Figure~\ref{fig:BV-ckt}.
\begin{figure}[!h]
\centering
\begin{tikzpicture}
\draw (0.1,0) rectangle (2.9,4) node(IP1) [midway] {$U^\B_{\IP}\otimes U^\C_{\IP}$};
\draw (6.9,0) rectangle (9.7,4) node(IP2) [midway] {$(U^\B_{\IP})^\dagger\otimes (U^\C_{\IP})^\dagger$};
\draw (-1.7,2) rectangle (-0.6,4) node(F1) [midway] {$F_p^{\otimes 2l}$};
\draw (-1.7,-1.8) rectangle (-0.6,-0.4) node(F2) [midway] {$F_p^{\otimes 2}$};
\draw (3.3,-1.8) rectangle (6.5,-0.4) node(ADD) [midway] {$\mathrm{ADD}^\B\otimes \mathrm{ADD}^\C$};
\draw (10.4,2) rectangle (11.5,4) node(F3) [midway] {$F_p^{\dagger \otimes 2l}$};
\draw (10.4, -1.8) rectangle (11.5, -0.4) node(F4) [midway] {$F_p^{\otimes 2}$};

\draw (-2.2, 3.4) -- node [at start, xshift=-0.5cm] {$\ket{0^l}_{\rVec^\B}$} (-1.7,3.4);
\draw (-0.6, 3.4) -- (0.1, 3.4);
\draw (2.9, 3.4) -- (6.9,3.4);
\draw (9.7, 3.4) -- (10.4, 3.4);
\draw (11.5, 3.4) -- (12, 3.4);

\draw (-2.2, 2.6) -- node [at start, xshift=-0.5cm] {$\ket{0^l}_{\rVec^\C}$} (-1.7,2.6);
\draw (-0.6,2.6) -- (0.1, 2.6);
\draw (2.9, 2.6) -- (6.9,2.6);
\draw (9.7, 2.6) -- (10.4, 2.6);
\draw (11.5, 2.6) -- (12, 2.6);

\draw(-2.2, 1.6) -- node [at start, xshift=-0.5cm] {$\ket{0}_{Z^\B}$} (0.1, 1.6);
\draw (-0.6, 1.6) -- (0.1, 1.6);
\draw (2.9, 1.6) -- (6.9, 1.6);
\draw (9.7, 1.6) -- (12, 1.6);

\draw(-2.2, 1) -- node [at start, xshift=-0.5cm] {$\ket{0}_{Z^\C}$} (0.1, 1);
\draw (-0.6, 1) -- (0.1, 1);
\draw (2.9, 1) -- (6.9, 1);
\draw (9.7, 1) -- (12, 1);

\draw(-2.2, 0.4) -- node [at start, xshift=-0.5cm] {$\ket{\rho}_{BC}$} (0.1, 0.4);
\draw (-0.6, 0.4) -- (0.1, 0.4);
\draw (2.9, 0.4) -- (6.9, 0.4);
\draw (9.7, 0.4) -- (12, 0.4);

\draw (-2.2, -0.8) -- node [at start, xshift=-0.8cm] {$\ket{p-1}_{\tZ^\B}$} (-1.7,-0.8);
\draw (-0.6,-0.8) -- (3.3,-0.8);
\draw (6.5,-0.8) -- (10.4, -0.8);
\draw (11.5, -0.8) -- (12, -0.8);

\draw (-2.2, -1.4) -- node [at start, xshift=-0.8cm] {$\ket{p-1}_{\tZ^\C}$} (-1.7,-1.4);
\draw (-0.6,-1.4) -- (3.3,-1.4);
\draw (6.5,-1.4) -- (10.4, -1.4);
\draw (11.5, -1.4) -- (12, -1.4);

\draw (4,1.6) -- node [at start, yshift=-0.03cm] {$\bullet$} (4,-0.4);
\draw (5.8,1) -- node [at start, yshift=-0.03cm] {$\bullet$} (5.8,-0.4);
\end{tikzpicture}
\caption{Circuit to compute $x^\B, x^\C$ in good set}
\label{fig:BV-ckt}
\end{figure}

Bob's registers in this circuit are $\rVec^\B Z^\B \tZ^\B B$, and Charlie's registers are $\rVec^\C Z^\C \tZ^\C B$. The circuit essentially does the following: it prepares a uniform superposition of $\rvec^\B$ and $\rvec^\C$ by applying $F_p^{\otimes l}$ on the $\ket{0^l}$ state in the $\rVec^\B$ and $\rVec^\C$ registers respectively. Here $F_p$ is the $\bbF_p$ Fourier transform, whose action on computational basis states is given by
\[ F_p\ket{j} = \frac{1}{\sqrt{p}}\sum_{k\in\bbF_p}\omega^{jk}\ket{k}, \]
with $\omega$ being the $p$-th root of unity. At the same time as $F_p^{\otimes l}$, the circuit applies $F_p^{\otimes 2}$ to the registers $\tZ^\B$ and $\tZ^\C$, which are initialized with $\ket{p-1}\ket{p-1}$. The effect of the Fourier transforms on these registers is
\[ F_p^{\otimes 2}\ket{p-1}_{\tZ^\B}\ket{p-1}_{\tZ^\C} = \frac{1}{p}\sum_{j',k'\in \bbF_p}\omega^{(p-1)j'+ (p-1)k'}\ket{j'k'}_{\tZ^\B \tZ^\C} = \frac{1}{p}\sum_{j',k'\in \bbF_p}\omega^{-j' - k'}\ket{j'k'}_{\tZ^\B \tZ^\C}.\]
After this, the unitaries $U^\B_{\IP}\otimes U^\C_{\IP}$ are acted on the registers $\rVec^\B \rVec^\C Z^\B Z^\C BC$, and then the $\mathrm{ADD}^\B$ and $\mathrm{ADD}^\C$ gates acting on the registers $Z^\B \tZ^\B$ and $Z^\C \tZ^\C$ registers respectively are applied. The $\mathrm{ADD}^\B$ gate essentially adds (modulo $p$) the value in the $Z^\B$ register to the value in the $\tZ^\B$ register, and similarly, the $\mathrm{ADD}^\C$ gate adds the value in the $Z^\C$ register to the $\tZ^\C$ register. After this, the inverses of the $U^\B_{\IP}\otimes U^\C_{\IP}$ gates and the $F_p^{\otimes (2l+2)}$ gates are added on their respective registers. Finally, Bob measures the $\rVec^\B$ register and Charlie measures the $\rVec^\C$ register, both in the computational basis. Note that all these steps can be carried out by Bob acting only on his registers, and Charlie acting only on his registers.

The probability that Bob and Charlie measure the $\rVec^\B \rVec^\C$ registers and both get the correct values $x^\B$ and $x^\C$ is at least 
\[ \left|\bra{x^\B x^\C}_{\rVec^\B \rVec^\C}\bra{00}_{Z^\B Z^\C}\bra{\rho}_{BC}\bra{p-1,p-1}_{\tZ^\B \tZ^\C}U_5U_4U_3U_2U_1\ket{0^l0^l}_{\rVec^\B \rVec^\C}\ket{00}_{Z^\B Z^\C}\ket{\rho}_{BC}\ket{p-1,p-1}_{\tZ^\B \tZ^\C}\right|^2, \]
where $U_1 = F_p^{\otimes (2l+2)}$, $U_2 = U^\B_{\IP}\otimes U^\C_{\IP}$, $U_3 = \mathrm{ADD}^\B\otimes\mathrm{ADD}^\C$, $U_4 = (U^\B_{\IP})^\dagger\otimes (U^\C_{\IP})^\dagger$ and $U_5 = F_p^{\otimes (2l+2)}$. To calculate this, we observe
\begin{align*}
& U_3U_2U_1\ket{0^l0^l}_{\rVec^\B \rVec^\C}\ket{00}_{Z^\B Z^\C}\ket{\rho}_{BC}\ket{p-1,p-1}_{\tZ^\B \tZ^\C} \\
= & U_3U_2\left(\frac{1}{p^{l+1}}\sum_{\rvec^\B,\rvec^\C \in \bbF_p^l}\ket{\rvec^\B \rvec^\C}_{\rVec^\B \rVec^\C}\ket{00}_{Z^\B Z^\C}\ket{\rho}_{BC}\sum_{j',k'\in \bbF_p}\omega^{-j'-k'}\ket{j',k'}_{\tZ^\B \tZ^\C}\right) \\
= & U_3\left(\frac{1}{p^{l+1}}\sum_{\rvec^\B, \rvec^\C \in \bbF_p^l}\ket{\rvec^\B \rvec^\C}\sum_{j,k\in \bbF_p}\alpha_{jk}^{\rvec^\B \rvec^\C}\ket{x^\B\cdot \rvec^\B + j,x^\C\cdot \rvec^\C + k}\ket{\sigma^{\rvec^\B \rvec^\C}}_{jk}\sum_{j',k'\in \bbF_p}\omega^{-j'-k'}\ket{j',k'}\right) \\
= & \frac{1}{p^{l+1}}\sum_{\rvec^\B, \rvec^\C}\ket{\rvec^\B \rvec^\C}\sum_{j,k}\alpha_{jk}^{\rvec^\B \rvec^\C}\ket{x^\B\cdot \rvec^\B + j,x^\C\cdot \rvec^\C + k}\ket{\sigma^{\rvec^\B \rvec^\C}}_{jk}\sum_{j',k'}\omega^{-j'-k'}\ket{x^\B\cdot \rvec^\B + j + j',x^\C \cdot \rvec^\C + k + k'} \\
= & \frac{1}{p^{l+1}}\sum_{\rvec^\B, \rvec^\C}\ket{\rvec^\B \rvec^\C}\sum_{j,k}\alpha_{jk}^{\rvec^\B \rvec^\C}\omega^{x^\B\cdot \rvec^\B+j + x^\C\cdot \rvec^\C + k}\ket{x^\B\cdot \rvec^\B + j,x^\C\cdot \rvec^\C + k}\ket{\sigma^{\rvec^\B \rvec^\C}}_{jk}\sum_{j',k'}\omega^{-j'-k'}\ket{j',k'}.
\end{align*}
Similarly,
\begin{align*}
& \bra{x^\B x^\C}_{\rVec^\B \rVec^\C}\bra{00}_{Z^\B Z^\C}\bra{\rho}_{BC}\bra{p-1,p-1}_{\tZ^\B \tZ^\C}U_5U_4 \\
= & \left(\frac{1}{p^{l+1}}\sum_{\rvec^\B, \rvec^\C}\omega^{-x^\B\cdot \rvec^\B - x^\C\cdot \rvec^\C}\bra{\rvec^\B \rvec^\C}\bra{00}_{Z^\B Z^\C}\bra{\rho}_{BC}\sum_{j',k'}\omega^{j'+k'}\bra{j',k'}_{\tZ^\B \tZ^\C}\right)U_4 \\
= & \frac{1}{p^{l+1}}\sum_{\rvec^\B, \rvec^\C}\omega^{-x^\B\cdot \rvec^\B - x^\C\cdot \rvec^\C}\bra{\rvec^\B \rvec^\C}\sum_{j,k}(\alpha^{\rvec^\B \rvec^\C}_{jk})^*\bra{x^\B\cdot \rvec^\B + j,x^\C \cdot \rvec^\C + k}\bra{\sigma^{\rvec^\B \rvec^\C}}_{jk}\sum_{j',k'}\omega^{j'+k'}\bra{j',k'}_{\tZ^\B \tZ^\C}.
\end{align*}
Note that the states $\ket{\rvec^\B \rvec^\C}\ket{x^\B\cdot \rvec^\B + j,x^\C\cdot \rvec^\C + k}\ket{\sigma^{\rvec^\B \rvec^\C}}_{jk}$ are orthogonal for different values of $\rvec^\B, \rvec^\C, j$ and $k$. Therefore we have,
\begin{align}
& \bra{x^\B x^\C}\bra{00}\bra{\rho}\bra{p-1,p-1}U_5U_4U_3U_2U_1\ket{0^l0^l}\ket{00}\ket{\rho}\ket{p-1,p-1} \nonumber \\
= & \frac{1}{p^{2l}}\sum_{\rvec^\B, \rvec^\C}\sum_{j,k \in \bbF_p}\left|\alpha^{\rvec^\B \rvec^\C}_{jk}\right|^2\omega^{j+k} \nonumber \\
= & \sum_{k'\in \bbF_p}\left(\sum_{j'\in \bbF_p}\frac{1}{p^{2l}}\sum_{\rvec^\B, \rvec^\C}\left|\alpha^{\rvec^\B \rvec^\C}_{j',k'-j'}\right|^2\right)\omega^{k'}, \label{eq:IP-omega}
\end{align}
where $k'-j'$ in the subscript of $\alpha^{\rvec^\B \rvec^\C}_{j',k'-j'}$ is meant to be interpreted modulo $p$.

For $p=2$, $\omega=-1$, so the above expression is
\[ \frac{1}{2^{2l}}\sum_{\rvec^\B \rvec^\C}\left|\alpha^{\rvec^\B \rvec^\C}_{00}\right|^2 + \frac{1}{2^{2l}}\sum_{\rvec^\B \rvec^\C}\left|\alpha^{\rvec^\B \rvec^\C}_{11}\right|^2 - \frac{1}{2^{2l}}\sum_{\rvec^\B \rvec^\C}\left|\alpha^{\rvec^\B \rvec^\C}_{01}\right|^2 - \frac{1}{2^{2l}}\sum_{\rvec^\B \rvec^\C}\left|\alpha^{\rvec^\B \rvec^\C}_{10}\right|^2\]
For a good $x^\B x^\C$, by \eqref{eq:good-bias}, we have $\dfrac{1}{2^{2l}}\displaystyle\sum_{\rvec^\B \rvec^\C}(|\alpha^{\rvec^\B \rvec^\C}_{00}|^2 + |\alpha^{\rvec^\B\rvec^\C}|^2) \geq \frac{1}{2} + \frac{\eps}{2}$, which means that $\dfrac{1}{2^{2l}}\displaystyle\sum_{\rvec^\B \rvec^\C}(|\alpha^{\rvec^\B \rvec^\C}_{01}|^2 + |\alpha^{\rvec^\B \rvec^\C}_{10}|^2)$ is at most $\frac{1}{2} -\frac{\eps}{2}$. Therefore, the probability of Bob and Charlie learning $x^\B x^\C$ from the good set is at least
\[ \left|\frac{1}{2} + \frac{\eps}{2} - \frac{1}{2} + \frac{\eps}{2}\right|^2 \geq \eps^2.\]
Since the probability of $X^\B X^\C$ being from the good set is at least $\frac{\eps}{2}$, the overall probability of Bob and Charlie learning $X^\B X^\C$ is at least $\frac{\eps^3}{2} = \Omega(\eps^3)$ as claimed.

For $p=3$, \eqref{eq:IP-omega} instead becomes
\begin{align*}
& \frac{1}{3^{2l}}\sum_{\rvec^\B \rvec^\C}\bigg(\left|\alpha^{\rvec^\B \rvec^\C}_{00}\right|^2 + \left|\alpha^{\rvec^\B \rvec^\C}_{12}\right|^2 + \left|\alpha^{\rvec^\B \rvec^\C}_{21}\right|^2 + \omega\left(\left|\alpha^{\rvec^\B \rvec^\C}_{01}\right|^2 + \left|\alpha^{\rvec^\B \rvec^\C}_{10}\right|^2 + \left|\alpha^{\rvec^\B \rvec^\C}_{22}\right|^2\right) \\
& \quad \quad + \omega^2\left(\left|\alpha^{\rvec^\B \rvec^\C}_{11}\right|^2 + \left|\alpha^{\rvec^\B \rvec^\C}_{02}\right|^2 + \left|\alpha^{\rvec^\B \rvec^\C}_{20}\right|^2\right) \bigg).
\end{align*}
We can thus write the probability of learning a good $x^\B x^\C$ as
\[ |a_0 + a_1\omega + a_2\omega^2|^2 = a_0^2 + a_1^2 + a_2^2 - a_0a_1 - a_0a_2 -a_1a_2,\]
where $a_0 + a_1 + a_2=1$, and we know from \eqref{eq:good-bias} that $a_0 = \frac{1}{3^{2l}}\displaystyle\sum_{\rvec^\B \rvec^\C}\left(\left|\alpha^{\rvec^\B \rvec^\C}_{00}\right|^2 + \left|\alpha^{\rvec^\B \rvec^\C}_{12}\right|^2 + \left|\alpha^{\rvec^\B \rvec^\C}_{21}\right|^2\right)$ $\geq \frac{1}{3} + \frac{\eps}{2}$. Writing $a_2$ in terms of $a_0, a_1$, the above expression attains its minimum value w.r.t. $a_1$ when its derivative w.r.t. $a_1$ is 0. This happens at $a_1 = \frac{1-a_0}{2}$, and the corresponding value of the expression is $\frac{1}{4}(3a_0 - 1)^2$. Substituting $a_0 \geq \frac{1}{3} + \frac{\eps}{2}$, we get that the probability is always at least $\frac{9\eps^2}{16}$ for $x^\B x^\C$ in the good set. Thus the probability of learning $X^\B X^\C$ overall is at least $\frac{9\eps^3}{32} = \Omega(\eps^3)$.
\end{proof}
We make some observations from the proof of Lemma \ref{lem:bittrit-PA}. First, the lower bound for learning a good $x^\B x^\C$ in the general case is $\left|\sum_{j=0}^{p-1} a_j \omega^j\right|^2$, where the $a_j$-s form a probability distribution, and $a_0 \geq \frac{1}{p} + \frac{\eps}{2}$. Our proof works for $p=2, 3$ because we can show the quantity is bounded away from zero under the two constraints on the $a_j$-s that we have. However, this does not seem to hold for $p>3$, and hence our proof approach here does not work straightforwardly for such $p$. In particular, for any even-valued $p > 3$, note that one of the powers of $\omega$ is $-1$; hence, if we make the $a_j$-s corresponding to the $-1$ root have the same value as $a_0$, and give the rest of the $a_j$-s equal values, then $\sum_{j=0}^{p-1} a_j \omega^j = 0$. As for odd-valued $p > 3$, we note that for $p=5$, viewing the terms $a_j \omega^j$ as vectors in the complex plane lets us see geometrically that (as long as $\eps$ is not too large) there is also a feasible choice of $a_j$ values such that $\sum_{j=0}^{p-1} a_j \omega^j = 0$, and the construction should also generalize to all other odd-valued $p>3$. Explicitly\footnote{The geometric intuition here is that $\sum_{j=0}^{p-1} a_j \omega^j$ is the point in the complex plane given by head-to-tail summation of the vectors $a_j \omega^j$, which basically form a ``non-closed pentagon'' with side lengths $a_j$. The specified $a_j$ values yield an endpoint of this vector sum that (due to the symmetry in the $a_1,a_2,a_3,a_4$ choices) always lies on the real axis, and moves from the positive half to the negative half as $t$ ranges from $0$ to $1-a_0$.}: set $a_0 = \frac{1}{5} + \frac{\eps}{2}$, and let $a_1=a_4=\frac{1-a_0-t}{4}$, $a_2=a_3=\frac{1-a_0+t}{4}$ for a parameter $t\in[0,1-a_0]$ whose exact value we shall choose later.
Observe that since $a_1=a_4$ and $a_2=a_3$, these values always sum to a value with zero imaginary part, i.e.~$\Im\left(\sum_{j=0}^{p-1} a_j \omega^j\right) = 0$. Furthermore, if $t=0$ (i.e.~$a_1=a_2=a_3=a_4=\frac{1-a_0}{4}<\frac{1}{5}$) then $\Re\left(\sum_{j=0}^{p-1} a_j \omega^j\right) > 0$, whereas if $t=1-a_0$ (i.e.~$a_1=a_4=0$ and $a_2=a_3=\frac{1-a_0}{2}$) then $\Re\left(\sum_{j=0}^{p-1} a_j \omega^j\right) < 0$ (as long as $\eps$ is not too large). Hence by continuity in $t$, there exists some $t\in[0,1-a_0]$ such that $\Re\left(\sum_{j=0}^{p-1} a_j \omega^j\right) = 0$ exactly, yielding the desired counterexample.\footnote{We remark that this argument basically relies on having (at least) one ``free parameter'' $t$ to adjust the vector sum endpoint; therefore, it should generalize to larger odd-valued $p$ as well, but an analogous construction is not available in the $p=3$ case because if e.g.~we were to try setting $a_0 = \frac{1}{3} + \frac{\eps}{2}$ and $a_1=a_2$, there are no ``degrees of freedom'' left after accounting for normalization.} 
Still, we currently do not know whether this difficulty for the $p>3$ case is simply a limitation of this proof approach, or whether Lemma~\ref{lem:bittrit-PA} fundamentally does not hold in that case.

Additionally, we note that it is fine for the purposes of the proof if the initial upper bound on Bob and Charlie being able to guess $X^\B$ and $X^\C$ was obtained in the presence of some leakage between Bob and Charlie. The bound on the probability of Bob and Charlie guessing $X^\B\cdot\rVec^\B$ and $X^\C\cdot\rVec^\C$ holds in the presence of the same amount of leakage about $X^\B$ and $X^\C$, as long as additional leakage about $\rVec^\B$ and $\rVec^\C$ does not happen. Note however that the above proof really does not work if $\rVec^\B$ is fully leaked to Charlie and $\rVec^\C$ is fully leaked to Bob. This is because, in the contradiction step of the proof, we would then have to assume that $U^\B_{\IP}$ also takes a copy of $\rVec^\C$ as input and $U^\C_{\IP}$ takes a copy of $R^\B$ as input. But to run the Bernstein-Vazirani algorithm, Bob needs to have a uniform superposition of over $\rVec^\B$, which is uncorrelated with everything else, and Charlie needs to have a uniform superposition over $\rVec^\C$ which is uncorrelated with everything else. Of course, for similar reasons, the proof does not work if we try to take inner product with the same $\rVec$ for both Bob and Charlie.

However, for a bounded amount of leakage about $X^\B, X^\C$ or $\rVec^\B, \rVec^\C$, similar to Theorem \ref{thm:unc_leak}, we can still come up with a new strategy for guessing $X^\B\cdot \rVec^\B$ and $X^\C\cdot\rVec^\C$ without leakage, given a strategy to guess them with leakage. This allows us to convert an upper bound on the ``success probability'' for the latter to an upper bound for the former. The analysis needs to be more fine-grained here however, since the kind of bound we are hoping to get with leakage is $\frac{1}{p} + \negl(l)$, instead of $\negl(l)$, and we need to make use of the fact that the final bound in Lemma \ref{lem:bittrit-PA} is an upper bound on the probability of Bob and Charlie both guessing right \emph{or} both guessing wrong (that is, for $p=2$, as described previously; the $p=3$ case is similar but involves the $j+k=0\mod p$ condition more directly). We do this analysis in the following corollary.
\begin{cor}\label{cor:PA-leak}
In the same setting as Lemma \ref{lem:bittrit-PA}, if the bound $\delta$ for Bob and Charlie guessing $X^\B$ and $X^\C$ holds without any leakage, then with $\nu l$ bits of leakage, we have,
\[ \Pr\left[\bigvee_{\substack{j, k: \\ j+k=0\mod p}}(G^\B(\rVec^\B) = X^\B\cdot \rVec^\B+j)\land(G^\C(\rVec^\C) = X^\C\cdot \rVec^\C+k)\right] \leq \frac{1}{p} + 2^{\nu l}\cdot O(\delta^{1/3}).\]
\end{cor}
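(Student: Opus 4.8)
The plan is to reuse the shared-randomness simulation of leakage from the proof of Theorem~\ref{thm:unc_leak}, but to run it in a more fine-grained way so that it preserves the trivial baseline $\tfrac{1}{p}$ instead of amplifying it. Working in $\bbF_p$, set $E^\B = G^\B(\rVec^\B) - X^\B\cdot\rVec^\B$ and $E^\C = G^\C(\rVec^\C) - X^\C\cdot\rVec^\C$, so that the event in the statement is exactly $E^\B + E^\C = 0$ in $\bbF_p$; write $P$ for the probability this event is achieved by some fixed strategy that uses $\nu l$ bits of leakage between Bob and Charlie (all occurring before the seeds $\rVec^\B,\rVec^\C$ are supplied, as the corollary requires).

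First I would convert this leaking strategy into a leak-free one exactly as in Theorem~\ref{thm:unc_leak}: Bob and Charlie share $\nu l$ additional random bits split into blocks, each party reads off the block in place of the message it would have received, and each party compares its own outgoing message against the corresponding block, recording a ``failure'' on any mismatch. By the same guess-the-transcript counting as in that proof, the probability that no failure is recorded anywhere equals \emph{exactly} $2^{-\nu l}$, and conditioned on this ``match'' event the joint output distribution is identical to that of the leaking strategy. The one modification I make is on the complementary event: whenever a party has recorded a failure it discards its computation and instead outputs a fresh, uniformly random element of $\bbF_p$, independent of everything else.

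This randomization is exactly what pins the no-match contribution to the baseline. On the no-match event at least one party (say Bob) has recorded a failure and so outputs a uniform independent value; then $E^\B$ is uniform on $\bbF_p$ and independent of $E^\C$, which forces $\Pr[E^\B + E^\C = 0 \mid \text{no match}] = \tfrac{1}{p}$ irrespective of the other party. This is precisely where the relaxed condition $j+k=0$ is essential: for the strict event $j=k=0$, randomizing a single party only guarantees probability $\le \tfrac1p$ (possibly strictly smaller), which would break the argument. Combining the two branches, the leak-free strategy achieves
\[ \Pr\left[E^\B + E^\C = 0\right] = 2^{-\nu l}\, P + \left(1 - 2^{-\nu l}\right)\tfrac{1}{p}. \]

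To finish, I would invoke Lemma~\ref{lem:bittrit-PA} on this leak-free strategy. Its hypothesis still holds with the \emph{same} $\delta$: the shared randomness is independent of $X^\B X^\C$, so it cannot improve the leak-free probability of guessing $X^\B$ and $X^\C$, which remains at most $\delta$. Hence the left-hand side above is at most $\tfrac1p + O(\delta^{1/3})$; rearranging gives $2^{-\nu l}P \le \tfrac1p 2^{-\nu l} + O(\delta^{1/3})$, i.e.~$P \le \tfrac1p + 2^{\nu l}O(\delta^{1/3})$, as claimed. The main obstacle --- and the entire point of the corollary --- is this baseline-preservation step: a naive application of the Theorem~\ref{thm:unc_leak} bound would only yield $P \le 2^{\nu l}(\tfrac1p + O(\delta^{1/3}))$, whose leading term $2^{\nu l}/p$ is vacuous, so the argument must force the no-match branch to contribute exactly $\tfrac1p$, which is what both the one-party randomization and the relaxed event are there to guarantee.
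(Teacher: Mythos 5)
Your proposal is correct and follows essentially the same route as the paper's proof: simulate the $\nu l$ leaked bits with shared randomness exactly as in Theorem~\ref{thm:unc_leak}, output a fresh uniform value upon any recorded failure, observe that the failure branch then contributes exactly $\tfrac{1}{p}$ to the relaxed event (this is precisely why the paper proved Lemma~\ref{lem:bittrit-PA} for the ``both right or both off by complementary shifts'' event), and apply Lemma~\ref{lem:bittrit-PA} to the resulting leak-free strategy before rearranging. Your identity $\Pr[E^\B + E^\C = 0] = 2^{-\nu l}P + (1-2^{-\nu l})\tfrac{1}{p}$ is the same computation the paper performs for $p=2$ (there written as $\tfrac{1}{2} + 2^{-\nu l}\eps$, with the two no-match sub-cases handled separately via the parameter $\beta$); your formulation merely unifies those sub-cases and handles $p=2,3$ simultaneously.
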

\begin{proof}
We shall only present the analysis for $p=2$; the $p=3$ analysis is very similar. As before, if $\nu l$ bits were leaked in the original strategy $\clP$, then to obtain a new strategy without leakage, Bob and Charlie will share $\nu l$ bits of randomness, which they will use to simulate messages of $\clP$ without communicating. At the end, if the randomness is consistent with all their measurement outcomes, they will output according to $\clP$, otherwise they will output uniformly at random (using independent uniform bits). This means there are three different possibilities: Bob and Charlie both output a uniformly at random, one of them outputs uniformly at random and the other outputs according to $\clP$, and both of them output according to $\clP$.

We now compute the probability of Bob and Charlie both being right or being wrong in this new strategy without leakage.
The probability that they both output according to $\clP$ is of course $2^{-\nu l}$, and conditioned on them doing so, the probability of them both being right or both being wrong is the same as in $\clP$, which is, say, $\frac{1}{2} + \eps$. Now suppose the probability that they both output uniformly at random is $q_1$, and the probability that one of them outputs uniformly at random and the other according to $\clP$ is $q_2$, where $q_1 + q_2 = 1 - 2^{-\nu l}$. In the first case, the probability that they both output right or they both output wrong is $\frac{1}{2}\cdot \frac{1}{2} + \frac{1}{2}\cdot\frac{1}{2} = \frac{1}{2}$. In the latter case, let us first focus on the party who outputs according to $\clP$: we don't know the actual value of the probability that their output is correct, but call this value $\beta$. With this (and using the fact that the other party simply produces an independent uniform output in this case), the probability of both the parties outputting right or both of them outputting wrong in this case is $\frac{1}{2}\cdot \beta + \frac{1}{2}\cdot\left(1-\beta\right) = \frac{1}{2}$. 
Thus, their overall probability of both outputting right or both outputting wrong in this strategy without leakage is
\[ q_1\cdot \frac{1}{2} + q_2\cdot\frac{1}{2} + 2^{-\nu l}\cdot\left(\frac{1}{2}+\eps\right) = \frac{1}{2} + 2^{-\nu l}\cdot\eps. \]
Since we know from Lemma \ref{lem:bittrit-PA} that this probability is at most $\frac{1}{2} + O(\delta^{1/3})$, this gives us $\eps \leq 2^{\nu l}\cdot O(\delta^{1/3})$.
\end{proof}

\subsection{$(0,\negl(\lambda))$-anti-piracy security for bits and trits}

With the above result, we can now prove fairly straightforwardly that the modified protocol achieves $(0,\negl(\lambda))$-anti-piracy security under Definition~\ref{def:pir-sec-ind} (independent random challenge plaintexts), thus achieving Theorem \ref{thm:main-dec-bit} (the fact that it also satisfies completeness and indistinguishable security is easily seen from similar arguments as in Section~\ref{sec:SDECM-proofs}). 
\begin{lemma}
Let the total number of leaked bits between the client and the receiver during $\Enc(1^\lambda, M)$, and between Bob and Charlie during the pirating attack, be $\nu l$. 
Then with the parameter choices as specified in \eqref{eq:par-cons-1}-\eqref{eq:par-cons-2}, Scheme~\ref{prot:DI-SDECM} with the modifications described in this section is $(0,\negl(\lambda))$-anti-piracy-secure under Definition~\ref{def:pir-sec-ind}, as long as $\nu$ satisfies
\begin{equation}
3\nu < \kappa\dga^3\alpha^4 - 2 (1-\gamma)(1-\alpha)h_2(\qhon).
\end{equation}
\end{lemma}

\begin{proof}
The structure of this proof is basically the same as our analysis in Section~\ref{subsec:secur-proof} for DI-VKECM, except we invoke Corollary~\ref{cor:PA-leak} to ``amplify'' the bounds on the guessing probabilities, and also we make some modifications to fit the pirating-attack scenario (analogous to the modifications described in the Lemma~\ref{lemma:DI-SDECM} proof).

Consider a pirating attack as described in Definition~\ref{def:pir-sec-ind}.
Let the number of bits leaked during the key generation phase be $\nu_1l$, and the number of bits leaked afterwards be $\nu_2l$, with $\nu_1+\nu_2=\nu$.
In order to properly apply Corollary~\ref{cor:PA-leak} later, we suppose without loss of generality that all the $\nu_2l$ bits leaked in the second part occur after Bob and Charlie get their ciphertexts (as remarked in the Theorem~\ref{thm:unc_leak} proof, any leakage in the second part before they get their ciphertexts can be handled via the same form of argument as for the first $\nu_1l$ bits, or alternatively by simply absorbing such leakage into the dishonest receiver's actions before distributing the state between Bob and Charlie).
Let the ciphertext value that Bob receives during the encryption phase be denoted as $(P^\B, \syn(\eA^\B), \eX^\B, \rVec^\B)$ where $P^\B = M_1 \oplus \otp \oplus \eAB\cdot\rVec^\B$; analogously, let Charlie's values be denoted as $(P^\C, \syn(\eA^\C), \eX^\C, \rVec^\C)$ where $P^\C = M_2 \oplus \otp \oplus \eAC\cdot\rVec^\C$.

To begin, let us first prove a rough analogue of Lemma~\ref{lem:rawkeyguess}; specifically, consider a scenario that is defined the same way as a pirating attack on this protocol, except with the following modifications:
\begin{itemize}
\item Bob and Charlie's goal is instead to guess their respective ``raw key'' strings $\eAB$ and $\eAC$ (rather than their messages).
\item During the encryption phase, we {only} give them the values $(\syn(\eA^\B), \eX^\B)$ and $(\syn(\eA^\C), \eX^\C)$ in their respective ciphertexts (i.e.~$(P^\B,\rVec^\B)$ and $(P^\C,\rVec^\C)$ are omitted), and also require them to produce their guesses ``immediately'' after receiving these values, \emph{without} performing the subsequent $\nu_2l$ bits of leakage.
\end{itemize}
We aim to upper bound the probability that their guesses are correct in this scenario.
To do so, we note that for the special case $\nu_1=0$ (i.e.~no bits are leaked up until their guesses for $\eAB,\eAC$ are produced), by the same arguments as the Lemma~\ref{lem:rawkeyguess} proof, the probability that $F=\text{\cmark}$ and they both guess correctly is at most $2^{-\kappa\dga^3\alpha^4l+2\xi (1-\gamma)(1-\alpha)h_2(\qhon)l}$, substituting in the formula~\eqref{eq:par-cons-2} for the syndrome length $\lsyn$. (Here we have invoked the fact that in this scenario Bob and Charlie do not get $P^\B,\rVec^\B$ and $P^\C,\rVec^\C$, in which case the register $\otp$ they received in the key generation phase is something they could have generated on their own using shared randomness, so the only ``useful'' classical information they receive to produce their guesses are $(\syn(\eA^\B), \eX^\B)$ and $(\syn(\eA^\C), \eX^\C)$.) Now, to handle the general case where $\nu_1>0$, we can again use the same arguments as in the Theorem~\ref{thm:unc_leak} proof to conclude that this probability is instead upper bounded by $2^{-\kappa\dga^3\alpha^4l+2\xi (1-\gamma)(1-\alpha)h_2(\qhon)l + \nu_1 l}$.
This means we have that the probability they both guess correctly \emph{conditioned} on $F=\text{\cmark}$ satisfies 
\begin{align}
\Pr[(\eAB = \gB) \land (\eAC = \gC) | F=\text{\cmark}] &= \frac{1}{\Pr[F=\text{\cmark}]}\Pr[(F=\text{\cmark}) \land (\eAB = \gB) \land (\eAC = \gC)] \nonumber\\
&\leq \frac{2^{-\kappa\dga^3\alpha^4l+2\xi (1-\gamma)(1-\alpha)h_2(\qhon)l + \nu_1 l}}{\Pr[F=\text{\cmark}]}. \label{eq:rawkeyguesscond}
\end{align}

We now use the above bound to study another modified pirating attack scenario, where now the only modification from an actual pirating attack is that Bob and Charlie try to respectively produce guesses for $\eAB\cdot\rVec^\B$ and $\eAC\cdot\rVec^\C$ instead of their messages (i.e.~here we allow them to have access to their full ciphertexts and leakage bits). First note that  because the messages $M_1,M_2$ were initially chosen uniformly and independently of everything else (by Definition~\ref{def:pir-sec-ind} for pirating attacks), and play no further role in this modified scenario after being used to produce $P^\B$ and $P^\C$, we can invoke Fact~\ref{fc:otp} to say that $P^\B$ could have been locally generated by Bob, and analogously $P^\C$ by Charlie, which means we can omit them from consideration. (This is the same argument as in the Lemma~\ref{lemma:DI-SDECM} proof with respect to Definition~\ref{def:pir-sec-ind}.) With this, we are precisely in a scenario where Lemma~\ref{lem:bittrit-PA} applies, identifying $X^\B,X^\C$ in the lemma statement with $\eAB,\eAC$ respectively in this scenario: the above bound~\eqref{eq:rawkeyguesscond} upper-bounds the probability that (for the state conditioned on $F=\text{\cmark}$) Bob and Charlie can guess the ``raw key'' values $\eAB,\eAC$, and we are interested in bounding the probability that they can guess $\eAB\cdot\rVec^\B$ and $\eAC\cdot\rVec^\C$ after receiving $\rVec^\B,\rVec^\C$ and having $\nu_2l$ bits of leakage. Explicitly, Lemma~\ref{lem:bittrit-PA} tells us that this probability (for the conditional state) is at most
\[
\frac{1}{p} + \frac{2^{\nu_2 \lambda}}{\Pr[F=\text{\cmark}]^{\frac{1}{3}}}O\left(2^{-\frac{1}{3}(\kappa\dga^3\alpha^4 - 2\xi (1-\gamma)(1-\alpha)h_2(\qhon) + \nu_1) \lambda}\right),
\]
substituting $l=\lambda$.

Finally, we turn to the actual pirating attack scenario. We make the simple observation that while Bob's goal here is to guess his message $M_1$, the fact that he has the values $\otp$ and $P^\B = M_1 \oplus \otp \oplus \eAB\cdot\rVec^\B$ means that this is equivalent to producing a guess for $\eAB\cdot\rVec^\B$; an analogous statement holds for Charlie. 
Hence the above bound also holds for the probability that they can both guess their messages (conditioned on $F=\text{\cmark}$), which lets us write
\begin{align*}
& \Pr[(F=\text{\cmark}) \land (M_1=M^\B)\land(M_2=M^\C)] \\
= & \Pr[F=\text{\cmark}] \Pr[(M_1=M^\B)\land(M_2=M^\C) | F=\text{\cmark}] \\
\leq & \Pr[F=\text{\cmark}] \left(\frac{1}{p} + \frac{2^{\nu_2 \lambda}}{\Pr[F=\text{\cmark}]^{\frac{1}{3}}}O\left(2^{-\frac{1}{3}(\kappa\dga^3\alpha^4 - 2\xi (1-\gamma)(1-\alpha)h_2(\qhon) + \nu_1) \lambda}\right)\right) \\
\leq & \frac{1}{p} + O\left(2^{-\frac{1}{3}(\kappa\dga^3\alpha^4 - 2\xi(1-\gamma)(1-\alpha)h_2(\qhon) - 3\nu) \lambda}\right).
\end{align*}
Given the condition~\eqref{eq:0pirmaxleak}, the $O\left(2^{-\frac{1}{3}(\kappa\dga^3\alpha^4 - 2\xi(1-\gamma)(1-\alpha)h_2(\qhon) - 3\nu) \lambda}\right)$ term is a negligible function of $\lambda$, yielding the desired result.
\end{proof}

\begin{remark}\label{remark:OTPreuse}
Unlike the previous section, we were not able to prove here that this protocol satisfies $(0,\negl(\lambda))$-anti-piracy security in the sense of Definition~\ref{def:pir-sec}, due to a subtle issue in handling the information available to Bob and Charlie. Specifically, under that definition, the values $P^\B$ and $P^\C$ in the above analysis would instead be $P^\B = M \oplus \otp \oplus \eAB\cdot\rVec^\B$ and $P^\C = M \oplus \otp \oplus \eAB\cdot\rVec^\C$ where the message $M$ is the same in both terms. Due to this, we no longer have the property that $P^\B P^\C$ can be locally generated by Bob and Charlie without access to the client's registers, and the above argument no longer straightforwardly works.\footnote{Qualitatively, one way to understand this issue is that in our above proof, we have basically treated the messages $M_1$ and $M_2$ as independent one-time-pads on the quantities $\otp \oplus \eAB\cdot\rVec^\B$ and $\otp \oplus \eAB\cdot\rVec^\C$, but attempting an analogous argument under Definition~\ref{def:pir-sec} with $M$ as the ``one-time-pad'' runs into the issue that it is reused across the two terms, which disrupts the usual security properties of one-time-padding.} A similar obstacle is encountered when trying to obtain a DI-VKECM protocol with $(0,\negl(\lambda))$-uncloneability via this approach.\footnote{In the context of the previous protocols without the ``extraction'' step, our security proofs overcame this issue via the bound~\eqref{eq:pguessDRDR}, in which we essentially exploited the fact that the dishonest parties' goal in the security definition is a somewhat ``stricter'' win condition than the parallel-repeated game we actually bound the winning probability of. This approach does not seem to straightforwardly work after the extraction step has taken place.}

Still, we note that it does seem possible that the protocol may in fact satisfy Definition~\ref{def:pir-sec}; it is just that the above proof technique does not suffice to show this. If a proof of this were to be found, it seems likely that it would also yield an uncloneable encryption protocol with $(0,\negl(\lambda))$-uncloneability for single bits or trits. (A property that may be convenient in a prospective proof is the fact that e.g.~for the bit-valued case (i.e.~$\clM=\bbF_2$), the values $P^\B P^\C$ can be equivalently viewed as a pair of bits uniformly distributed across the two values satisfying $P^\B \oplus P^\C = \eAB\cdot\rVec^\B \oplus \eAB\cdot\rVec^\C$. Put another way, in some sense they encode the XOR of the ``secret'' values $\eAB\cdot\rVec^\B$ and $\eAB\cdot\rVec^\C$, but in a ``distributed'' way across Bob and Charlie rather than being available to either of them locally. If it could be shown that supplying Bob and Charlie with these values does not increase their joint probability of guessing the message $M$, then it would be sufficient to obtain the desired result.)
\end{remark}

\section{Parallel repetition of the cloning game}\label{sec:parrep}
In this section, we prove the following theorem.
\begin{theorem}\label{thm:parrep}
Let $G_\alpha$ be a game as described in Section \ref{subsec:game-def}, satisfying properties \ref{prop:dist-1}-\ref{prop:dist-2}, and with $\omega^*(G_\alpha)=1-\eps$. Let $\clA, \clS, \clB, \clC$ be the output sets of $G_\alpha$. Then for $t = (1-\eps + \eta)l$, the parallel-repeated $G_\alpha$ satisfies
\begin{align*}
\omega^*(G^l_\alpha) & = \left(1-\frac{\eps}{2}\right)^{\Omega\left(\frac{\eps^2\alpha^4l}{\log(|\clA|\cdot|\clS|\cdot|\clB|\cdot|\clC|)}\right)} \\
\omega^*(G^{t/l}_\alpha) & = \left(1-\frac{\eta}{2}\right)^{\Omega\left(\frac{\eta^2\alpha^4l}{\log(|\clA|\cdot|\clS|\cdot|\clB|\cdot|\clC|)}\right)}.
\end{align*}
\end{theorem}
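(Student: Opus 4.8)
The plan is to use the information-theoretic framework for parallel repetition of Raz and Holenstein \cite{Raz95, Hol09}, specialized to the two-round four-player structure of $G_\alpha$ as sketched in Section~\ref{sec:overview}, building on the two-round analysis of \cite{KT20} with extra steps to handle the split of Barlie into Bob and Charlie. Both displayed bounds follow from the same argument: I fix an arbitrary strategy for $G^l_\alpha$, let $\clE$ be the event that the win condition~\eqref{eq:wincondfull} holds on every coordinate of a chosen set $C\subseteq[l]$, and argue by contradiction that $\Pr[\clE]$ is exponentially small. The full-product bound uses the gap $\eps = 1-\omega^*(G_\alpha)$ per uncovered coordinate; the threshold bound $\omega^*(G^{t/l}_\alpha)$ with $t=(1-\eps+\eta)l$ is obtained by the identical reasoning with $\eta$ in place of $\eps$, since after conditioning on a set of wins the surplus fraction $\eta = t/l - (1-\eps)$ is what must be extracted at a fresh coordinate $i\in\oC=[l]\setminus C$.

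First I would set up the conditioned state. Let $\ket{\vph}$ be the purified global state of the $l$-fold strategy, carrying the input registers $X_{[l]}U_{[l]}Y_{[l]}Z_{[l]}$, the output registers, and the players' quantum registers, all conditioned on $\clE$; write $\ket{\vph}_{x_iu_iy_iz_i}$ for the state with coordinate-$i$ inputs fixed and $\ket{\vph}_{\bot\bot}$ for the superposition over $x_iu_i$ with $y_i=z_i=\bot$. The central estimate is a mutual-information bound: since $-\log\Pr[\clE]$ is at most linear in $l$ and $\sfI$ decomposes via the chain rule, there are many coordinates $i$ at which the inputs $X_iU_iY_iZ_i$ carry small conditional mutual information with the remaining inputs and side-information under $\clE$. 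Here I would crucially use the anchoring structure of property~\ref{prop:dist-2}: introducing the auxiliary variables $D_iF_i$ (under which $XUYZ$ is a product distribution, as noted at the end of Section~\ref{subsec:game-def}) allows the single-copy inputs to be sampled with only a polynomial-in-$\alpha$ penalty, which I expect to account for the $\alpha^4$ factor in the exponent. Averaging then selects a good coordinate $i$ together with good fixings of the auxiliary data for which all relevant states are pairwise close, passing from relative entropy to $\ell_1$/Bures distance via Pinsker (Fact~\ref{pinsker}) and Fact~\ref{fc:cond-prob}.

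Next I would build the single-copy strategy in two rounds. In the first round, because the first-round outputs are produced without access to $y_iz_i$ and conditioning on $\clE$ perturbs the distribution only slightly, the first-round output statistics agree in $\ket{\vph}_{x_iu_i\bot\bot}$ and $\ket{\vph}_{x_iu_iy_iz_i}$. Since $\sfP_{XU|\bot\bot}$ is product (property~\ref{prop:dist-1}), an argument following \cite{JPY14} combined with Uhlmann's theorem (Fact~\ref{fc:uhlmann}) yields local unitaries $V^\A_{x_i}$ and $V^{\B\C}_{u_i}$ carrying a fixed shared state $\ket{\vph}_{\bot\bot}$ close to $\ket{\vph}_{x_iu_i\bot\bot}$; Alice and Barlie apply these and measure the $i$-th first-round output registers. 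In the second round Barlie distributes his part of the (now partially measured) state to Bob and Charlie together with $u_i$, and because $Y_i,Z_i$ are anchored with respect to $X_i$ and to each other, the same type of argument gives unitaries $V^\B_{u_iy_i}$ and $V^\C_{u_iz_i}$ mapping $\ket{\vph}_{x_iu_i\bot\bot}$ close to $\ket{\vph}_{x_iu_iy_iz_i}$. Accumulating the resulting Bures/$\ell_1$ errors via Fact~\ref{fc:fvdg} and Fact~\ref{fc:chan-l1} shows the output distribution of this single-copy strategy lies within the required small distance of the conditional output distribution at coordinate $i$, which by assumption wins with probability too close to $1$, contradicting $\omega^*(G_\alpha)=1-\eps$ and establishing the bound.

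I expect the main obstacle to be the second-round analysis, where Barlie's register is divided between Bob and Charlie. Unlike the single-receiver two-round game of \cite{KT20}, the second-round corrections $V^\B_{u_iy_i}$ and $V^\C_{u_iz_i}$ must act on disjoint parts of Barlie's distributed state while leaving the already-measured first-round output registers untouched, so that the joint state conditioned on any fixed first-round output value is mapped correctly on average. Verifying that the anchoring simultaneously supplies product structure for $Y_i\mid X_i$ and $Z_i\mid X_i$ (so the two unitaries can be constructed independently), and controlling the compounded error across the two rounds --- in particular that the first-round measurement effectively commutes with the second-round corrections up to negligible loss --- is the delicate point requiring the extra care flagged in Section~\ref{sec:overview} beyond the two-player argument.
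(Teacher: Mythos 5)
Your outline for the full-product bound matches the paper's proof essentially step for step: condition on the winning event, use the dependency-breaking variables $D_iF_i$ together with Holenstein-style conditioning and Quantum Raz arguments (Facts~\ref{fc:hol-cond}, \ref{fc:jpy-cond}, \ref{fc:qraz}) to find coordinates where the conditioned state is close to a suitable product form, then construct first-round unitaries $V^\A_{i,x_ir_i}, V^{\B\C}_{i,u_ir_i}$ and second-round unitaries $V^\B_{i,u_iy_ir_i}, V^\C_{i,u_iz_ir_i}$ via Uhlmann's theorem (Lemma~\ref{lem:parrep-conds}), with the anchoring lemma (Fact~\ref{fc:anchor-t*}) converting average-case closeness into closeness conditioned on $\bot$ inputs at a cost of $1/\alpha^2$ per round --- consistent with your prediction of where the $\alpha^4$ arises (the per-coordinate loss $O(\sqrt{\delta_T}/\alpha^2)$ must be $\lesssim\eps$, forcing $\delta_T \lesssim \eps^2\alpha^4$). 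The ``delicate point'' you flag --- that the second-round corrections must leave the first-round outputs untouched --- is handled in the paper exactly as you anticipate: the outputs live in copied classical registers, so the measurement channel $\clO_{Z_iA_iS_i}$ commutes with the control-unitaries, letting item~(iv) of Lemma~\ref{lem:parrep-conds} hold on average over fixed first-round outputs $a_is_i$.

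The one genuine gap is your treatment of the threshold bound. It is \emph{not} obtained ``by identical reasoning with $\eta$ in place of $\eps$'': the per-coordinate statement (Lemma~\ref{lem:parrep-ind}) always bounds the win probability at a fresh coordinate by $1-\eps+O(\sqrt{\delta_T}/\alpha^2)$ --- the gap is the game's value gap $\eps$, never $\eta$ --- and the event of winning at least $t$ of $l$ coordinates is a union over exponentially many subsets, so conditioning on a single set $C$ and inducting does not directly bound $\omega^*(G^{t/l}_\alpha)$. The paper closes this with a separate counting step: the inductive argument is run for uniformly random subsets $T$ of a carefully chosen size $n$, giving $\sum_{|T|=n}\binom{l}{n}^{-1}\Pr[\clE_T] \leq (1-\eps+\eta/4)^n$ after setting $\gamma = \alpha^2\eta/(4K)$ (with $K$ the constant of Lemma~\ref{lem:parrep-ind}); then winning $t=(1-\eps+\eta)l$ coordinates implies winning a uniformly random size-$n$ subset of the won set, which costs a binomial ratio $\binom{l}{n}\big/\binom{(1-\eps+\eta)l}{n} \leq (1-\eps+3\eta/4)^{-n}$, and the quotient $(1-\eps+\eta/4)/(1-\eps+3\eta/4) \leq 1-\eta/2$ yields the second display (this is \eqref{eq:thres-bound}--\eqref{eq:thres-bound2} in the paper). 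Without this random-subset reduction, your plan as written proves only the first bound of the theorem.
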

\noindent Theorem~\ref{thm:cl-parrep} is a simplified version of the second bound after applying the inequality $1-\kappa \leq 2^{-\kappa}$.

We shall use the following results in our proof.
\begin{fact}[\cite{Hol09}]\label{fc:hol-cond}
Let $\sfP_{QM_1\ldots M_lN} = \sfP_Q\sfP_{M_1|Q}\sfP_{M_2|Q}\ldots\sfP_{M_l|Q}\sfP_{N|QM_1\ldots M_l}$ be a probability distribution over $\mathcal{Q}\times\mathcal{M}^l\times\mathcal{N}$, and let $\clE$ be any event. Then,
\[ \sum_{i=1}^l\Vert\sfP_{QM_iN|\clE}-\sfP_{QN|\clE}\sfP_{M_i|Q}\Vert_1 \leq \sqrt{l\left(\log(|\mathcal{N}|) + \log\left(\frac{1}{\Pr[\clE]}\right)\right)}.\]
\end{fact}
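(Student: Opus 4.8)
The plan is to prove Theorem~\ref{thm:parrep} via the information-theoretic framework for parallel repetition of Raz--Holenstein, in the anchored-game adaptation of~\cite{KT20,BVY17}, handling the two-round four-player structure of $G_\alpha$ with extra care for the split of Barlie into Bob and Charlie. Both displayed bounds flow from a single ``find a good coordinate'' step, so I would establish that step first and then assemble the two conclusions at the end. Fix any strategy for $G_\alpha^l$. For a subset $C\subseteq[l]$, let $\clE$ denote the event that the win condition holds on every coordinate of $C$, and condition the global input--output--entanglement state on $\clE$. Writing the $i$-th coordinate inputs as $x_i,u_i,y_i,z_i$, define the (subnormalized) conditional state $\ket{\vph}_{x_iu_iy_iz_i}$ whose $i$-th output registers, when measured, reproduce the output distribution given $\clE$ and those inputs. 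The core claim is: if $\Pr[\clE]\geq 2^{-\Omega(\eps^2\alpha^4 l/\log(|\clA||\clS||\clB||\clC|))}$, then some coordinate $i\in\oC=[l]\setminus C$ satisfies $\Pr[W_i\mid\clE]\leq 1-\eps/2$, where $W_i$ is the event of winning coordinate $i$. The argument is by contradiction: a coordinate with $\Pr[W_i\mid\clE]>1-\eps/2$ would yield a single-instance strategy for $G_\alpha$ winning with probability $>1-\eps=\omega^*(G_\alpha)$, which is impossible.

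To build that single-instance strategy I would mirror the two rounds of $G_\alpha$. First, applying Holenstein's conditioning lemma (Fact~\ref{fc:hol-cond}) to the classical input--output variables and averaging over coordinates, I would locate an $i$ for which the round-one output distribution in $\ket{\vph}_{x_iu_i\bot\bot}$ is close to that in $\ket{\vph}_{x_iu_iy_iz_i}$ (these were identical before conditioning, since round-one outputs ignore $y_i,z_i$, and conditioning distorts them only mildly when $\Pr[\clE]$ is not too small). Because property~\ref{prop:dist-1} and the anchoring give $\sfP_{X_iU_i\mid Y_i=\bot,Z_i=\bot}=\sfP_{X_i}\sfP_{U_i}$, a JPY14-style local state-transformation (quantum correlated sampling) argument produces unitaries $V^\A_{x_i}$ and $V^{\B\C}_{u_i}$ which, applied by Alice and Barlie to a fixed shared state $\ket{\vph}_{\bot\bot}$, approximate $\ket{\vph}_{x_iu_i\bot\bot}$ on average; Alice and Barlie then measure their round-one output registers. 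In round two, Barlie distributes his registers (with round-one outputs already measured) to Bob and Charlie together with $u_i$; since by property~\ref{prop:dist-2} the inputs $Y_i,Z_i$ are each $\bot$ independently with probability $\alpha$ and otherwise equal $X_i$, the same transformation argument --- now exploiting the anchoring to supply product structure with constant probability --- yields unitaries $V^\B_{u_iy_i},V^\C_{u_iz_i}$ carrying $\ket{\vph}_{x_iu_i\bot\bot}$ close to $\ket{\vph}_{x_iu_iy_iz_i}$. Crucially these act trivially on the round-one output registers, so they commute with that measurement and transport the state correctly even conditioned on fixed round-one outputs; Bob and Charlie then measure their round-two output registers. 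Summing the trace-distance errors shows the single-copy win probability is within $O(1/\alpha^2)$ times the Holenstein error of $\Pr[W_i\mid\clE]$, so the stated threshold on $\Pr[\clE]$ makes the total error fall below $\eps/2$.

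Given the good-coordinate claim, I would obtain $\omega^*(G_\alpha^l)$ by building $C$ greedily from $\emptyset$, repeatedly appending a good coordinate $i$ so that $\Pr[\clE_{C\cup\{i\}}]=\Pr[\clE_C]\,\Pr[W_i\mid\clE_C]\leq(1-\eps/2)\Pr[\clE_C]$, continuing while $\Pr[\clE_C]$ exceeds the threshold; since $\Pr[\text{win all}]\leq\Pr[\clE_C]$ at every stage, this yields the claimed decay with base $1-\eps/2$. The bound on $\omega^*(G_\alpha^{t/l})$ for $t=(1-\eps+\eta)l$ follows by the analogous concentration argument in which the relevant gap is the excess $\eta$ above the single-copy value rather than $\eps$: conditioning on winning all of a random small subset $C$ leaves a random $i\notin C$ won with probability about $t/l=1-\eps+\eta$, and matching this against the single-copy ceiling $\Pr[W_i\mid\clE]\leq 1-\eps+(\text{Holenstein error})$ forces $\log(1/\Pr[\clE])/l=\Omega(\eta)$, giving the $\eta^2$-exponent. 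I expect the main obstacle to be the two quantum local-state-transformation steps: making the round-two transformation compatible with the already-performed round-one output measurement, and controlling how $\alpha$ enters. Each transformation amplifies the Holenstein $\ell_1$-error by a factor scaling like $1/\alpha$, once per round, which is precisely what produces the $\alpha^4$ in the exponent, and the delicate part is verifying that these errors compose additively --- rather than compounding --- across the two rounds and the Bob/Charlie split.
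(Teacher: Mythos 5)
Your proposal does not prove the assigned statement. The statement in question is Fact~\ref{fc:hol-cond} itself --- Holenstein's purely classical conditioning lemma about a distribution $\sfP_{QM_1\ldots M_lN}$ in which the coordinates $M_1,\ldots,M_l$ are independent given $Q$. What you have written instead is a sketch of the parallel repetition theorem (Theorem~\ref{thm:parrep}), and your sketch explicitly \emph{invokes} Fact~\ref{fc:hol-cond} as a tool (``applying Holenstein's conditioning lemma (Fact~\ref{fc:hol-cond})\ldots''). Relative to the assigned task this is circular: the lemma you were asked to establish is assumed as a black box, and nothing in your text engages with the actual content of the fact, namely why conditioning on an event $\clE$ of not-too-small probability only mildly perturbs the per-coordinate marginals away from the product structure $\sfP_{QN|\clE}\sfP_{M_i|Q}$, on average over $i$. (For what it is worth, the paper does not prove this fact either --- it imports it directly from~[Hol09] --- and your parallel-repetition sketch does track the paper's Section~\ref{sec:parrep} argument, including the greedy coordinate-by-coordinate construction and the role of the anchoring probability $\alpha$; but that is a different theorem from the one posed, and one the paper proves via Lemmas~\ref{lem:parrep-ind} and~\ref{lem:parrep-conds}, not via the content of Fact~\ref{fc:hol-cond}.)

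A genuine proof of the fact involves no games, states, or unitaries; it is a short classical information-theoretic argument whose skeleton is as follows. First, conditioning on an event costs little in divergence: $\sfD(\sfP_{QMN|\clE}\Vert\sfP_{QMN})\leq\log(1/\Pr[\clE])$. Comparing instead against the reference $\sfP_{QM}$ tensored with the uniform distribution on $\mathcal{N}$ adds at most $\log|\mathcal{N}|$, since $\sfP_{N|QM}\leq 1$ pointwise; this is where the $\log|\mathcal{N}|$ in the statement enters, and it is paid exactly once. Second, because the reference $\sfP_{QM}=\sfP_Q\prod_i\sfP_{M_i|Q}$ is product across the $M_i$ given $Q$, a chain-rule/superadditivity argument (the classical antecedent of Raz's lemma, cf.\ Fact~\ref{fc:qraz}) distributes this single divergence budget over all $l$ coordinates: $\sum_{i=1}^l\bbE_{\sfP_{QN|\clE}}\sfD\bigl(\sfP_{M_i|QN,\clE}\Vert\sfP_{M_i|Q}\bigr)\leq\log|\mathcal{N}|+\log(1/\Pr[\clE])$. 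Finally, noting that $\Vert\sfP_{QM_iN|\clE}-\sfP_{QN|\clE}\sfP_{M_i|Q}\Vert_1=\bbE_{\sfP_{QN|\clE}}\Vert\sfP_{M_i|QN,\clE}-\sfP_{M_i|Q}\Vert_1$, one applies Pinsker's inequality (Fact~\ref{pinsker}) pointwise and then Jensen/Cauchy--Schwarz over the $l$ coordinates to convert the summed divergence bound into the stated bound $\sqrt{l(\log|\mathcal{N}|+\log(1/\Pr[\clE]))}$ on the sum of $\ell_1$ distances (with the exact constant as accounted for in~[Hol09]). None of these steps --- the divergence cost of conditioning, the product-reference bookkeeping that charges $\log|\mathcal{N}|$ once, Pinsker, and the $\sqrt{l}$ from Cauchy--Schwarz --- appears in your proposal, so as a proof of the stated fact it is entirely missing.
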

\begin{fact}[\cite{JPY14}, Lemma III.1]\label{fc:jpy-cond}
Suppose $\rho$ and $\sigma$ are CQ states satisfying $\rho = \delta\sigma + (1-\delta)\sigma'$ for some other state $\sigma'$. Suppose $Z$ is a classical register of size $|\clZ|$ in $\rho$ and $\sigma$ such that the distribution on $Z$ in $\sigma$ is $\sfP_Z$, then
\[ \bbE_{\sfP_Z}\sfD(\sigma_z\Vert\rho) \leq \log(1/\delta) + \log|\clZ|.\]
\end{fact}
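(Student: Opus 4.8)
The plan is to prove the bound through a single operator inequality combined with the standard fact that relative entropy is dominated by max-relative entropy. First I would fix the convention that, since $Z$ is classical in $\sigma$, we may write $\sigma = \sum_z \sfP_Z(z)\,\sigma_z$, where $\sigma_z$ is the full state obtained from $\sigma$ by conditioning its classical register $Z$ on the value $z$ (so $\sigma_z$ lives on the same registers as $\rho$, with $Z$ pinned to $z$). The distinct conditional states $\{\sigma_z\}_z$ are supported on mutually orthogonal sectors of the classical register $Z$, which immediately gives, for every $z$, the operator inequality
\[
\sigma \;\geq\; \sfP_Z(z)\,\sigma_z,
\]
since the difference $\sigma - \sfP_Z(z)\,\sigma_z = \sum_{z'\neq z}\sfP_Z(z')\,\sigma_{z'}$ is a sum of positive operators.

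Next I would invoke the hypothesis $\rho = \delta\sigma + (1-\delta)\sigma'$ together with $\sigma'\geq 0$ and $\delta\in[0,1]$ to get $\rho \geq \delta\sigma$, and chain this with the previous inequality to obtain, for every $z$,
\[
\rho \;\geq\; \delta\,\sigma \;\geq\; \delta\,\sfP_Z(z)\,\sigma_z,
\]
equivalently $\sigma_z \leq \big(\delta\,\sfP_Z(z)\big)^{-1}\rho$. This single inequality is the crux of the argument; everything else is bookkeeping. Note in particular that it forces $\supp(\sigma_z)\subseteq\supp(\rho)$, so the relative entropy $\sfD(\sigma_z\Vert\rho)$ is automatically finite and no separate support condition needs to be checked.

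From here the per-$z$ bound is immediate. The inequality $\sigma_z \leq 2^{\mu}\rho$ with $\mu = \log\!\big(1/(\delta\,\sfP_Z(z))\big)$ says exactly that the max-relative entropy $\min\{\mu' : \sigma_z \leq 2^{\mu'}\rho\}$ is at most $\log(1/\delta) + \log\!\big(1/\sfP_Z(z)\big)$, and since relative entropy is always dominated by max-relative entropy (the standard fact $\sfD\leq\sfD_\infty$), I would conclude $\sfD(\sigma_z\Vert\rho) \leq \log(1/\delta) + \log\!\big(1/\sfP_Z(z)\big)$. Finally I would take the expectation over $\sfP_Z$: the first term is constant and the second becomes the Shannon entropy of $Z$ under $\sigma$, giving
\[
\bbE_{\sfP_Z}\sfD(\sigma_z\Vert\rho) \;\leq\; \log(1/\delta) + \sfH(\sfP_Z) \;\leq\; \log(1/\delta) + \log|\clZ|,
\]
using $\sfH(\sfP_Z)\leq\log|\clZ|$.

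I do not anticipate any real obstacle, as this is essentially a one-line operator-inequality argument; the only point requiring a moment's care is the identification of $\sigma_z$ as the full conditional state (so that it is comparable to $\rho$ in the relative entropy) and the orthogonal-sector observation that yields $\sigma \geq \sfP_Z(z)\sigma_z$. If one prefers to avoid citing $\sfD\leq\sfD_\infty$ as a black box, the same per-$z$ bound follows directly: restricting to $\supp(\rho)$ and applying operator monotonicity of the logarithm to $\rho \geq \delta\,\sfP_Z(z)\,\sigma_z$ (with a harmless $\eps$-regularization to handle the possibly-singular $\sigma_z$) gives $\Tr(\sigma_z\log\rho)\geq \log\!\big(\delta\,\sfP_Z(z)\big) + \Tr(\sigma_z\log\sigma_z)$, which substituted into $\sfD(\sigma_z\Vert\rho)=\Tr(\sigma_z\log\sigma_z)-\Tr(\sigma_z\log\rho)$ yields the identical estimate.
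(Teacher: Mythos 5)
Your proof is correct, and it is essentially the argument behind the cited source: the paper itself states Fact~\ref{fc:jpy-cond} without proof (importing it from~\cite{JPY14}), where the proof is exactly your chain $\rho \geq \delta\sigma \geq \delta\,\sfP_Z(z)\,\sigma_z$, giving $\sfD(\sigma_z\Vert\rho) \leq \sfD_\infty(\sigma_z\Vert\rho) \leq \log(1/\delta) + \log(1/\sfP_Z(z))$, followed by averaging over $\sfP_Z$ and bounding $\sfH(\sfP_Z)\leq\log|\clZ|$. Your handling of the support issue and the direct operator-monotonicity alternative are both sound (with the trivial remark that values $z$ with $\sfP_Z(z)=0$ contribute nothing to the expectation), so there is nothing to correct.
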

\begin{fact}[Quantum Raz's Lemma, \cite{BVY17}]\label{fc:qraz}
Let $\rho_{XY}$ and $\sigma_{XY}$ be two CQ states with $X = X_1\ldots X_l$ being classical, and $\sigma$ being product across all registers. Then,
\[ \sum_{i=1}^l\sfI(X_i:Y)_\rho \leq \sfD(\rho_{XY}\Vert \sigma_{XY}).\]
\end{fact}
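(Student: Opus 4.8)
The plan is to prove the inequality purely by chain-rule manipulations of relative entropy and mutual information, with the only genuinely quantum input being strong subadditivity of the von Neumann entropy (equivalently, nonnegativity of conditional mutual information). Throughout, write $\sigma_X := \sigma_{X_1}\otimes\cdots\otimes\sigma_{X_l}$ and $X_{<i} := X_1\cdots X_{i-1}$, so that the product hypothesis reads $\sigma_{XY} = \sigma_X \otimes \sigma_Y$.

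First I would peel off the $Y$-reference. Using the relative-entropy chain rule $\sfD(\rho_{XY}\Vert \sigma_X\otimes\sigma_Y) = \sfD(\rho_{XY}\Vert \sigma_X\otimes\rho_Y) + \sfD(\rho_Y\Vert\sigma_Y)$ and discarding the nonnegative term $\sfD(\rho_Y\Vert\sigma_Y)$, it suffices to show $\sum_i \sfI(X_i:Y)_\rho \leq \sfD(\rho_{XY}\Vert\sigma_X\otimes\rho_Y)$. Applying the same chain rule again, now peeling off $X$ against the reference $\sigma_X$, gives the clean decomposition $\sfD(\rho_{XY}\Vert\sigma_X\otimes\rho_Y) = \sfI(X:Y)_\rho + \sfD(\rho_X\Vert\sigma_X)$, where $\sfI(X:Y)_\rho$ is the full mutual information $\sfI(X_1\cdots X_l : Y)_\rho$. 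Next I would lower-bound the last term by a total-correlation quantity: since the reference is product, $\sfD(\rho_X\Vert\sigma_X) = \sum_i \sfD(\rho_{X_i}\Vert\sigma_{X_i}) + \big(\sum_i\sfH(X_i)_\rho - \sfH(X)_\rho\big)$, and dropping the nonnegative first sum and invoking the entropy chain rule $\sfH(X)_\rho = \sum_i \sfH(X_i|X_{<i})_\rho$ yields $\sfD(\rho_X\Vert\sigma_X) \geq \sum_i \sfI(X_i:X_{<i})_\rho$. Likewise the mutual-information chain rule gives $\sfI(X:Y)_\rho = \sum_i \sfI(X_i:Y|X_{<i})_\rho$. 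So everything reduces to a per-coordinate inequality $\sfI(X_i:Y)_\rho \leq \sfI(X_i:X_{<i})_\rho + \sfI(X_i:Y|X_{<i})_\rho$, summed over $i$.

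The crux is this per-coordinate step, and it is exactly where strong subadditivity enters. I would expand $\sfI(X_i : X_{<i}Y)_\rho$ in two ways via the chain rule: conditioning on $X_{<i}$ first gives $\sfI(X_i:X_{<i})_\rho + \sfI(X_i:Y|X_{<i})_\rho$, while conditioning on $Y$ first gives $\sfI(X_i:Y)_\rho + \sfI(X_i:X_{<i}|Y)_\rho$. Equating the two and using $\sfI(X_i:X_{<i}|Y)_\rho \geq 0$ (strong subadditivity) gives precisely the desired $\sfI(X_i:Y)_\rho \leq \sfI(X_i:X_{<i})_\rho + \sfI(X_i:Y|X_{<i})_\rho$. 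Summing over $i$ and chaining the earlier bounds, $\sum_i \sfI(X_i:Y)_\rho \leq \sum_i\sfI(X_i:X_{<i})_\rho + \sum_i\sfI(X_i:Y|X_{<i})_\rho \leq \sfD(\rho_X\Vert\sigma_X) + \sfI(X:Y)_\rho = \sfD(\rho_{XY}\Vert\sigma_X\otimes\rho_Y) \leq \sfD(\rho_{XY}\Vert\sigma_{XY})$, which is the claim.

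I expect no outright difficulty, only a few bookkeeping points needing care: verifying the two relative-entropy chain-rule identities (each follows by expanding $\Tr(\rho\log(\cdot))$ against the product references and using that $\log$ of a tensor product splits additively across the factors), and checking that the total-correlation term produced by the $\rho_X$-versus-$\sigma_X$ gap is exactly the $\sum_i \sfI(X_i:X_{<i})_\rho$ needed to absorb the per-coordinate slack. Since $X$ is classical, all entropy chain rules and subadditivity hold unconditionally and no regularity assumptions on $Y$ are required; as a sanity check, the argument in fact goes through verbatim for quantum $X$ as well, confirming that nothing classical-specific is being smuggled in.
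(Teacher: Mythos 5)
Your proof is correct, and it is essentially the standard argument: the paper states this as a Fact and cites \cite{BVY17} without reproving it, and the proof there proceeds exactly as yours does --- peel off the references via the relative-entropy chain rule to get $\sfD(\rho_{XY}\Vert\sigma_{XY}) \geq \sfI(X:Y)_\rho + \sfD(\rho_X\Vert\sigma_X) \geq \sum_i \sfI(X_i:Y|X_{<i})_\rho + \sum_i \sfI(X_i:X_{<i})_\rho$, then conclude per coordinate via the two chain-rule expansions of $\sfI(X_i:X_{<i}Y)_\rho$ and strong subadditivity. Your closing observation that the argument never uses classicality of $X$ is also accurate, since every identity and inequality you invoke holds for fully quantum registers.
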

\begin{fact}[\cite{KT23}, Lemma 32]\label{fc:anchor-t*}
Suppose $\sfP_{ST}$ and $\sfP_{S'T'R'}$ are distributions such that 
for some $t^*$, we have $\sfP_{ST}(s,t^*) = \beta\cdot\sfP_S(s)$ for all $s$. If $\norm{\sfP_{ST}-\sfP_{S'T'}}_1 \leq \beta$, then,
\begin{enumerate}[(i)]
\item $ \Vert\sfP_{S'R'|t^*} - \sfP_{S'R'}\Vert_1 \leq \dfrac{2}{\beta}\Vert\sfP_{S'T'R'} - \sfP_{S'R'}\sfP_{T|S}\Vert_1 + \dfrac{5}{\beta}\norm{\sfP_{S'T'} - \sfP_{ST}}_1$;
\vspace{-1cm}
\item $\displaystyle \begin{aligned}
 & \\[0.4cm]
\norm{\sfP_{S'T'R'} - \sfP_{ST}\sfP_{R'|t^*}}_1 & \leq \frac{2}{\beta}\Big(\norm{\sfP_{S'T'R'} - \sfP_{T'R'}\sfP_{S|T}}_1 + \norm{\sfP_{S'T'R'} - \sfP_{S'R'}\sfP_{T|S}}_1\Big)  \\
& \quad + \frac{7}{\beta}\norm{\sfP_{S'T'} - \sfP_{ST}}_1.
\end{aligned}$
\end{enumerate}
\end{fact}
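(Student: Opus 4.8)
The plan is to treat this as a purely classical statement about total-variation distances, where the anchoring hypothesis is used to make certain \emph{conditional} distributions exactly product, after which closeness is transferred through the conditioning bound of Fact~\ref{fc:cond-prob}. First I would unpack the hypothesis: summing $\sfP_{ST}(s,t^*)=\beta\sfP_S(s)$ over $s$ gives $\sfP_T(t^*)=\beta$, and dividing gives $\sfP_{S|T=t^*}=\sfP_S$ and $\sfP_{T|S=s}(t^*)=\beta$ for every $s\in\supp(\sfP_S)$ (adopting the convention $\sfP_{T|S=s}:=\sfP_T$ off the support, which still assigns $t^*$ mass $\beta$). Writing $\epsilon_0:=\Vert\sfP_{S'T'}-\sfP_{ST}\Vert_1\le\beta$, Fact~\ref{fc:l1-dec} yields $\Vert\sfP_S-\sfP_{S'}\Vert_1\le\epsilon_0$ and $\Vert\sfP_T-\sfP_{T'}\Vert_1\le\epsilon_0$, so by Fact~\ref{fc:l1-dist} $\sfP_{T'}(t^*)\ge\beta/2$. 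I would also record the standing reduction that whenever any error term on the right already exceeds $\beta$ the claimed bound is at least $2$ and hence trivially true; thus throughout I may assume all error terms are $<\beta$ and invoke Fact~\ref{fc:cond-prob} freely.

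For part (i), write $e_T:=\Vert\sfP_{S'T'R'}-\sfP_{S'R'}\sfP_{T|S}\Vert_1$ and introduce the reference distribution $Q:=\sfP_{S'R'}\sfP_{T|S}$ on the $(S,T,R)$ space. The anchoring facts give the two exact identities $Q(T=t^*)=\beta$ and $Q_{SR|t^*}=\sfP_{S'R'}$; this is precisely where the convention for $\sfP_{T|S}$ off $\supp(\sfP_S)$ matters, and under a general convention one incurs a correction controlled by the mass $\sfP_{S'}$ places outside $\supp(\sfP_S)$, at most $\Vert\sfP_S-\sfP_{S'}\Vert_1\le\epsilon_0$, which is the origin of the $\epsilon_0$-term in the stated bound. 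Since $\Vert Q-\sfP_{S'T'R'}\Vert_1=e_T$, applying Fact~\ref{fc:cond-prob} with base distribution $Q$ and event $\{T=t^*\}$ (of probability exactly $\beta$) gives $\Vert Q_{|t^*}-\sfP_{S'T'R'|t^*}\Vert_1\le 2e_T/\beta$; marginalizing out the $T$-register — which is deterministically $t^*$ in both conditioned distributions — via Fact~\ref{fc:l1-dec} and substituting $Q_{SR|t^*}=\sfP_{S'R'}$ yields $\Vert\sfP_{S'R'|t^*}-\sfP_{S'R'}\Vert_1\le 2e_T/\beta$, plus the support-correction term.

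For part (ii), the target is the fully product distribution $\sfP_{ST}\sfP_{R'|t^*}$. Writing $e_S:=\Vert\sfP_{S'T'R'}-\sfP_{T'R'}\sfP_{S|T}\Vert_1$, the core is the marginal product statement $\Vert\sfP_{S'R'}-\sfP_S\sfP_{R'|t^*}\Vert_1$, which I would prove by inserting $\sfP_{S'R'|t^*}$: part~(i) controls $\Vert\sfP_{S'R'}-\sfP_{S'R'|t^*}\Vert_1$, while $\Vert\sfP_{S'R'|t^*}-\sfP_S\sfP_{R'|t^*}\Vert_1$ is obtained by conditioning the \emph{other} near-product $Q':=\sfP_{T'R'}\sfP_{S|T}$ on the anchor, where the identity $\sfP_{S|T=t^*}=\sfP_S$ makes $Q'_{SR|t^*}=\sfP_S\sfP_{R'|t^*}$ exactly and Fact~\ref{fc:cond-prob} transfers the $e_S$-closeness. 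Re-attaching the $T$-dependence via $\sfP_{S'R'}\sfP_{T|S}\approx\sfP_S\sfP_{R'|t^*}\sfP_{T|S}=\sfP_{ST}\sfP_{R'|t^*}$ and combining with $\Vert\sfP_{S'T'R'}-\sfP_{S'R'}\sfP_{T|S}\Vert_1=e_T$ completes the chain (attaching a fixed channel cannot increase $\ell_1$ distance, by the data-processing inequality of Fact~\ref{fc:chan-l1}). To obtain exactly the stated coefficients $\tfrac{2}{\beta}(e_S+e_T)$ and $\tfrac{7}{\beta}\epsilon_0$, rather than the slightly larger constants the raw telescope gives, I would split off the message register at the outset, writing $\Vert\sfP_{S'T'R'}-\sfP_{ST}\sfP_{R'|t^*}\Vert_1\le\Vert\sfP_{S'T'R'}-\sfP_{S'T'}\sfP_{R'|t^*}\Vert_1+\epsilon_0$ so that the replacement $\sfP_{S'T'}\to\sfP_{ST}$ costs exactly $\epsilon_0$, and route the remaining error through the conditionals so that each invocation of Fact~\ref{fc:cond-prob} sees an event of probability exactly $\beta$.

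I expect the main obstacle to be the constant-tracking rather than any single conceptual leap. The two near-product hypotheses in part~(ii) speak about \emph{different} conditionals ($\sfP_{S|T}$ versus $\sfP_{T|S}$), and stitching them together through the anchor while keeping the coefficients of $e_S,e_T$ down to exactly $2/\beta$ requires choosing the reference distributions ($Q$ and $Q'$) so that each application of Fact~\ref{fc:cond-prob} conditions on an event of probability exactly $\beta$, not merely $\ge\beta/2$. The secondary nuisance is bookkeeping the $\epsilon_0$-proportional corrections that appear whenever a primed marginal strays outside the support of the corresponding unprimed marginal, or whenever $\sfP_{T'}(t^*)$ must be compared with $\beta$; marshalling all of these into the stated $\tfrac{5}{\beta}$ and $\tfrac{7}{\beta}$ coefficients is the delicate part, though each such contribution is individually dominated by $\epsilon_0$ via Facts~\ref{fc:l1-dec} and~\ref{fc:l1-dist}.
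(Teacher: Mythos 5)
The paper never proves this statement --- it is imported verbatim as a Fact from~\cite{KT20} (Lemma~32) --- so there is no in-paper argument to compare against, and your reconstruction must stand on its own. For part~(i) it does: with your convention $\sfP_{T|S=s}:=\sfP_T$ for $s\notin\supp(\sfP_S)$, the reference $Q=\sfP_{S'R'}\sfP_{T|S}$ satisfies $Q(T=t^*)=\beta$ and $Q_{SR|t^*}=\sfP_{S'R'}$ exactly, Fact~\ref{fc:cond-prob} applies once the trivial-bound reduction disposes of $e_T\geq\beta$ (where $e_T:=\norm{\sfP_{S'T'R'}-\sfP_{S'R'}\sfP_{T|S}}_1$), and switching to any other off-support convention perturbs $e_T$ by at most $2\eps_0$ for $\eps_0:=\norm{\sfP_{S'T'}-\sfP_{ST}}_1$, i.e.~the bound by at most $\frac{4}{\beta}\eps_0$, inside the stated $\frac{5}{\beta}\eps_0$ budget. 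Your diagnosis of where the $\eps_0$ term in (i) comes from is exactly right.

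Part~(ii) has a genuine quantitative gap. Your chain --- condition $Q'=\sfP_{T'R'}\sfP_{S|T}$ at the anchor, combine with part~(i), re-attach $\sfP_{T|S}$ by data processing, then triangle against $e_T$ --- yields $\norm{\sfP_{S'T'R'}-\sfP_{ST}\sfP_{R'|t^*}}_1\leq e_T+\frac{2}{\beta}e_T+\frac{2}{\beta}e_S+O(\eps_0/\beta)$, i.e.~coefficient $1+2/\beta\leq 3/\beta$ on $e_T$. Since $e_T$ is in no way dominated by $\eps_0$, this excess cannot be hidden in the $\frac{7}{\beta}\eps_0$ term, so as executed you prove a strictly weaker inequality than stated. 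Your proposed repair does not close it: after splitting off $\eps_0$, the quantity $\norm{\sfP_{S'T'R'}-\sfP_{S'T'}\sfP_{R'|t^*}}_1$ still needs a $\frac{2}{\beta}(e_S+e_T)$-type bound that the sketch never supplies (any triangle through $\sfP_{S'R'}\sfP_{T|S}$ reinstates the stray $+e_T$), and the claim that each conditioning can see an event of probability \emph{exactly} $\beta$ is false for $Q'$, whose anchor mass is $\sfP_{T'}(t^*)$ --- though that slack is an $\eps_0$-sized, absorbable issue. The concrete fix is to abandon Fact~\ref{fc:cond-prob}, whose factor $2\eps/\alpha$ is precisely the source of the overshoot, in favour of the elementary restriction argument: the $t=t^*$ slice of the sum defining $e_T$ gives $\norm{\sfP_{T'}(t^*)\sfP_{S'R'|t^*}-\beta\sfP_{S'R'}}_1\leq e_T$, hence $\norm{\sfP_{S'R'|t^*}-\sfP_{S'R'}}_1\leq\frac{1}{\beta}(e_T+\eps_0/2)$ using $|\sfP_{T'}(t^*)-\beta|\leq\eps_0/2$; likewise the $t^*$-slice of $e_S$ (where $\sfP_{S|T=t^*}=\sfP_S$ makes the reference exactly $\sfP_{T'}(t^*)\,\sfP_S\sfP_{R'|t^*}$) gives $\norm{\sfP_{S'R'|t^*}-\sfP_S\sfP_{R'|t^*}}_1\leq\frac{e_S}{\beta}+O(\eps_0/\beta)$. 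With these $1/\beta$ factors your own telescope closes with coefficient $1+1/\beta\leq 2/\beta$ on $e_T$ and $1/\beta$ on $e_S$, matching (indeed beating) the stated constants. I would add, in fairness, that for everything this paper actually does with the fact --- it only ever feeds the $O(\sqrt{\delta}/\alpha^2)$ bounds in Lemma~\ref{lem:parrep-conds} --- your weaker constants would have sufficed; but they do not establish the Fact as written.
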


\subsection{Setup}
Consider a protocol $\clP$ for $l$ copies of $G_\alpha$. Alice and Barlie have inputs $X=X_1\ldots X_l$ and $U=U_1\ldots U_l$ in the first round. Before the game starts, they share some entangled state; we shall assume that after they receive their outputs, they do some unitaries and then measure in the computational basis to produce their outputs $A=A_1\ldots A_l$ and $S=S_1\ldots S_l$. Suppose the state held by Alice and Barlie after they produce the outputs is $\ket{\sigma}$ on registers $A\tA S\tS E^\A E^{\B\C}$, with Alice holding $A\tA E^\A$ and Barlie holding the rest.\footnote{Barlie can do an additional unitary on his registers before passing them on to Bob and Charlie --- here we are absorbing that unitary, which does not change the distribution of $AS$, into $\ket{\sigma}$.} Here $A=A_1\ldots A_l$ and $S=S_1\ldots S_l$ will be the registers in which Alice and Barlie's outputs are measured in the computational basis; $\tA, \tS$ are registers onto which the contents of $A, S$ are copied --- we can always assume the outputs are copied since they are classical. We define the following pure state to represent the inputs, outputs and other registers in the protocol at this stage:
\begin{align*}
\ket{\rho}_{X\tX U\tU Y\tY Z\tZ A\tA S\tS E^\A E^{\B\C}} = & \sum_{x,u,y,z}\sqrt{\sfP_{XUYZ}(x,u,y,z)}\ket{xx}_{X\tX}\ket{uu}_{U\tU}\ket{yy}_{Y\tY}\ket{zz}_{Z\tZ} \otimes \\
 & \quad \sum_{a,s}\sqrt{\sfP_{AS|xu}(as)}\ket{aa}_{A\tA}\ket{ss}_{S\tS}\ket{\rho}_{E^\A E^{\B\C}|xuas}.
\end{align*}
Here we have included the $Y\tY Z\tZ$ registers in this state even though they have not been revealed yet or used in the protocol; the state in the entangled registers has no dependence on $z$ because of this. Here $\sfP_{AS|xu}(a,s)$ is the probability of Alice and Barlie obtaining outputs $(a,s)$ on inputs $(x,u)$ in the first round. In the actual protocol, the registers $A$ and $S$ would be measured at this stage, and the outputs can be used as classical inputs for the next round, but for the sake of this analysis we shall keep everything coherent.

In the second round, the registers $S\tS E^{\B\C}$ are distributed in some way between Bob and Charlie. They will then receive inputs $Y, Z$ respectively, and additionally have access to $U$. Now Bob and Charlie again do some unitaries and produce their outputs by measuring in the computational basis. Suppose their shared state when they produce their outputs is $\ket{\sigma}$ on registers $BCE^\B E^\C$, with Bob holding $BE^\B$ and Charlie holding the rest. For convenience, we shall also divide up the classical information that they both have copies of in the following way: Bob gets $US$, and Charlie gets $\tU \tS$. We denote the state of the protocol at this stage by
\begin{align*}
\ket{\sigma}_{X\tX U\tU Y\tY Z\tZ A\tA S\tS B C E^\A E^\B E^\C} = & \sum_{x,u,y,z}\sqrt{\sfP_{XUYZ}(x,u,y,z)}\ket{xx}_{X\tX}\ket{uu}_{U\tU}\ket{yy}_{Y\tY}\ket{zz}_{Z\tZ} \otimes \\
 & \quad \sum_{a,s}\sqrt{\sfP_{AS|xu}(as)}\ket{aa}_{A\tA}\ket{ss}_{S\tS}\sum_{bc}\sqrt{\sfP_{BC|xuyzas}(bc)}\ket{b}_{B}\ket{c}_{C} \otimes \\
 & \quad \ket{\sigma}_{E^\A E^\B E^\C|xuyzasbc}.
\end{align*}
Note that to get $\ket{\rho}$ to $\ket{\sigma}$, the $A\tA$ registers are not touched at all, and $S\tS$ are used only as a control registers for Bob and Charlie's unitaries, which is why the marginal distribution of $AS$ is the same in $\ket{\rho}$ and $\ket{\sigma}$.


We shall use the following lemma, whose proof is given later, to prove Theorem \ref{thm:parrep}.
\begin{lemma}\label{lem:parrep-ind}
Let $\omega^*(G) = 1 - \eps$. For $i\in[l]$, let $J_i=\sfV_1(X_iU_i,A_iS_i)\cdot\sfV_2(X_iU_iY_iZ_i,A_iS_iB_iC_i)$ in a protocol $\clP$ for $l$ copies of $G$ (here $\sfV_1$ and $\sfV_2$ are the first and second round predicates respectively). If $\clE_T$ is the event $\prod_{i\in T}J_i=1$ and $\oT$ denotes $[l]\setminus T$ for $T\subseteq [l]$, then
\[ \bbE_{i\in\oT}\Pr[J_i=1|\clE_T] \leq 1-\eps + O\left(\frac{\sqrt{\delta_T}}{\alpha^2}\right), \]
where
\[ \delta_T = \frac{|T|\cdot\log(|\clA|\cdot|\clS|\cdot|\clB|\cdot|\clC|) + \log(1/\Pr[\clE_T])}{l}.\]
\end{lemma}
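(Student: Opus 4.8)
The plan is to prove the bound via the information-theoretic (Raz--Holenstein) route: for (on average over) each coordinate $i\in\oT$ I will build a \emph{single-copy} strategy for $G_\alpha$ whose winning probability is at least $\Pr[J_i=1\mid\clE_T]$ minus an error of order $\sqrt{\delta_T}/\alpha^2$, and then invoke $\omega^*(G_\alpha)=1-\eps$ to conclude $\Pr[J_i=1\mid\clE_T]\le 1-\eps+O(\sqrt{\delta_T}/\alpha^2)$; averaging over $i\in\oT$ then gives the statement. Concretely, starting from the coherent protocol state $\ket{\sigma}$ from the Setup, I form the post-selected state $\ket{\vph}=\Pi_{\clE_T}\ket{\sigma}/\sqrt{\Pr[\clE_T]}$, where $\Pi_{\clE_T}$ projects the classical input/output registers on the coordinates in $T$ onto the winning event, and then the family $\ket{\vph}_{x_iu_iy_iz_i}$ obtained by further projecting the $i$-th input registers onto fixed values. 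The goal is that the players reproduce, in a single copy on input $(x_i,u_i,y_i,z_i)$, the distribution obtained by measuring the $i$-th output registers $A_iS_iB_iC_i$ of $\ket{\vph}_{x_iu_iy_iz_i}$.

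The construction has two stages, mirroring the two rounds. For the first round, I use that conditioned on $Y_i=Z_i=\bot$ the input distribution $\sfP_{X_iU_i}$ equals its product marginal (property~\ref{prop:dist-1} together with the anchoring), so that Alice and Barlie sharing the anchor state $\ket{\vph}_{\bot\bot}$ can steer it toward $\ket{\vph}_{x_iu_i\bot\bot}$ by local unitaries $V^\A_{x_i}$ and $V^{\B\C}_{u_i}$. Their existence follows from Uhlmann's theorem (Fact~\ref{fc:uhlmann}) once the relevant fidelities are bounded; those bounds come from combining Fact~\ref{fc:jpy-cond}, which controls the relative entropy $\sfD$ between the conditioned and unconditioned states by $\log(1/\Pr[\clE_T])+\log|\cdot|$, with quantum Raz's lemma (Fact~\ref{fc:qraz}), which bounds a sum $\sum_i\sfI(X_i:\cdot)_\vph$ by such a relative entropy; via Pinsker (Fact~\ref{pinsker}) and Fuchs--van de Graaf (Fact~\ref{fc:fvdg}) this gives $\bbE_i$ of the squared Bures distance of order $\delta_T$. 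For the second round, Barlie distributes his registers to Bob and Charlie and forwards $u_i$; using the anchoring property that $Y_i,Z_i$ are each $\bot$ with probability $\sim\alpha$ and are independent given $X_i$, I apply Fact~\ref{fc:anchor-t*} to obtain local unitaries $V^\B_{u_iy_i}$ and $V^\C_{u_iz_i}$ carrying $\ket{\vph}_{x_iu_i\bot\bot}$ close to $\ket{\vph}_{x_iu_iy_iz_i}$, arranged so that they act only on Bob's and Charlie's registers and leave the (already measured) first-round output registers untouched.

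The error budget comes from two sources. First, Holenstein's conditioning lemma (Fact~\ref{fc:hol-cond}), applied with the ``leakage'' register $\mathcal{N}$ taken to be the outputs on $T$ so that $\log|\mathcal{N}|=|T|\log(|\clA|\,|\clS|\,|\clB|\,|\clC|)$, shows that the per-coordinate average $\ell_1$ deviation of the conditional-on-$\clE_T$ distributions from product is at most $\sqrt{(\log|\mathcal{N}|+\log(1/\Pr[\clE_T]))/l}=\sqrt{\delta_T}$. Second, the anchoring lemma inflates these $\ell_1$ deviations by factors $1/\beta$ with $\beta\sim\alpha$; combining across the two rounds and the two second-round players yields the overall $1/\alpha^2$. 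I also need the first-round output distribution to change only slightly between $\ket{\vph}_{x_iu_i\bot\bot}$ and $\ket{\vph}_{x_iu_iy_iz_i}$ --- it is identical before conditioning (first-round outputs are produced without access to $y_i,z_i$) and perturbed by only $O(\sqrt{\delta_T}/\alpha)$ afterward --- so that, since $V^\B_{u_iy_i}\otimes V^\C_{u_iz_i}$ fixes the first-round registers, the state actually produced is $O(\sqrt{\delta_T}/\alpha^2)$-close to $\ket{\vph}_{x_iu_iy_iz_i}$ and measuring its $i$-th output registers reproduces the target distribution up to this error.

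The step I expect to be the main obstacle is the second-round recovery with two separate players. Unlike the single-player two-round game of~\cite{KT20}, Barlie's state here is split between Bob and Charlie, so the recovery operator must factor as $V^\B_{u_iy_i}\otimes V^\C_{u_iz_i}$ on disjoint registers, and I cannot invoke a joint two-player anchoring step. I expect to handle this by anchoring Bob and Charlie essentially independently, exploiting the conditional independence of $Y_i,Z_i$ given $X_i$ (and that $\sfP_{X_iU_iY_iZ_i\mid DF}$ is product), and applying Fact~\ref{fc:anchor-t*} so as to produce a genuinely product unitary, while carefully tracking that these unitaries commute with the first-round output registers so the conditioning on first-round outcomes is preserved. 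Keeping the first- and second-round fidelity bounds composable, and verifying that $\ket{\vph}_{\bot\bot}$ is a legitimate input-independent shared state for the single-copy strategy, will be the most delicate bookkeeping.
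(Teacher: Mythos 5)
Your proposal is correct and follows essentially the same route as the paper's own proof: the same post-selected states $\ket{\vph}_{x_iu_iy_iz_i}$, the same Holenstein/Fact~\ref{fc:jpy-cond}/quantum-Raz chain with Pinsker, Jensen, Fuchs--van de Graaf and Uhlmann to extract the round-one unitaries from the anchor state, and the same use of Fact~\ref{fc:anchor-t*} with the $\perp\perp$ anchor (probability $\alpha^2$) in round one and \emph{separate} anchoring of Bob ($Y_i=\perp$) and Charlie ($Z_i=\perp$) in round two --- precisely items (i)--(iv) of the paper's Lemma~\ref{lem:parrep-conds}, including your control of the first-round output distribution (the paper's item (iii)) and the requirement that the second-round unitaries commute with the measured $A_iS_i$ registers. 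The obstacle you flag --- obtaining a genuinely product recovery $V^\B\otimes V^\C$ that preserves the first-round outcomes --- is resolved in the paper exactly as you anticipate, via the conditional independence of $Y_i,Z_i$ given $X_iU_i$ and a measurement-channel commutation argument, so there is no gap in your approach.
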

It is possible to give an explicit constant in place of the big $O$ in the above lemma statement, by tracking all the constants in the proof. We shall not be doing so here, but it can be seen from our proof that the constant we get is certainly bigger than 4. Therefore, it is sufficient to prove the lemma when $\delta_T \leq \alpha^4/8$, as the lemma statement is trivial otherwise.

To prove the bound on $\omega^*(G^l)$ in Theorem \ref{thm:parrep}, we shall use the above lemma to choose a random subset $T$ of $[l]$, such that the probability of winning in the random subset is
\[ \left(1-\frac{\eps}{2}\right)^{\left(\frac{\Omega(\eps^2\alpha^4)}{\log(|\clA|\cdot|\clS|\cdot|\clB|\cdot|\clC|)}\right)l},\]
which immediately implies that the same bound holds for $\omega^*(G^l_\alpha)$. We start with $T=\emptyset$, and construct $T$ by choosing a uniformly random element outside the current $T$, until the final set satisfies $\delta_T \geq \eps^2\alpha^4/4K^2$, where $K$ is the constant in the big $O$ of Lemma \ref{lem:parrep-ind}. Every instance added by this procedure, except the very last one, satisfies the bound in Lemma \ref{lem:parrep-ind} with $K\cdot\frac{\sqrt{\delta_T}}{\alpha^2} \leq \frac{\eps}{2}$. If the final picked set has $\log(1/\Pr[\clE_T]) \geq \left(\frac{\eps^2\alpha^4}{4K^2}\right)\cdot\frac{l}{2\log(|\clA|\cdot|\clS|\cdot|\clB|\cdot|\clC|)}$, then we are already done. Otherwise we have,
\[ |T| \geq \left(\frac{\eps^2\alpha^4}{4K^2} - \log(1/\Pr[\clE_T])\right)\cdot\frac{l}{\log(|\clA|\cdot|\clS|\cdot|\clB|\cdot|\clC|)},\]
which makes $\Pr[\clE_T] \leq \left(1-\frac{\eps}{2}\right)^{|T|-1} \leq \left(1-\frac{\eps}{2}\right)^{\left(\frac{\eps^2\alpha^4}{8K^2}\right)\cdot\frac{l}{\log(|\clA|\cdot|\clS|\cdot|\clB|\cdot|\clC|)}}$, which is the required bound.

Next, to prove the bound on $\omega^*(G^{t/l})$, where $t=(1-\eps+\eta)l$, for some $\gamma$ to be determined later, suppose $T$ is such that $\Pr[\clE_T] \geq 2^{-\gamma^2l - |T|\cdot\log(|\clA|\cdot|\clS|\cdot|\clB|\cdot|\clC|)}$. Then we have, $\delta_T \leq \gamma^2$. Let
\[ n = \frac{\gamma^2l}{\log\left(\frac{\log(|\clA|\cdot|\clS|\cdot|\clB|\cdot|\clC|)}{1-\eps + \frac{K\gamma}{\alpha^2}}\right)},\]
which satisfies $2^{-\gamma^2l - n\cdot\log(|\clA|\cdot|\clS|\cdot|\clB|\cdot|\clC|)} = \left(1-\eps+\frac{K\gamma}{\alpha^2}\right)^n$. For a random set of size $n$, we can then say by the previous inductive argument that its winning probability is upper bounded by $\left(1-\eps+\frac{K\gamma}{\alpha^2}\right)^n$, i.e.,
\begin{equation}\label{eq:thres-bound}
\sum_{T\subseteq [l]: |T|=n}\frac{1}{{l\choose n}}\Pr[\clE_T] \leq \left(1-\eps+\frac{K\gamma}{\alpha^2}\right)^n.
\end{equation}
Now set $\gamma=\frac{\alpha^2\eta}{4K}$, which makes $n < (1-\eps+\eta)l$. If $(1-\eps+\eta)l$ games are won, we can pick a random subset of $n$ instances out of the $(1-\eps+\eta)l$ instances on which the game is won, and say that the probability of winning the $(1-\eps+\eta)l$ instances is upper bounded by the probability of winning these $n$ instances. Using \eqref{eq:thres-bound} we therefore have,
\begin{equation}\label{eq:thres-bound2}
\omega^*(G^{t/l}) \leq \sum_{T\subseteq [(1-\eps+\eta)l]:|T|=n}\frac{1}{{(1-\eps+\eta)l \choose n}}\Pr[\clE_T] \leq \left(1-\eps+\frac{\eta}{4}\right)^n\cdot\frac{{l \choose n}}{{(1-\eps +\eta)l \choose n}}.
\end{equation}
We can simplify the second factor in the above expression as
\[ \frac{{l \choose n}}{{(1-\eps +\eta)l \choose n}} \leq \left(\frac{l}{(1-\eps+\eta)l-n}\right)^n \leq \left(\frac{1}{1-\eps+\frac{3\eta}{4}}\right)^n,\]
where in the last inequality we have used the definition of $n$. Putting this into \eqref{eq:thres-bound2} we get,
\[ \omega^*(G^{t/l}) \leq \left(\frac{1-\eps+\frac{\eta}{4}}{1-\eps+\frac{3\eta}{4}}\right)^n = \left(1-\frac{\frac{\eta}{2}}{1-\eps+\frac{3\eta}{4}}\right)^n \leq \left(1-\frac{\eta}{2}\right)^n,\]
which proves the theorem after substituting the value of $n$.

\subsection{Proof of Lemma \ref{lem:parrep-ind}}
From this section onwards, for the sake of brevity, we shall drop the subscript $T$ from $\delta_T$ and $\clE_T$ as defined in Lemma \ref{lem:parrep-ind}, and refer to these quantities as simply $\delta$ and $\clE$. 

For each $i\in [l]$, we define the random variable $D_iF_i$ as described in Section \ref{subsec:game-def}: for each $i$ $D_i$ is a uniformly random bit; $F_i=X_iU_i$ if $D_i=0$ and $F_i=Y_iZ_i$ if $D_i=1$. We can consider the states $\ket{\rho}$ and $\ket{\sigma}$ conditioned on particular values of $DF=d_1\ldots d_lf_1\ldots f_l=df$, and this simply means that the distributions of $XUYZ$ are conditioned on $DF=df$.

Conditioned on $DF=df$, we define the state $\ket{\vph}_{df}$, which is $\ket{\sigma}_{df}$ conditioned on success in $T$:
\begin{align*}
\ket{\vph}_{X\tX U\tU Y\tY Z\tZ A\tA S\tS B C E^\A E^\B E^\C|df} & = \frac{1}{\sqrt{\gamma_{df}}}\sum_{x,u,y,z}\sqrt{\sfP_{XUYZ|df}(x,u,y,z)}\ket{xx}_{X\tX}\ket{uu}_{U\tU}\ket{yy}_{Y\tY}\ket{zz}_{Z\tZ} \otimes \\
& \quad \sum_{\substack{a,s,b,c: \\ (x_T,u_T,y_T,z_T,a_T,s_T,b_T,c_T) \\ \text{win } G^{|T|}}}\sqrt{\sfP_{ASBC|xuyz}(asbc)}\ket{aa}_{A\tA}\ket{ss}_{S\tS}\ket{b}_{B}\ket{c}_{C}\otimes \\
& \quad \ket{\sigma}_{E^\A E^\B E^\C|xuyzasbc}
\end{align*}
where $\gamma_{df}$ is the probability of winning in $T$ conditioned on $DF=df$, in $\clP$. It is easy to see that $\sfP_{XUYZASBC|\clE,df}$ is the distribution on the registers $XUYZASBC$ in $\ket{\vph}_{df}$. For $i\in [l]$, we shall use $\ket{\vph}_{x_iu_iy_iz_id_{-i}f_{-i}}$ (where $d_{-i}$ stands for $d_1\ldots d_{i-1}d_{i+1}\ldots d_l$ and similar notation is used for $f$), $\ket{\vph}_{u_iy_iz_id_{-i}f_{-i}}$, $\ket{\vph}_{x_iy_iz_id_{-i}f_{-i}}$, $\ket{\vph}_{y_iz_id_{-i}f_{-i}}$ to refer to $\ket{\vph}$ with values of $X_iU_iY_iZ_iD_{-i}F_{-i}$, $U_iY_iZ_iD_{-i}F_{-i}$, $X_iY_iZ_iD_{-i}F_{-i}$ and $Y_iZ_iD_{-i}F_{-i}$ respectively conditioned on, and similar notation for other variables and subsets of $[l]$ as well. $\ket{\vph}_{\bot,\bot,d_{-i}f_{-i}}$ will be used to refer to a state where at an index implied from context (in this case $i$), $Y_iZ_i$ are both conditioned on the value $\bot$; when only one of $Y_i$ or $Z_i$ are conditioned on, we shall write the state explicitly as $\ket{\vph}_{Y_i=\bot,d_{-i}f_{-i}}$ or $\ket{\vph}_{Z_i=\bot,d_{-i}f_{-i}}$. 

We shall use the following lemma, whose proof we give later, to prove Lemma \ref{lem:parrep-ind}.
\begin{lemma}\label{lem:parrep-conds}
If $\delta \leq \alpha^4/8$ for $\delta$ as defined in Lemma \ref{lem:parrep-ind} (called $\delta_T$ in the lemma statement), then using $R_i = X_TU_TY_TZ_TA_TS_TB_TC_T D_{-i}F_{-i}$, the following conditions hold:
\begin{enumerate}[(i)]
\item $\bbE_{i\in\oT}\norm{\sfP_{X_iU_iY_iZ_iR_i|\clE} - \sfP_{X_iU_iY_iZ_i}\sfP_{R_i|\clE,\perp\perp}}_1 \leq O\left(\dfrac{\sqrt{\delta}}{\alpha^2}\right)$;
\item For each $i\in \oT$, there exist unitaries $\{V^\A_{i,x_ir_i}\}_{x_ir_i}$ acting on $X_{\oT}\tX_{\oT}E^\A A_{\oT}\tA_{\oT}$, and $\{V^{\B\C}_{i,y_ir_i}\}_{y_ir_i}$ acting on $U_{\oT}\tU_{\oT}Y_{\oT}\tY_{\oT}Z_{\oT}\tZ_{\oT}S_{\oT}\tS_{\oT}B_{\oT}C_{\oT}E^\B E^\C$ such that
\[ \bbE_{i\in\oT}\bbE_{\sfP_{X_iU_iR_i|\clE}}\norm{V^\A_{i,x_ir_i}\otimes V^{\B\C}_{i,u_ir_i}\state{\vph}_{\perp \perp r_i}(V^\A_{i,x_ir_i})^\dagger\otimes(V^{\B \C}_{i,u_ir_i})^\dagger - \state{\vph}_{x_iu_i\perp\perp r_i}}_1 \leq O\left(\dfrac{\sqrt{\delta}}{\alpha^2}\right); \]
\item $\bbE_{i\in\oT}\norm{\sfP_{X_iU_iY_iZ_iR_i|\clE}\left(\sfP_{A_iS_i|\clE,X_iU_iY_iZ_iR_i} - \sfP_{A_iS_i|\clE,X_iU_i,\perp \perp,R_i}\right)}_1 \leq O\left(\dfrac{\sqrt{\delta}}{\alpha^2}\right)$;
\item For each $i\in\oT$, there exist unitaries $\{V^\B_{i,u_iy_ir_i}\}_{u_iy_ir_i}$ acting on the registers $Y_{\oT}\tY_{\oT}U_{\oT}S_{\oT}E^\B B_{\oT}$ and $\{V^\C_{i,u_iz_ir_i}\}_{u_iz_ir_i}$ acting on $Z_{\oT}\tZ_{\oT}\tU_{\oT}\tS_{\oT}E^\C C_{\oT}$ such that
\begin{align*}
& \bbE_{i\in\oT}\bbE_{\sfP_{X_iU_iY_iZ_iA_iS_iR_i|\clE}}\norm{V^\B_{i,u_iy_ir_i}\otimes V^\C_{i,u_iz_ir_i}\state{\vph}_{x_iu_i\perp\perp a_is_ir_i}(V^\B_{i,u_iy_ir_i})^\dagger\otimes (V^\C_{i,u_iz_ir_i})^\dagger - \state{\vph}_{x_iuy_iz_ia_is_ir_i}}_1 \\
& \leq O\left(\frac{\sqrt{\delta}}{\alpha^2}\right).
\end{align*}
\end{enumerate}
\end{lemma}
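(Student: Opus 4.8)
The four conditions are the standard ingredients of an information-theoretic parallel repetition argument, adapted to the two-round, four-player structure, and my plan follows the template of \cite{KT20} but with extra bookkeeping for the Bob/Charlie split. The engine driving all four bounds is the dependency-breaking statement (i), so I would establish it first, and throughout I would use the reduction (noted after the lemma) that it suffices to treat the regime $\delta \leq \alpha^4/8$. The plan is to apply Holenstein's conditioning lemma (Fact~\ref{fc:hol-cond}) with the dependency-breaking variable $DF$ playing the role of $Q$: conditioned on $DF$, the input tuples $X_iU_iY_iZ_i$ are independent across coordinates (this is exactly the product property established at the end of Section~\ref{subsec:game-def}), so taking $M_i = X_iU_iY_iZ_i$ and letting $N$ be the classical output registers on $T$ (whose log-size is $|T|\log(|\clA|\cdot|\clS|\cdot|\clB|\cdot|\clC|)$), Fact~\ref{fc:hol-cond} yields $\bbE_{i\in\oT}\norm{\sfP_{DF X_iU_iY_iZ_i N|\clE} - \sfP_{DFN|\clE}\sfP_{X_iU_iY_iZ_i|DF}}_1 \leq O(\sqrt{\delta})$, after dividing by $l$ and recalling $\delta = (|T|\log(\cdots) + \log(1/\Pr[\clE]))/l$. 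To convert the conditional marginal $\sfP_{X_iU_iY_iZ_i|DF}$ into the true marginal and to replace $\sfP_{R_i|\clE}$ by $\sfP_{R_i|\clE,\perp\perp}$, I would invoke the anchoring transformation (Fact~\ref{fc:anchor-t*}(i)) with anchored variable $Y_iZ_i$, special value $t^\ast = \perp\perp$, and $\beta = \alpha^2 = \Pr[Y_iZ_i = \perp\perp]$; this is the step that introduces the $1/\alpha^2$ loss and produces the stated $O(\sqrt{\delta}/\alpha^2)$ bound.

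For (ii) I would exploit that conditioned on $Y_i = Z_i = \perp$ the first-round inputs $X_i, U_i$ form a \emph{product} distribution (Property~\ref{prop:dist-1} together with the anchoring observation that the $\perp\perp$-conditioned marginal of $X_iU_i$ equals the true marginal). The plan is a \cite{JPY14}-style argument: writing $\ket{\vph}_{\perp\perp r_i}$ as a superposition over $x_iu_i$ of the states $\ket{\vph}_{x_iu_i\perp\perp r_i}$, I would use Fact~\ref{fc:jpy-cond} and the quantum Raz lemma (Fact~\ref{fc:qraz}) to bound, on average over $i$, the mutual information between $X_i$ and Barlie's registers and between $U_i$ and Alice's registers by $O(\delta)$. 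Pinsker (Fact~\ref{pinsker}) turns these into small Bures distances, and Uhlmann's theorem (Fact~\ref{fc:uhlmann}) supplies the local correcting unitaries $V^\A_{i,x_ir_i}$ (on Alice's registers, controlled on $x_i$ precisely because $X_i$ is independent of $U_i$) and $V^{\B\C}_{i,u_ir_i}$ (on Barlie's registers, controlled on $u_i$); a triangle inequality, combined with the anchoring conversion between the $\perp\perp$-conditioned and true marginals, then yields the averaged bound.

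Condition (iii) is the easiest. In the genuine protocol the first-round outputs $A_iS_i$ are produced before $Y_iZ_i$ is revealed, so before conditioning $A_iS_i$ is independent of $Y_iZ_i$ given the first-round data; conditioning on $\clE$ perturbs this only slightly. I would quantify the perturbation by combining (i) with the stability of conditional distributions (Fact~\ref{fc:cond-prob}), transferring the $O(\sqrt{\delta}/\alpha^2)$ bound from the input distribution to the statement that replacing $Y_iZ_i$ by $\perp\perp$ barely changes $\sfP_{A_iS_i}$.

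Condition (iv) is the second-round analogue of (ii) and is where I expect the main obstacle. Here I must produce unitaries acting \emph{separately} on Bob's registers ($V^\B_{i,u_iy_ir_i}$) and Charlie's registers ($V^\C_{i,u_iz_ir_i}$), which requires the second-round inputs $Y_i$ and $Z_i$ to be, after conditioning, in product with each other and with everything else. The structural input is Property~\ref{prop:dist-2}: $Y_i$ and $Z_i$ are independent given $X_iU_i$, and each is $\perp$ independently with probability $\alpha$. The plan is again the \cite{JPY14}-style argument with Fact~\ref{fc:qraz}, Pinsker, and Uhlmann, but carried out twice --- once for Bob (passing $Y_i$ from $\perp$ to its true value, with $Z_i$ anchored to $\perp$ to secure the needed product structure) and once for Charlie --- each handled by a separate application of Fact~\ref{fc:anchor-t*}(ii), which is what keeps the total loss at $1/\alpha^2$ rather than a worse power. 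The delicate points are: (a) ensuring each unitary is genuinely local and depends only on the classical data that player actually holds, which forces the careful split of the shared classical registers ($Y\tY U S E^\B B$ to Bob, $Z\tZ \tU \tS E^\C C$ to Charlie); and (b) arranging that the unitaries do not disturb the already-measured first-round output values $a_is_i$ of coordinate $i$, so that $V^\B\otimes V^\C$ carries $\ket{\vph}_{x_iu_i\perp\perp a_is_ir_i}$ conditioned on fixed first-round outputs to the corresponding target conditioned on the same values, as required for the averaged bound over $\sfP_{X_iU_iY_iZ_iA_iS_iR_i|\clE}$.
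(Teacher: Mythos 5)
Your plans for items (i), (ii) and (iv) track the paper's actual proof closely: Holenstein's lemma (Fact~\ref{fc:hol-cond}) plus the anchoring fact for (i), and the \cite{JPY14}-style chain (Fact~\ref{fc:jpy-cond}, quantum Raz, Pinsker, Uhlmann, with anchoring to $\perp$ paying the $1/\alpha$ or $1/\alpha^2$ factors) for the unitaries, including your correct observations that the product structure of $\sfP_{X_iU_i}$ is what lets $V^\A$ and $V^{\B\C}$ be controlled on $x_i$ and $u_i$ separately, and that the round-two unitaries must commute with measuring the round-one outputs. Two small bookkeeping slips in (i): the Holenstein application must take $Q = X_TU_TY_TZ_TDF$ rather than $Q=DF$, since $R_i$ contains $X_TU_TY_TZ_T$ and only $N$ (not $Q$) is penalized in the bound; and the final product form $\sfP_{X_iU_iY_iZ_i}\sfP_{R_i|\clE,\perp\perp}$ is item~(ii) of Fact~\ref{fc:anchor-t*}, not item~(i). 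Both are immediate fixes.

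The genuine gap is your item (iii). You propose to derive it from (i) together with Fact~\ref{fc:cond-prob}, but (i) is a statement purely about the joint distribution of the coordinate-$i$ \emph{inputs} with $R_i$; the outputs $A_iS_i$ of coordinate $i\in\oT$ appear nowhere in it (they are not part of $R_i$), so no amount of conditioning-stability manipulation of (i) can control how $\sfP_{A_iS_i|\clE,\cdot}$ varies as the second-round inputs move from $y_iz_i$ to $\perp\perp$. Your qualitative intuition is right --- before conditioning on $\clE$, $A_iS_i$ is exactly independent of $Y_iZ_i$ given the first-round data --- but quantifying the damage done by conditioning on an event of probability $2^{-\delta l}$ \emph{per coordinate, for the outputs} requires a fresh run of the information-theoretic machinery, not a corollary of (i): the paper applies Fact~\ref{fc:jpy-cond} to $\vph$ versus $\sigma$ (cost $\delta l$), uses the structural fact that in $\sigma$, conditioned on $df$, the registers $Y_{\oT}Z_{\oT}$ are in product with $A\tA S\tS E^\A$ (round-one outputs precede the reveal of $YZ$, and Bob and Charlie's unitaries touch $S\tS$ only as controls), and then invokes the quantum Raz lemma to bound $\bbE_i\,\sfI(Y_iZ_i: A_{\oT}\tA_{\oT}S_{\oT}\tS_{\oT})_{\vph}$, finishing with Pinsker, Fuchs--van de Graaf, and the anchoring step. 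This omission also propagates into your (iv): the one-sided analogues \eqref{eq:ASZ-perp} and \eqref{eq:ASY-perp}, proved by exactly this argument, are what allow the measurement-commutation trick to carry the $a_is_i$-conditioning through when composing $V^\B$ and $V^\C$, so without the mutual-information bound on the outputs your step (b) in (iv) cannot be completed either.
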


As noted before, the statement of Lemma \ref{lem:parrep-ind} is trivial if $\delta \geq \alpha^4/8$, so we shall use Lemma \ref{lem:parrep-conds} to prove Lemma \ref{lem:parrep-ind} in the case that $\delta \leq \alpha^4/8$. Making use of the conditions in Lemma \ref{lem:parrep-conds}, we give a strategy for a single copy of $G_\alpha$ as follows:
\begin{itemize}
\item Alice and Barlie share $\log|\oT|$ uniform bits, for each $i\in\oT$, $\sfP_{R_i|\clE,\perp \perp}$ as randomness, and for each $R_i=r_i$, the state $\ket{\vph}_{\perp\perp r_i}$ as entanglement, with Alice holding registers $X_{\oT}\tX_{\oT} A_{\oT}\tA_{\oT}E^\A$ and Barlie holding registers $U_{\oT}\tU_{\oT}Y_{\oT}\tY_{\oT}Z_{\oT}\tZ_{\oT}S_{\oT}\tS_{\oT}B_{\oT}C_{\oT}E^\B E^\C$ (the rest of the registers have fixed values due to $r_i$).
\item Alice and Barlie use their shared randomness to sample $i\in\oT$ uniformly, and in the first round, apply $V^\A_{i,x_ir_i}$, $V^{\B\C}_{i,u_ir_i}$ on their parts of the shared entangled state according to their shared randomness from $\sfP_{R_i|\clE,\perp\perp }$ and their first round inputs.
\item Alice and Bob measure the $A_i,S_i$ registers of the resulting state to give their first round outputs.
\item Before the second round starts, Barlie passes the registers $Y_{\oT}\tY_{\oT}U_{\oT}S_{\oT}BE^\B$ to Bob, and $Z_{\oT}\tZ_{\oT}\tU_{\oT}\tS_{\oT}CE^\C$ to Charlie. He also gives each of them his first round input $u_i$, and the $i, r_i$ he sampled in the first round.
\item After receiving the second round inputs, Bob and Charlie apply $V^\B_{i,u_iy_ir_i}$ and $V^\C_{i,u_iz_ir_i}$ respectively to their registers according to their inputs and $u_ir_i$ received from Barlie.
\item Bob and Charlie measure the $B_i, C_i$ registers of the resulting state to give their second round outputs.
\end{itemize}

We shall first analyse the success probability of this strategy assuming the inputs and shared randomness are distributed according to $\sfP_{X_iY_iZ_iR_i|\clE}$ for $i$. Let $\sfP_{\hA_i\hS_i|X_iU_iY_iZ_iR_i}$ denote the conditional distribution Alice and Barlie get after the first round (note that $\hA_i\hS_i$ are actually independent of $Y_iZ_i$ given $X_iU_i$, but we are still writing $Y_iZ_i$ in the conditioning), and $\sfP_{\hB_i\hC_i|X_iU_iY_iZ_iR_i\hA_i\hS_i}$ denote their conditional distribution after the second round. Since $\sfP_{\hA_i\hS_i|X_iY_iZ_iR_i}$ is obtained by measuring the $A_iS_i$ registers of the state $V^\A_{i,x_ir_i}\otimes V^{\B\C}_{i,u_ir_i}\ket{\vph}_{\perp\perp r_i}$, and $\sfP_{A_iS_i|\clE,X_iU_i,\perp\perp,R_i}$ is obtained by measuring the same registers of $\ket{\vph}_{x_iu_i\perp\perp r_i}$, from item (ii) of Lemma~\ref{lem:parrep-conds} and Fact \ref{fc:chan-l1} we have,
\[ \bbE_{i\in\oT}\norm{\sfP_{X_iU_iY_iZ_iR_i|\clE}\left(\sfP_{\hA_i\hS_i|X_iU_iY_iZ_iR_i} - \sfP_{A_iS_i|\clE,X_iU_i,\perp\perp,R_i}\right)}_1 \leq O\left(\frac{\sqrt{\delta}}{\alpha^2}\right). \]
Combining this with item (iii) of the lemma we have,
\[ \bbE_{i\in\oT}\norm{\sfP_{X_iU_iY_iZ_iR_i|\clE}\left(\sfP_{\hA_i\hS_i|X_iU_iY_iZ_iR_i} - \sfP_{A_iS_i|\clE,X_iU_iY_iZ_iR_i}\right)}_1 \leq O\left(\frac{\sqrt{\delta}}{\alpha^2}\right).\]
By similar reasoning, we have from item (iv),
\[ \bbE_{i\in\oT}\norm{\sfP_{X_iU_iY_iZ_iR_iA_iS_i}\left(\sfP_{\hB_i\hC_i|X_iU_iY_iZ_iR_i\hA_i\hS_i} - \sfP_{B_iC_i|\clE,X_iY_iZ_iR_iA_iS_i}\right)}_1 \leq O\left(\frac{\sqrt{\delta}}{\alpha^2}\right).\]
Now, since our actual distribution of inputs and randomness if $\sfP_{X_iU_iY_iZ_i}\sfP_{R_i|\clE,\perp\perp}$, our actual input, randomness and output distribution is $\sfP_{X_iU_iY_iZ_i}\sfP_{R_i|\clE,\perp\perp }\sfP_{\hA_i\hS_i\hB_i\hC_i|X_iY_iZ_iR_i}$. Combing the above equations with item (i), we get that this distribution satisfies,
\[ \bbE_{i\in\oT}\norm{\sfP_{X_iU_iY_iZ_i}\sfP_{R_i|\clE,\perp\perp }\sfP_{\hA_i\hS_i\hB_i\hC_i|X_iU_iY_iZ_iR_i} - \sfP_{X_iU_iY_iZ_iR_iA_iS_iB_iC_i|\clE}}_1 \leq O\left(\frac{\sqrt{\delta}}{\alpha^2}\right). \]
Suppose the constant in the above big $O$ is $K$. Since $\Pr[J_i=1|\clE]$ is the probability that the distribution $\sfP_{X_iU_iY_iZ_iR_iA_iS_iB_iC_i|\clE}$ wins a single copy of the game $G_\alpha$, if $\bbE_{i\in\oT}\Pr[J_i=1|\clE] > 1-\eps+\frac{K}{2}\cdot\frac{\sqrt{\delta}}{\alpha^2}$, then the winning probability of our constructed strategy is more than
\[ 1-\eps + \frac{K}{2}\cdot\frac{\sqrt{\delta}}{\alpha^2} - \frac{1}{2}\bbE_{i\in\oT}\norm{\sfP_{X_iU_iY_iZ_i}\sfP_{R_i|\clE,\perp\perp }\sfP_{\hA_i\hS_i\hB_i\hC_i|X_iU_iY_iZ_iR_i} - \sfP_{X_iU_iY_iZ_iR_iA_iS_iB_iC_i|\clE}}_1 \geq \omega^*(G),\]
which is a contradiction. Therefore we must have $\bbE_{i\in\oT}\Pr[J_i=1|\clE] \leq 1-\eps+O\left(\frac{\sqrt{\delta}}{\alpha^2}\right)$.

\subsection{Proof of Lemma \ref{lem:parrep-conds}}
\textbf{Closeness of distributions.} Applying Fact \ref{fc:hol-cond} with $Q, N$ being trivial and $M_i = X_iU_iY_iZ_i$ we get,
\begin{equation}
\bbE_{i \in \oT }\Vert\sfP_{X_iU_iY_iZ_i|\clE} - \sfP_{X_iU_iY_iZ_i}\Vert_1 \leq \frac{1}{l-|T|}\sqrt{(l-|T|)\cdot \log(1/\Pr[\clE])} \leq \sqrt{2\delta}, \label{eq:XYZ}
\end{equation}
recalling we are taking $\delta$ to be the value $\delta_T$ defined in Lemma~\ref{lem:parrep-ind}. In particular, the last line of the above equation is obtained by recalling that we have required $\delta \leq \alpha^4/8$, which implies $|T| \leq l/2$. 

Also, applying Fact \ref{fc:hol-cond} again with $M_i$ the same, $Q=X_TU_TY_TZ_TDF$ and $N=A_TS_TB_TC_T$, we get
\begin{align}
\sqrt{2\delta} & \geq \frac{1}{l-|T|}\sqrt{(l-|T|)(\log(1/\Pr[\clE]) + |T|\cdot\log(|\clA|\cdot|\clS|\cdot|\clB|\cdot|\clC|)} \nonumber \\
& \geq \bbE_{i\in\oT}\norm{\sfP_{X_iU_iY_iZ_iX_TU_TY_TZ_TDFA_TS_TB_TC_T|\clE} - \sfP_{X_TU_TY_TZ_TA_TS_TB_TC_TDF|\clE}\sfP_{X_iU_iY_iZ_i|X_TU_TY_TZ_TDF}}_1 \nonumber \\
& = \bbE_{i\in\oT }\Vert\sfP_{X_iU_iY_iZ_iD_iF_iR_i|\clE} - \sfP_{D_iF_iR_i|\clE}\sfP_{X_iU_iY_iZ_i|D_iF_i}\Vert_1 \nonumber \\
 & = \frac{1}{2}\bbE_{i\in\oT }\left(\Vert\sfP_{X_iU_iY_iZ_iR_i|\clE} - \sfP_{X_iU_iR_i|\clE}\sfP_{Y_iZ_i|X_iU_i}\Vert_1 + \Vert\sfP_{X_iU_iY_iZ_iR_i|\clE} - \sfP_{Y_iZ_iR_i|\clE}\sfP_{X_iU_i|Y_iZ_i}\Vert_1\right), \label{eq:XYZR-1}
\end{align}
where in the third line we have used the definition of $R_i=X_TU_TY_TZ_TA_TS_TB_TC_TD_{-i}G_{-i}$, and the last line is obtained by conditioning on values $D_i=0$ and $D_i=1$ (which happen with probability $\frac{1}{2}$ even after conditioning on $\clE$).

Note that for all $x_i, u_i$, we have $\sfP_{X_iU_iY_iZ_i}(x_i,u_i,\perp,\perp) = \alpha^2\cdot\sfP_{X_iU_i}(x_iu_i)$. This and the bound~\eqref{eq:XYZ} allows us to apply item (ii) of Fact~\ref{fc:anchor-t*}, with $X_iU_i=S$, $Y_iZ_i=T$, $R_i=R$, and the corresponding variables conditioned on $\clE$ being the primed variables in the lemma statement. This gives
\begin{align*}
\bbE_{i\in\oT}\norm{\sfP_{X_iU_iY_iZ_iR_i|\clE} - \sfP_{X_iU_iY_iZ_i}\sfP_{R_i|\clE,\perp \perp}}_1  & \leq \frac{2}{\alpha^2}\bbE_{i\in\oT }\big(\Vert\sfP_{X_iU_iY_iZ_iR_i|\clE} - \sfP_{X_iU_iR_i|\clE}\sfP_{Y_iZ_i|X_iU_i}\Vert_1 \nonumber \\
& \quad + \Vert\sfP_{X_iU_iY_iZ_iR_i|\clE} - \sfP_{Y_iZ_iR_i|\clE}\sfP_{X_iU_i|Y_iZ_i}\Vert_1\big) \\
& \quad + \frac{7}{\alpha^2}\bbE_{i \in \oT }\Vert\sfP_{X_iU_iY_iZ_i|\clE} - \sfP_{X_iU_iY_iZ_i}\Vert_1.
\end{align*}
Applying \eqref{eq:XYZ} and \eqref{eq:XYZR-1} to the terms on the right-hand side yields item (i) of the lemma.

To show item (iii) of the lemma, we shall apply Fact \ref{fc:jpy-cond} to the states $\vph_{YZX\tX U\tU A_{\oT}\tA_{\oT} S_{\oT}\tS_{\oT}|a_Ts_Tb_Tc_Tdf}$ and $\sigma_{YZX\tX U\tU A_{\oT}\tA_{\oT} S_{\oT}\tS_{\oT}|df}$ (with $z$ being $a_Ts_Tb_Tc_T$) to get,
\begin{align*}
& \bbE_{\sfP_{Y_TZ_TX_TU_TA_TS_TC_TC_TDF|\clE}}\sfD\Big(\vph_{Y_{\oT }Z_{\oT}X_{\oT }\tX_{\oT}U_{\oT}\tU_{\oT} A_{\oT}\tA_{\oT}S_{\oT}\tS_{\oT}|x_Tu_Ty_Tz_Ta_Ts_Tb_Tc_Tdf}\Big\Vert \sigma_{Y_{\oT }Z_{\oT}X_{\oT }\tX_{\oT}U_{\oT}\tU_{\oT} A_{\oT}\tA_{\oT}S_{\oT}\tS_{\oT}|x_Tu_Ty_Tz_Tdf}\Big) \\
\leq & \bbE_{\sfP_{A_TS_TB_TC_TDF}}\sfD\left(\vph_{YZX\tX U\tU A_{\oT}\tA_{\oT} S_{\oT}\tS_{\oT}|a_Ts_Tb_Tc_Tdf}\middle\Vert\sigma_{YZX\tX U\tU A_{\oT}\tA_{\oT} S_{\oT}\tS_{\oT}|df}\right) \\
\leq & \bbE_{\sfP_{DF|\clE}}(\log(1/\gamma_{df}) + \log(|\clA|^{|T|}\cdot|\clS|^{|T|}\cdot|\clB|^{|T|}\cdot|\clC|^{|T|})) \\
\leq &  \log\Big(1/\bbE_{\sfP_{DF|\clE}}\gamma_{df}\Big) + |T|\cdot\log(|\clA|\cdot|\clS|\cdot|\clB|\cdot|\clC|) \\
= & \log(1/\Pr[\clE]) + |T|\cdot\log(|\clA|\cdot|\clS|\cdot|\clB|\cdot|\clC|) = \delta l.
\end{align*}
Now we notice that the state $\sigma_{Y_{\oT }Z_{\oT}X_{\oT }\tX_{\oT}U_{\oT}\tU_{\oT} A_{\oT}\tA_{\oT}S_{\oT}\tS_{\oT}|x_Tu_Ty_Tz_Tdf}$ is product across $Y_{\oT}Z_{\oT}$ and the rest of the registers. This is because conditioned on $df$, $YZ$ is independent of $XY$, and $YZ$ was not involved in the first round at all, so $YZ$ is definitely in product with $A\tA S\tS E^\A$ in $\rho$. The unitary that produced $\sigma$ from $\rho$ does not touch the registers $X\tX A\tA E^\A$ at all, and only uses $U\tU S\tS$ as control registers. Therefore, there are no correlations between $YZ$ and these registers conditioned on $df$ in $\sigma$ either. The product structure obviously also holds true if we trace out the $X_{\oT}\tX_{\oT}U_{\oT}\tU_{\oT}$ registers. Therefore, we can apply Quantum Raz's lemma (Fact \ref{fc:qraz}) to the above bound to say
\begin{align*}
\delta l & \geq \bbE_{\sfP_{X_TU_TY_TZ_TA_TS_TB_TC_TDF|\clE}}\sfD\Big(\vph_{Y_{\oT }Z_{\oT}A_{\oT}\tA_{\oT}S_{\oT}\tS_{\oT}|x_Tu_Ty_Tz_Ta_Ts_Tb_Tc_Tdf}\Big\Vert \sigma_{Y_{\oT }Z_{\oT}A_{\oT}\tA_{\oT} S_{\oT}\tS_{\oT}|x_Tu_Ty_Tz_Tdf}\Big) \\
 & \geq \sum_{i\in\oT}\sfI(Y_iZ_i:A_{\oT}\tA_{\oT} S_{\oT}\tS_{\oT}|X_TU_TY_TZ_TA_TS_TB_TC_TDF)_\vph \\
 & \geq \frac{l}{2}\bbE_{i\in\oT }\bbE_{\sfP_{D_iF_iR_i|\clE}}\sfI(Y_iZ_i:A_{\oT}\tA_{\oT} S_{\oT}\tS_{\oT})_{\vph_{d_if_ir_i}} \\
& \geq \frac{l}{2}\cdot\frac{1}{2}\bbE_{i\in\oT }\bbE_{\sfP_{X_iU_iR_i|\clE}}\sfI(Y_iZ_i:A_{\oT}\tA_{\oT} S_{\oT}\tS_{\oT})_{\vph_{x_iu_ir_i}} \\
& = \frac{l}{4}\bbE_{i\in\oT }\bbE_{\sfP_{X_iU_iY_iZ_iR_i|\clE}} \sfD\left(\vph_{A_{\oT}\tA_{\oT} S_{\oT}\tS_{\oT}|x_iy_iz_ir_i} \middle\Vert \vph_{A_{\oT}\tA_{\oT} S_{\oT}\tS_{\oT}|x_iu_ir_i}\right) \\
& \geq \frac{l}{4}\bbE_{i\in\oT }\bbE_{\sfP_{X_iY_iZ_iR_i|\clE}} \sfB\left(\vph_{A_{\oT}\tA_{\oT} S_{\oT}\tS_{\oT}|x_iu_iy_iz_ir_i}, \vph_{A_{\oT}\tA_{\oT} S_{\oT}\tS_{\oT}|x_iu_ir_i}\right)^2,
\end{align*}
where we have used Pinsker's inequality in the last step. Using Jensen's inequality on the above, we then have,
\begin{equation}\label{eq:AS-close}
\bbE_{i\in\oT }\bbE_{\sfP_{X_iY_iZ_iR_i|\clE}} \sfB\left(\vph_{A_{\oT}\tA_{\oT} S_{\oT}\tS_{\oT}|x_iy_iz_ir_i}, \vph_{A_{\oT}\tA_{\oT} S_{\oT}\tS_{\oT}|x_iu_ir_i}\right) \leq 2\sqrt{\delta}.
\end{equation}
Now \eqref{eq:XYZ} implies
\[
\bbE_{i\in\oT}\norm{\sfP_{X_iU_iY_iZR_i|\clE} - \sfP_{Y_iZ_i}\sfP_{X_iU_iR_i|\clE,Y_iZ_i}}_1 = \bbE_{i\in\oT}\norm{\sfP_{Y_iZ_i|\clE} - \sfP_{Y_iZ_i}}_1 \leq \sqrt{2\delta}.
\]
Therefore,
\begin{align*}
& \bbE_{i\in\oT }\bbE_{\sfP_{Y_iZ_i}\sfP_{X_iR_i|\clE,Y_iZ_i}} \sfB\left(\vph_{A_{\oT}\tA_{\oT} S_{\oT}\tS_{\oT}|x_iu_iy_iz_ir_i}, \vph_{A_{\oT}\tA_{\oT} S_{\oT}\tS_{\oT}|x_iu_ir_i}\right) \\
\leq & \bbE_{i\in\oT }\bbE_{\sfP_{X_iU_iR_i|\clE,Y_iZ_i}} \sfB\left(\vph_{A_{\oT}\tA_{\oT} S_{\oT}\tS_{\oT}|x_iu_iy_iz_ir_i}, \vph_{A_{\oT}\tA_{\oT} S_{\oT}\tS_{\oT}|x_iu_ir_i}\right) + \bbE_{i\in\oT}\norm{\sfP_{X_iU_iY_iZR_i|\clE} - \sfP_{Y_iZ_i}\sfP_{X_iU_iR_i|\clE,Y_iZ_i}}_1 \\
\leq & 4\sqrt{\delta}.
\end{align*}
Since $\sfP_{Y_iZ_i}(\perp,\perp)=\alpha^2$, we have,
\begin{equation}\label{eq:AS-perp-close}
\bbE_{i\in\oT }\bbE_{\sfP_{X_iU_iR_i|\clE,\perp,\perp}} \sfB\left(\vph_{A_{\oT}\tA_{\oT} S_{\oT}\tS_{\oT}|x_iu_i\perp \perp r_i}, \vph_{A_{\oT}\tA_{\oT} S_{\oT}\tS_{\oT}|x_iu_i r_i}\right) \leq \frac{4\sqrt{\delta}}{\alpha^2}.
\end{equation}

Applying the Fuchs-van de Graaf inequality to \eqref{eq:AS-close} and \eqref{eq:AS-perp-close}, and tracing out registers besides the $i$-th one we get,
\begin{gather}
\bbE_{i\in\oT}\norm{\sfP_{X_iU_iY_iZ_iR_i|\clE}(\sfP_{A_iS_i|\clE, X_iU_iY_iZ_iR_i} - \sfP_{A_iS_i|\clE,X_iU_iR_i})}_1 \leq 4\sqrt{2\delta}, \label{eq:AS-l1} \\
\bbE_{i\in\oT}\norm{\sfP_{X_iU_iR_i|\clE,\perp\perp}(\sfP_{A_iS_i|\clE, X_iU_i\perp\perp,R_i} - \sfP_{A_iS_i|\clE,X_iU_iR_i})}_1 \leq \frac{8\sqrt{2\delta}}{\alpha^2}. \label{eq:AS-perp-l1}
\end{gather}
Finally, we can apply item (i) of Fact \ref{fc:anchor-t*} with $X_iU_i=S$, $Y_iZ_i=T$, $R_i=R$, and the variables conditioned on $\clE$ being the corresponding primed variables, since $\sfP_{X_iU_iY_iZ_i}(x_i,u_i,\perp,\perp) = \alpha^2\cdot\sfP_{X_iU_i}(x_i,u_i)$ for all $x_i,u_i$. Using \eqref{eq:XYZ} and \eqref{eq:XYZR-1}, this gives us
\begin{align}
& \bbE_{i\in\oT}\norm{\sfP_{X_iU_iR_i|\clE} - \sfP_{X_iU_iR_i|\clE,\perp\perp}}_1 \nonumber \\
\leq & \frac{2}{\alpha^2} \bbE_{i\in\oT} \norm{ \sfP_{X_iU_iY_iZ_iR_i|\clE} - \sfP_{X_iU_iR_i|\clE} \sfP_{Y_iZ_i|X_iU_i} }_1 + 
\frac{5}{\alpha^2} \bbE_{i\in\oT} \norm{ \sfP_{X_iU_iY_iZ_i|\clE} - \sfP_{X_iU_iY_iZ_i} }_1 \nonumber \\
\leq & O\left(\frac{\sqrt{\delta}}{\alpha^2}\right). \label{eq:XYR-perp}
\end{align}
Applying the triangle inequality to the above, as well as \eqref{eq:AS-l1} and \eqref{eq:AS-perp-l1} we get,
\begin{align*}
& \bbE_{i\in\oT}\norm{\sfP_{X_iU_iY_iZ_iR_i|\clE}\left(\sfP_{A_iS_i|\clE,X_iU_iY_iZ_iR_i} - \sfP_{A_iS_i|\clE,X_iU_i,\perp\perp,R_i}\right)}_1 \\
\leq & \bbE_{i\in\oT}\Big(\norm{\sfP_{X_iU_iY_iZ_iR_i|\clE}(\sfP_{A_iS_i|\clE, X_iU_iY_iZ_iR_i} - \sfP_{A_iS_i|\clE,X_iU_iR_i})}_1 \\
& \quad + \norm{\sfP_{X_iU_iR_i|\clE,\perp\perp}(\sfP_{A_iS_i|\clE, X_iU_i\perp\perp,R_i} - \sfP_{A_iS_i|\clE,X_iU_iR_i})}_1 + 2\norm{\sfP_{X_iU_iR_i|\clE} - \sfP_{X_iU_iR_i|\clE,\perp\perp}}_1\Big) \\
\leq & O\left(\frac{\sqrt{\delta}}{\alpha^2}\right).
\end{align*}
This shows item (iii) of the lemma.

For later calculations, we note that the state $\sigma_{YZ\tZ A\tA S\tS|df}$ is also product across $Y$ and the rest of the registers. Therefore, it is possible to get bounds on $\bbE_{i\in\oT}\bbE_{\sfP_{D_iF_iR_i}}\sfI(Y_i:Z_{\oT}\tZ_{\oT}A_{\oT}\tA_{\oT})_{\vph_{d_if_ir_i}}$ in the exact same way. Doing the same calculation as above with this quantity and conditioning $Y_i=\perp$ (which happens with probability $\alpha$ under $\sfP_{Y_i}$), we can get
\begin{equation}\label{eq:ASZ-perp}
\bbE_{i\in\oT}\norm{\sfP_{X_iU_iY_iR_i|\clE}(\sfP_{Z_iA_iS_i|\clE,X_iU_iY_iR_i} - \sfP_{Z_iA_iS_i|\clE,X_iU_i,Y_i=\perp,R_i})}_1 \leq O\left(\frac{\sqrt{\delta}}{\alpha}\right).
\end{equation}
Note that we have had to apply Fact \ref{fc:anchor-t*} with $X_iU_iZ_i=S, Y_i=T$ and $R_i=R$ to get the above inequality, which is possible because $\sfP_{X_iU_iY_iZ_i}(x_i,u_i,\perp,z_i) = \alpha\cdot\sfP_{X_iU_iZ_i}(x_i,u_i,z_i)$ for all $x_i,u_i, z_i$. Similarly, on Charlie's side we have,
\begin{equation}\label{eq:ASY-perp}
\bbE_{i\in\oT}\norm{\sfP_{X_iU_iZ_iR_i|\clE}(\sfP_{Y_iA_iS_i|\clE,X_iU_iZ_iR_i} - \sfP_{Y_iA_iS_i|\clE,X_iU_i,Z_i=\perp,R_i})}_1 \leq O\left(\frac{\sqrt{\delta}}{\alpha}\right).
\end{equation}

\vspace{0.5cm}
\textbf{Existence of unitaries $V^{\A}_{i,x_ir_i}$ and $V^{\B \C}_{i,u_ir_i}$.} The proof of item (ii) of the lemma will be very similar to the analogous step in the proof of the parallel repetition theorem in \cite{KT23}.
Applying Fact \ref{fc:jpy-cond} on the states $\vph_{XU\tU Y\tY Z\tZ S_{\oT}\tS_{\oT}B_{\oT }C_{\oT}E^\B E^\C|a_Ts_Tb_Tc_Tdf}$ and $\sigma_{XU\tU Y\tY Z\tZ S_{\oT}\tS_{\oT}B_{\oT }\tC_{\oT }E^\B E^\C|df}$ with $z$ being $a_Ts_Tb_Tc_T$ as before, we get,
\begin{align*}
& \bbE_{\sfP_{X_TU_TY_TZ_TA_TS_TB_TC_TDF|\clE}}\sfD\Big(\vph_{X_{\oT }U_{\oT}\tU_{\oT}Y_{\oT }\tY_{\oT }Z_{\oT }\tZ_{\oT } S_{\oT}\tS_{\oT}B_{\oT }C_{\oT}E^\B E^\C|x_Tu_Ty_Tz_Ta_Ts_Tb_Tc_Tdf}\Big\Vert\sigma_{X_{\oT}U_{\oT}\tU_{\oT}Y_{\oT }\tY_{\oT }Z_{\oT }\tZ_{\oT } S_{\oT}\tS_{\oT}B_{\oT }\tC_{\oT }E^\B E^\C|x_Tu_Ty_Tz_Tdf}\Big) \\
\leq & \bbE_{\sfP_{A_TS_TB_TC_TDF|\clE}}\sfD\Big(\vph_{XU\tU Y\tY Z\tZ S_{\oT}\tS_{\oT}B_{\oT }C_{\oT}E^\B E^\C|a_Ts_Tb_Tc_Tdf}\Big\Vert \sigma_{XU\tU Y\tY Z\tZ S_{\oT}\tS_{\oT}B_{\oT }\tC_{\oT }E^\B E^\C|df}\Big) \\
\leq & \bbE_{\sfP_{DF|\clE}}(\log(1/\gamma_{df}) + \log(|\clA|^{|T|}\cdot|\clS|^{|T|}\cdot|\clB|^{|T|}\cdot|\clC|^{|T|})) \\
\leq & \delta l.
\end{align*}
Notice that $\sigma_{X_{\oT}U_{\oT}\tU_{\oT}Y_{\oT }\tY_{\oT }Z_{\oT }\tZ_{\oT } S_{\oT}\tS_{\oT}B_{\oT }\tC_{\oT }E^\B E^\C|x_Tu_Ty_Tz_Tdf}$ is product across $X_{\oT}$ and the rest of the registers. This is because conditioned on $df$, $X_{\oT}$ is in product with the rest of the input registers, and the registers $S_{\oT}\tS_{\oT}B_{\oT }\tC_{\oT }E^\B E^\C$ are acted upon by unitaries that are conditioned on the rest of the input registers only. Henceforth, we shall use $\tE^{\B\C}$ to refer to the registers $U_{\oT}\tU_{\oT}Y_{\oT }\tY_{\oT }Z_{\oT }\tZ_{\oT } S_{\oT}\tS_{\oT}B_{\oT }C_{\oT}E^\B E^\C$ for brevity. We can apply Quantum Raz's Lemma (Fact \ref{fc:qraz}) to say as before,
\begin{align*}
\delta l & \geq \bbE_{\sfP_{X_TU_TY_TZ_TA_TS_TB_TC_TDF|\clE}}\sfD\Big(\vph_{X_{\oT }\tE^{\B\C}|x_Tu_Ty_Tz_Ta_Ts_Tb_Tc_Tdf}\Big\Vert \sigma_{X_{\oT}\tE^{\B\C}|x_Tu_Ty_Tz_Tdf}\Big) \\
& \geq \sum_{i\in\oT }\sfI(X_i:\tE^{\B\C}|X_TU_TY_TZ_TA_TS_TB_TC_TDF)_\vph \\
& \geq \frac{l}{4}\bbE_{i\in\oT }\bbE_{\sfP_{X_iY_iZ_iR_i|\clE}} \sfB\left(\vph_{\tE^{\B\C}|x_iy_iz_ir_i}, \vph_{\tE^{\B\C}|y_iz_ir_i}\right)^2.
\end{align*}
Using Jensen's inequality on the above, we then have,
\[ \bbE_{i\in\oT }\bbE_{\sfP_{X_iY_iZ_iR_i|\clE}} \sfB\left(\vph_{\tE^{\B\C}|x_iy_iz_ir_i}, \vph_{\tE^{\B\C}|y_iz_ir_i}\right) \leq 2\sqrt{\delta}.\]
Shifting the expectation from $\sfP_{X_iY_iZ_iR_i|\clE}$ to $\sfP_{Y_iZ_i}\sfP_{X_iU_iR_i|\clE,Y_iZ_i}$ and conditioning on $Y_iZ_i=(\perp,\perp)$ as before, we then have,
\[ \bbE_{i\in\oT }\bbE_{\sfP_{X_iR_i|\clE,\perp,\perp}} \sfB\left(\vph_{\tE^{\B\C}|x_i\perp \perp r_i}, \vph_{\tE^{\B\C}|\perp \perp r_i}\right) \leq \frac{4\sqrt{\delta}}{\alpha^2}.\]
Finally, since $\ket{\vph}_{x_i\perp\perp r_i}$ and $\ket{\vph}_{\perp\perp r_i}$ are purifications of $\vph_{\tE^{\B\C}|x_i\perp\perp r_i}$ and $\vph_{\tE^{\B\C}|\perp\perp r_i}$, by Uhlmann's theorem and the Fuchs-van de Graaf inequality, there exist unitaries $V^{\A}_{i,x_ir_i}$ acting on registers outside of $\tE^{\B\C}$, i.e., on $X_{\oT}\tX_{\oT}E^\A A_{\oT}\tA_{\oT}$ (the values in other registers are fixed by $r_i$) such that
\begin{equation}\label{eq:X-unitary}
\bbE_{i\in\oT}\bbE_{\sfP_{X_iR_i|\clE,\perp\perp}}\norm{V^{\A}_{i,x_ir_i}\otimes\Id\state{\vph}_{\perp\perp r_i}(V^{\A}_{i,x_ir_i})^\dagger\otimes\Id - \state{\vph}_{x_i\perp\perp r_i}}_1 \leq \frac{8\sqrt{2\delta}}{\alpha^2}.
\end{equation}

By the same analysis on Barlie's side, we can say that there exist unitaries $V^{\B\C}_{i,u_ir_i}$ acting on $U_{\oT}\tU_{\oT}Y_{\oT}\tY_{\oT}Z_{\oT}\tZ_{\oT}S_{\oT}\tS_{\oT}B_{\oT}C_{\oT}E^\B E^\C$ such that 
\begin{equation}\label{eq:U-unitary}
\bbE_{i\in\oT}\bbE_{\sfP_{U_iR_i|\clE,\perp\perp}}\norm{\Id\otimes V^{\B\C}_{i,u_ir_i}\state{\vph}_{\perp\perp r_i}\Id\otimes(V^{\B\C}_{i,u_ir_i})^\dagger - \state{\vph}_{u_i\perp\perp r_i}}_1 \leq \frac{8\sqrt{2\delta}}{\alpha^2}.
\end{equation}
Now, if $\clO_{X_i}$ is the channel that measures the $X_i$ register and records the outcome, this commutes with the $V^{\B\C}_{i,u_ir_i}$ unitaries. So,
\begin{align*}
\clO_{X_i}\left(\Id\otimes V^{\B\C}_{i,u_ir_i}\state{\vph}_{\perp\perp r_i}\Id\otimes(V^{\B\C}_{i,u_ir_i})^\dagger\right) & = \bbE_{\sfP_{X_i|\clE,\perp \perp r_i}}\state{x_i}\otimes\left(\Id\otimes V^{\B\C}_{i,u_ir_i}\state{\vph}_{x_i\perp\perp r_i}\Id\otimes(V^{\B\C}_{i,u_ir_i})^\dagger\right) \\
\clO_{X_i}\left(\state{\vph}_{u_i\perp\perp r_i}\right) & = \bbE_{\sfP_{X_i|\clE,u_i\perp\perp r_i}}\state{x_i}\otimes\state{\vph}_{x_iu_i\perp\perp r_i}.
\end{align*}
Therefore, applying Fact \ref{fc:l1-dec} to \eqref{eq:U-unitary} with the $\clO_{X_i}$ channel we get,
\begin{align*}
& \bbE_{i\in\oT }\bbE_{\sfP_{U_iR_i|\clE,\perp\perp}}\norm{\bbE_{\sfP_{X_i|\clE,\perp\perp r_i}}\state{x_i}\otimes\left(\Id\otimes V^{\B\C}_{i,u_ir_i}\state{\vph}_{x_i\perp\perp r_i}\Id\otimes(V^{\B\C}_{i,u_ir_i})^\dagger\right) - \bbE_{\sfP_{X_i|\clE,u_i\perp\perp r_i}}\state{x_i}\otimes\state{\vph}_{x_iy_i\perp r_i}}_1 \\
\leq & \frac{8\sqrt{2\delta}}{\alpha^2}.
\end{align*}
Therefore,
\begin{align}
& \bbE_{i\in\oT}\bbE_{\sfP_{X_iU_iR_i|\clE,\perp\perp}}\norm{\Id\otimes V^{\B\C}_{i,u_ir_i}\state{\vph}_{x_i\perp\perp r_i}\Id\otimes(V^{\B\C}_{i,u_ir_i})^\dagger - \state{\vph}_{x_iu_i\perp\perp r_i}}_1 \nonumber \\
\leq & \frac{8\sqrt{2\delta}}{\alpha^2} + 2\bbE_{i\in\oC}\norm{\sfP_{X_iU_iR_i|\clE,\perp\perp } - \sfP_{R_i|\clE,\perp\perp }\sfP_{X_i|\clE,\perp\perp ,R_i}\sfP_{U_i|\clE,\perp\perp,R_i}}_1. \label{eq:U-unitary-X}
\end{align}
Combining \eqref{eq:X-unitary} and \eqref{eq:U-unitary-X} we get,
\begin{align}
& \bbE_{i\in\oT}\bbE_{\sfP_{X_iU_iR_i|\clE,\perp\perp }}\norm{V^\A_{i,x_ir_i}\otimes V^{\B\C}_{i,u_ir_i}\state{\vph}_{\perp\perp r_i}(V^\A_{i,x_ir_i})^\dagger\otimes(V^{\B\C}_{i,u_ir_i})^\dagger - \state{\vph}_{x_iu_i\perp\perp r_i}}_1 \nonumber \\
\leq & \bbE_{i\in\oT}\bbE_{\sfP_{X_iU_iR_i|\clE,\perp\perp }}\norm{\Id\otimes V^{\B\C}_{i,u_ir_i}\left(V^\A_{i,x_ir_i}\otimes\Id\state{\vph}_{\perp\perp r_i}(V^\A_{i,x_ir_i})^\dagger\otimes\Id - \state{\vph}_{x_i\perp\perp r_i}\right)\Id\otimes (V^{\B\C}_{i,u_ir_i})^\dagger}_1  \nonumber \\
& \quad + \bbE_{i\in\oT}\bbE_{\sfP_{X_iU_iR_i|\clE,\perp\perp }}\norm{\Id\otimes V^{\B\C}_{i,u_ir_i}\state{\vph}_{x_i\perp\perp r_i}\Id\otimes(V^{\B\C}_{i,u_ir_i})^\dagger - \state{\vph}_{x_iu_i\perp\perp r_i}}_1 \nonumber \\
\leq & \frac{8\sqrt{2\delta}}{\alpha^2} + \frac{8\sqrt{2\delta}}{\alpha^2} + 2\bbE_{i\in\oT}\norm{\sfP_{X_iU_iR_i|\clE,\perp\perp } - \sfP_{R_i|\clE,\perp\perp }\sfP_{X_i|\clE,\perp\perp ,R_i}\sfP_{U_i|\clE,\perp\perp,R_i}}_1. \label{eq:XU-unitary-1}
\end{align}

By Fact \ref{fc:cond-prob} we have,
\begin{align*}
\bbE_{i\in\oT}\norm{\sfP_{X_iU_iR_i|\clE,\perp\perp } - \sfP_{X_iU_i|\perp\perp }\sfP_{R_i|\clE,\perp\perp }}_1 & \leq \frac{2}{\sfP_{Y_iZ_i}(\perp,\perp )}\bbE_{i\in\oT}\norm{\sfP_{X_iU_iY_iZ_iR_i|\clE} - \sfP_{X_iU_i|Y_iZ_i}\sfP_{Y_iZ_iR_i|\clE}}_1 \\
 & \leq \frac{4\sqrt{2\delta}}{\alpha^2},
\end{align*}
where we have used \eqref{eq:XYZR-1} in the last line. Noting that $\sfP_{X_iU_i|\perp\perp}=\sfP_{X_iU_i}=\sfP_{X_i}\sfP_{U_i}=\sfP_{X_i|\perp\perp}\sfP_{U_i|\perp\perp}$, we then have,
\begin{align*}
& \bbE_{i\in\oT}\norm{\sfP_{X_iU_iR_i|\clE,\perp\perp } - \sfP_{R_i|\clE,\perp\perp }\sfP_{X_i|\clE,\perp\perp ,R_i}\sfP_{U_i|\clE,\perp\perp,R_i}}_1 \\
\leq & \bbE_{i\in\oT}\Big(\left\Vert\sfP_{X_iU_iR_i|\clE,\perp\perp } - \sfP_{X_iU_i|\perp\perp }\sfP_{R_i|\clE,\perp\perp }\right\Vert_1 + \left\Vert(\sfP_{X_i|\perp\perp }\sfP_{R_i|\clE,\perp\perp } - \sfP_{X_iR_i|\clE,\perp\perp })\sfP_{U_i|\perp\perp }\right\Vert_1 \\
& \quad + \left\Vert(\sfP_{U_i|\perp\perp }\sfP_{R_i|\clE,\perp\perp } - \sfP_{U_iR_i|\clE,\perp\perp })\sfP_{X_i|\clE,\perp\perp ,R_i}\right\Vert_1\Big) \\
\leq & \bbE_{i\in\oT}\Big(\left\Vert\sfP_{X_iU_iR_i|\clE,\perp\perp } - \sfP_{X_iU_i|\perp\perp }\sfP_{R_i|\clE,\perp\perp }\right\Vert_1 + \left\Vert\sfP_{X_i|\perp\perp }\sfP_{R_i|\clE,\perp\perp } - \sfP_{X_iR_i|\clE,\perp\perp }\right\Vert_1 \\
& \quad + \left\Vert\sfP_{U_i|\perp\perp }\sfP_{R_i|\clE,\perp\perp } - \sfP_{U_iR_i|\clE,\perp\perp }\right\Vert_1\Big) \\
\leq & 3\bbE_{i\in\oT}\norm{\sfP_{X_iU_iR_i|\clE,\perp\perp } - \sfP_{X_iU_i|\perp\perp }\sfP_{R_i|\clE,\perp\perp }}_1 \leq \frac{12\sqrt{2\delta}}{\alpha^2}.
\end{align*}
Putting the above in \eqref{eq:XU-unitary-1} we get,
\begin{align}
& \bbE_{i\in\oT}\bbE_{\sfP_{X_iU_iR_i|\clE,\perp\perp }}\norm{V^\A_{i,x_ir_i}\otimes V^{\B\C}_{i,u_ir_i}\state{\vph}_{\perp\perp r_i}(V^\A_{i,x_ir_i})^\dagger\otimes(V^{\B\C}_{i,u_ir_i})^\dagger - \state{\vph}_{x_iu_i\perp\perp r_i}}_1 \nonumber \\
\leq & O\left(\frac{\sqrt{\delta}}{\alpha^2}\right). \label{eq:XU-unitary-2}
\end{align}
Finally, from \eqref{eq:XYR-perp} and \eqref{eq:XU-unitary-2} we get,
\begin{align*}
& \bbE_{i\in\oT}\bbE_{\sfP_{X_iU_iR_i|\clE}}\norm{V^\A_{i,x_ir_i}\otimes V^{\B\C}_{i,u_ir_i}\state{\vph}_{\perp\perp r_i}(V^\A_{i,x_ir_i})^\dagger\otimes(V^{\B\C}_{i,u_ir_i})^\dagger - \state{\vph}_{x_iu_i\perp\perp r_i}}_1 \\
\leq & \bbE_{i\in\oT}\bbE_{\sfP_{X_iU_iR_i|\clE,\perp\perp }}\norm{V^\A_{i,x_ir_i}\otimes V^{\B\C}_{i,u_ir_i}\state{\vph}_{\perp\perp r_i}(V^\A_{i,x_ir_i})^\dagger\otimes(V^{\B\C}_{i,u_ir_i})^\dagger - \state{\vph}_{x_iu_i\perp\perp r_i}}_1 \\
& \quad + \bbE_{i\in\oT}\norm{\sfP_{X_iU_iR_i|\clE} - \sfP_{X_iU_iR_i|\clE,\perp\perp}}_1 \\
\leq & O\left(\frac{\sqrt{\delta}}{\alpha^2}\right).
\end{align*}
This completes the proof of item (ii) of the lemma.

\vspace{0.5cm}
\textbf{Existence of unitaries $V^\B_{i,y_ir_i}$ and $V^\C_{i,z_ir_i}$.} We observe that $\sigma_{Y_{\oT}X_{\oT}\tX_{\oT}\tU_{\oT}\tS_{\oT}Z_{\oT}\tZ_{\oT}A_{\oT}\tA_{\oT}C_{\oT}E^\A E^\C|x_Tu_Ty_Tz_Tdf}$ is product across $Y_{\oT}$ and the rest of the registers. $Y_{\oT}$ is in product with Alice's registers conditioned on $x_Tu_Ty_Tz_Tdf$ here because that was the case in the state $\rho$ at the end of the first round; Alice's registers don't change from the first round, and $Y_{\oT}$ is only acted upon as a control register to get $\sigma$ from $\rho$. $Y_{\oT}$ was also in product with Barlie's input and output registers as well as Charlie's registers in $\rho$ conditioned on $x_tu_Ty_Tz_Tdf$, and remain so after Bob and Charlie's unitaries.
Hence using $\tE^{\A\C}$ to denote the registers $X_{\oT}\tX_{\oT}\tU_{\oT}\tS_{\oT}Z_{\oT}\tZ_{\oT}A_{\oT}\tA_{\oT}C_{\oT}E^\A E^\C$ and applying Facts \ref{fc:jpy-cond} and \ref{fc:qraz} again on $\vph_{Y_{\oT}\tE^{\A\C}|x_Tu_Ty_Tz_Ta_ts_tb_tc_Tdf}$ and $\sigma_{Y_{\oT}\tE^{\A\C}|x_Tu_Ty_Tz_Tdf}$we get,

\begin{align*}
2\delta & \geq \bbE_{i\in\oT}\sfI(Y_i:\tE^{\A\C}|D_iF_iR_i)_\vph \\
& = \bbE_{i\in\oT}\bbE_{\sfP_{Y_iF_iR_i|\clE}}\sfD\Big(\vph_{\tE^{\A\C}|y_id_if_ir_i} \Big\Vert\vph_{\tE^{\A\C}|d_if_ir_i}\Big) \\
& \geq \frac{1}{2}\bbE_{i\in\oT}\bbE_{\sfP_{X_iU_iY_iR_i|\clE}}\sfD\left(\vph_{\tE^{\A\C}|x_iu_iy_ir_i}\middle\Vert\vph_{\tE^{\A\C}|x_iu_ir_i}\right) \\
& \geq \frac{1}{2}\bbE_{i\in\oT}\bbE_{\sfP_{X_iU_iY_iR_i|\clE}}\sfB\left(\vph_{\tE^{\A\C}|x_iu_iy_ir_i}, \vph_{\tE^{\A\C}|x_iu_ir_i}\right)^2.
\end{align*}
Applying Jensen's inequality on the above, we get
\begin{align}
\bbE_{i\in\oT}\bbE_{\sfP_{X_iU_iY_iR_i|\clE}}\sfB\left(\vph_{\tE^{\A\C}|x_iu_iy_ir_i},\vph_{\tE^{\A\C}|x_iu_ir_i}\right) & \leq 2\sqrt{\delta}. \label{eq:XY-margin}
\end{align}
Moreover, shifting the expectation from $\sfP_{X_iU_iY_iR_i|\clE}$ to $\sfP_{Y_i}\sfP_{X_iU_iR_i|\clE,Y_i}$ and conditioning on $Y_i=\perp$ (which happens with probability $\alpha$ under $\sfP_{Y_i}$) like before, we get,
\begin{align}
 \bbE_{i\in\oT}\bbE_{\sfP_{X_iU_iR_i|\clE,Y_i=\perp }}\sfB\left(\vph_{\tE^{\A\C}|x_iu_i,Y_i=\perp r_i},\vph_{\tE^{\A\C}|x_iu_ir_i}\right) & \leq \frac{4\sqrt{\delta}}{\alpha}. \label{eq:XY-perp}
\end{align}
Using the triangle inequality on \eqref{eq:XY-margin} and \eqref{eq:XY-perp}, we get,
\begin{align*}
& \bbE_{i\in\oT}\bbE_{\sfP_{X_iU_iY_iR_i|\clE}}\sfB\left(\vph_{\tE^{\A\C}|x_iu_iy_ir_i},\vph_{\tE^{\A\C}|x_iu_i,Y_i=\perp, r_i}\right) \nonumber \\
\leq & \bbE_{i\in\oT}\bbE_{\sfP_{X_iU_iY_iR_i|\clE}}\sfB\left(\vph_{\tE^{\A\C}|x_iu_iy_ir_i},\vph_{\tE^{\A\C}|x_iu_ir_i}\right) \nonumber \\
& \quad + \bbE_{i\in\oT}\bbE_{\sfP_{X_iU_iR_i|\clE,Y_i=\perp }}\sfB\left(\vph_{\tE^{\A\C}|x_iu_i,Y_i=\perp, r_i},\vph_{\tE^{\A\C}|x_iu_ir_i}\right) \nonumber \\
& \quad + \bbE_{i\in\oT}\norm{(\sfP_{X_iU_iR_i|\clE}-\sfP_{X_iU_iR_i|\clE,Y_i=\perp})\sfP_{Y_i|\clE,X_iU_iR_i}}_1 \nonumber \\
\leq & 2\sqrt{\delta} + \frac{4\sqrt{\delta}}{\alpha} + \frac{2}{\alpha}\bbE_{i\in\oT}\norm{\sfP_{X_iU_iY_iR_i|\clE} - \sfP_{X_iU_iR_i|\clE}\sfP_{Y_i|X_iU_i}}_1 + \frac{5}{\alpha}\bbE_{i\in\oT}\norm{\sfP_{X_iU_iY_i|\clE}-\sfP_{X_iU_iY_i}}_1 \nonumber \\
\leq & 2\sqrt{\delta} + \frac{4\sqrt{\delta}}{\alpha} + \frac{4\sqrt{2\delta}}{\alpha} + \frac{5\sqrt{2\delta}}{\alpha} \\
\leq & O\left(\frac{\sqrt{\delta}}{\alpha}\right).
\end{align*}
In the third line of the above calculation, we have noted that we can apply item (i) of Fact \ref{fc:anchor-t*} to bound the distance $\norm{\sfP_{X_iU_iR_i|\clE}-\sfP_{X_iU_iR_i|\clE,Y_i=\perp}}_1$ with $T=Y_i$, since we have $\sfP_{X_iU_iY_i}(x_i,u_i,\perp)$ $=\alpha\cdot\sfP_{X_iU_i}(x_i,u_i)$ for all $x_i,u_i$. In the fourth line, we have used \eqref{eq:XYZR-1} and \eqref{eq:XYZ} to bound the trace distances. Finally, using Uhlmann's theorem and the Fuchs-van de Graaf inequality on the above, we get that there exist unitaries $V^\B_{i,x_iu_iy_ir_i}$ acting on the registers outside $\tE^{\A\C}$, i.e., on $Y_{\oT}\tY_{\oT}U_{\oT}S_{\oT}BE^\B$, such that
\begin{equation}\label{eq:Y-unitary-1}
\bbE_{i\in\oT}\bbE_{\sfP_{X_iU_iY_iR_i|\clE}}\norm{V^\B_{i,x_iu_iy_ir_i}\otimes\Id\state{\vph}_{x_iu_i,Y_i=\perp,r_i}(V^\B_{i,x_iu_iy_ir_i})^\dagger\otimes\Id - \state{\vph}_{x_iu_iy_ir_i}}_1 \leq O\left(\frac{\sqrt{\delta}}{\alpha}\right).
\end{equation}
Since $y_i$ is either $\perp$, in which case the unitary $V^\B_{i,x_iu_iy_ir_i}$ is just the identity, or $y_i$ is equal to $x_i$, $V^\B_{i,x_iu_iy_ir_i}$ is in fact just $V^\B_{i,u_iy_ir_i}$.

Let $\clO_{Z_iA_iS_i}$ be the channel which measures the $Z_iA_i\tS_i$ registers and records the outcome, which commutes with $V^\B_{i,u_iy_ir_i}$. We have,
\begin{align*}
& \clO_{Z_iA_iS_i}\left(V^\B_{i,u_iy_ir_i}\otimes\Id\state{\vph}_{x_iu_i,Y_i=\perp,r_i}(V^\B_{i,u_iy_ir_i})^\dagger\otimes\Id\right) \\
= & \bbE_{\sfP_{Z_iA_iS_i|\clE,x_iu_i,Y_i=\perp,r_i}}\state{z_ia_is_i}\otimes\left(V^\B_{i,u_iy_ir_i}\otimes\Id\state{\vph}_{x_iu_i,Y_i=\perp,z_ia_is_ir_i}(V^\B_{i,u_iy_ir_i})^\dagger\otimes\Id\right) \\
& \clO_{Z_iA_iS_i}\left(\state{\vph_{x_iu_iy_ir_i}}\right) = \bbE_{\sfP_{Z_iA_iS_i|\clE,x_iu_iy_ir_i}}\state{z_ia_is_i}\otimes\state{\vph}_{x_iu_iy_iz_ia_is_ir_i}.
\end{align*}
Applying Fact \ref{fc:chan-l1} on \eqref{eq:Y-unitary-1} we thus get,
\begin{align}
& \bbE_{i\in\oT}\bbE_{\sfP_{X_iU_iY_iZ_iA_iS_iR_i|\clE}}\norm{V^\B_{i,u_iy_ir_i}\otimes\Id\state{\vph}_{x_iu_i,Y_i=\perp,z_ia_is_ir_i}(V^\B_{i,u_iy_ir_i})^\dagger\otimes\Id - \state{\vph}_{x_iu_iy_iz_ia_is_ir_i}}_1 \nonumber \\
\leq & \bbE_{i\in\oT}\bbE_{\sfP_{X_iU_iY_iR_i|\clE}}\norm{V^\B_{i,u_iy_ir_i}\otimes\Id\state{\vph}_{x_iu_i,Y_i=\perp,r_i}(V^\B_{i,u_iy_ir_i})^\dagger\otimes\Id - \state{\vph}_{x_iu_iy_ir_i}}_1 \nonumber \\
& \quad + 2\bbE_{i\in\oT}\norm{\sfP_{X_iU_iY_iR_i|\clE}(\sfP_{Z_iA_iS_i|\clE,X_iU_iY_iR_i} - \sfP_{Z_iA_iS_i|\clE,X_iU_i,Y_i=\perp,R_i})}_1 \nonumber \\
\leq & O\left(\frac{\sqrt{\delta}}{\alpha}\right), \label{eq:Y-unitary-2}
\end{align}
where to bound the second term of the second line, we have used \eqref{eq:ASZ-perp}.

Repeating the same steps as above on Charlie's side and using \eqref{eq:ASY-perp} this time,  we get that there exist unitaries $V^\C_{i,u_iz_ir_i}$ acting on $Z_{\oT}\tZ_{\oT}\tU_{\oT}\tS_{\oT}CE^\C$ such that
\[ \bbE_{i\in\oT}\bbE_{\sfP_{X_iU_iY_iZ_iA_iS_iR_i|\clE}}\norm{\Id\otimes V^\C_{i,u_iz_ir_i}\state{\vph}_{x_iu_iy_i,Z_i=\perp,a_is_ir_i}\Id\otimes(V^\C_{i,u_iz_ir_i})^\dagger - \state{\vph}_{x_iu_iy_iz_ia_is_ir_i}}_1 \leq O\left(\frac{\sqrt{\delta}}{\alpha}\right).\]
We can use \eqref{eq:XYZ} again to shift the expectation in the above expression from $\sfP_{X_iU_iY_iZ_iR_i|\clE}$ to $\sfP_{Y_i}\sfP_{X_iU_iZ_iR_i|\clE,Y_i}$, with only a $\sqrt{2\delta}$ loss. We can then condition on $Y_i=\perp$ (which happens with probability $\alpha$ under $\sfP_{Y_i}$) to get,
\begin{align}
& \bbE_{i\in\oT}\bbE_{\sfP_{X_iU_iZ_iA_iS_iR_i|\clE,Y_i=\perp}}\norm{\Id\otimes V^\C_{i,u_iz_ir_i}\state{\vph}_{x_iu_i,\perp\perp,a_is_ir_i}\Id\otimes(V^\C_{i,u_iz_ir_i})^\dagger - \state{\vph}_{x_iu_i,Y_i=\perp,z_ia_is_ir_i}}_1 \nonumber \\
\leq & O\left(\frac{\sqrt{\delta}}{\alpha^2}\right). \label{eq:Z-unitary-1}
\end{align}
Now, tracing out the $Y_i$ register from \eqref{eq:ASZ-perp} we get,
\[ \bbE_{i\in\oT}\norm{\sfP_{X_iU_iR_i|\clE}(\sfP_{Z_iA_iS_i|\clE,X_iU_iR_i} - \sfP_{Z_iA_iS_i|\clE,X_iU_i,Y_i=\perp,R_i})}_1 \leq O\left(\frac{\sqrt{\delta}}{\alpha}\right),\]
and moreover, as we have seen before, we can use Fact \ref{fc:anchor-t*} to bound $\bbE_{i\in\oT}\norm{\sfP_{X_iU_iR_i|\clE} - \sfP_{X_iU_iR_i|\clE,Y_i=\perp}}_1$. Therefore,
\begin{align*}
\bbE_{i\in\oT}\norm{\sfP_{X_iU_iZ_iA_iS_iR_i|\clE,Y_i=\perp} - \sfP_{X_iU_iZ_iA_iS_iR_i|\clE}}_1 & \leq \bbE_{i\in\oT}\norm{(\sfP_{X_iU_iR_i|\clE} - \sfP_{X_iU_iR_i|\clE,Y_i=\perp})\sfP_{Z_iA_iS_i|\clE,X_iU_i,Y_i=\perp,R_i}}_1 \\
& \quad + \bbE_{i\in\oT}\norm{\sfP_{X_iU_iR_i|\clE}(\sfP_{Z_iA_iS_i|\clE,X_iU_iR_i} - \sfP_{Z_iA_iS_i|\clE,X_iU_i,Y_i=\perp,R_i})}_1 \\
& \leq O\left(\frac{\sqrt{\delta}}{\alpha}\right).
\end{align*}
Using this along with \eqref{eq:Z-unitary-1} we get,
\begin{align}
& \bbE_{i\in\oT}\bbE_{\sfP_{X_iU_iZ_iA_iS_iR_i|\clE}}\norm{\Id\otimes V^\C_{i,u_iz_ir_i}\state{\vph}_{x_iu_i,\perp\perp,a_is_ir_i}\Id\otimes(V^\C_{i,u_iz_ir_i})^\dagger - \state{\vph}_{x_iu_i,Y_i=\perp,z_ia_is_ir_i}}_1 \nonumber \\
\leq & O\left(\frac{\sqrt{\delta}}{\alpha^2}\right). \label{eq:Z-unitary-2}
\end{align}
Finally, combining \eqref{eq:Y-unitary-2} and \eqref{eq:Z-unitary-2} we get,
\begin{align*}
& \bbE_{i\in\oT}\bbE_{\sfP_{X_iU_iY_iZ_iA_iS_iR_i|\clE}}\norm{V^\B_{i,u_iy_ir_i}\otimes V^\C_{i,u_iz_ir_i}\state{\vph}_{x_iu_i,\perp\perp,a_is_ir_i}(V^\B_{i,u_iy_ir_i})^\dagger\otimes(V^\C_{i,u_iz_ir_i})^\dagger - \state{\vph}_{x_iu_iy_iz_ia_is_ir_i}}_1 \\
\leq & \bbE_{i\in\oT}\bbE_{\sfP_{X_iU_iZ_iA_iS_iR_i|\clE}}\norm{V^\B_{i,u_iy_ir_i}\left(\Id\otimes V^\C_{i,u_iz_ir_i}\state{\vph}_{x_iu_i,\perp\perp,a_is_ir_i}\Id\otimes(V^\C_{i,u_iz_ir_i})^\dagger - \state{\vph}_{x_iu_i,Y_i=\perp,z_ia_is_ir_i}\right)V^\B_{i,u_iy_ir_i}}_1 \\
& \quad + \bbE_{i\in\oT}\bbE_{\sfP_{X_iU_iY_iZ_iA_iS_iR_i|\clE}}\norm{V^\B_{i,u_iy_ir_i}\otimes\Id\state{\vph}_{x_iu_i,Y_i=\perp,z_ia_is_ir_i}(V^\B_{i,u_iy_ir_i})^\dagger\otimes\Id - \state{\vph}_{x_iu_iy_iz_ia_is_ir_i}}_1 \\
\leq & O\left(\frac{\sqrt{\delta}}{\alpha^2}\right).
\end{align*}
This completes the proof of item (iv) of the lemma.

\section*{Acknowledgements}
We thank Eric Culf, Mads Friis Frand-Madsen, Rahul Jain, S{\'e}bastien Lord, and Joseph Renes for helpful discussions. 
We also thank a reviewer for pointing out the issue described in Remark~\ref{remark:termsplit} regarding previous security definitions.
S.~K.~is funded by the Natural Sciences and Engineering Research Council of Canada (NSERC) Discovery Grants Program and Fujitsu Labs America. E.~Y.-Z.~T.~is funded by NSERC Alliance and Huawei Technologies Canada Co., Ltd. Research at the Institute for Quantum Computing (IQC) is supported by Innovation, Science and Economic Development (ISED) Canada.

\bibliographystyle{alpha}
\bibliography{bib-DI-uncloneable} 

\end{document}